\documentclass{article}
\usepackage{lineno}
\usepackage{wrapfig}
\usepackage{graphicx}
\usepackage{floatrow}
\usepackage{sidecap}
\usepackage{caption}
\usepackage{calc} 
\usepackage[numbers,sort&compress]{natbib}

\usepackage{lmodern}

\usepackage[commentsnumbered,ruled,vlined]{algorithm2e}
\usepackage{setspace}
\usepackage[utf8]{inputenc}

\usepackage{tikz}

\usepackage{natbib}
\bibliographystyle{unsrt}


\usepackage[final]{neurips_2024}
\usepackage{xcolor}         
\usepackage{wrapfig}
\usepackage[utf8]{inputenc} 
\usepackage[T1]{fontenc}    
\usepackage{url}            
\usepackage{booktabs}       
\usepackage{tablefootnote}
\usepackage{amsfonts}       
\usepackage{nicefrac}       
\usepackage{microtype}      

\definecolor{myblue}{rgb}{0.255, 0.412, 0.882} 

\definecolor{green1}{RGB}{10,158,10}
\definecolor{blue1}{RGB}{17,85,204}
\definecolor{red1}{RGB}{204,0,0}
\definecolor{mygray}{gray}{0.85}

\definecolor{myblue2}{RGB}{187, 213, 232} 

\usepackage{url}            
\usepackage{booktabs}       
\usepackage{amsfonts}       
\usepackage{nicefrac}       
\usepackage{microtype}      
\usepackage{lipsum}
\usepackage{wrapfig}
\usepackage{siunitx}
\usepackage{sidecap}

\usepackage{amsmath,amsfonts,amssymb,amsthm}
\usepackage{mathtools}
\usepackage{multirow}

\usepackage{bbm}
\usepackage{colortbl}
\usepackage{booktabs}
\usepackage{multirow}
\usepackage{siunitx}
\usepackage{array}
\usepackage{arydshln}

\usepackage{subcaption}

\usepackage{amsmath}
\usepackage{amssymb}
\usepackage{bm} 
\newcommand*{\rom}[1]{\expandafter\@slowromancap\romannumeral #1@}

\usepackage{enumitem}

\newcolumntype{P}[1]{>{\hspace{1ex}}p{#1}<{\hspace{1ex}}}



\newtheorem{theorem}{Theorem}

\newtheorem{lemma}[theorem]{Lemma}

\renewcommand{\mathbf}[1]{\boldsymbol{\mathit{#1}}}

\usepackage{setspace}

   \usepackage[colorlinks=true,
               linkcolor=red,       
               citecolor=green,      
               urlcolor=blue,         
               filecolor=magenta     
               ]{hyperref}

\title{Learning Cortico-Muscular Dependence through Orthonormal Decomposition of Density Ratios}

%

\author{%
  \textbf{Shihan Ma}$^{1\dagger}$\quad \textbf{Bo Hu}$^{2\dagger}$\quad \textbf{Tianyu Jia}$^{1}$\quad \textbf{Alexander Kenneth Clarke}$^{1}$\\
  \textbf{Blanka Zicher}$^{1}$\quad \textbf{Arnault H. Caillet}$^{1}$\quad \textbf{Dario Farina}$^{1}$\quad \textbf{Jos{\'e}~C.~Pr{\'\i}ncipe}$^{2}$\\$^{\dagger}$\textit{Equal Contribution}\\  $^{1}$Department of Bioengineering, Imperial College London\\
  $^{2}$Department of Electrical and Computer Engineering, University of Florida\\
}



\begin{document}

\maketitle
\begin{abstract}
The cortico-spinal neural pathway is fundamental for motor control and movement execution, and in humans it is typically studied using concurrent electroencephalography (EEG) and electromyography (EMG) recordings. However, current approaches for capturing high-level and contextual connectivity between these recordings have important limitations.
Here, we present a novel application of statistical dependence estimators based on orthonormal decomposition of density ratios to model the relationship between cortical and muscle oscillations. 
Our method extends from traditional scalar-valued measures by learning eigenvalues, eigenfunctions, and projection spaces of density ratios from realizations of the signal, addressing the interpretability, scalability, and local temporal dependence of cortico-muscular connectivity. We experimentally demonstrate that eigenfunctions learned from cortico-muscular connectivity can accurately classify movements and subjects. Moreover, they reveal channel and temporal dependencies that confirm the activation of specific EEG channels during movement. Our code is available at \url{https://github.com/bohu615/corticomuscular-eigen-encoder}. 
\end{abstract}









\section{Introduction}

\begin{wrapfigure}{r}{.4\textwidth}
\vspace{-15pt}
  \centering
\includegraphics[width=1\textwidth]{./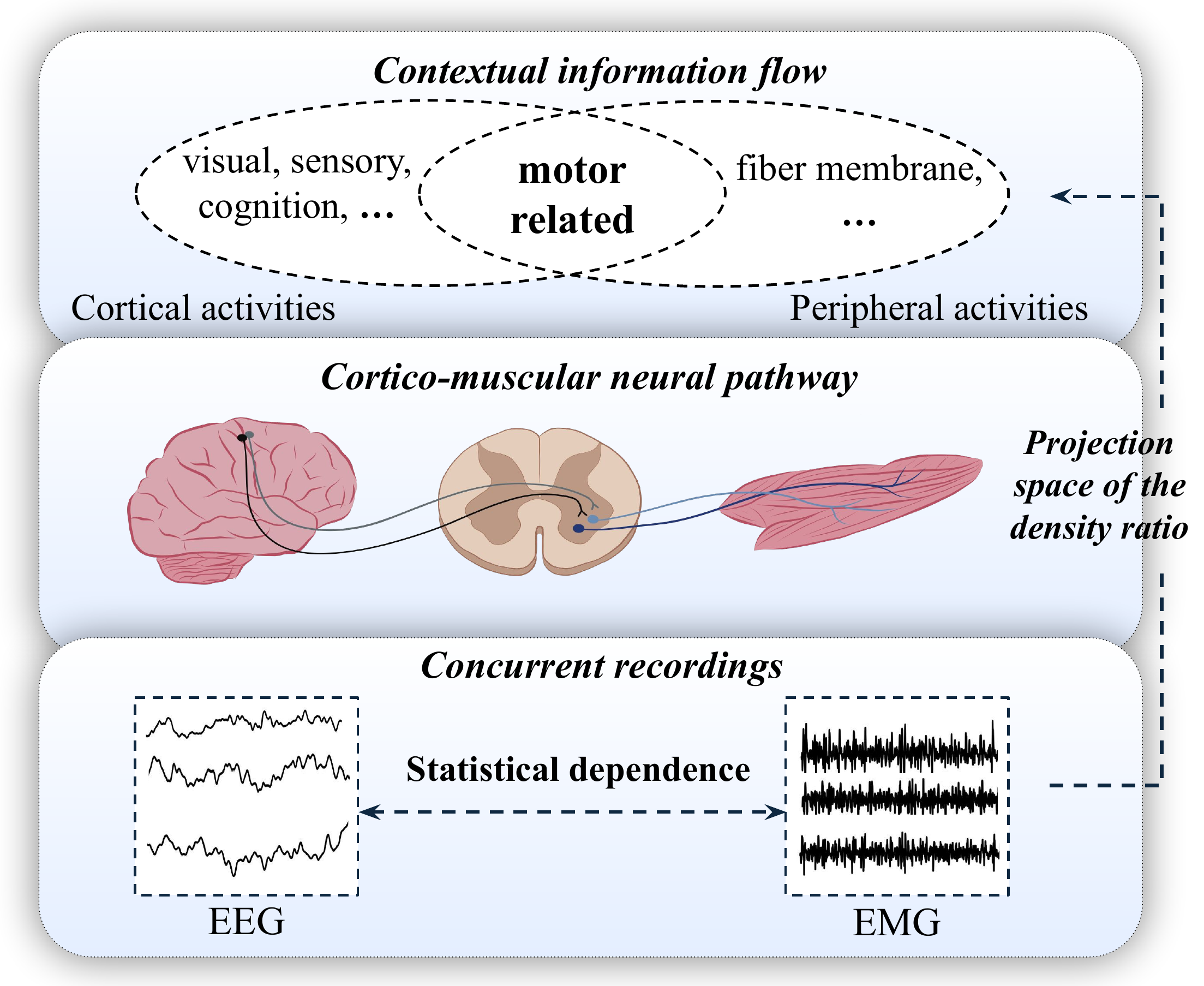}\vspace{-5pt}
  \caption{\small The cortico-muscular pathway allows brain-muscle communication with coherent cortical and peripheral oscillations. This paper models this connectivity through the statistical dependence between their concurrent recordings of EEG and EMG. \vspace{-9pt}}
\label{figure1}
\end{wrapfigure}

The brain communicates with muscles by sending information to the spinal cord.
Part of this information is directly transmitted from the cortex to spinal motor neurons via the cortico-spinal neural pathway, which is vital for motor control and movement execution. Because motor neurons are directly connected to muscles, cortical oscillations traveling through the cortico-spinal pathway are coherent with oscillations in muscle electrical activities, both in humans and non-human primates
~\cite{Baker1999, Baker2007, Schoffelen2005}. This relationship, known as functional cortico-muscular connectivity, is critical in neuroscience and is typically studied using concurrent recordings of neural signals such as electroencephalography (EEG) and electromyography (EMG)~\cite{brambilla2021,grosse2002} (Fig.~\ref{figure1}). Practical applications include diagnosing and monitoring of neuromuscular disorders, such as amyotrophic lateral sclerosis~\cite{Mcmackin2023biomarkers}, stroke~\cite{Gao2024influencing}, and Parkinson's disease~\cite{Zokaei2021}, as well as developing brain-computer interfaces (BCIs) for individuals with motor impairments~\cite{chowdhury2019eeg}. 

Despite several applications, there is still a lack of proper statistical tools to model the relationship between EEG and EMG. The predominant method, Cortico-Muscular Coherence (CMC), measures temporal and spectral coherence by computing the normalized cross-spectrum in time intervals~\cite{mima1999corticomuscular}. From this analysis, it is generally accepted that the EEG's beta band (13-30 Hz) is linked to steady motor control, while the gamma band (>30 Hz) is associated to motor planning and execution
~\cite{grosse2002, pfurtscheller1999event, jensen2007cross, xu2016corticomuscular}.

While CMC provides some relevant information on cortico-muscular connectivity, there remains a lack of generalized and higher-order statistical measures that quantify nonlinear and high-level connectivity. Can high-level contextual information, such as muscle movements and participant identifiers, be directly learned from modeling cortico-muscular connectivity? 
Our paper explores the potential of using statistical dependence estimators to address this problem.

Statistical dependence estimators typically follow a procedure of defining a measure preferably by probabilistic distributions, deriving a variational bound, and optimizing a variational cost of this bound using a function approximator, such as mutual information estimators~\cite{belghazi2018mine, jordan2, jordan1} and Kernel Independent Component Analysis (KICA)~\cite{bach2002kernel, gretton2005measuring}. These measures 
are defined for realizations.

However, statistical dependence estimators above have rarely been successfully applied to cortico-muscular analysis, mainly due to three reasons: instability and poor scalability, lack of spatio-temporal resolution, and lack of practical contextual connections. As these estimators typically quantify dependence at the trial level, they overlook the importance of channel and temporal dependence in cortico-muscular analysis. More importantly, these measures only produce a scalar-valued score, but how this score should be used and its connection to the desired contextual factors are unclear.

This paper successfully applies statistical dependence estimators to EEG-EMG pairs using the concept of \textit{\textbf{orthonormal decomposition of the density ratio}}. Recently, there has been a shift from scalar-valued measures to decomposing density ratio as a positive definite function, and learning its eigenvalues, eigenfunctions, and associated projection spaces through neural network optimization with matrix cost functions such as $\log\det$ and nuclear norm~\cite{hu2024normalized, huang2018gaussian, huang2019universal}. This decomposition, known as the Functional Maximal Correlation Algorithm (FMCA), addresses the fundamental issue of relating dependence to contextual information: Eigenvalues define a multivariate dependence measure, and eigenfunctions span a feature projection space that captures the contextual factors affecting dependence.

This paper expands on this idea, addressing interpretability, scalability, and local-level dependence that are missing in existing dependence analyses. Sec.~\ref{density_ratio_decompose} explains why eigenfunctions, learned from cortico-muscular connectivity, can capture contextual factors for motor control and participant identification. Sec.~\ref{algorithm} introduces FMCA-T, optimizing a new matrix trace cost for the theory, which demonstrates greater efficiency and stability than the $\log\det$ cost. Sec.~\ref{temporal_resolution} shows that while the objective estimates global dependence from trial realizations, localized channel-level and temporal-level dependencies can also be formed in a top-down manner, which are important in EEG as they indicate channel activations and synchronization of activities. Our framework is illustrated in Fig.~\ref{fig:schematic}. 

Our main experiment demonstrates that the learned eigenfunctions, without labels, effectively capture factors such as movements and subjects that contribute to high-level cortico-muscular connectivity. After training, using EEG's eigenfunction as a feature projector noticeably improves classification accuracy over various baselines. Additionally, channel-level and temporal-level dependencies indicate that specific EEG channels are selectively activated during movements, corroborating neuroscientific findings. Simulated data further confirm that our proposed measure is invariant to nonstationary noise, including pink noise and random delays.

\vspace{-5pt}
\section{Methods} \label{sec:method}

\vspace{-2pt}

\subsection{Density ratio decomposition for EEG-EMG signal pairs}
\label{density_ratio_decompose}

\setcounter{figure}{1}
\begin{figure}[t]
    \centering
\includegraphics[width=1\columnwidth]{./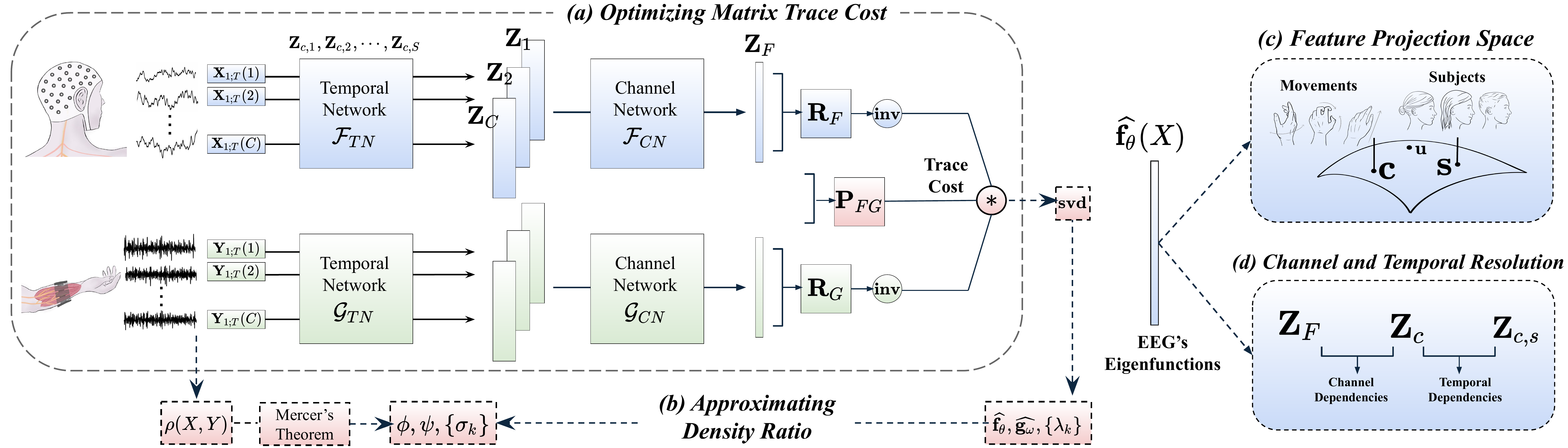}
\caption{\small Diagram for learning cortico-muscular dependence by decomposing density ratios: (a) Network $\mathbf{f}_\theta$ is applied to EEG $\mathbf{X}_{1:T}$ and $\mathbf{g}_\omega$ to EMG $\mathbf{Y}_{1:T}$ to minimize a matrix trace cost. (b) EEG-EMG pairs are sampled from a joint distribution, from which a density ratio $\rho(X,Y)$ is defined and considered a positive definite function. Its linear operator has a spectral decomposition of eigenfunctions $\{\mathbf{\phi}$, $\mathbf{\psi}\}$ and eigenvalues $\{\sigma_k\}$. The networks provably approximate the dominant eigenvalues and eigenfunctions of this decomposition with network outputs $\{\widehat{\mathbf{f}_\theta}$, $\widehat{\mathbf{g}_\omega}\}$, and SVD results $\{\lambda_k\}$. Eigenvalues here measure multivariate statistical dependence; eigenfunctions are optimal feature projectors. (c) After training, the eigenfunctions, specifically those from EEG, form a projection space containing contextual information for motor control and participant identification. (d) To provide channel activation and activity synchronization for cortico-muscular analysis, we compute density ratios between channel-level $\mathbf{Z}_c$ and temporal-level features $\mathbf{Z}_{c, s}$ against global features $\mathbf{Z}_F$ to quantify channel-level and temporal-level dependencies.\vspace{-0pt}} 
\label{fig:schematic}
\end{figure}

\textbf{Problem formulation.} Consider EEG signals $\mathbf{X}:=\mathbf{X}_{1:T}$ and EMG signals $\mathbf{Y}:=\mathbf{Y}_{1:T}$. Denote $\mathbf{s}$ as the subject, $\mathbf{c}$ as the type of movement, and $\mathbf{u}$ as other auxiliary contextual factors. These are factors that could potentially affect the statistical dependence between EEG and EMG signals. Each signal is conditioned on these parameters. Denote these factors as $\mathbf{z} := \{\mathbf{s}, \mathbf{c}, \mathbf{u}\}$ with distribution $\mathbb{P}(\mathbf{z})$. Distributions for EEG and EMG given these conditions are $p(\mathbf{X}=X|z)$ and $p(\mathbf{Y}=Y|z)$, respectively. Their joint distribution is given by $p(X, Y) = \int p(X|z) p(Y|z) p(z) dz$. Similarly, the marginal distributions are given by $p(X) = \int p(X|z) p(z)dz$, and likewise $p(Y) = \int p(Y|z) p(z) dz$. 

Our goal is to extract factors $\mathbf{s}, \mathbf{c}, \mathbf{u}$ that affect the dependence between two modalities, from available sample pairs of $\mathbf{X}$ and $\mathbf{Y}$, even when $\mathbf{s}$, $\mathbf{c}$, and $\mathbf{u}$ are not given. We propose that this can be achieved by decomposing the density ratio of this probabilistic system. \vspace{5pt}

\textbf{Decomposition of EEG and EMG density ratios.} Following the work on FMCA, we propose an orthonormal decomposition of the density ratio to measure the dependence between EEG and EMG:

\begin{equation}
\begin{gathered}
\rho := \frac{p(X, Y)}{p(X) p(Y)} = \sum_{k=1}^\infty \sqrt{\sigma_k} \, \phi_k(X) \psi_k(Y), \\
\int_{\mathcal{X}} \phi_i(X) \phi_j(X) \, d\mathbb{P}(X) =
\begin{cases} 
1, & \text{if } i = j \\ 
0, & \text{if } i \neq j 
\end{cases}, \quad
\int_{\mathcal{Y}} \psi_i(Y) \psi_j(Y) \, d\mathbb{P}(Y) =
\begin{cases} 
1, & \text{if } i = j \\ 
0, & \text{if } i \neq j 
\end{cases},
\end{gathered}
\end{equation}

for any $i, j = 1, 2, \cdots$. The density ratio $\rho(X, Y)$ is treated as a positive definite function associated with a linear operator $\mathbf{L} f := \int \rho(X, \cdot) f(X) \, dX$ for any measurable scalar function $f$. According to Mercer's theorem, this operator has a spectral decomposition with eigenvalues $\sigma_1, \sigma_2, \cdots$, and orthonormal basis functions $\phi_1, \phi_2, \cdots$ and $\psi_1, \psi_2, \cdots$. In scenarios where $\mathbf{X}$ and $\mathbf{Y}$ are statistically independent, all eigenvalues are zero. Conversely, larger eigenvalues suggest stronger dependence. \vspace{5pt} 

Such eigenfunctions form a linear span. Our hypothesis is that this span captures shared contextual factors such as $\mathbf{c}$ and $\mathbf{s}$ across two modalities, stated as follows.  

\begin{lemma} 
Assuming conditional independence given $\mathbf{z} := \{\mathbf{s}, \mathbf{c}, \mathbf{u}\}$, we have $p(X, Y|z) = p(X|z) p(Y|z)$. Hence, the ratio $\rho(X, Y) := \frac{p(X, Y)}{p(X)p(Y)}$ decomposes as $\rho(X,Y) = \int \frac{p(X|z) p(z)}{p(X)p(z)} \cdot \frac{p(Y|z) p(z)}{p(Y)p(z)} \cdot p(z)dz$. Assuming $\mathbf{z}$ is discrete (e.g., movement patterns $\mathbf{c}$ and participant identities $\mathbf{s}$), the information of $\mathbf{z}$ is contained in the span of the basis functions for the density ratio $\rho(X, Y)$.
\end{lemma}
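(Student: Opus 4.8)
The plan is to show that each eigenfunction $\phi_k$ (and likewise $\psi_k$) is, up to the scaling factor $\sqrt{\sigma_k}$, a weighted mixture of the conditional-to-marginal likelihood ratios $\frac{p(X|z)}{p(X)}$ over the discrete values of $\mathbf{z}$; consequently the span of $\{\phi_k\}$ is contained in (and, generically, equals) the span of these finitely many likelihood-ratio functions, which is exactly the subspace ``labelled'' by $\mathbf{z}$. First I would use the conditional-independence hypothesis $p(X,Y|z)=p(X|z)p(Y|z)$ together with $p(X,Y)=\int p(X|z)p(Y|z)p(z)\,dz$ to rewrite
\begin{equation}
\rho(X,Y)=\frac{p(X,Y)}{p(X)p(Y)}=\int \frac{p(X|z)}{p(X)}\cdot\frac{p(Y|z)}{p(Y)}\,p(z)\,dz ,
\end{equation}
and then, since $\mathbf{z}$ is discrete with values $z_1,\dots,z_m$ and masses $p_i:=\mathbb{P}(\mathbf{z}=z_i)$, replace the integral by the finite sum
\begin{equation}
\rho(X,Y)=\sum_{i=1}^{m} p_i\, a_i(X)\, b_i(Y), \qquad a_i(X):=\frac{p(X|z_i)}{p(X)},\quad b_i(Y):=\frac{p(Y|z_i)}{p(Y)} .
\end{equation}
This exhibits $\rho$ as a finite-rank positive semidefinite kernel, so the operator $\mathbf{L}$ has rank at most $m$ and at most $m$ nonzero eigenvalues.

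Next I would connect this representation to the Mercer decomposition $\rho(X,Y)=\sum_k \sqrt{\sigma_k}\,\phi_k(X)\psi_k(Y)$. Because the range of the integral operator $\mathbf{L}$ is spanned by the functions $\{a_i\}_{i=1}^m$, and the nonzero eigenfunctions $\phi_k$ lie in the range of $\mathbf{L}$ (indeed $\sqrt{\sigma_k}\,\phi_k(X)=\sum_i p_i\, b_i\text{-coefficient}\cdot a_i(X)$ after pairing the expansion against $\psi_k$ and using orthonormality $\int \psi_k\psi_{k'}\,d\mathbb{P}(Y)=\delta_{kk'}$), each $\phi_k$ with $\sigma_k\neq 0$ is a linear combination of $a_1,\dots,a_m$. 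Symmetrically, each $\psi_k$ with $\sigma_k\neq 0$ is a linear combination of $b_1,\dots,b_m$. Hence $\mathrm{span}\{\phi_k:\sigma_k>0\}\subseteq \mathrm{span}\{a_i\}$ and $\mathrm{span}\{\psi_k:\sigma_k>0\}\subseteq\mathrm{span}\{b_i\}$; conversely, a dimension count (the number of nonzero $\sigma_k$ equals $\mathrm{rank}\,\mathbf{L}=\dim\mathrm{span}\{a_i\}$ when the $a_i$ and $b_i$ are each linearly independent and $\mathrm{diag}(p_i)$ is nonsingular) gives equality. Since knowing which mixture component $z_i$ generated a sample is equivalent to knowing the vector $(a_1(X),\dots,a_m(X))$ of posterior-proportional weights, the span of the eigenfunctions carries exactly the information of $\mathbf{z}$; in particular when $\mathbf{z}=\{\mathbf{s},\mathbf{c},\mathbf{u}\}$ with $\mathbf{c},\mathbf{s}$ discrete, movement and subject identity are recoverable from the eigenfunction coordinates.

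The main obstacle is the precise sense of ``contains the information of $\mathbf{z}$,'' and the non-degeneracy conditions needed for the clean equality rather than mere inclusion: one needs the likelihood-ratio families $\{a_i\}$ and $\{b_i\}$ to be linearly independent (so that distinct values of $\mathbf{z}$ are not collapsed — e.g. two movements with identical conditional EEG/EMG laws genuinely cannot be separated), and one must be careful that $\phi_k$ is well-defined only up to sign/unitary rotation within an eigenspace, so it is the \emph{span}, not the individual eigenfunctions, that is canonical — which is exactly what the statement asserts. I would therefore state the conclusion as a span/rank identity and remark that linear independence of $\{a_i\}$ (equivalently, identifiability of the mixture) is the natural sufficient condition; the rest is the finite-rank SVD bookkeeping sketched above, which I would not grind through in full.
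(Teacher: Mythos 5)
Your proof is correct and rests on the same key identity as the paper's: writing $\rho(X,Y)=\sum_{z} p(z)\,\rho_X(X,z)\,\rho_Y(Y,z)$ with $\rho_X(X,z)=p(X|z)/p(X)$ (your $a_i$) and $\rho_Y(Y,z)=p(Y|z)/p(Y)$ (your $b_i$), exhibiting $\rho$ as a finite-rank kernel over the discrete values of $\mathbf{z}$. The difference lies in how the span statement is then argued, and your handling is the more careful of the two. The paper asserts the inclusion $\rho_X(\cdot,z)\in\mathrm{span}\{\phi_k\}$ directly from the observation that ``both decompositions represent $\rho$''; as stated this is not automatic, since two factorizations of the same kernel need not have mutually spanning components when the factors are linearly dependent (e.g., two values of $\mathbf{z}$ with identical conditional laws). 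You instead establish the rigorous direction first — each $\phi_k$ with $\sigma_k>0$ lies in $\mathrm{span}\{a_i\}$, because the nonzero-eigenvalue eigenfunctions sit in the range of the rank-at-most-$m$ operator — and recover the paper's direction by a rank count under an explicit linear-independence (identifiability) hypothesis on $\{a_i\}$ and $\{b_i\}$. That hypothesis is exactly what the paper's one-line span argument silently uses, so your version supplies the missing non-degeneracy condition rather than taking a genuinely different route; the only thing you give up is the unconditional form of the paper's claim, which in fact does not hold unconditionally.
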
%
\begin{proof}
Define the ratios \(\rho_X(X, z) = \frac{p(X, z)}{p(X) p(z)}\) and \(\rho_Y(Y, z) = \frac{p(Y, z)}{p(Y) p(z)}\). Considering the sample space \(\mathcal{Z}\), the sets \(\rho_X(X, z)\) and \(\rho_Y(Y, z)\) for \(z \in \mathcal{Z}\) are discrete. Under the conditional independence assumption, these ratios satisfy \(\rho(X, Y) = \sum_{z \in \mathcal{Z}} p(z)\rho_X(X, z) \rho_Y(Y, z)\).

This indicates that the sets \(\rho_X(X, z)\) and \(\rho_Y(Y, z)\) decompose \(\rho(X, Y)\), similar to the eigenfunctions \(\phi_k\) and \(\psi_k\). Since both decompositions represent \(\rho(X, Y)\), the functions \(\rho_X(X, z)\) and \(\rho_Y(Y, z)\) must lie within the span of these basis functions. Hence, there exist coefficients \(\alpha_{z, k}\) and \(\beta_{z, k}\) such that \(\rho_X(X, z) = \sum_{k} \alpha_{z, k} \phi_k(X)\) and \(\rho_Y(Y, z) = \sum_{k} \beta_{z, k} \psi_k(Y)\) for each \(z\). Thus, learning the dependence between $\mathbf{X}$ and $\mathbf{Y}$ is implicitly learning the dependence between each of them relative to the factors $\mathbf{z}$, even when $\mathbf{z}$ is not observed. 
\end{proof}

\subsection{FMCA-T: Learning decomposition for the matrix trace}
\label{algorithm}

When probability densities are unavailable, we approximate eigenvalues and eigenfunctions using a learning system with two neural networks and a cost function, typically a matrix cost like \(\log\det\) or nuclear norm~\cite{hu2024normalized, huang2018gaussian, huang2019universal}. These costs optimize an aggregation of eigenvalues. The networks learn the dominant eigenvalues and eigenfunctions when optimized.

\vspace{5pt}

\textbf{Aggregation of eigenvalues.} To measure the total power of the eigenspectrum, define a scalar-valued measure using a convex function \(\mathbf{\xi}:\mathbb{R}\rightarrow\mathbb{R}\) with \(\mathbf{\xi}(0) = 0\). Assume the eigenvalues are ranked \(\sigma_1 \geq \sigma_2 \geq \cdots\). The truncated total statistical dependence measure of the top \(K\) eigenvalues is defined by \(T_{\mathbf{\xi}} := \sum_{k=1}^K \mathbf{\xi}(\sigma_k)\). Function \(\mathbf{\xi}(x) = -\log (1 - x)\) corresponds to the $\log\det$ cost.

\vspace{5pt}


\textbf{Prior work: log-determinant cost.} Consider two networks, $\mathbf{f}_\theta:\mathcal{X} \rightarrow \mathbb{R}^K$ and $\mathbf{g}_\omega:\mathcal{Y} \rightarrow \mathbb{R}^K$, mapping realizations of $\mathbf{X}$ and $\mathbf{Y}$ to $K$-dimensional outputs, respectively. Assume $\mathbf{f}_\theta$ is for EEG and $\mathbf{g}_\omega$ for EMG. The autocorrelation (ACFs) and cross-correlation functions (CCFs) are defined as:
\begin{equation}
\resizebox{1\linewidth}{!}{$
\begin{gathered}
\mathbf{R}_F = \mathbb{E}_{\mathbf{X}}[\mathbf{f}_\theta(\mathbf{X})  \mathbf{f}_\theta^\intercal(\mathbf{X}) ],\; \mathbf{R}_G = \mathbb{E}_{\mathbf{Y}}[\mathbf{g}_\omega(\mathbf{Y})  \mathbf{g}_\omega^\intercal(\mathbf{Y}) ], \mathbf{P}_{FG} = \mathbb{E}_{\mathbf{X}, \mathbf{Y}}[\mathbf{f}_\theta(\mathbf{X})  \mathbf{g}_\omega^\intercal(\mathbf{Y}) ],\; \mathbf{R}_{FG} = \begin{bmatrix}\mathbf{R}_F & \mathbf{P}_{FG} \\
\mathbf{P}^\intercal_{FG} & \mathbf{R}_G
\end{bmatrix}.
\end{gathered}$}
\label{fmca_def}
\end{equation}

FMCA minimizes a $\log\det$ cost, which reaches the negative value of the total measure \(T_{\mathbf{\xi}}\) of \(\mathbf{\xi}(x) = -\log(1 - x)\) when minimized. The cost is defined by:
\begin{equation}
\resizebox{.85\linewidth}{!}{$
\begin{gathered}
\min_{\theta, \omega} \;r_L(\theta, \omega) =  \log\det \mathbf{R}_{FG}-\log \det \mathbf{R}_{F} - \log\det \mathbf{R}_{G}, \;\;r^*_L =  \sum_{k=1}^K\log(1-\sigma_k).
\end{gathered}$}
\end{equation}
\textbf{Normalization trick.} After training, normalizations are needed to obtain eigenfunctions. The first step is to ensure orthonormality: $\overline{\mathbf{f}_\theta} = {\mathbf{R}}_F^{-\frac{1}{2}} \mathbf{f}_\theta, \overline{\mathbf{g}_\omega} = {\mathbf{R}}_G^{-\frac{1}{2}} \mathbf{g}_\omega$. The second step is a singular value decomposition: $\overline{{\mathbf{P}}_{FG}} = \mathbb{E}[ \overline{\mathbf{f}_\theta}(\mathbf{X}) \overline{\mathbf{g}_\omega}^\intercal(\mathbf{X})] = \mathbf{U}\mathbf{S}^{\frac{1}{2}}\mathbf{V}$, where $\mathbf{S} = \text{diag}(\lambda_1, \cdots, \lambda_K)$. The third step is to normalize functions such that they are invariant to the linear operator: $\widehat{\mathbf{f}_\theta} = \mathbf{U}^\intercal \overline{\mathbf{f}_\theta},\;\; \widehat{\mathbf{g}_\omega} = \mathbf{V}^\intercal \overline{\mathbf{g}_\omega}$. Functions $\widehat{\mathbf{f}_\theta}, \widehat{\mathbf{g}_\omega}$ are the top eigenfunctions of the density ratio, and $\lambda_1, \lambda_2, \cdots$ are the top eigenvalues. An approximation of the density ratio is given by $\widehat{\rho} = \widehat{\mathbf{f}_\theta}{}^\intercal \mathbf{S}^{\frac{1}{2}}\widehat{\mathbf{g}_\omega}\approx \rho$.

\vspace{5pt}

\textbf{Newly proposed: matrix trace cost.} This paper explores alternative convex functions, specifically the simplest case \({\mathbf{\xi}}(x) = x\), The cost, in the form of a matrix trace, is described below.
\begin{lemma}
Denote $\mathbf{P}:=\mathbf{P}_{FG}$. Given neural nets $\mathbf{f}_\theta$ and $\mathbf{g}_\omega$, minimizing the matrix trace

\begin{equation}
\begin{gathered}
\min_{\theta,\omega} \; r_T(\theta,\omega) = -\text{Trace}(\mathbf{R}_{F}^{-1}\mathbf{P}\mathbf{R}_G^{-1}\mathbf{P}^\intercal), 
\end{gathered}
\label{fmca_maximal}
\end{equation}

yields $r_T^*(\theta,\omega) = -\sum_{k=1}^K \sigma_k$, where $r_T^*(\theta, \omega)$ is the optimal cost, reaching the sum of the top $K$ eigenvalues of the density ratio when minimized. We name this algorithm FMCA-T.
\end{lemma}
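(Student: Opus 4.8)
The plan is to exploit the invariance of the trace functional under invertible reparametrizations of the network outputs, reduce to a whitened problem, and then relate the resulting cross-correlation matrix to the spectral decomposition of $\rho$ supplied by Mercer's theorem.

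First I would note that $\text{Trace}(\mathbf{R}_{F}^{-1}\mathbf{P}\mathbf{R}_G^{-1}\mathbf{P}^\intercal)$ is unchanged under $\mathbf{f}_\theta \mapsto \mathbf{L}\mathbf{f}_\theta$, $\mathbf{g}_\omega \mapsto \mathbf{M}\mathbf{g}_\omega$ for any invertible $\mathbf{L},\mathbf{M}$, since $\mathbf{R}_F^{-1}\mathbf{P}\mathbf{R}_G^{-1}\mathbf{P}^\intercal$ is then sent to a matrix similar to itself. Taking $\mathbf{L}=\mathbf{R}_F^{-1/2}$, $\mathbf{M}=\mathbf{R}_G^{-1/2}$ — exactly the normalization trick — replaces $\mathbf{f}_\theta,\mathbf{g}_\omega$ by $\overline{\mathbf{f}_\theta},\overline{\mathbf{g}_\omega}$ whose components form orthonormal systems in $L^2(\mathbb{P}_X)$ and $L^2(\mathbb{P}_Y)$, and turns the cost into $-\text{Trace}(\overline{\mathbf{P}}_{FG}\,\overline{\mathbf{P}}_{FG}^\intercal) = -\|\overline{\mathbf{P}}_{FG}\|_F^2$ with $\overline{\mathbf{P}}_{FG} = \mathbf{R}_F^{-1/2}\mathbf{P}\mathbf{R}_G^{-1/2}=\mathbb{E}_{X,Y}[\overline{\mathbf{f}_\theta}(X)\overline{\mathbf{g}_\omega}(Y)^\intercal]$. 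So it suffices to prove $\max \|\overline{\mathbf{P}}_{FG}\|_F^2 = \sum_{k=1}^K \sigma_k$ over all orthonormal $K$-tuples $\{\overline{f}_i\},\{\overline{g}_j\}$.

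Next I would substitute the density ratio decomposition. Writing $p(X,Y)=\rho(X,Y)p(X)p(Y)$ and $\rho=\sum_k \sqrt{\sigma_k}\,\phi_k(X)\psi_k(Y)$, each entry becomes $(\overline{\mathbf{P}}_{FG})_{ij} = \sum_k \sqrt{\sigma_k}\,\langle \overline{f}_i,\phi_k\rangle_{L^2(\mathbb{P}_X)}\langle \overline{g}_j,\psi_k\rangle_{L^2(\mathbb{P}_Y)}$, i.e. $\overline{\mathbf{P}}_{FG}=\mathbf{A}\mathbf{D}\mathbf{B}^\intercal$ with $\mathbf{D}=\mathrm{diag}(\sqrt{\sigma_1},\sqrt{\sigma_2},\dots)$, $A_{ik}=\langle\overline{f}_i,\phi_k\rangle$, $B_{jk}=\langle\overline{g}_j,\psi_k\rangle$. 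Orthonormality of all four families makes $\mathbf{A},\mathbf{B}$ contractions, $\mathbf{A}^\intercal\mathbf{A}\preceq\mathbf{I}$, $\mathbf{B}^\intercal\mathbf{B}\preceq\mathbf{I}$, while Bessel's inequality gives $\sum_i A_{ik}^2\le \|\phi_k\|^2=1$ for each $k$ and $\sum_{i,k}A_{ik}^2=\sum_i\|\Pi_\phi \overline{f}_i\|^2\le K$. Hence $\|\overline{\mathbf{P}}_{FG}\|_F^2 = \text{Trace}(\mathbf{D}\mathbf{A}^\intercal\mathbf{A}\mathbf{D}\,\mathbf{B}^\intercal\mathbf{B}) \le \text{Trace}(\mathbf{D}\mathbf{A}^\intercal\mathbf{A}\mathbf{D}) = \sum_k \sigma_k\big(\sum_i A_{ik}^2\big)$, using that $\text{Trace}(\mathbf{S}\mathbf{T})\le\text{Trace}(\mathbf{S})$ for PSD $\mathbf{S}$ and $\mathbf{0}\preceq\mathbf{T}\preceq\mathbf{I}$. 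Setting $c_k=\sum_i A_{ik}^2\in[0,1]$ with $\sum_k c_k\le K$ and using $\sigma_1\ge\sigma_2\ge\cdots$, the maximum of $\sum_k\sigma_k c_k$ is $\sum_{k=1}^K\sigma_k$, attained at $c_1=\cdots=c_K=1$. Equality in the whole chain is realized by $\overline{f}_i=\phi_i$, $\overline{g}_j=\psi_j$ for $i,j=1,\dots,K$ (equivalently, the networks realizing the top $K$ eigenfunctions after normalization), which gives $\overline{\mathbf{P}}_{FG}=\mathrm{diag}(\sqrt{\sigma_1},\dots,\sqrt{\sigma_K})$ and $r_T=-\sum_{k=1}^K\sigma_k$; thus $r_T^*=-\sum_{k=1}^K\sigma_k$.

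The main obstacle I anticipate is not any individual inequality but the functional-analytic bookkeeping: one needs the operator $\mathbf{L}$ associated with $\rho$ to be Hilbert--Schmidt/trace-class so that $\sum_k\sigma_k$ converges and the trace manipulations (cyclicity, the PSD trace inequality, term-by-term integration of the Mercer series) are valid in infinite dimensions, and one needs a precise realizability hypothesis — that the network class can represent, or approximate to arbitrary accuracy, the eigenfunctions $\phi_1,\dots,\phi_K$ and $\psi_1,\dots,\psi_K$ — so the bound is actually achieved. I would dispatch the first point by invoking the same Mercer/trace-class assumption already used in the excerpt, and the second with a universal-approximation remark as in the prior FMCA analyses; the algebraic core is the short inequality chain above, which mirrors the known $\log\det$ argument with the convex aggregator $\mathbf{\xi}(x)=x$ replacing $\mathbf{\xi}(x)=-\log(1-x)$.
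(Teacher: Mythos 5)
Your proof is correct, but it takes a genuinely different --- and considerably more self-contained --- route than the paper's. The paper's argument is essentially a derivation by analogy with the $\log\det$ cost: it applies the Schur complement to write $r_L=\sum_k\log\bigl(1-\lambda_k(\mathbf{M})\bigr)$ with $\mathbf{M}=\mathbf{R}_F^{-1/2}\mathbf{P}\mathbf{R}_G^{-1}\mathbf{P}^\intercal\mathbf{R}_F^{-1/2}$, swaps the aggregator $\xi(x)=-\log(1-x)$ for $\xi(x)=x$ to obtain $-\mathrm{Trace}(\mathbf{M})$, and uses cyclicity of the trace to reach the stated form; the identification of the optimal value with $-\sum_{k=1}^K\sigma_k$ is then inherited from the prior FMCA results on where the eigenvalues of $\mathbf{M}$ land at the optimum, rather than proved within the lemma. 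You instead establish the variational characterization directly: reparametrization invariance reduces the cost to $-\lVert\overline{\mathbf{P}}_{FG}\rVert_F^2$ over orthonormal $K$-tuples, the Mercer expansion gives $\overline{\mathbf{P}}_{FG}=\mathbf{A}\mathbf{D}\mathbf{B}^\intercal$ with contractions $\mathbf{A},\mathbf{B}$, and the Bessel and positive-semidefinite trace inequalities pin the supremum at $\sum_{k=1}^K\sigma_k$, attained exactly when the whitened outputs span the top eigenfunctions. Your approach buys an actual optimality proof (with the attainment condition and the realizability hypothesis made explicit, and the infinite-dimensional caveats correctly identified as the trace-class assumption already implicit in Mercer's theorem); the paper's approach buys brevity and makes transparent that $r_L$ and $r_T$ are two members of the same family $T_\xi$. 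One shared caveat worth stating in either version: the density ratio always contains the trivial component $\phi_1=\psi_1\equiv 1$ with $\sigma_1=1$, so ``the top $K$ eigenvalues'' in the equality case $\overline{f}_i=\phi_i$, $\overline{g}_j=\psi_j$ includes this constant direction unless it is explicitly projected out.
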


\begin{proof}
Applying the Schur complement to $r_L$, we obtain $r_L = \log\det (\mathbf{I} - \mathbf{R}_F^{-\frac{1}{2}} \mathbf{P}\mathbf{R}_G^{-1} \mathbf{P}^\intercal \mathbf{R}_F^{-\frac{1}{2}})$. Denoting eigenvalues of a matrix as $\lambda_1(\cdot),\cdots,\lambda_K(\cdot)$, the cost becomes $r_L = \sum_k \log(1-\lambda_k(\mathbf{M}))$, where $\mathbf{M} = \mathbf{R}_F^{-\frac{1}{2}} \mathbf{P}\mathbf{R}_G^{-1} \mathbf{P}^\intercal \mathbf{R}_F^{-\frac{1}{2}}$. Optimizing the sum of eigenvalues instead, we use $Trace(\mathbf{M})$ and, based on the trace property $Trace(\mathbf{A}\mathbf{B}) = Trace(\mathbf{B}\mathbf{A})$, derive the trace cost for learning multivariate statistical dependence as $r_T = -{Trace}(\mathbf{R}_{F}^{-1}\mathbf{P}\mathbf{R}_G^{-1}\mathbf{P}^\intercal)$.
\end{proof}

FMCA-T is more computationally efficient as it uses only matrix operations of dimension \(K\). Directly optimizing the sum of the eigenvalues is also more stable than optimizing their logarithm.

\subsection{Channel-level and temporal-level dependencies}
\label{temporal_resolution}

\textbf{Motivations.} For EEG \(\mathbf{X}_{1:T}\) and EMG \(\mathbf{Y}_{1:T}\), FMCA-T applies two networks, \(\mathbf{f}_\theta\) and \(\mathbf{g}_\omega\), to minimize the matrix trace cost. Dependence is measured at two levels: \textbf{\textit{random-process level}}, measured by eigenvalues for the overall dataset dependence, and \textbf{\textit{trial level}}, measured by the density ratio—the higher the ratio, the greater the contribution of this pair of realizations to overall dependence. In cortico-muscular analysis, it is vital to understand how individual channels and time steps contribute to connectivity, especially in EEG signals, as they represent the temporal and spatial dynamics of brain. Hence, we propose localized density ratios to measure \textbf{\textit{temporal-level dependence}} and \textbf{\textit{channel-level dependence}}. The core idea is computing density ratios between channel-level and temporal-level features against the global trial-level features.

\vspace{10pt}

\textbf{Channel-level features.} We design a specialized network topology to generate features for individual channels and time intervals, ensuring that the internal layers of this network quantify channel-level and temporal-level features, similar to \cite{lawhern2018eegnet, zhang2024brant, yuan2024brant}. 

Given \(\mathbf{X}_{1:T} = [\mathbf{X}_{1:T}(1), \cdots, \mathbf{X}_{1:T}(C)]^\intercal \) for channels \(c = 1, \cdots, C\), we define a temporal network \(\mathcal{F}_{TN}: \mathbb{R}^{T} \rightarrow \mathbb{R}^K\) that maps single-channel signals to a \(K\)-dimensional feature space, and a channel network \(\mathcal{F}_{CN}: \mathbb{R}^{L \times K} \rightarrow \mathbb{R}^K\) that maps concatenated channel features to global features:

\begin{equation}
\begin{gathered}
\mathbf{Z}_c = \mathcal{F}_{TN}\left(\mathbf{X}_{1:T}(c)\right), \quad c = 1,\cdots, C; \quad \mathbf{Z}_F = \mathcal{F}_{CN}\left([\mathbf{Z}_1, \mathbf{Z}_2, \cdots, \mathbf{Z}_C]^\intercal\right),
\end{gathered}
\end{equation}

where $\mathbf{Z}_1, \mathbf{Z}_2, \cdots, \mathbf{Z}_C$ are channel-level features, and $\mathbf{Z}_F$ is global trial-level features.

\vspace{5pt}

\textbf{Channel-level dependence $\widehat{\rho_{C,F}}(c)$.} The density ratio of $\mathbf{Z}_1, \mathbf{Z}_2, \cdots, \mathbf{Z}_C$ relative to $\mathbf{Z}_F$ measures channel-level dependence. Post-training and with fixed parameters, we compute the ACF of the channel features $\mathbf{R}_C = \frac{1}{C} \mathbb{E}[\sum_{c=1}^C \mathbf{Z}_c  \mathbf{Z}_c{}^\intercal]$, the ACF of the global features $\mathbf{R}_F = \mathbb{E}[\mathbf{Z}_F  \mathbf{Z}_F^\intercal]$, and the CCF between them $\mathbf{P}_{C,F} = \frac{1}{C} \mathbb{E}[\sum_{c=1}^C \mathbf{Z}_c  \mathbf{Z}_F^\intercal]$. 

Next, the features are normalized as in the Sec.~\ref{algorithm}: \(\mathbf{Z}_c\) and \(\mathbf{Z}_F\) are normalized to \(\overline{\mathbf{Z}_c} = \mathbf{R}_C^{-\frac{1}{2}}\mathbf{Z}_c\) and \(\overline{\mathbf{Z}_F} = \mathbf{R}_F^{-\frac{1}{2}}\mathbf{Z}_F\) for orthonormality. The SVD of \(\mathbf{R}_C^{-\frac{1}{2}}\mathbf{P}_{C,F}\mathbf{R}_F^{-\frac{1}{2}} = \mathbf{U}\mathbf{S}^{\frac{1}{2}}\mathbf{V}\) is computed. The outputs are further normalized to \(\widehat{\mathbf{Z}_c} = \mathbf{U}^\intercal \overline{\mathbf{Z}_c}\) and \(\widehat{\mathbf{Z}_F} = \mathbf{V}^\intercal \overline{\mathbf{Z}_F}\) to guarantee invariance in the linear operator. Finally, the density ratio can be constructed as \(\widehat{\rho_{C,F}}(c) = \widehat{\mathbf{Z}_c}{}^\intercal \mathbf{S} \widehat{\mathbf{Z}_F}\).

This ratio \(\widehat{\rho_{C,F}}(c)\) is a function of channel $c$ and trial $\mathbf{X}$, implying the dependence between channel and global features. The greater the value, the stronger the activation of the channels, showing which channels contribute the most to the cortico-muscular connectivity.

\vspace{10pt}

\textbf{Temporal-level features and dependence.} To measure time-domain dependence, we compute density ratios between the internal features of the temporal network $\mathcal{F}_{TN}$ and the global features, in two steps: first, computing density ratios between \textbf{\textit{adjacent}} network layers; second, aggregating these ratios to consider all layers.

\textbf{Step 1: Construct density ratios $\widehat{\rho_{s-1,s, c}}(\tau_1, \tau_2)$ between \textit{adjacent} layers.} Fix a channel $c$ and feature $\mathbf{Z}_c$. Consider a simple temporal network with $S$ convolution layers with nonlinaer activation functions: $\mathcal{F}_{TN}^{(1)}, \mathcal{F}_{TN}^{(2)}, \cdots, \mathcal{F}_{TN}^{(S)}$, with kernel sizes $\Delta_1, \Delta_2, \cdots, \Delta_S$, and their outputs $\mathbf{Z}_{c, 1}, \mathbf{Z}_{c, 2}, \cdots, \mathbf{Z}_{c, S}$. Suppose the time dimensions of these layers are $T_1, T_2, \cdots, T_S$. Fix any layer $s$. The $\tau$-th element of $\mathbf{Z}_{c, s}$, denoted as $\mathbf{Z}_{c, s}(\tau)$, is obtained by applying a nonlinear operation to a segment of the previous layer’s output:
\begin{equation}
\begin{gathered}
\mathbf{Z}_{c, s}(\tau) = \mathcal{F}_{TN}^{(s-1)}\left(\mathbf{Z}_{c, s-1}(\tau:\tau+\Delta_{s-1})\right). 
\end{gathered}
\label{equation_convolution}
\end{equation}

\begin{itemize}[leftmargin=*]
\item Define the ACF of layer $s-1$: $\mathbf{R}_{c, s-1} = \frac{1}{T_{s-1}} \mathbb{E}[\sum_{\tau} \mathbf{Z}_{c, s-1}(\tau) \mathbf{Z}_{c, s-1}^\intercal(\tau)]$ 
\item Define the ACF of layer $s$: $\mathbf{R}_{c, s} = \frac{1}{T_s} \mathbb{E}[\sum_{\tau} \mathbf{Z}_{c, s}(\tau) \mathbf{Z}_{c, s}^\intercal(\tau)]$ 
\item Define the CCF between them: $\mathbf{P}_{c, s-1,s} = \frac{1}{T_{s}} \mathbb{E}[\sum_{\tau}\sum_{\delta=1}^{\Delta_s} \mathbf{Z}_{c, s-1}(\tau+\delta) \mathbf{Z}_{c, s}^\intercal(\tau)]$.
\end{itemize}
Normalization with $\mathbf{R}_{c, s-1}$, $\mathbf{R}_{c, s}$, and $\mathbf{P}_{c, s-1,s}$ yields density ratios $\widehat{\rho_{s-1,s, c}}(\tau_1, \tau_2)$, which quantify the dependence between time $\tau_1$ and $\tau_2$ across two layers $s-1$ and $s$, for a given trial and channel $c$. A higher value indicates a stronger dependence between adjacent network layers.

\textbf{Step 2: Aggregate layer-wise ratios for localized responses $\widehat{\varrho_{s, c}}(\tau)$.} While $\widehat{\rho_{s-1,s, c}}(\tau_1, \tau_2)$ quantifies dependence between two layers, we aggregate these ratios to account for all network layers.

Again, fix the element $\mathbf{Z}_{c, s}(\tau)$ in layer $s$. We focus on its mapping to the next layer $s+1$. Based on Eq.~\eqref{equation_convolution}, elements in layer $s+1$ that are mapped from $\mathbf{Z}_{c, s}(\tau)$ by $\mathcal{F}_{FP}^{(s)}$ are within a window of size $\Delta_s$ (the kernel size). To ensure this window stays within the feature vector's boundary, we define coordinates $ \mathcal{I}_s = \left[ \max\left(0, \tau-\Delta_s+1\right), \min\left(i, T_{s+1}-1\right) \right] $. Feature elements in layer $s+1$ that are mapped from $\mathbf{Z}_{c,s}(\tau)$ fall within these coordinates.

We then create a series of functions $\widehat{\varrho_{s, c}}(\tau)$ with $\tau \in [1, T_s]$ for each layer $s$ as the aggregations of the ratios. Starting from $\widehat{\varrho_{S, c}}(\tau) = \widehat{\rho_{C,F}}(c)$ (channel-level density ratio), compute recursively 

\begin{equation}
\begin{gathered}
\widehat{\varrho_{s, c}}(\tau_1) = \sum_{\tau_2 \in I_s} \widehat{\varrho_{s+1, c}} (\tau_2) \widehat{\rho_{s, s+1, c}}(\tau_1, \tau_2), \;\;\tau_1 \in [1, T_s]
\end{gathered}
\label{telescoping_post_training}
\end{equation}

That is, starting from the top layer of the network, we aggregate the density ratios within the window $\mathcal{I}_s$, layer-by-layer, until we generate a localized measurement for each element of the function $\widehat{\varrho_{s, c}}(\tau)$ at layer $s$, channel $c$, and time $\tau \in [1, T_s]$, considering all neural network layers. \vspace{5pt}

The final localized responses of the density ratio, ${\widehat{\varrho_{s, c}}}(\tau)$, obtained in a top-down manner, are functions of the EEG trial $\mathbf{X}_{1:T}$, time step $\tau$, and channel $c$, providing both temporal and channel-level resolution. The same analysis applies to EMG signals, differing only in the number of channels.

\section{Experiments}\label{sec:exps}

Our experiments have three key findings: (1) Dependence measured by FMCA-T is stable against nonstationary noises and delays in simulated dataset; (2) Learning from unlabeled EEG-EMG pairs extracts movement and subject information from EEG's eigenfunctions; (3) EEG’s spatio-temporal dependencies are consistent with ground truth brain activations in simulated dataset and match theoretical evidence in experimental dataset.

\subsection{Datasets and baselines}

\label{Sec_dataset}

\textbf{Dataset 1: SinWav.} We construct a simulated dataset where each data pair \(\{\mathbf{X}_{1:T}, \mathbf{Y}_{1:T}\}\) has a clean sinusoidal signal \(\mathbf{X}_t = A\sin(\omega t)\) and a noisy counterpart \(\mathbf{Y}_t\) superimposed with various types of noise: stationary white noise \(\epsilon_t \sim \mathcal{N}(0, \sigma^2)\), nonstationary Gaussian noise \(\sigma_t^2 \propto |\mathbf{X}_t|\), and nonstationary pink noise \(S(f) \propto 1/f^\alpha\). Random delays are added by padding the start and end of the signal with noise such that \(\mathbf{Y}_t = \epsilon_t\) is white for \(1 \leq t \leq \tau_1\) and \(T - \tau_2 < t \leq T\), and \(\mathbf{Y}_t = \mathbf{X}_{t - \tau_1}\) is sinusoidal for \(\tau_1 < t \leq T - \tau_2\), where \(\tau_1 + \tau_2 = \tau\) is fixed. Since these noises do not change the underlying sinusoid, the signal pairs are statistically independent when conditioned on the sinusoid. Thus, we expect the dependence measure to be unaffected by the noise level.

\vspace{5pt}

\textbf{Dataset 2: EEG-EMG-Fusion.} We use a public dataset~\cite{jeong2020multimodal} (approved by the Institutional Review Board at Korea University, 1040548-KU-IRB-17-181-A-2) with paired 60-channel EEG and 7-channel EMG recordings from 25 subjects. The subjects perform three \textit{\textbf{main movements}}: arm-reaching, hand-grasping, and wrist-twisting. Each main movement contains \textit{\textbf{sub-movements}}: arm-reaching along six directions, hand-grasping three objects, and wrist-twisting with two motions, and thus $11$ movements in total. Subjects perform one sub-movement per trial, and 50 trials are collected per sub-movement. The same recordings are repeated for three sessions at one-week intervals. Both EEG and EMG are recorded at 2,500 Hz and downsampled to 1,000 Hz. The dataset is cleaned by removing eye-blinking artifacts and baseline wandering, and segmented into 4-second intervals, creating 41,250 paired samples of complete movement cycles. \vspace{5pt}

\textbf{Dataset 3: Simulated EEG-EMG Dataset.} We simulate 128-channel EEG signals and 7-channel EMG signals for left/right motor and sensory activations from 20 subjects using EEGSourceSim \cite{barzegaran2019eegsourcesim}. Motor sources are used to simulate the corresponding EMG signals. Both EEG and EMG are sampled at 1,000 Hz. White Gaussian noise at 5 dB is added to the EEG and EMG signals. FMCA-T is trained on 16 subjects and tested on 4 subjects to compare FMCA-T's spatial-level dependence representation with the ground truth activations, as shown in Fig.~\ref{EEG_EMG_SIMULATED}.

\vspace{5pt}

\textbf{Classification tasks.} We conduct three classification experiments: \textbf{\textit{3-class}} (three main movements), \textbf{\textit{11-class}} (11 sub-movements), and \textbf{\textit{Subj}} (25 subjects). We also compare \textit{\textbf{inter-subject}} and \textit{\textbf{cross-subject}} classifications. Cross-subject means the test set contains only unseen subjects. Inter-subject uses an 80-20 split of trials from all subjects for training and testing, while cross-subject uses 20 subjects for training and 5 for testing with five-fold cross-validation.

\vspace{5pt}

\textbf{Statistical dependence baselines.} We compare our proposed dependence measure with established baselines: (1) KICA~\cite{bach2002kernel} and HSIC~\cite{gretton2005measuring}, which solve the generalized eigenvalue problem of two kernel Gram matrices, using $\sum_{i} \lambda_i$ for HSIC and $-\sum_{i} \log(1-\lambda_i)$ for KICA; (2) MINE~\cite{belghazi2018mine}, which optimizes the Donsker-Varadhan representation with a three-layer MLP; (3) CC: Pearson correlation coefficient averaged over time; (4) MIR (KNN estimator~\cite{kraskov2004estimating}), which optimizes entropies using k-nearest neighbor distances, and then computes mutual information; (5) Our method uses density ratios for trial-level dependence and eigenvalue aggregations $T_{\mathbf{\xi}}$ for random-process-level dependence.

For the EEG-EMG dependence study, we compare with CMC, the correlation coefficient between EEG and EMG spectra of windowed data on the alpha band of channel C4~\cite{jeong2020multimodal}. We extend CMC by replacing linear correlation with nonlinear measures, computing CMC-KICA and CMC-MIR.

\vspace{5pt}

\textbf{EEG feature projector baselines.} After training ${\mathbf{f}_\theta}$ and ${\mathbf{g}_\omega}$ networks for dependence estimation, with parameters fixed, we train a three-layer MLP on EEG's eigenfunctions ($\widehat{\mathbf{f}_\theta}(X)$) for classification. This is compared with baseline EEG classifiers trained from scratch: (1) {Supervised}: \textit{Vanilla Classifier}, with the same topology as ours but using a standard log-likelihood cost; \textit{EEG-Net}~\cite{lawhern2018eegnet}, a specialized network for EEG-based BCI;
\textit{EEG-Conformer}~\cite{song2022eeg}, a compact convolutional transformer for EEG decoding and visualization; \textit{Deep4}~\cite{schirrmeister2017deep}, a deep ConvNet for classification using raw EEG;
and \textit{CSP-RLDA}~\cite{ang2012filter}, using common spatial pattern (CSP) for feature extraction and regularized linear discriminant analysis (RLDA), adapted for multi-class classification with majority voting. (2) {Self-Supervised}: contrastive costs using 1-second windows from the same signal as positive pairs and from different signals as negative pairs, including \textit{Barlow Twins}~\cite{zbontar2021barlow}, \textit{SimCLR}~\cite{chen2020simple}, and \textit{VicReg}~\cite{bardes2021vicreg}. Experimental and implementation details can refer to the App.~\ref{appendix:implementation}.
 
\vspace{5pt}

\subsection{Main results}\label{sec:main_results}

\textbf{Robustness of FMCA-T.} Fig.~\ref{fig:sine_comp} shows the robustness of our proposed measure in the SinWav dataset, when there are increasing levels of nonstationary noise and delays. Since EEG and EMG signals are often damaged and distorted by environmental noise and the functional coupling occurs with time delays, an effective measure should maintain its robustness against these factors. \vspace{-5pt}

\begin{figure}[h]

    \centering

\includegraphics[width=.9\columnwidth]{./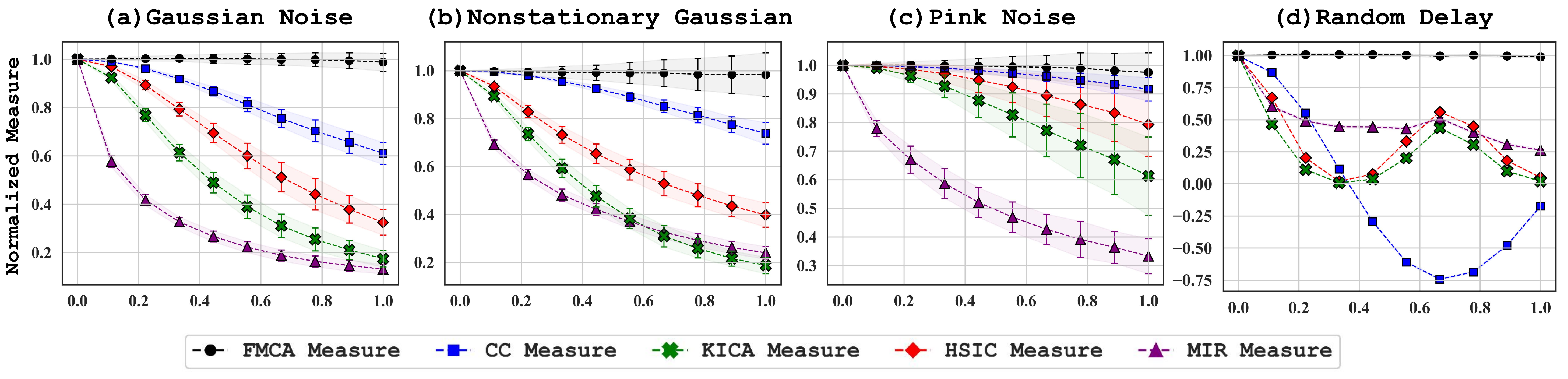}\vspace{-5pt}

    \caption{\small Density ratios from FMCA-T are robust to various noise types: (a) stationary white Gaussian noise, (b) nonstationary Gaussian noise, (c) nonstationary pink noise, and (d) random delays. FMCA-T proves the most robust estimations across all noise types and outperforms all linear and nonlinear baselines. Note that as delays increase, estimations using CC produce negative values given the opposite phase between the paired sinusoids. }

    \label{fig:sine_comp}

\end{figure} \vspace{-5pt}

In Fig.~\ref{fig:sine_comp}, FMCA-T is first trained on noisy data pairs with all four noise types and magnitudes (Sec.~\ref{Sec_dataset}). Using the trained models, we measure the dependence between a clean sinusoid and its noisy counterpart. A noise level of $1.0$ means the noise magnitude matches the sinusoid. The delay level determines the extent to which the clean sinusoid is shifted from its original position. A delay level of $1.0$ shifts the sinusoid to have no intersection with the original one. FMCA-T consistently shows invariance to noise and delays, as the dependence is determined by their frequency but not by the noise and phase shift. MINE fails to converge and produce stable results for this dataset.

\vspace{12pt}

\textbf{Applying FMCA-T to EEG-EMG-Fusion.} We confirm our primary hypothesis that the projection space defined by EEG eigenfunctions, derived from modeling the statistical dependence between EEG-EMG recordings, captures essential contextual factors like movements and subjects without requiring labels. We visualize the learned eigenfunctions and density ratios in Fig.~\ref{CLUSTERFIGURE}.

\textbf{Fig.~\hyperref[CLUSTERFIGURE]{4(a)}, Fig.~\hyperref[CLUSTERFIGURE]{4(b)}: eigenfunctions $\widehat{\mathbf{f}_\theta}$.} EEG's eigenfunctions effectively capture relevant contextual information. After training eigenfunctions using the entire dataset, we extract a subset of eigenfunctions that belong to a specific subject or a movement and apply t-SNE to visualize them.

Fig.~\hyperref[CLUSTERFIGURE]{4(a)} visualizes the eigenfunctions of all trials for one subject (\textbf{\texttt{SUB1}}). Each trial is color-coded by the type of movement (\textbf{\texttt{MOV1}}$\sim$\textbf{\texttt{MOV3}}). Notably, the eigenfunctions form nine clusters, which are verified to correspond to the three movements recorded over three sessions. This demonstrates that the eigenfunctions contain motion-related information. The consistent clustering patterns across all 25 subjects are detailed in the App.~\ref{appendix:exp}.

\vspace{-5pt}

Fig.~\hyperref[CLUSTERFIGURE]{4(b)} visualizes the eigenfunctions from a single type of movement (reaching, labeled as \textbf{\texttt{MOV1}}) across ten different subjects (\textbf{\texttt{SUB1}}$\sim$\textbf{\texttt{SUB10}}). Each color represents a subject. Distinct clustering patterns are observed, showing that the eigenfunctions also contain subject information which could be useful for participant identification.

\textbf{Fig.~\hyperref[CLUSTERFIGURE]{4(c)},~\hyperref[CLUSTERFIGURE]{4(d)}: density ratio $\widehat{\rho(X, Y)}$.} Based on the t-SNE plot for \textbf{\texttt{SUB1}} trials, we plot the estimated density ratio values between each EEG-EMG pair in Fig.~\hyperref[CLUSTERFIGURE]{4(c)}. In Fig.~\hyperref[CLUSTERFIGURE]{4(d)}, we extract the mean of density ratios for trials in each cluster (\textbf{\texttt{C1}}$\sim$\textbf{\texttt{C9}}), rank them from smallest to largest, and plot them alongside the standard deviation. These figures show that the density ratios remain consistent within each cluster (a movement during a session) while vary across different clusters. We find the highest dependence in reaching, followed by grasping, and the lowest in twisting. Our results are consistent with existing literature that links cortico-muscular connectivity with movement types~\cite{guerrero2022coherence, ye2022investigation}. The results are consistent across all subjects, detailed in the App.~\ref{appendix:exp}.

\textbf{Fig.~\hyperref[CLUSTERFIGURE]{4(e)}$\sim$\hyperref[CLUSTERFIGURE]{(h)}} presents the results of MINE and CMC measures for \textbf{\texttt{SUB1}} trials. Only MINE produces a comparable measure that shows difference across clusters, but with higher variance and instability.

\begin{figure}[h]

  \centering

    \phantomsection

    \includegraphics[width=.9\linewidth]{./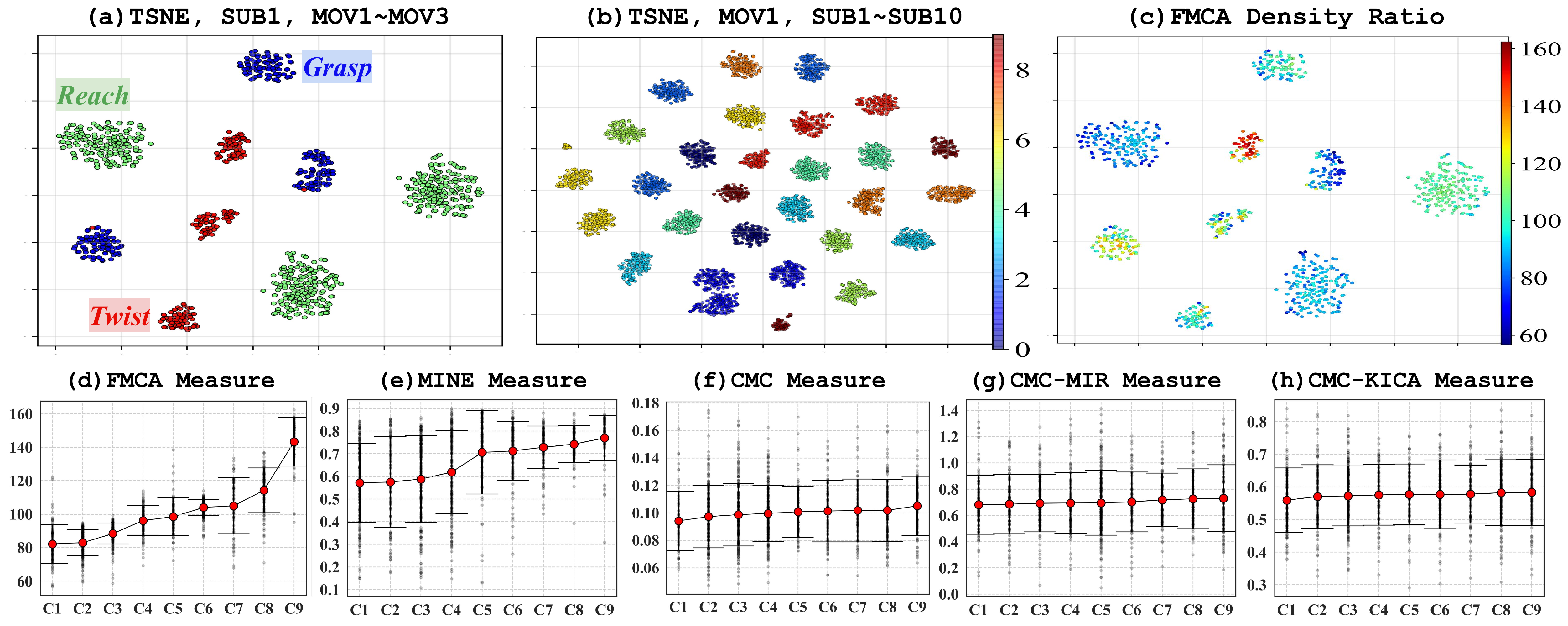}


\caption{\small Visualizing eigenfunctions and density ratios in EEG-EMG fusion with FMCA-T: (a) t-SNE of EEG's eigenfunctions for a single subject (\textbf{\texttt{SUB1}}) show nine clusters specific to three movements (\textbf{\texttt{MOV1}}$\sim$\textbf{\texttt{MOV3}}) across three sessions. (b) t-SNE of EEG's eigenfunctions for reaching movement (\textbf{\texttt{MOV1}}) across 10 subjects (\textbf{\texttt{SUB1}}$\sim$\textbf{\texttt{SUB10}}) shows clusters specific to subjects, where each color is a subject. (c) Density ratios and (d) their mean and std of each cluster (\textbf{\texttt{C1}}$\sim$ \textbf{\texttt{C9}}) demonstrate intra-cluster consistency and inter-cluster separability. (e-h) Comparison of baseline measures, where only MINE is comparable but with higher variance and instability. }\label{CLUSTERFIGURE}

\end{figure}

\textbf{EEG's eigenfunctions as optimal feature projector.} Table~\ref{tab:sctable} validates the claims in Fig.~\ref{CLUSTERFIGURE} with classification accuracy comparisons. We extract EEG eigenfunctions from the training set after the networks are trained on EEG-EMG pairs. The eigenfunctions are used to train a three-layer MLP for classification. This classifier predicts the class of any EEG test samples using its eigenfunctions (output of $\widehat{\mathbf{f}_\theta}$), without requiring EMG samples. As detailed in the Sec.~\ref{Sec_dataset}, scores for $\textbf{\textit{3-class}}$, $\textbf{\textit{11-class}}$, and $\textbf{\textit{subj}}$ are presented in both $\textbf{\textit{inter-subject}}$ and $\textbf{\textit{cross-subject}}$ settings.
\vspace{2pt}

In inter-subject 3-class classification (80-20 split across trials from 25 subjects), FMCA-T exceeds the supervised baseline (EEGNet) by 7.2\%, the classical EEG decoding method (CSP-RLDA) by 1.0\%, and self-supervised methods by over 9.5\%. Notably, CSP-RLDA is trained and tested on each individual subject, with the accuracy averaged across 25 subjects, thereby representing an upper bound for classical methods. All other methods are trained and tested on the combined subject data. For the 11-class classification task, FMCA-T surpasses all baselines with a classification accuracy of 0.32, significantly higher than the chance level of 0.09. Since CSP-RLDA uses binary classification with majority voting, it is computationally infeasible for 11 sub-movements classification.  

\vspace{2pt}

In the more challenging cross-subject classification (trained on 20 subjects, tested on 5), FMCA-T with trace loss outperforms all baselines by over 10\%, achieving an accuracy of 0.54 in the 3-class task. The highest scores from 10,000 iterations are recorded, and experiments are repeated with five-fold validation. The superior performance of FMCA-T in the cross-subject setting suggests that learning EEG-EMG dependence is robust against distribution shifts and nonstationary noise, which is consistent with the observation that self-supervised methods outperform supervised ones.

Comparing FMCA-LD ($\log\det$) and FMCA-T (matrix trace), we find that trace cost has greater stability and reduced variance. The sum of eigenvalues, especially during prolonged training, is more stable. While both costs show similar performance at the initial training stages, FMCA-T has notably reduced variance during the convergence stage of training. 

\vspace{5pt}

\begin{SCtable}[][h]
\centering
\resizebox{.75\columnwidth}{!}{%
\begin{tabular}{>{\raggedright\arraybackslash}p{2.7cm}ccc>{\raggedright\arraybackslash}p{2cm}cc}
\toprule
\textbf{Methods} & \multicolumn{3}{c}{\textbf{(a) Inter-Subject Acc.}} & \multicolumn{2}{c}{\textbf{(b) Cross-Subject Acc.}} \\
\cmidrule(lr){2-4} \cmidrule(l){5-7}
                 & \textit{\textbf{3-Class}} & \textit{\textbf{11-Class}} & \textit{\textbf{Subj}} & \textbf{\textit{\;\;\;\;3-Class}} & \textbf{\textit{11-Class}} \\ \midrule
\rowcolor{myblue!0}\multicolumn{6}{l}{\scriptsize \textit{Supervised}} \\
\rowcolor{myblue!0}\quad {Vanilla} & 0.907±0.020 & 0.220±0.015 & 0.980±0.010 & 0.427±0.021 & 0.110±0.005 \\
\rowcolor{myblue!0}\quad {EEGNet} & 0.904±0.015 & 0.246±0.028 & 0.988±0.007 & 0.405±0.019 & 0.095±0.021 \\
\rowcolor{myblue!0}\quad {EEG-Conformer} & 0.949±0.001 & 0.268±0.001 & 0.976±0.002 & 0.415±0.002 & 0.105±0.001 \\
\rowcolor{myblue!0}\quad {Deep4} & 0.901±0.001 & 0.274±0.001 & 0.941±0.001 & 0.429±0.001 & \textbf{0.140±0.000} \\
\rowcolor{myblue!0}\quad {CSP-RLDA} & 0.985±0.019 & / & / & 0.408±0.018 & / \\ \midrule
\multicolumn{6}{l}{\scriptsize \textit{Self-Supervised}} \\
\rowcolor{myblue!0}\quad {Barlow Twins} & 0.893±0.018 & 0.269±0.012 & 0.987±0.008 & 0.437±0.018 & 0.115±0.004 \\
\rowcolor{myblue!0}\quad {SimCLR} & 0.890±0.019 & 0.257±0.013 & 0.979±0.011 & 0.441±0.020 & 0.117±0.006 \\
\rowcolor{myblue!0}\quad {VicReg} & 0.899±0.016 & 0.274±0.014 & 0.980±0.009 & 0.449±0.016 & 0.115±0.005 \\ \midrule
\rowcolor{myblue!30}\multicolumn{6}{l}{\scriptsize \textit{EEG-EMG Dependence}} \\
\rowcolor{myblue!30}\quad {FMCA-LD} & 0.985±0.003 & 0.257±0.011 & 0.989±0.007 & 0.509±0.014 & 0.115±0.003 \\
\rowcolor{myblue!30}\quad {FMCA-T} & \textbf{0.994±0.002} & \textbf{0.320±0.009} & \textbf{0.998±0.004} & \textbf{0.540±0.012} & \textbf{0.121±0.002} \\
\hline\hline
\end{tabular}
}
\caption{\sloppy \small Comparison of classification accuracies: supervised, self-supervised, and our EEG-EMG dependence learning. FMCA-T's eigenfunctions, trained with trace cost without labels, are optimal feature projectors for EEG. EMG is not required for testing, but only used for training.}
\label{tab:sctable}
\end{SCtable}\vspace{9pt}


\textbf{Spatio-temporal dependencies - real data.} We visualize the local density ratio responses  of cluster \texttt{SUB3-C1} (reaching movement) in both spatial \((\widehat{\rho_{C,F}}(c))\) and temporal domains \((\widehat{\varrho_{s, c}}(\tau))\) in Fig.~\ref{EEG_EIG_PROJ}. The channel-level dependence is averaged across all trials and displayed in Fig.~\hyperref[EEG_EIG_PROJ]{5(a)}. We also randomly select nine trials from the same cluster and visualize them in Fig.~\hyperref[EEG_EIG_PROJ]{5(b)}. The temporal-level dependence for the first trial \texttt{T1} in \texttt{SUB3-C1} is shown in Fig.~\hyperref[EEG_EIG_PROJ]{5(c)}. Consistent activations are observed in other subjects, details in the App.~\ref{appendix:exp}.

As illustrated in Fig.~\ref{EEG_EIG_PROJ}, the localized density ratio remains stable throughout the 4-second movement, corroborating the consistency of brain-muscle connectivity during stable states \cite{liu2019corticomuscular}. We also find that in channel-level dependence, the density ratio activates the fronto-central (FC) areas. The sensorimotor area is crucial for movement control, with EEG data from these regions often used to decode motor intentions. However, motor control also relies on cognitive processes \cite{gallivan2018decision}, especially during movement planning, complex tasks, and collaborations \cite{jia2024enhancing}. Thus, the region of Brodmann area 6, well acknowledged to playing a role in movement planning may contribute differently during various movement tasks \cite{dum2004motor}. Our findings show that the features extracted from these fronto-central areas play an important role in classification.

\vspace{5pt}

\textbf{Spatio-temporal dependencies - simulated data.} We implement FMCA-T on paired EEG-EMG samples from the simulated dataset, using 16 subjects for training and 4 subjects for testing. We compare FMCA-T's spatial-level dependence maps for these 4 testing subjects with their ground truth brain activations computed by the motor ROI and forward matrices, shown in Fig.~\ref{EEG_EMG_SIMULATED}. FMCA-T's spatial-level dependencies are highly similar to the ground truth activations, indicating that the learned density ratios effectively captured the real brain activations.
\begin{figure}[H]

  \centering

    \includegraphics[width=1\linewidth]{./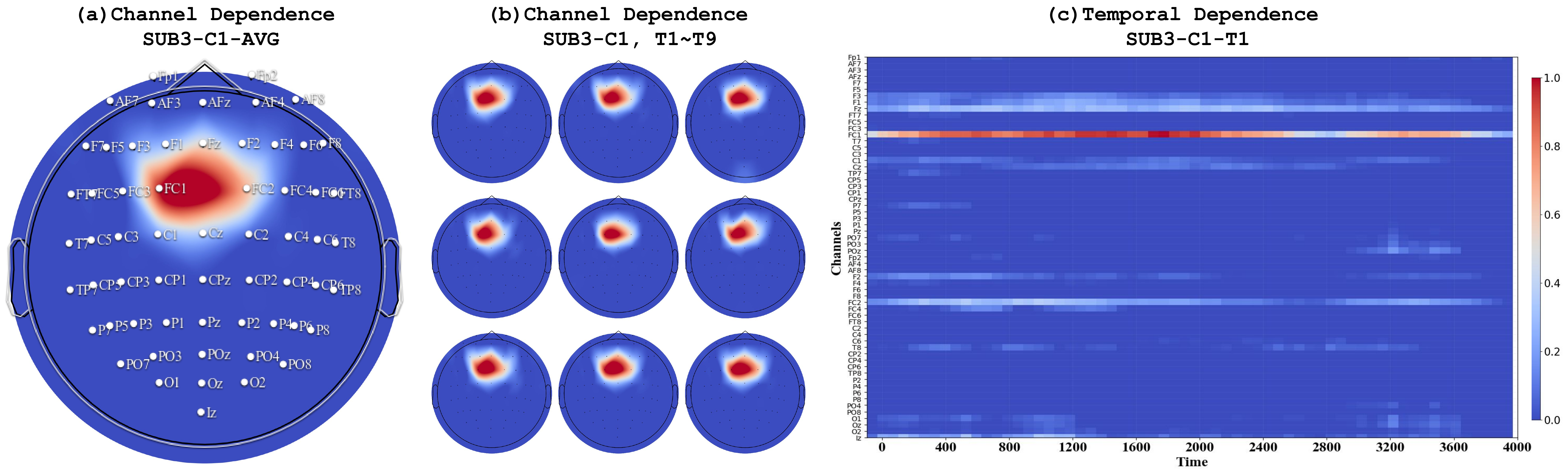}

\caption{\small Localized density ratio for real data. Topographies for \texttt{SUB3}, \texttt{C1} (reaching session) are normalized to the range $[0, 1]$. (a) displays the averaged spatial distribution of \texttt{C1} across all $50$ trials; (b) presents nine random trials \texttt{T1-T9} from this session. Dark red indicates prominent activations around channel FC1. (c) shows 
stable temporal-level dependence over the 4-second movement.
}
\label{EEG_EIG_PROJ} 
\vspace{-3pt}
\end{figure} \vspace{-20pt}

\begin{figure}[H]

  \centering
    \includegraphics[width=1\linewidth]{./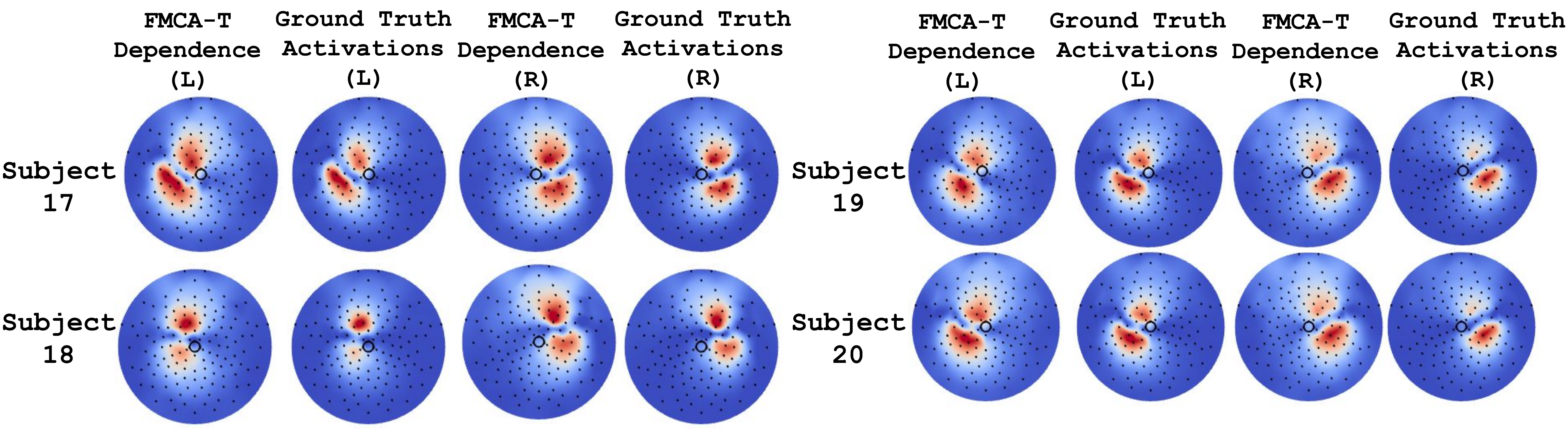}
\vspace{-15pt}
\caption{\small Localized density ratio for simulated data, showing that FMCA-T's distribution is consistent with ground truth brain activations when evaluated on four test subjects. Topographies are normalized to the range $[0, 1]$. L indicates left brain activities and R indicates right brain activities.}
\label{EEG_EMG_SIMULATED} 
\end{figure}

\vspace{-10pt}
\section{Conclusion}
This paper introduces a novel approach for estimating cortico-muscular dependence through the orthonormal decomposition of the density ratio. By treating the density ratio as a positive definite function and learning its projection space from EEG and EMG, we unveil the relationship between statistical dependence, contextual factors impacting connectivity, and the spatio-temporal information shared between the brain and muscles. While our method shows promising results, challenges remain. For example, performance drops in cross-subject classification, likely due to the limited dataset of 25 participants. Future work will focus on applying our framework to larger datasets and incorporating additional bio-signal modalities to model a broader common space in neural data.

{\setlength{\bibsep}{6pt}
\fontsize{9}{10.5}\selectfont

\bibliography{reference}}

\begin{thebibliography}{10}

\bibitem{Baker1999}
S.~N. Baker, J.~M. Kilner, E.~M. Pinches, and R.~N. Lemon.
\newblock The role of synchrony and oscillations in the motor output.
\newblock {\em Experimental Brain Research}, 128:109--117, 1999.

\bibitem{Baker2007}
Stuart~N. Baker.
\newblock Oscillatory interactions between sensorimotor cortex and the periphery.
\newblock {\em Current Opinion in Neurobiology}, 17:649--655, 2007.

\bibitem{Schoffelen2005}
Jan~Mathijs Schoffelen, Robert Oostenveld, and Pascal Fries.
\newblock Neuronal coherence as a mechanism of effective corticospinal interaction.
\newblock {\em Science}, 308:111--113, 2005.

\bibitem{brambilla2021}
Cristina Brambilla, Ileana Pirovano, Robert~Mihai Mira, Giovanna Rizzo, Alessandro Scano, and Alfonso Mastropietro.
\newblock Combined use of \uppercase{EMG} and \uppercase{EEG} techniques for neuromotor assessment in rehabilitative applications: A systematic review.
\newblock {\em Sensors}, 21(21), 2021.

\bibitem{grosse2002}
P.~Grosse, M.J. Cassidy, and P.~Brown.
\newblock \uppercase{EEG$-$EMG, MEG$-$EMG} and \uppercase{EMG$-$EMG} frequency analysis: physiological principles and clinical applications.
\newblock {\em Clinical Neurophysiology}, 113(10):1523--1531, 2002.

\bibitem{Mcmackin2023biomarkers}
Roisin McMackin, Peter Bede, Caroline Ingre, Andrea Malaspina, and Orla Hardiman.
\newblock Biomarkers in amyotrophic lateral sclerosis: current status and future prospects.
\newblock {\em Nature Reviews Neurology}, 19(12):754--768, 2023.

\bibitem{Gao2024influencing}
Zhixian Gao, Shiyang Lv, Xiangying Ran, Yuxi Wang, Mengsheng Xia, Junming Wang, Mengyue Qiu, Yinping Wei, Zhenpeng Shao, Zongya Zhao, et~al.
\newblock Influencing factors of corticomuscular coherence in stroke patients.
\newblock {\em Frontiers in Human Neuroscience}, 18:1354332, 2024.

\bibitem{Zokaei2021}
Nahid Zokaei, Andrew~J Quinn, Michele~T Hu, Masud Husain, Freek van Ede, and Anna~Christina Nobre.
\newblock Reduced cortico-muscular beta coupling in {Parkinson}'s disease predicts motor impairment.
\newblock {\em Brain Communications}, 3, 2021.

\bibitem{chowdhury2019eeg}
Anirban Chowdhury, Haider Raza, Yogesh~Kumar Meena, Ashish Dutta, and Girijesh Prasad.
\newblock An {EEG}-{EMG} correlation-based brain-computer interface for hand orthosis supported neuro-rehabilitation.
\newblock {\em Journal of Neuroscience Methods}, 312:1--11, 2019.

\bibitem{mima1999corticomuscular}
Tatsuya Mima and Mark Hallett.
\newblock Corticomuscular coherence: a review.
\newblock {\em Journal of Clinical Neurophysiology}, 16(6):501, 1999.

\bibitem{pfurtscheller1999event}
Gert Pfurtscheller and FH~Lopes Da~Silva.
\newblock Event-related {EEG}/{MEG} synchronization and desynchronization: basic principles.
\newblock {\em Clinical Neurophysiology}, 110(11):1842--1857, 1999.

\bibitem{jensen2007cross}
Ole Jensen and Laura~L Colgin.
\newblock Cross-frequency coupling between neuronal oscillations.
\newblock {\em Trends in Cognitive Sciences}, 11(7):267--269, 2007.

\bibitem{xu2016corticomuscular}
Yuhang Xu, Verity~M McClelland, Zoran Cvetkovi{\'c}, and Kerry~R Mills.
\newblock Corticomuscular coherence with time lag with application to delay estimation.
\newblock {\em IEEE Transactions on Biomedical Engineering}, 64(3):588--600, 2016.

\bibitem{belghazi2018mine}
Mohamed~Ishmael Belghazi, Aristide Baratin, Sai Rajeswar, Sherjil Ozair, Yoshua Bengio, Aaron Courville, and R.~Devon Hjelm.
\newblock {MINE}: Mutual information neural estimation.
\newblock In {\em Proceedings of the 35th International Conference on Machine Learning (ICML 2018)}, volume~80, pages 531--540, 2018.

\bibitem{jordan2}
XuanLong Nguyen, Martin~J Wainwright, and Michael~I Jordan.
\newblock Estimating divergence functionals and the likelihood ratio by convex risk minimization.
\newblock {\em IEEE Transactions on Information Theory}, 56(11):5847--5861, 2010.

\bibitem{jordan1}
XuanLong Nguyen, Martin~J Wainwright, and Michael~I Jordan.
\newblock On surrogate loss functions and f-divergences.
\newblock {\em The Annals of Statistics}, 37(2):876--904, 2009.

\bibitem{bach2002kernel}
Francis~R Bach and Michael~I Jordan.
\newblock Kernel independent component analysis.
\newblock {\em Journal of Machine Learning Research}, 3(Jul):1--48, 2002.

\bibitem{gretton2005measuring}
Arthur Gretton, Olivier Bousquet, Alex Smola, and Bernhard Sch{\"o}lkopf.
\newblock Measuring statistical dependence with {Hilbert}-{Schmidt} norms.
\newblock In {\em International Conference on Algorithmic Learning Theory}, pages 63--77. Springer, 2005.

\bibitem{hu2024normalized}
Bo~Hu and Jose~C. Principe.
\newblock The normalized cross density functional: A framework to quantify statistical dependence for random processes.
\newblock {\em arXiv preprint arXiv:2212.04631}, 2024.

\bibitem{huang2018gaussian}
Shao-Lun Huang, Gregory~W Wornell, and Lizhong Zheng.
\newblock Gaussian universal features, canonical correlations, and common information.
\newblock In {\em 2018 IEEE Information Theory Workshop (ITW)}, pages 1--5. IEEE, 2018.

\bibitem{huang2019universal}
Shao-Lun Huang, Anuran Makur, Gregory~W Wornell, and Lizhong Zheng.
\newblock On universal features for high-dimensional learning and inference.
\newblock {\em arXiv preprint arXiv:1911.09105}, 2019.

\bibitem{lawhern2018eegnet}
Vernon~J Lawhern, Amelia~J Solon, Nicholas~R Waytowich, Stephen~M Gordon, Chou~P Hung, and Brent~J Lance.
\newblock Eegnet: a compact convolutional neural network for {EEG}-based brain--computer interfaces.
\newblock {\em Journal of Neural Engineering}, 15(5):056013, 2018.

\bibitem{zhang2024brant}
Daoze Zhang, Zhizhang Yuan, Yang Yang, Junru Chen, Jingjing Wang, and Yafeng Li.
\newblock Brant: Foundation model for intracranial neural signal.
\newblock {\em Advances in Neural Information Processing Systems}, 36, 2024.

\bibitem{yuan2024brant}
Zhizhang Yuan, Daoze Zhang, Junru Chen, Geifei Gu, and Yang Yang.
\newblock Brant-2: Foundation model for brain signals.
\newblock {\em arXiv preprint arXiv:2402.10251}, 2024.

\bibitem{jeong2020multimodal}
Ji-Hoon Jeong, Jeong-Hyun Cho, Kyung-Hwan Shim, Byoung-Hee Kwon, Byeong-Hoo Lee, Do-Yeun Lee, Dae-Hyeok Lee, and Seong-Whan Lee.
\newblock Multimodal signal dataset for 11 intuitive movement tasks from single upper extremity during multiple recording sessions.
\newblock {\em GigaScience}, 9(10):giaa098, 2020.

\bibitem{kraskov2004estimating}
Alexander Kraskov, Harald St{\"o}gbauer, and Peter Grassberger.
\newblock Estimating mutual information.
\newblock {\em Physical Review E}, 69(6):066138, 2004.

\bibitem{ang2012filter}
Kai~Keng Ang, Zheng~Yang Chin, Chuanchu Wang, Cuntai Guan, and Haihong Zhang.
\newblock Filter bank common spatial pattern algorithm on {BCI} competition {IV} datasets 2a and 2b.
\newblock {\em Frontiers in Neuroscience}, 6:21002, 2012.

\bibitem{zbontar2021barlow}
Jure Zbontar, Li~Jing, Ishan Misra, Yann LeCun, and St{\'e}phane Deny.
\newblock Barlow twins: Self-supervised learning via redundancy reduction.
\newblock In {\em International Conference on Machine Learning}, pages 12310--12320. PMLR, 2021.

\bibitem{chen2020simple}
Ting Chen, Simon Kornblith, Mohammad Norouzi, and Geoffrey Hinton.
\newblock A simple framework for contrastive learning of visual representations.
\newblock In {\em International Conference on Machine Learning}, pages 1597--1607. PMLR, 2020.

\bibitem{bardes2021vicreg}
Adrien Bardes, Jean Ponce, and Yann LeCun.
\newblock Vicreg: Variance-invariance-covariance regularization for self-supervised learning.
\newblock {\em arXiv preprint arXiv:2105.04906}, 2021.

\bibitem{guerrero2022coherence}
Cristian~D Guerrero-Mendez and Andres~F Ruiz-Olaya.
\newblock Coherence-based connectivity analysis of {EEG} and {EMG} signals during reach-to-grasp movement involving two weights.
\newblock {\em Brain-Computer Interfaces}, 9(3):140--154, 2022.

\bibitem{ye2022investigation}
Fei Ye, JinSuo Ding, Kai Chen, and Xugang Xi.
\newblock Investigation of corticomuscular functional coupling during hand movements using vine copula.
\newblock {\em Brain Sciences}, 12(6):754, 2022.

\bibitem{liu2019corticomuscular}
Jinbiao Liu, Yixuan Sheng, and Honghai Liu.
\newblock Corticomuscular coherence and its applications: a review.
\newblock {\em Frontiers in Human Neuroscience}, 13:100, 2019.

\bibitem{gallivan2018decision}
Jason~P Gallivan, Craig~S Chapman, Daniel~M Wolpert, and J~Randall Flanagan.
\newblock Decision-making in sensorimotor control.
\newblock {\em Nature Reviews Neuroscience}, 19(9):519--534, 2018.

\bibitem{jia2024enhancing}
Tianyu Jia, Jingyao Sun, Ciar{\'a}n McGeady, Linhong Ji, and Chong Li.
\newblock Enhancing brain--computer interface performance by incorporating brain-to-brain coupling.
\newblock {\em Cyborg and Bionic Systems}, 5:0116, 2024.

\bibitem{dum2004motor}
Richard~P Dum and Peter~L Strick.
\newblock Motor areas in the frontal lobe: the anatomical substrate for the central control of movement.
\newblock In {\em Motor Cortex in Voluntary Movements}, pages 3--48. CRC Press, 2004.

\bibitem{sun2023enhancement}
Jingyao Sun, Tianyu Jia, Zhibin Li, Chong Li, and Linhong Ji.
\newblock Enhancement of {EEG}--{EMG} coupling detection using corticomuscular coherence with spatial--temporal optimization.
\newblock {\em Journal of Neural Engineering}, 20(3):036001, 2023.

\bibitem{susmelj2020lightly}
Igor Susmelj, Matthias Heller, Philipp Wirth, Jeremy Prescott, and Malte~Ebner et~al.
\newblock Lightly.
\newblock {\em GitHub. Note: https://github.com/lightly-ai/lightly}, 2020.

\end{thebibliography}

\newpage

\appendix  

\section{Additional Implementation Details of FMCA-T}\label{appendix:algorithm}

\textbf{Adaptive estimators.} Since autocorrelation functions (ACF) and crosscorrelation functions (CCF) are expectations of inner products, we prefer to compute these expectations and the cost gradient using an adaptive filter rather than a batch of data. For any parameter $\theta$, the partial derivative of the cost $r_T(\mathbf{f}_\theta, \mathbf{g}_\omega) := Trace(\mathbf{R}_F^{-1} \mathbf{P}\mathbf{R}_G^{-1} \mathbf{P}^\intercal)$ has the following form:\begin{equation}
\resizebox{.7\linewidth}{!}{
$\begin{aligned}
\frac{\partial r_T}{\partial \theta} = -Trace(\mathbf{R}_F^{-1}\frac{\partial \mathbf{R}_{F}}{\partial \theta}\mathbf{R}_F^{-1} \mathbf{P}\mathbf{R}_G^{-1} \mathbf{P}^\intercal)+ Trace(\mathbf{R}_F^{-1} \frac{\partial \mathbf{P}}{\partial \theta}\mathbf{R}_G^{-1} \mathbf{P}^\intercal) \\
-Trace(\mathbf{R}_F^{-1} \mathbf{P}\mathbf{R}_G^{-1}\frac{\partial \mathbf{R}_{G}}{\partial \theta}\mathbf{R}_G^{-1} \mathbf{P}^\intercal) + Trace(\mathbf{R}_F^{-1} \mathbf{P} \mathbf{R}_G^{-1} \frac{\partial \mathbf{P}^\intercal}{\partial \theta}). 
\end{aligned}$}
\label{equation_gradient}
\end{equation}
Observe that terms needed for the gradient have two classes: terms $\mathbf{R}_F, \mathbf{R}_G, \mathbf{P}$, as well as their inverse, and terms of the derivatives $\frac{\partial \mathbf{R}_{F}}{\partial \theta}, \frac{\partial \mathbf{R}_{G}}{\partial \theta}, \frac{\partial \mathbf{P}}{\partial \theta}$. We use a smoothing window over iterations to estimate the $\mathbf{R}_F, \mathbf{R}_G, \mathbf{P}$, and then use the estimated values of $\widetilde{\mathbf{R}_F}, \widetilde{\mathbf{R}_G}, \widetilde{\mathbf{P}}$ as the matrices and their inverse in the cost. For derivative terms $\frac{\partial \mathbf{R}_{F}}{\partial \theta}, \frac{\partial \mathbf{R}_{G}}{\partial \theta}, \frac{\partial \mathbf{P}}{\partial \theta}$, we use the derivatives of the batch. We choose smoothing coefficient $\beta = 0.9$ across all experiments. 

\textbf{Important parameters.} When computing the matrix inverse of ACFs ($\widetilde{\mathbf{R}_F}{}^{-1}$ and $\widetilde{\mathbf{R}_G}{}^{-1}$), a small diagonal matrix scaled by a regularization parameter, $\epsilon \mathbf{I}$, is added. This constant is important for training stability. We choose $\epsilon = 10^{-5}$ across all experiments. The output dimension of the network is chosen as $K=128$, which is also the number of eigenfunctions. 

\textbf{Pseudo-code of the algorithm.} We also explain our algorithm with pseudo-code. Algorithm~\hyperref[algorithm_fmca_ae]{1} shows the pseudo-code for FMCA-T with adaptive estimators and computing eigenfunctions. Algorithm~\hyperref[algorithm_channel_temporal]{2} shows how to generate channel-level and temporal-level dependencies. 



\begin{algorithm}

\setstretch{0.9}

  \SetAlgoLined\DontPrintSemicolon

  \SetKwProg{myalg}{Alg.}{}{}

\vspace{3pt}
\myalg{\textbf{Adaptive matrix estimators}\vspace{3pt}}{

\nl \KwIn{Batch estimation $\mathbf{A}$; Tracking matrix $\acute{\mathbf{A}}$; Iteration $i$;}

\nl\textbf{\small Step 1.} $\acute{\mathbf{A}} \leftarrow \beta \cdot \acute{\mathbf{A}}$ + (1-$\beta$) $\cdot$  $\mathbf{A}$

\nl\textbf{\small Step 2.} $\widetilde{\mathbf{A}}  = \acute{\mathbf{A}}/(1-\beta^i)$

\nl\Return{Tracking matrix $\acute{\mathbf{A}}$; Smoothed estimation $\widetilde{\mathbf{A}}$}\;

\caption{FMCA-T w/ Adaptive Estimators.}

}

\vspace{6pt}

  \myalg{\textbf{FMCA-T: Optimize matrix trace cost\vspace{3pt}}}{

\nl \textbf{Initialize:} Neural networks $\{\mathbf{f}_\theta$, $\mathbf{g}_\omega$\}

\nl \textbf{Initialize:} Tracking matrices $\acute{\mathbf{R_F}}$, $\acute{\mathbf{R_G}}$, $\acute{\mathbf{P}}$

\nl\For{$i=1,2,\cdots$}{ 

\nl Sample a batch of signals $\{X_n, Y_n\}_{n=1}^{bs}$;

\nl Compute the ACF and CCF $\{\mathbf{R_F}, \mathbf{R}_G, \mathbf{P}\}$ with this batch;

\nl Apply adaptive estimators, obtain $\{\widetilde{\mathbf{R_F}}$, $\widetilde{\mathbf{R_G}}$, $\widetilde{\mathbf{P}}\}$, and update $\{\acute{\mathbf{R_F}}$, $\acute{\mathbf{R_G}}, \acute{\mathbf{P}}\}$.

\nl Estimate gradients with smoothed estimator, update networks.

}

\nl\Return{$\theta, \omega$}\;

}

\vspace{6pt}

\myalg{\textbf{Retrieve eigenfunctions and density ratios\vspace{3pt}}}{

\KwIn{Trained networks $\mathbf{f}_\theta$ and $\mathbf{g}_\omega$. Dataset $\{{X}_{n}, Y_{n}\}_{n=1}^N$. }

\nl \textbf{\small Step 1.} Obtain outputs of networks. $\{\mathbf{\mathbf{f}_\theta}(X_n), \mathbf{\mathbf{g}_\theta}(Y_n)\}$.

\nl \textbf{\small Step 2.} Re-estimate $\mathbf{R}_F$, $\mathbf{R}_G$, $\mathbf{P}$ using these outputs.

\nl \textbf{\small Step 3.} Normalization for orthonormality: $\overline{\mathbf{f}_\theta} = {\mathbf{R}}_F^{-\frac{1}{2}} \mathbf{f}_\theta, \overline{\mathbf{g}_\omega} = {\mathbf{R}}_G^{-\frac{1}{2}} \mathbf{g}_\omega$

\nl \textbf{\small Step 4.} Compute SVD: $\overline{{\mathbf{P}}} = \mathbb{E}[ \overline{\mathbf{f}_\theta}(\mathbf{X}) \overline{\mathbf{g}_\omega}^\intercal(\mathbf{X})] = \mathbf{U}\mathbf{S}^{\frac{1}{2}}\mathbf{V}$

\nl \textbf{\small Step 5.} Normalization for invariance: $\widehat{\mathbf{f}_\theta} = \mathbf{U}^\intercal \overline{\mathbf{f}_\theta},\;\; \widehat{\mathbf{g}_\omega} = \mathbf{V}^\intercal \overline{\mathbf{g}_\omega}$

\nl \textbf{\small Step 6.} Construct the density ratio: $\widehat{\rho} = \widehat{\mathbf{f}_\theta}{}^\intercal \mathbf{S}^{\frac{1}{2}}\widehat{\mathbf{g}_\omega}\approx \rho$.

\nl \Return{Eigenfunctions $\widehat{\mathbf{f}_\theta}$, $\widehat{\mathbf{g}_\omega}$, eigenvalues $\mathbf{S}=\text{diag}(\lambda_1,\cdots, \lambda_K)$, density ratio $\widehat{\rho}$}\;

\vspace{3pt}

}
\label{algorithm_fmca_ae}
\end{algorithm}

\begin{algorithm}
\setstretch{1.1}

  \SetAlgoLined\DontPrintSemicolon

  \SetKwProg{myalgfd}{}{}{}

  \SetKwProg{myalg}{Alg.}{}{}

\vspace{3pt}

\myalg{\textbf{Generate temporal and channel features:\vspace{3pt}}}{

\nl \KwIn{Trained network $\mathbf{f}_\theta$ for EEG, consisting of a temporal network $\mathcal{F}_{TN}$ and a channel network $\mathcal{F}_{CN}$; An arbitrary EEG trial $X$}

\nl \textbf{\small Step 1.} Apply $\mathcal{F}_{TN}$ to each channel: $Z_c = \mathcal{F}_{TN} (X(c)),\; c=1,\cdots, C$

\nl \textbf{\small Step 2.} Apply $\mathcal{F}_{CN}$ to all channel features: $Z_F = \mathcal{F}_{CN} ([Z_1, \cdots, Z_C]^\intercal)$

\nl \textbf{\small Step 3.} Extract internal features of $\mathcal{F}_{CN}$: $\mathbf{Z}_{c, 1},\cdots, \mathbf{Z}_{c, S}$ from $S$ convolution blocks

\nl \Return{Temporal features $\mathbf{Z}_{c, s}$, channel features $Z_{c}$, global features $Z_F$}\;}

\vspace{15pt}

\myalg{\textbf{Generate channel-level dependencies}\vspace{3pt}}{

\nl  \KwIn{Temporal, channel, global features from the dataset}

\nl  \ForEach{channel $c$}{

\nl  Compute ACF and CCF for global and channel features, $Z_F$ and $Z_c$

\nl  Use ACF and CCF to compute density ratio $\widehat{\rho_{C,F}}(c)$ with normalization

}} \vspace{1pt}

\myalg{\textbf{Generate temporal-level dependencies:\vspace{3pt}}}{

\nl  \ForEach{channel $c$ and layer $s$}{

\nl  Compute ACF and CCF for temporal features between two layers, $\mathbf{Z}_{c, s}, \mathbf{Z}_{c, s+1}$

\nl  Use ACF and CCF to compute density ratio $\widehat{\rho_{s-1,s, c}}(\tau_1, \tau_2)$ between two layers. }


\nl  Initialize $\widehat{\varrho_{S, c}}(\tau) = \widehat{\rho_{C,F}}(c)$; Initialize each ${\widehat{\varrho_{s, c}}}(\tau)$ to have length $T_s$.

\nl  \For{$s = S-1, \cdots, 1$}{

\nl  For every $\mathbf{Z}_{s}(\tau)$, find elements in layer $s+1$ that are mapped from it (set $I_s$)

\nl  Aggregate density ratios $\widehat{\varrho_{s, c}}(\tau_1) = \sum_{\tau_2 \in I_s} \widehat{\varrho_{s+1, c}} (\tau_2) \widehat{\rho_{s, s+1, c}}(\tau_1, \tau_2), $

}

\nl  \Return{Channel dependence $\widehat{\rho_{C,F}}$ and spatio-temporal dependence ${\widehat{\varrho_{s, c}}}(\tau)$ that can be applied to any arbitrary EEG trial $X$.}

}

\caption{Generate channel-level and temporal-level dependencies.}\vspace{3pt}
\label{algorithm_channel_temporal}

\end{algorithm}

\clearpage

\section{Additional Experiments}\label{appendix:exp}


\textbf{Visualization of EEG's eigenfunctions.} Building on {Fig.~\ref{CLUSTERFIGURE}} from the main paper, we further present results for the visualization of EEG's eigenfunctions on all $25$ participants. In Fig.~\ref{density_ratio_file}, each trial is color-coded by the estimated density ratios for all participants (\textbf{\texttt{SUB1}} to \textbf{\texttt{SUB25}}). In Fig.~\ref{clustering_figure}, each trial is color-coded by the label of their movements.

It can be observed that for each participant, the t-SNE projections form individual clusters, each corresponding to one of the three main movements during a session. The density ratios are highly similar within each cluster (intra-cluster) while varying across clusters (inter-cluster). This indicates that using density ratio as a dependence measure effectively captures each movement individually. \vspace{10pt}



\begin{figure}[h]
{\large\textbf{\texttt{FMCA-T Density Ratio, SUB1$\sim$SUB25}}}
\centering
\begin{subfigure}[b]{0.2\textwidth}
  \centering
  \includegraphics[width=\textwidth]{./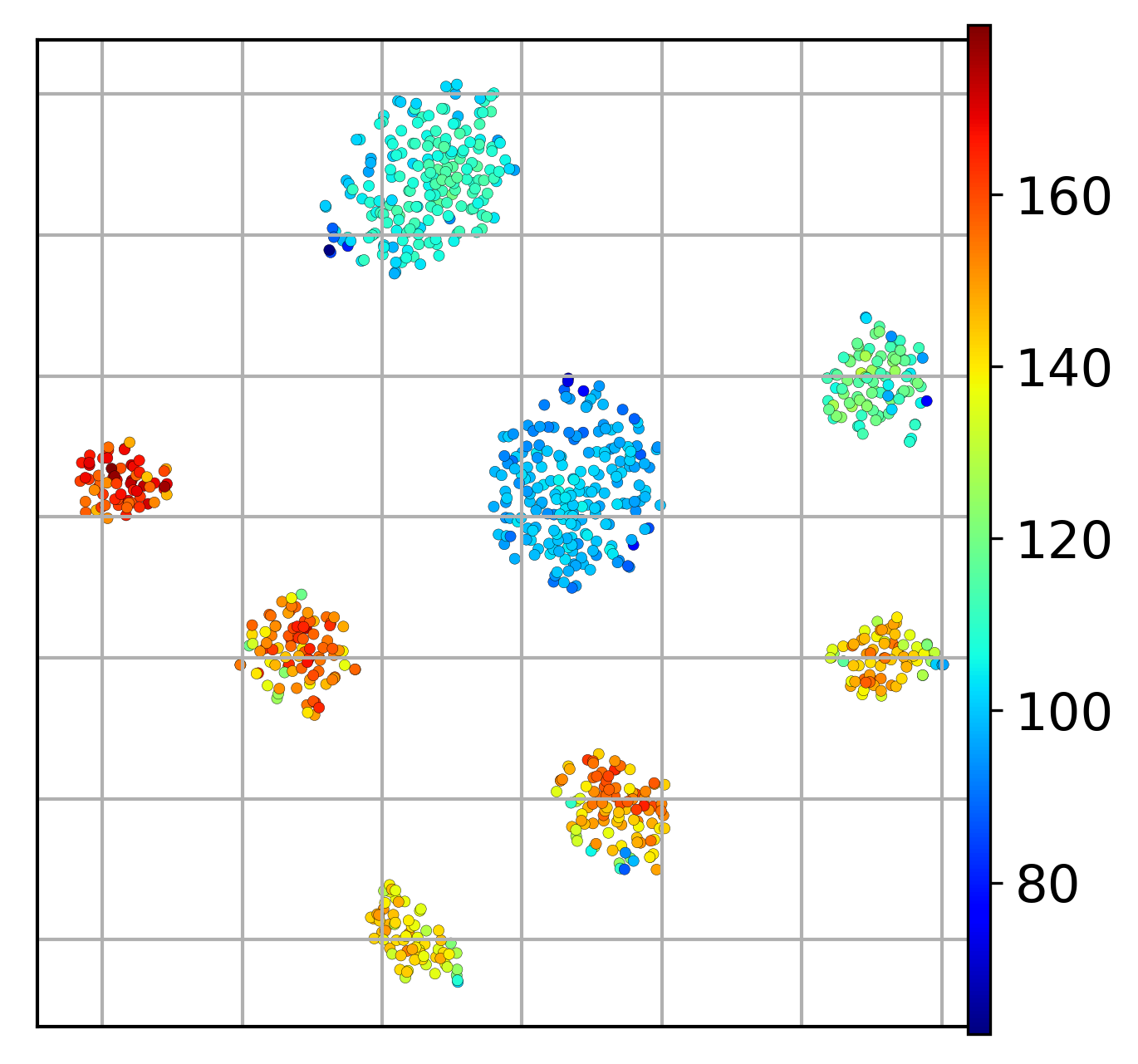}
  \vspace{-5pt}\phantomsection
\end{subfigure}%
\begin{subfigure}[b]{0.2\textwidth}
  \centering
  \includegraphics[width=\textwidth]{./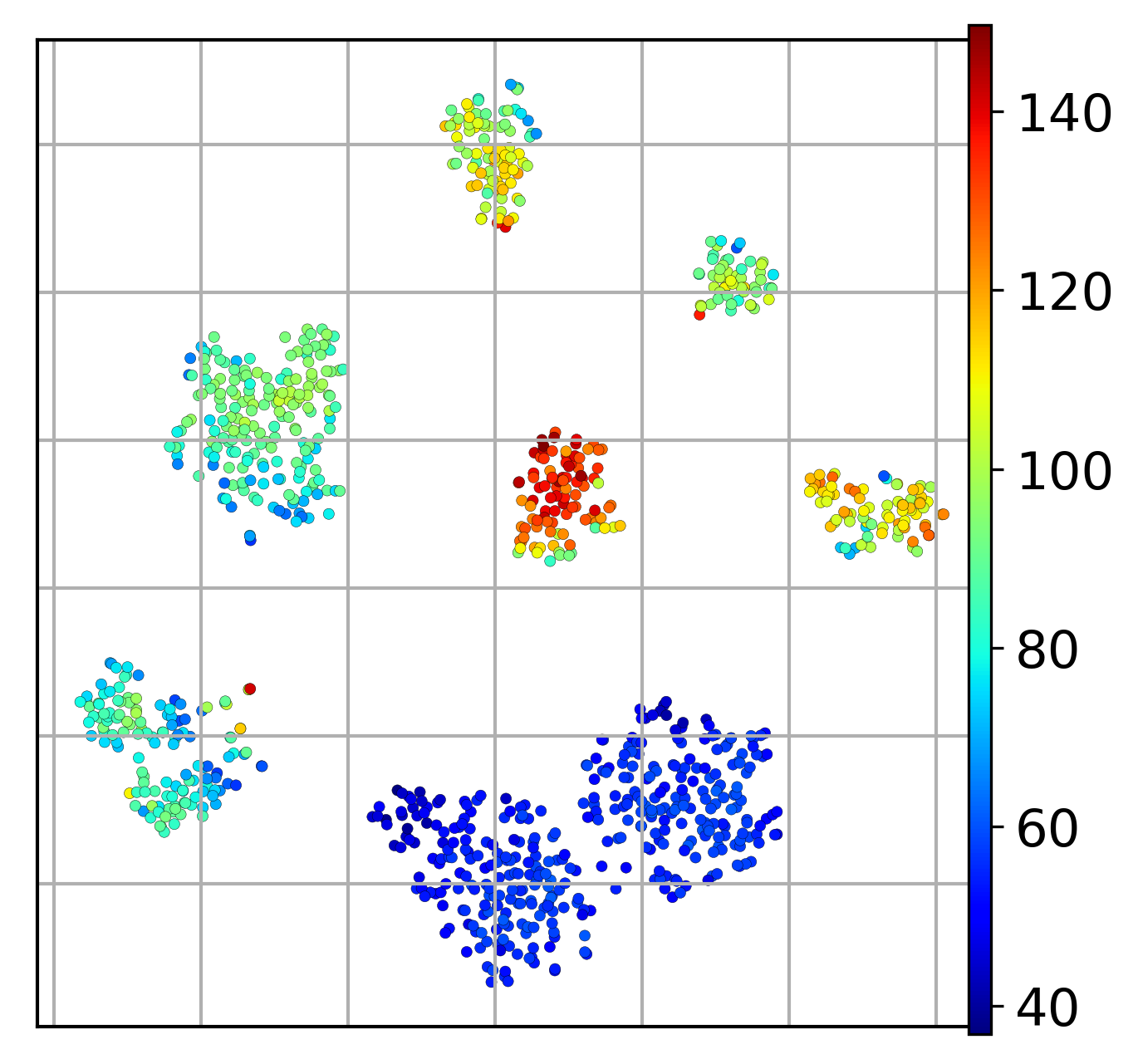}
  \vspace{-5pt}\phantomsection
\end{subfigure}%
\begin{subfigure}[b]{0.2\textwidth}
  \centering
  \includegraphics[width=\textwidth]{./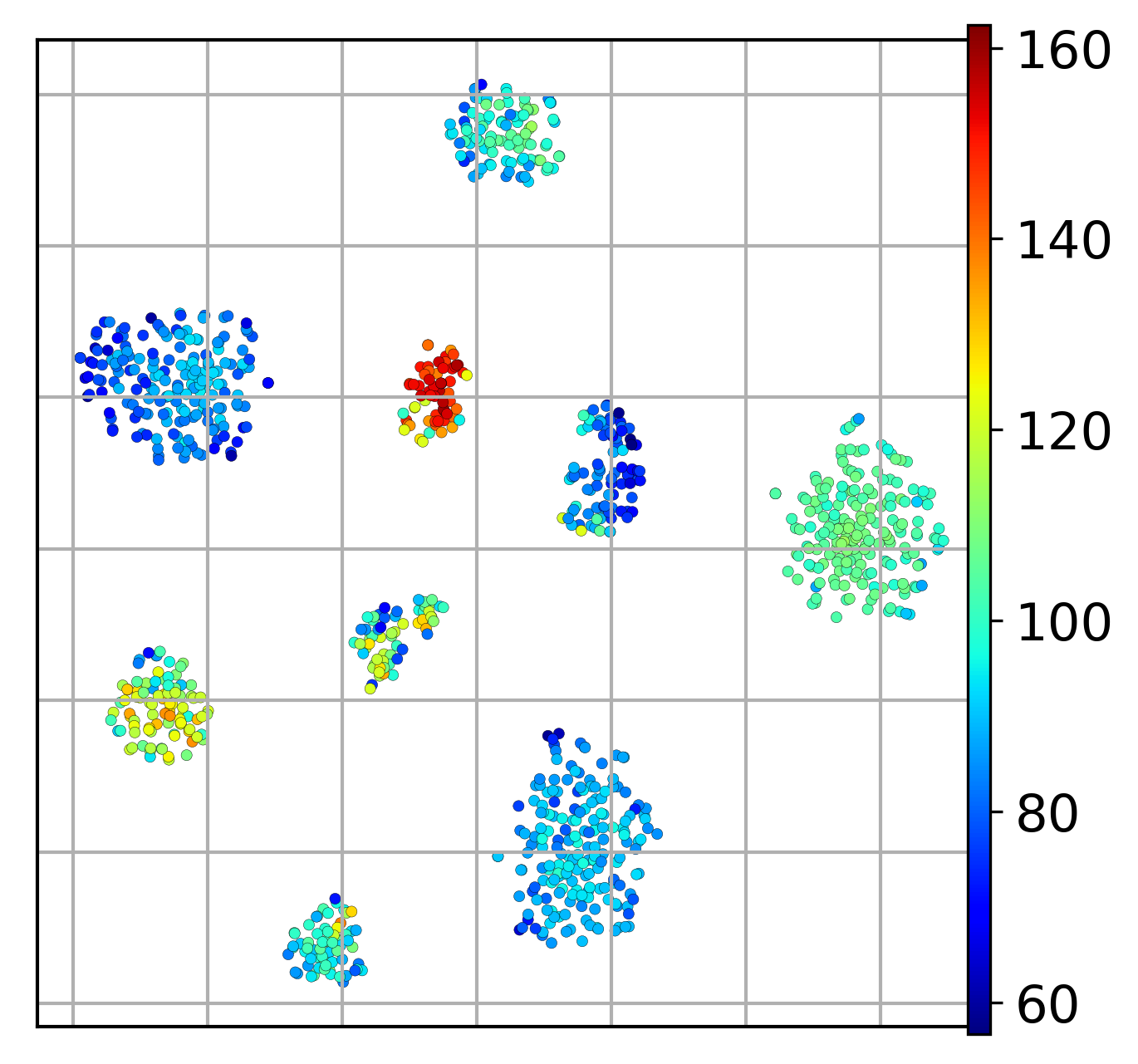}
  \vspace{-5pt}\phantomsection
\end{subfigure}%
\begin{subfigure}[b]{0.2\textwidth}
  \centering
  \includegraphics[width=\textwidth]{./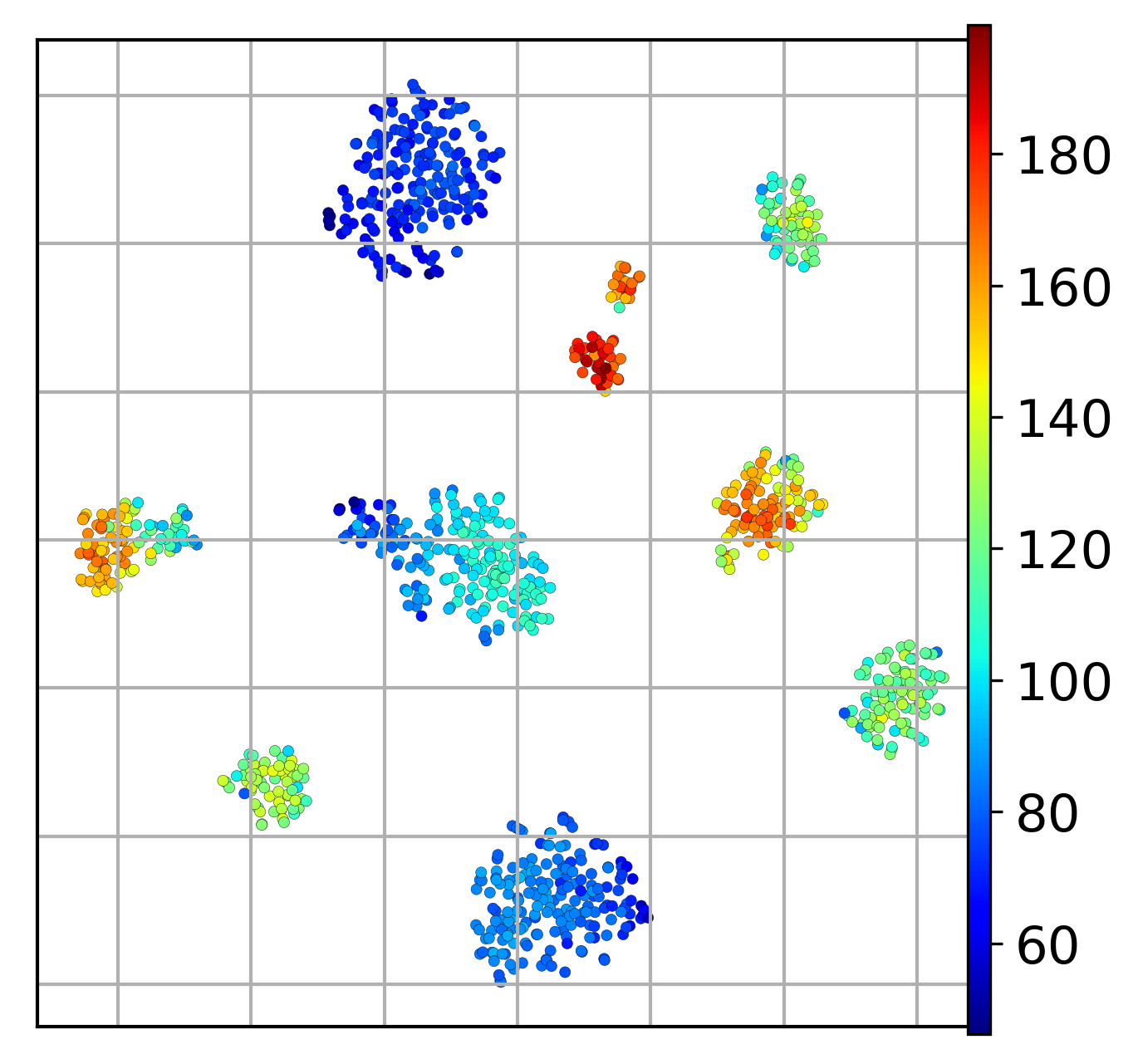}
  \vspace{-5pt}\phantomsection
\end{subfigure}%
\begin{subfigure}[b]{0.2\textwidth}
  \centering
  \includegraphics[width=\textwidth]{./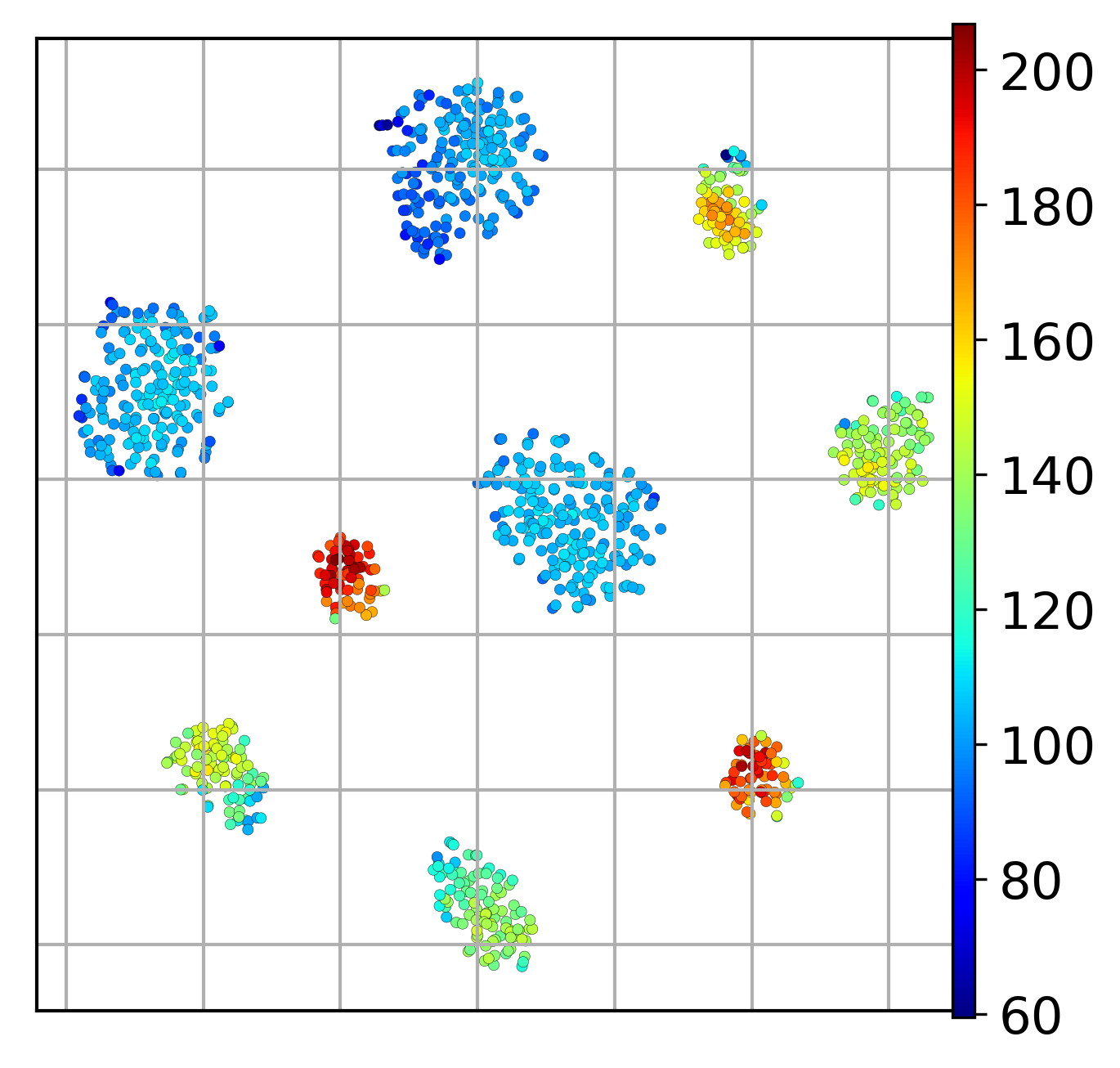}
  \vspace{-5pt}\phantomsection
\end{subfigure}

\begin{subfigure}[b]{0.2\textwidth}
  \centering
  \includegraphics[width=\textwidth]{./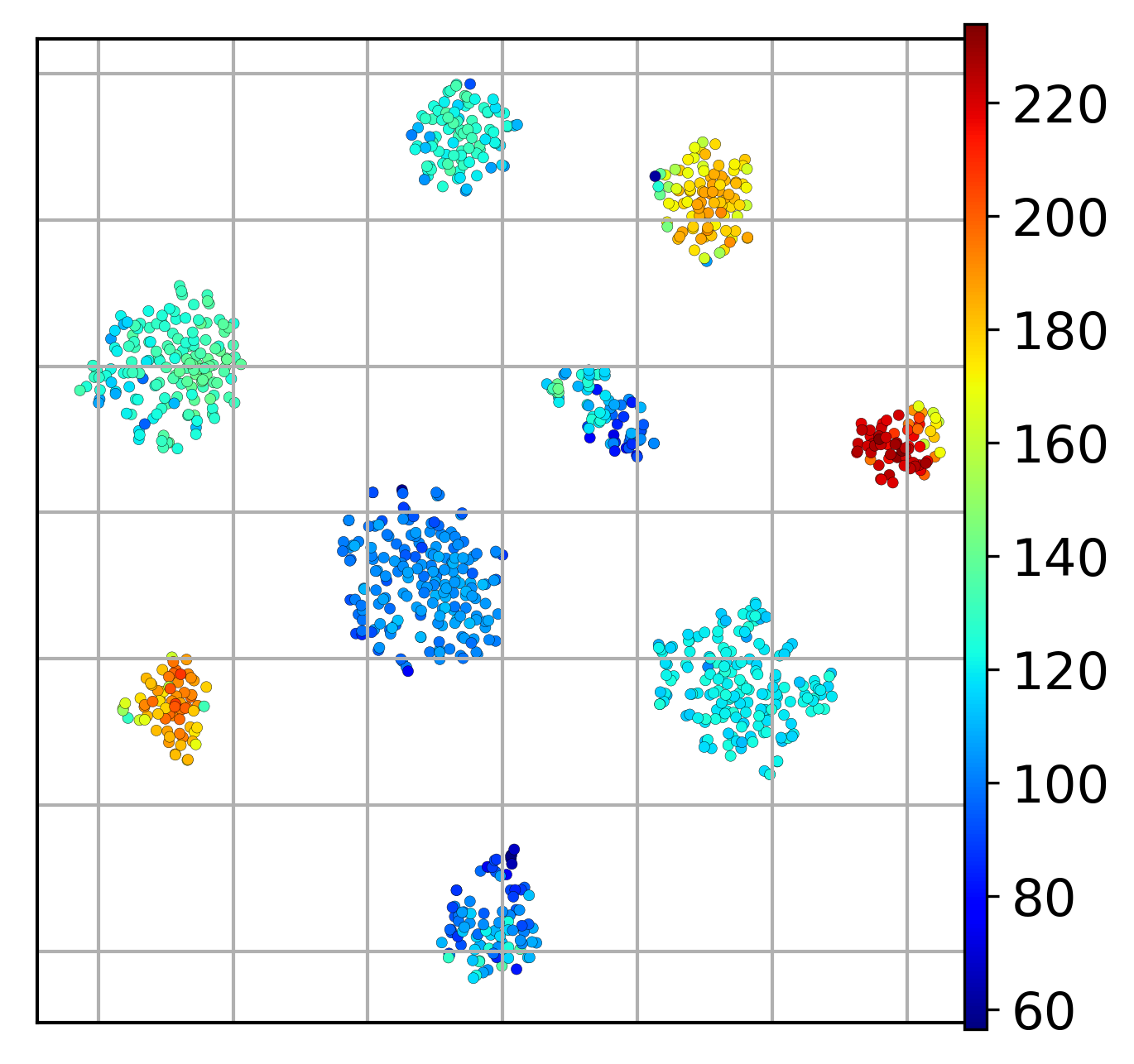}
  \vspace{-5pt}\phantomsection
\end{subfigure}%
\begin{subfigure}[b]{0.2\textwidth}
  \centering
  \includegraphics[width=\textwidth]{./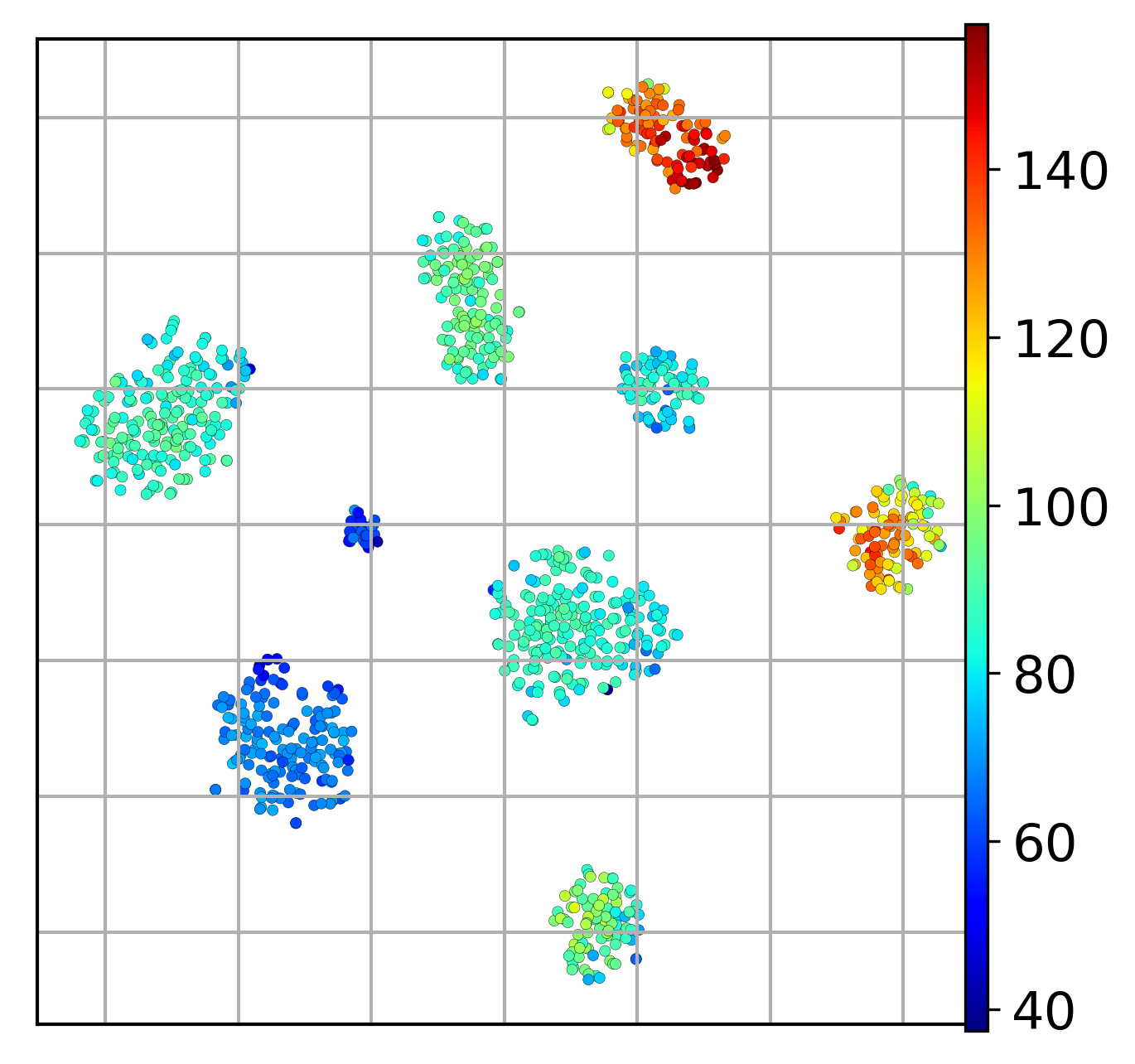}
  \vspace{-5pt}\phantomsection
\end{subfigure}%
\begin{subfigure}[b]{0.2\textwidth}
  \centering
  \includegraphics[width=\textwidth]{./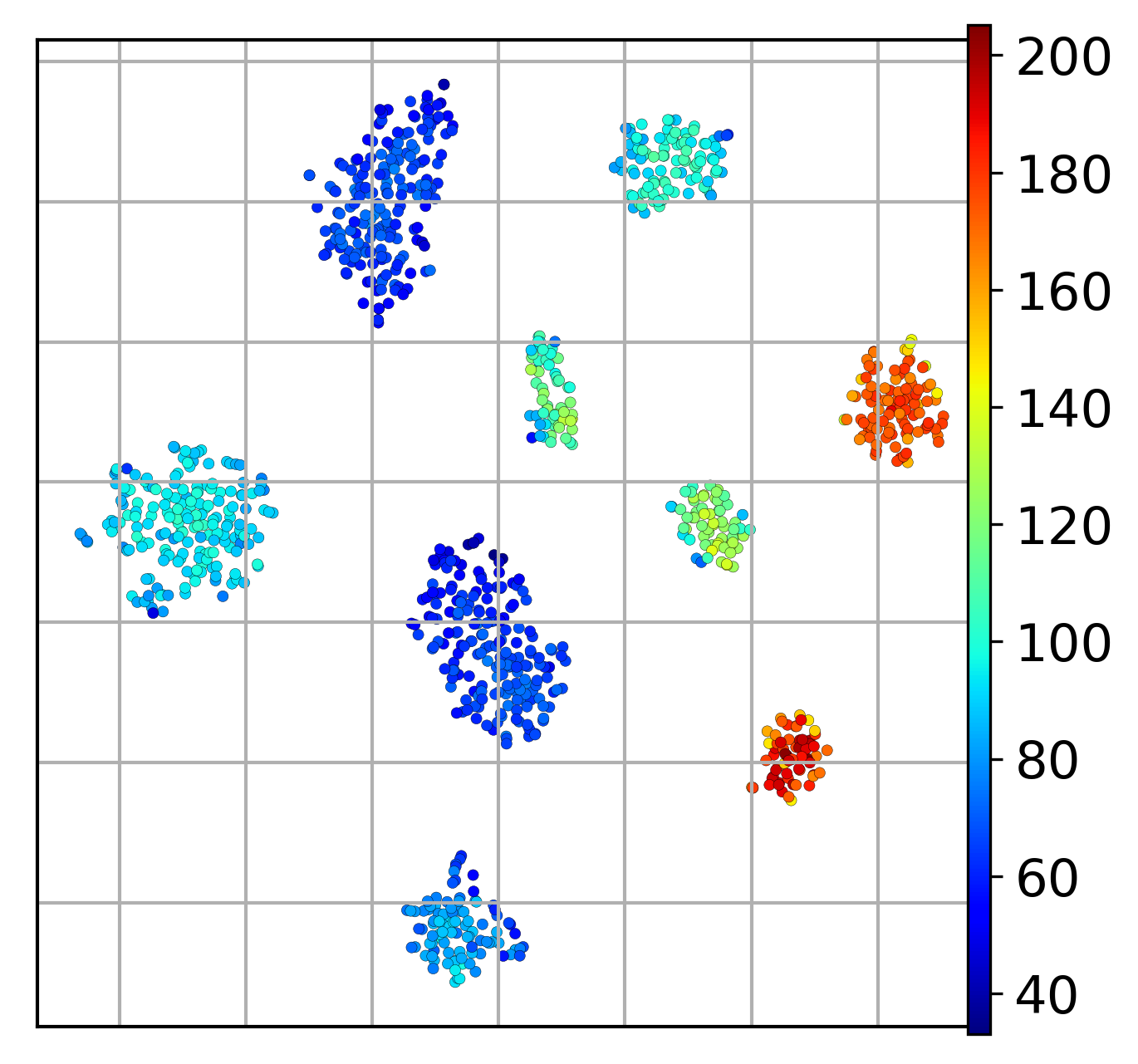}
  \vspace{-5pt}\phantomsection
\end{subfigure}%
\begin{subfigure}[b]{0.2\textwidth}
  \centering
  \includegraphics[width=\textwidth]{./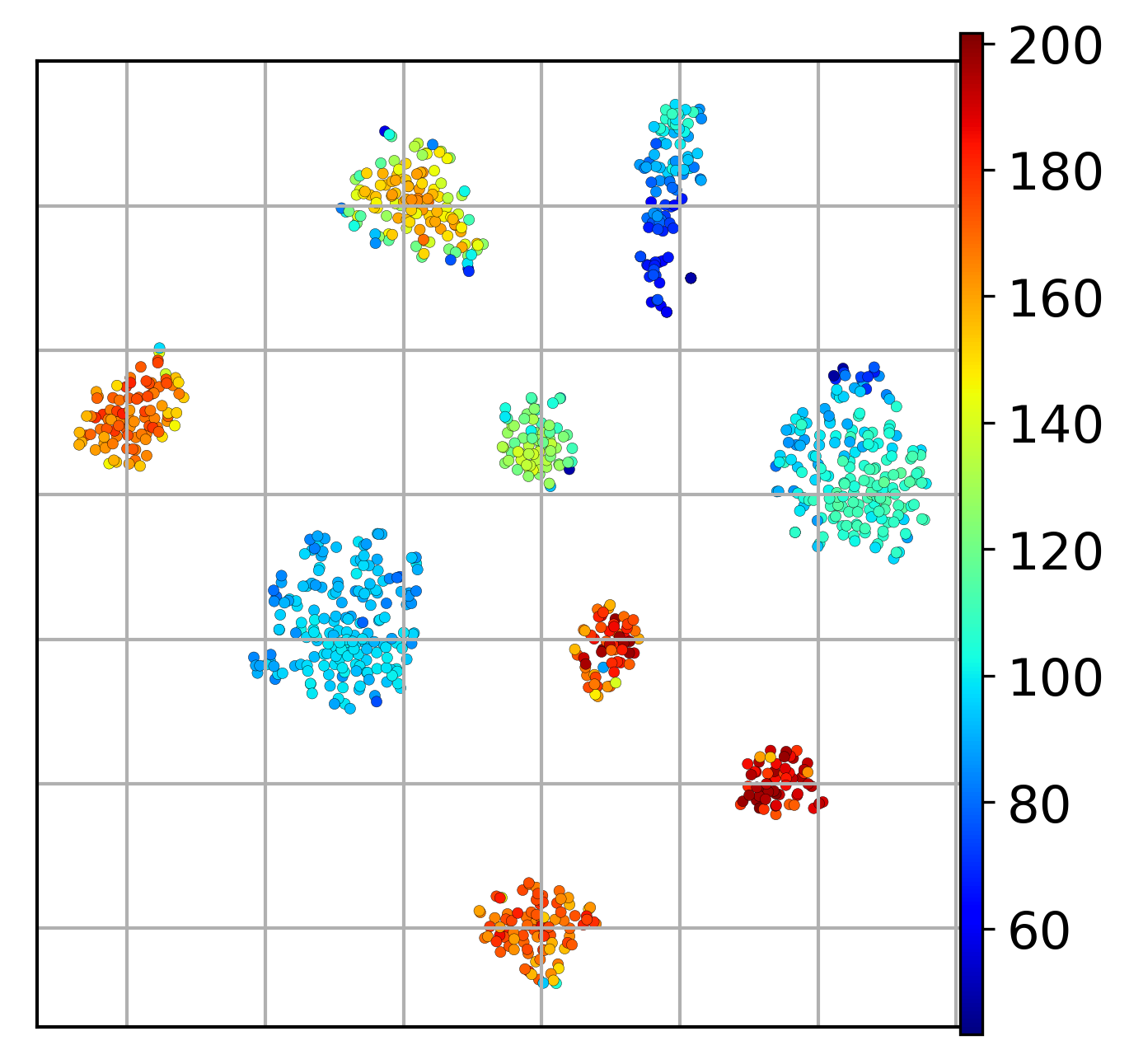}
  \vspace{-5pt}\phantomsection
\end{subfigure}%
\begin{subfigure}[b]{0.2\textwidth}
  \centering
  \includegraphics[width=\textwidth]{./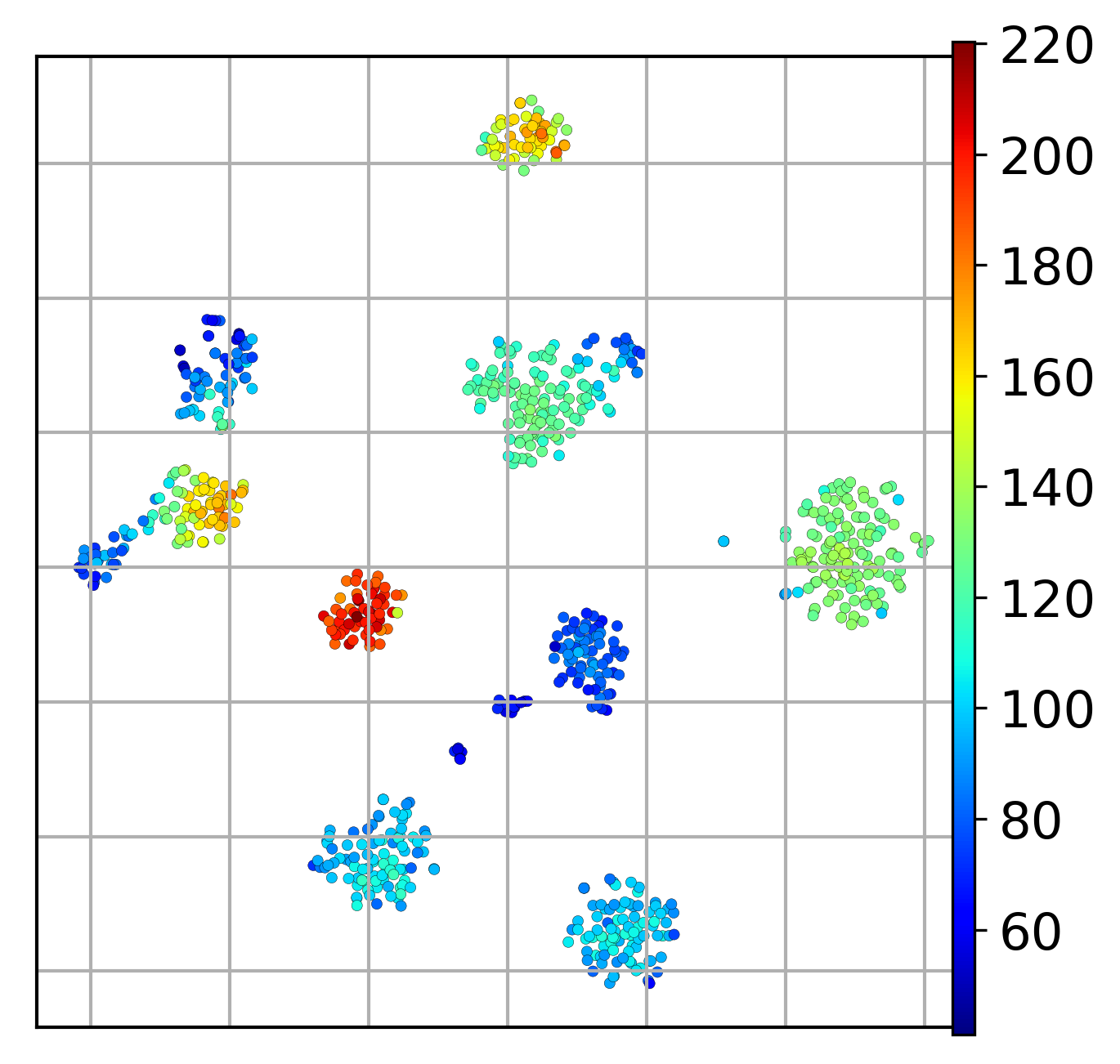}
  \vspace{-5pt}\phantomsection
\end{subfigure}

\begin{subfigure}[b]{0.2\textwidth}
  \centering
  \includegraphics[width=\textwidth]{./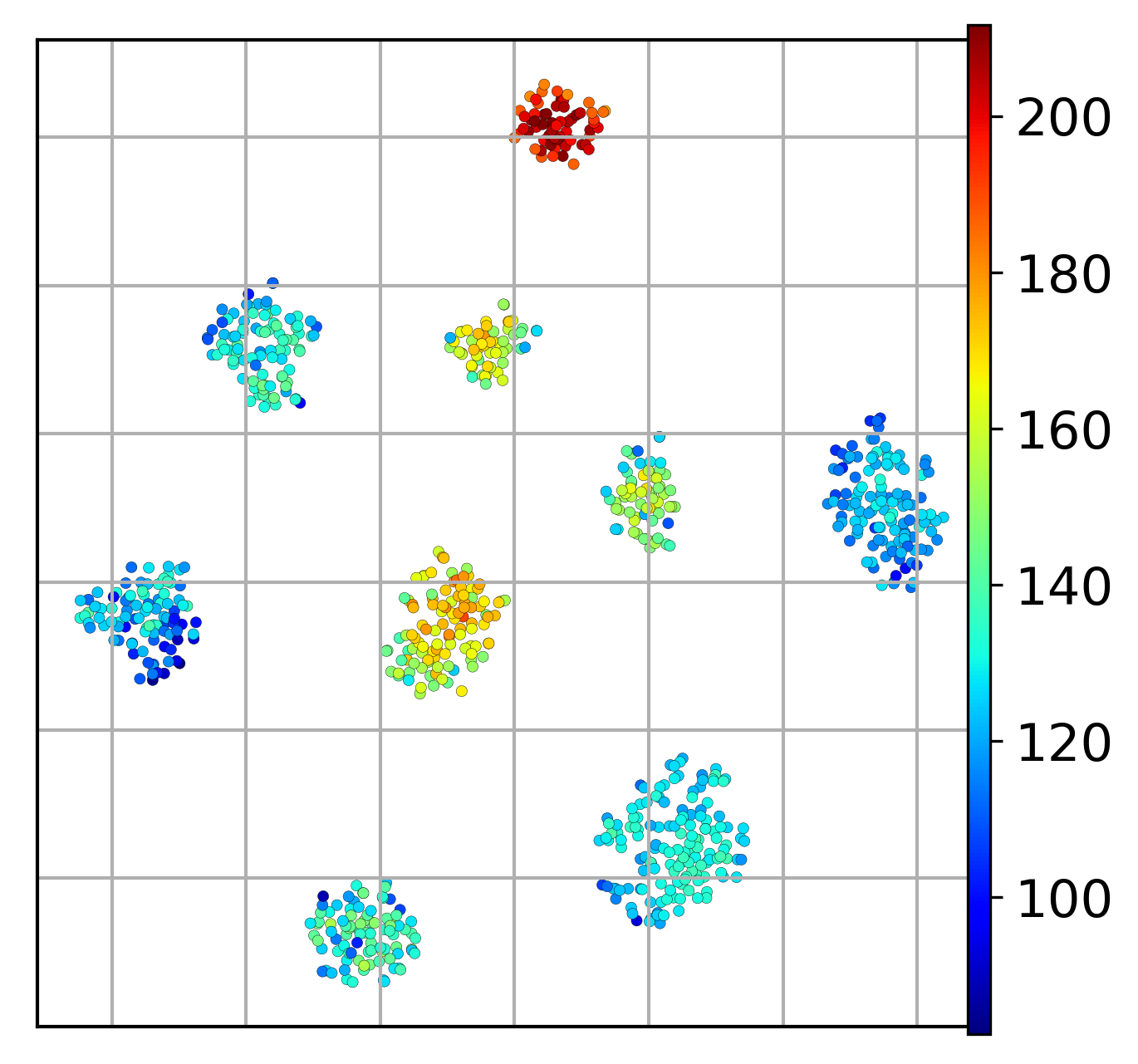}
  \vspace{-5pt}\phantomsection
\end{subfigure}%
\begin{subfigure}[b]{0.2\textwidth}
  \centering
  \includegraphics[width=\textwidth]{./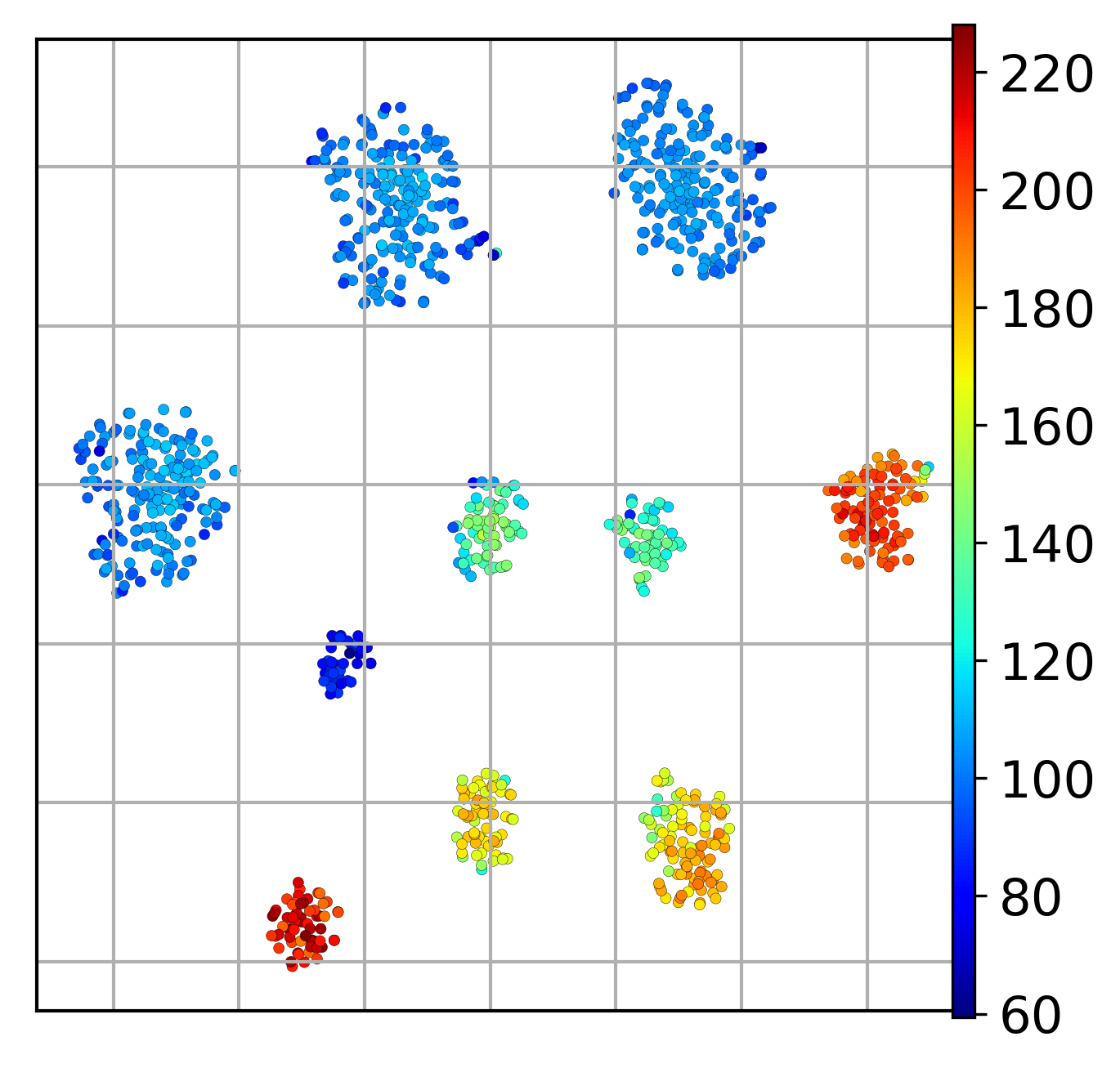}
  \vspace{-5pt}\phantomsection
\end{subfigure}%
\begin{subfigure}[b]{0.2\textwidth}
  \centering
  \includegraphics[width=\textwidth]{./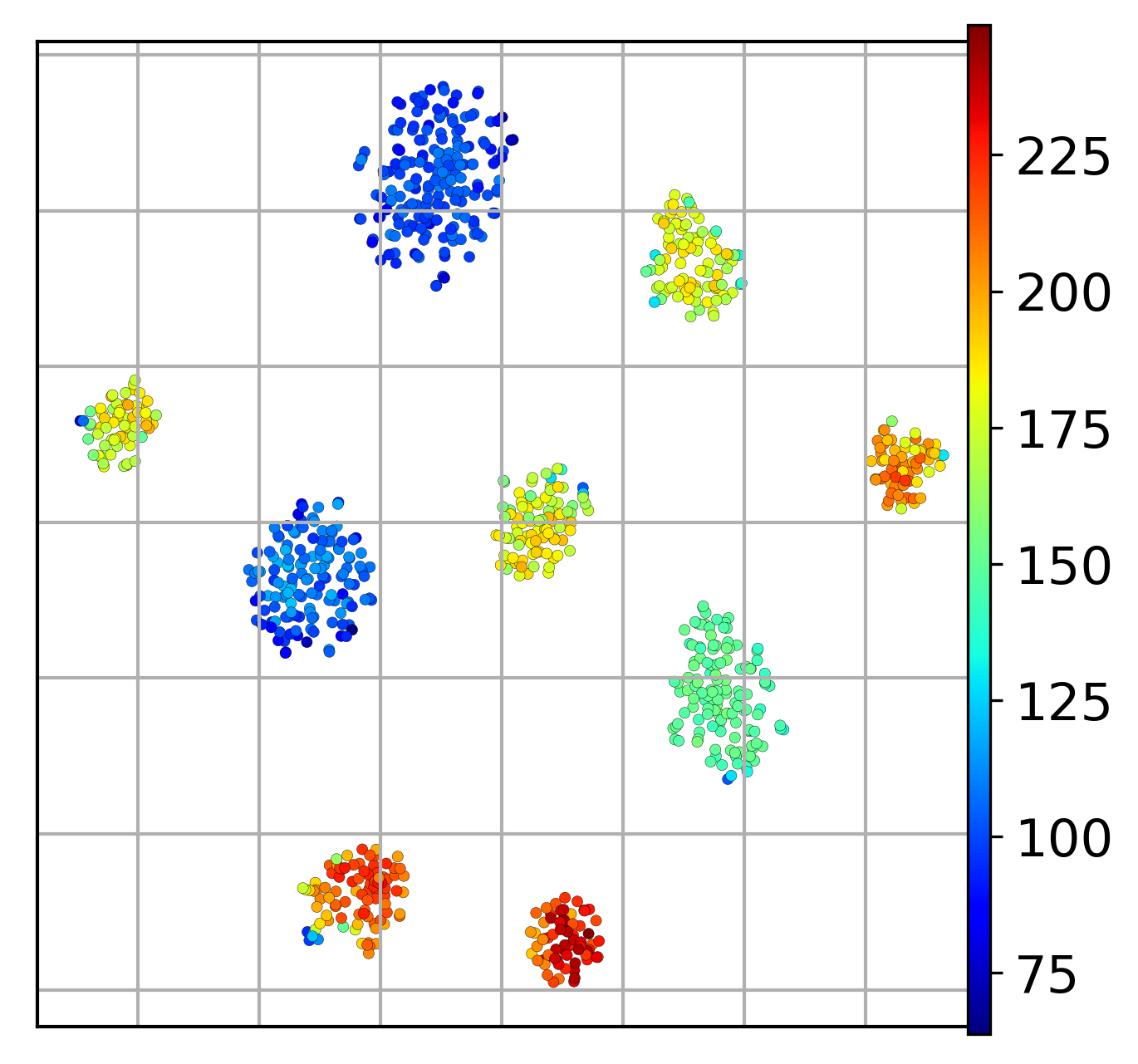}
  \vspace{-5pt}\phantomsection
\end{subfigure}%
\begin{subfigure}[b]{0.2\textwidth}
  \centering
  \includegraphics[width=\textwidth]{./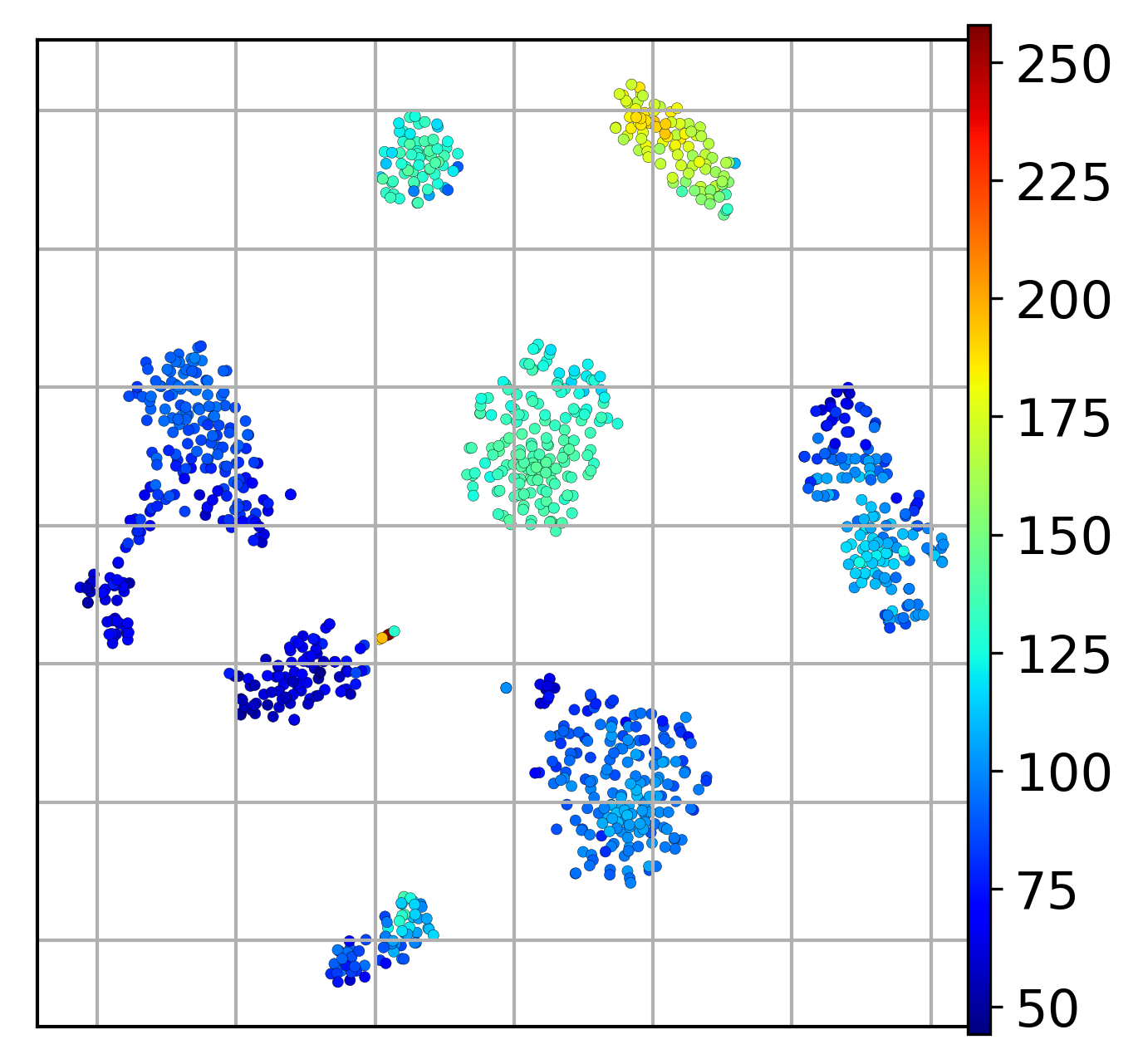}
  \vspace{-5pt}\phantomsection
\end{subfigure}%
\begin{subfigure}[b]{0.2\textwidth}
  \centering
  \includegraphics[width=\textwidth]{./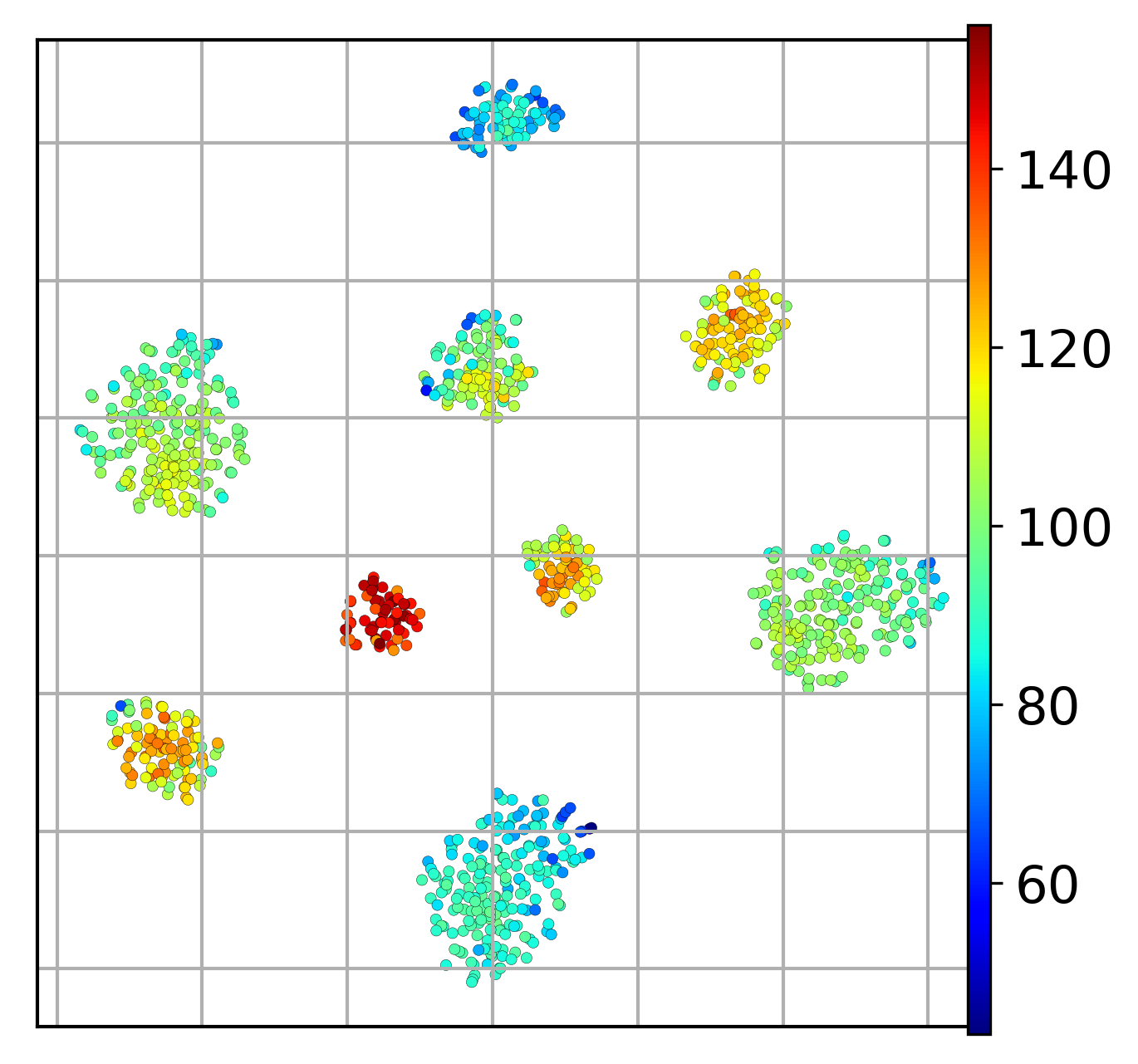}
  \vspace{-5pt}\phantomsection
\end{subfigure}

\begin{subfigure}[b]{0.2\textwidth}
  \centering
  \includegraphics[width=\textwidth]{./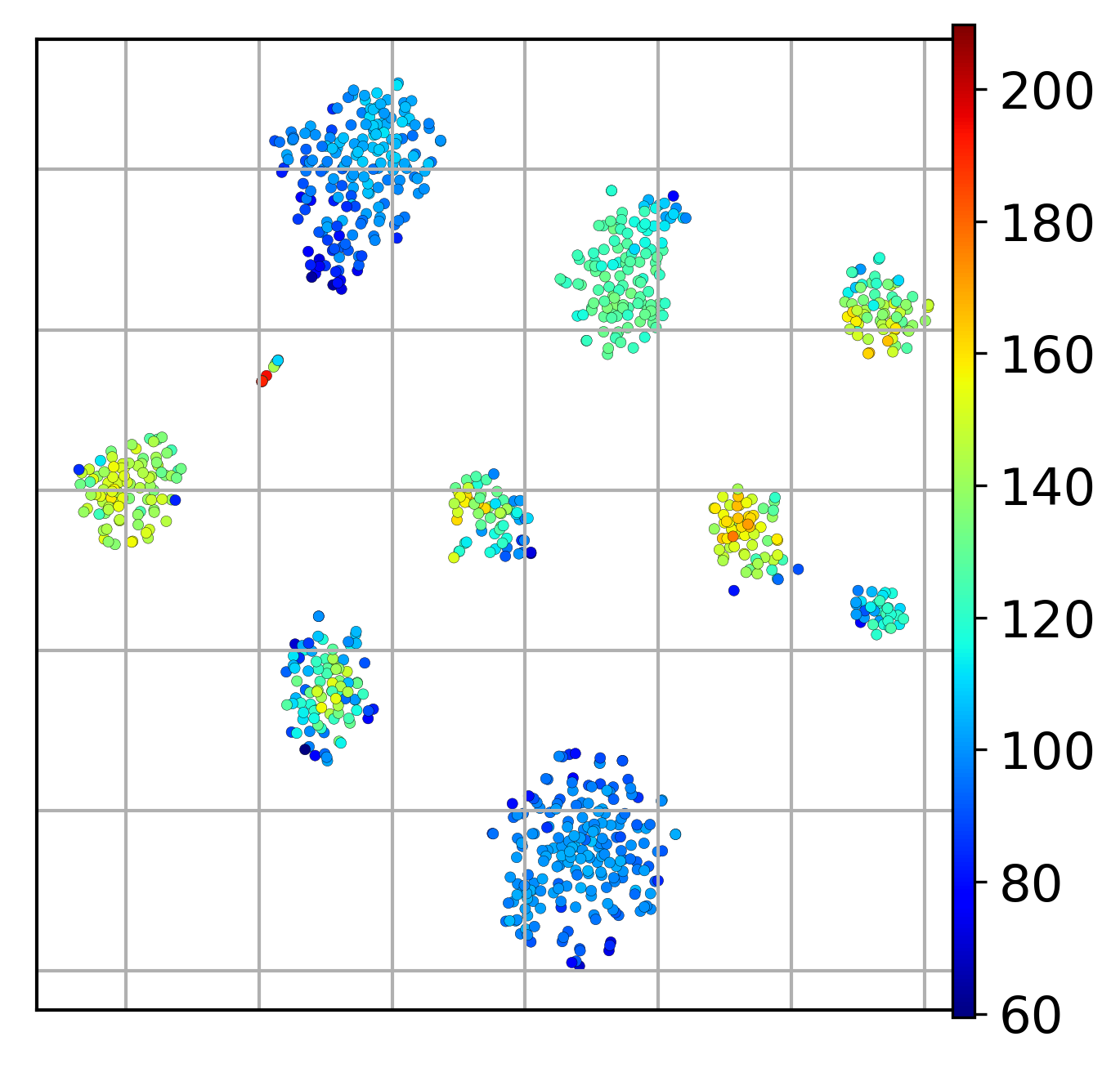}
  \vspace{-5pt}\phantomsection
\end{subfigure}%
\begin{subfigure}[b]{0.2\textwidth}
  \centering
  \includegraphics[width=\textwidth]{./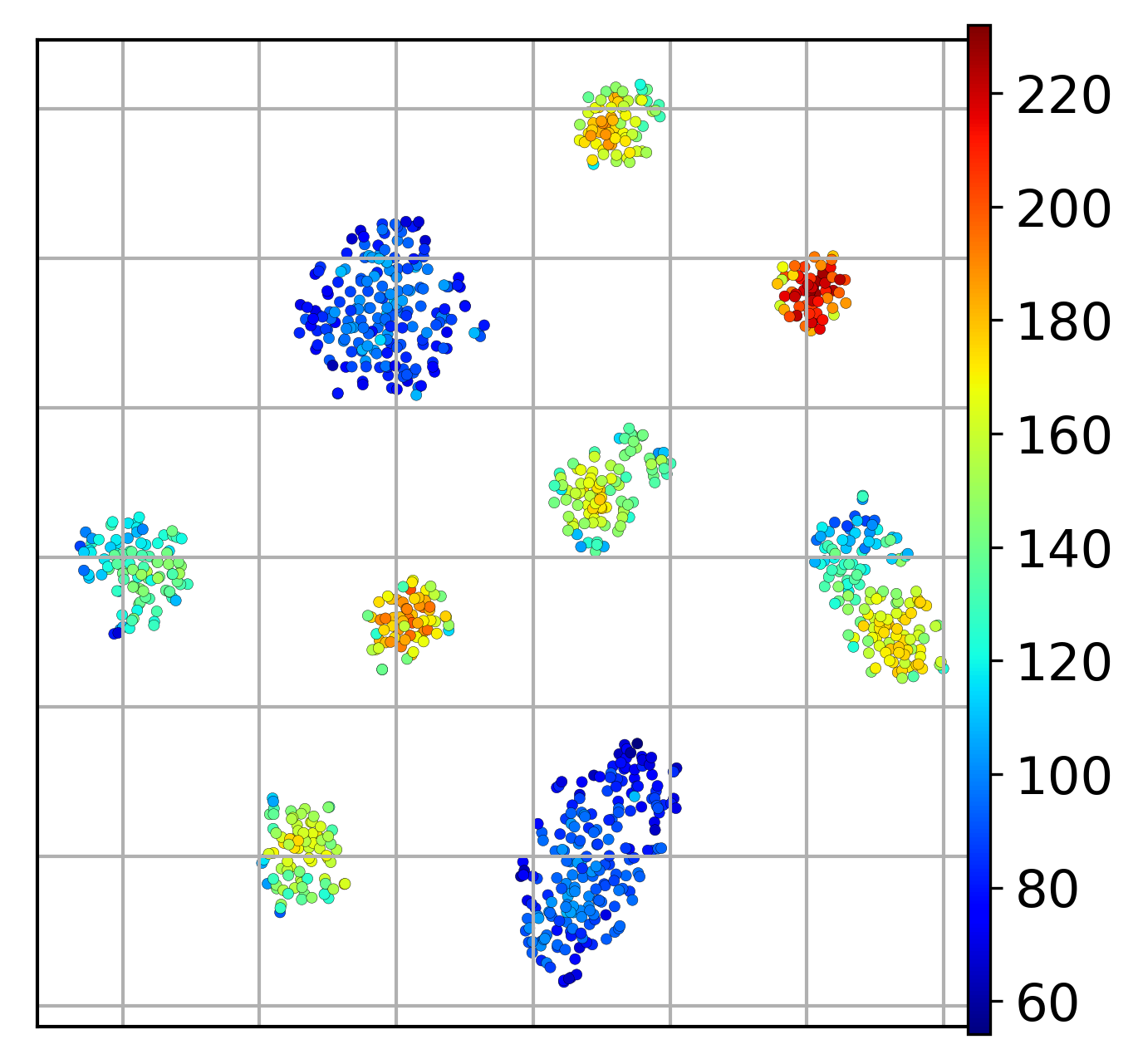}
  \vspace{-5pt}\phantomsection
\end{subfigure}%
\begin{subfigure}[b]{0.2\textwidth}
  \centering
  \includegraphics[width=\textwidth]{./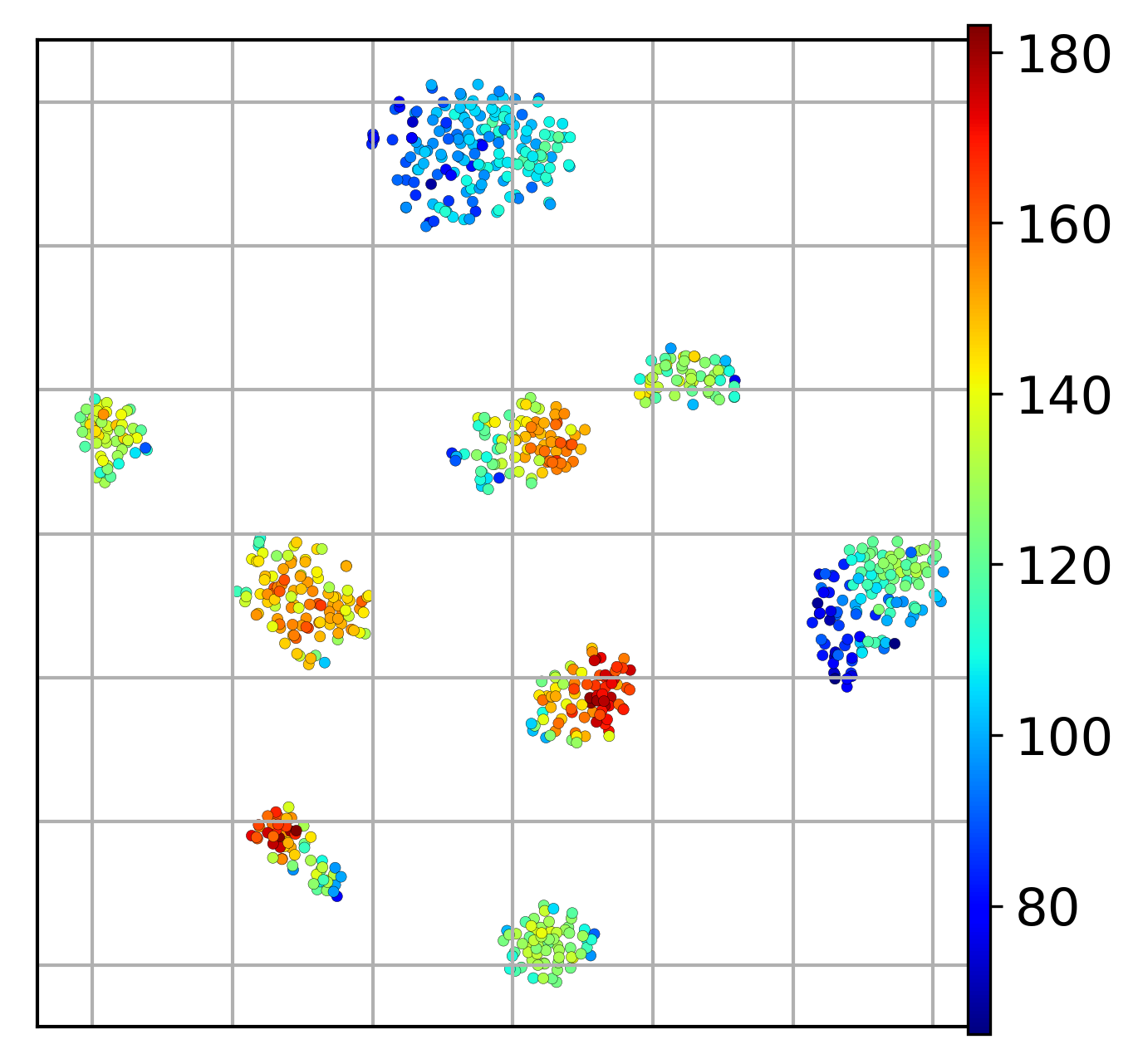}
  \vspace{-5pt}\phantomsection
\end{subfigure}%
\begin{subfigure}[b]{0.2\textwidth}
  \centering
  \includegraphics[width=\textwidth]{./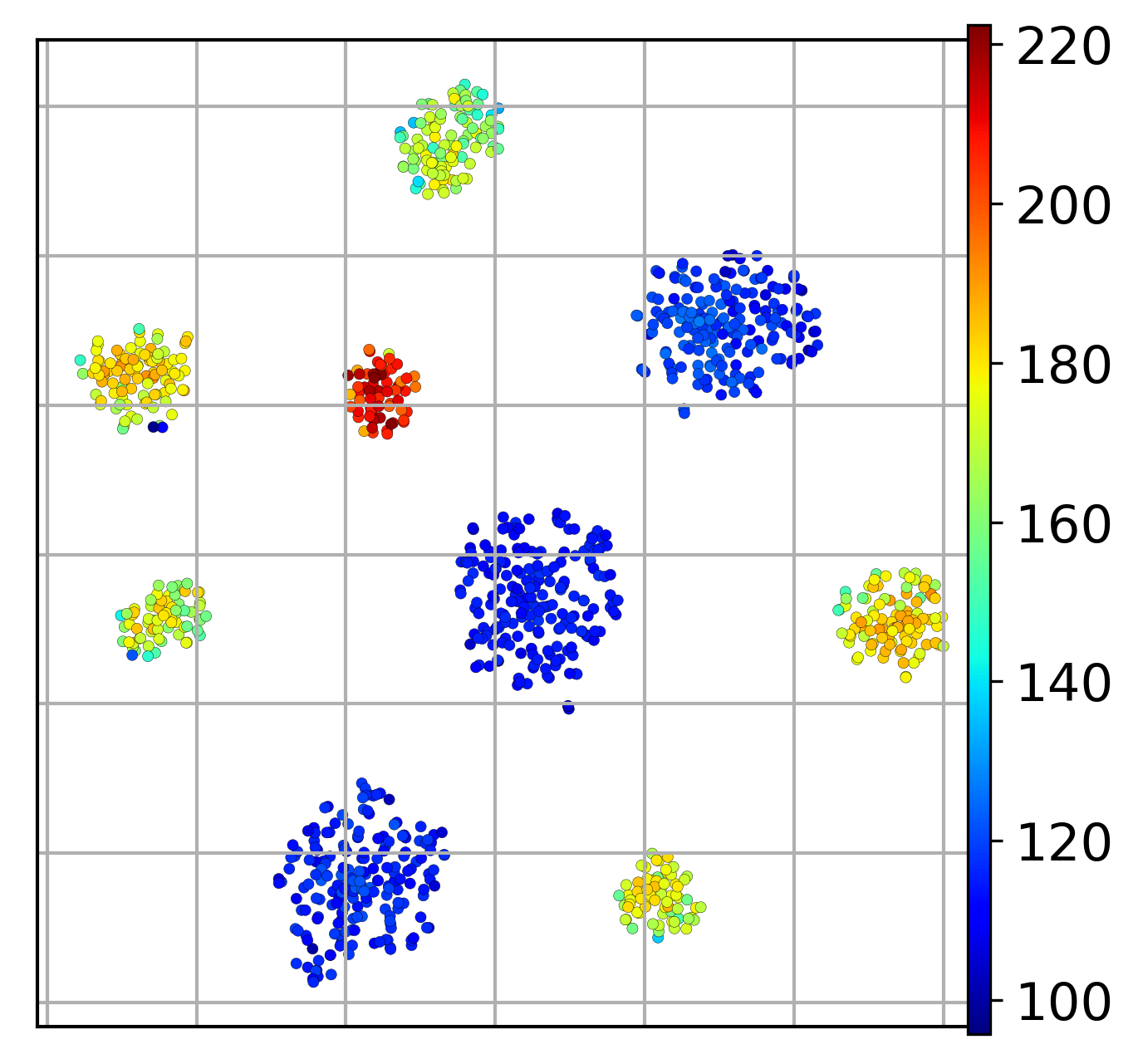}
  \vspace{-5pt}\phantomsection
\end{subfigure}%
\begin{subfigure}[b]{0.2\textwidth}
  \centering
  \includegraphics[width=\textwidth]{./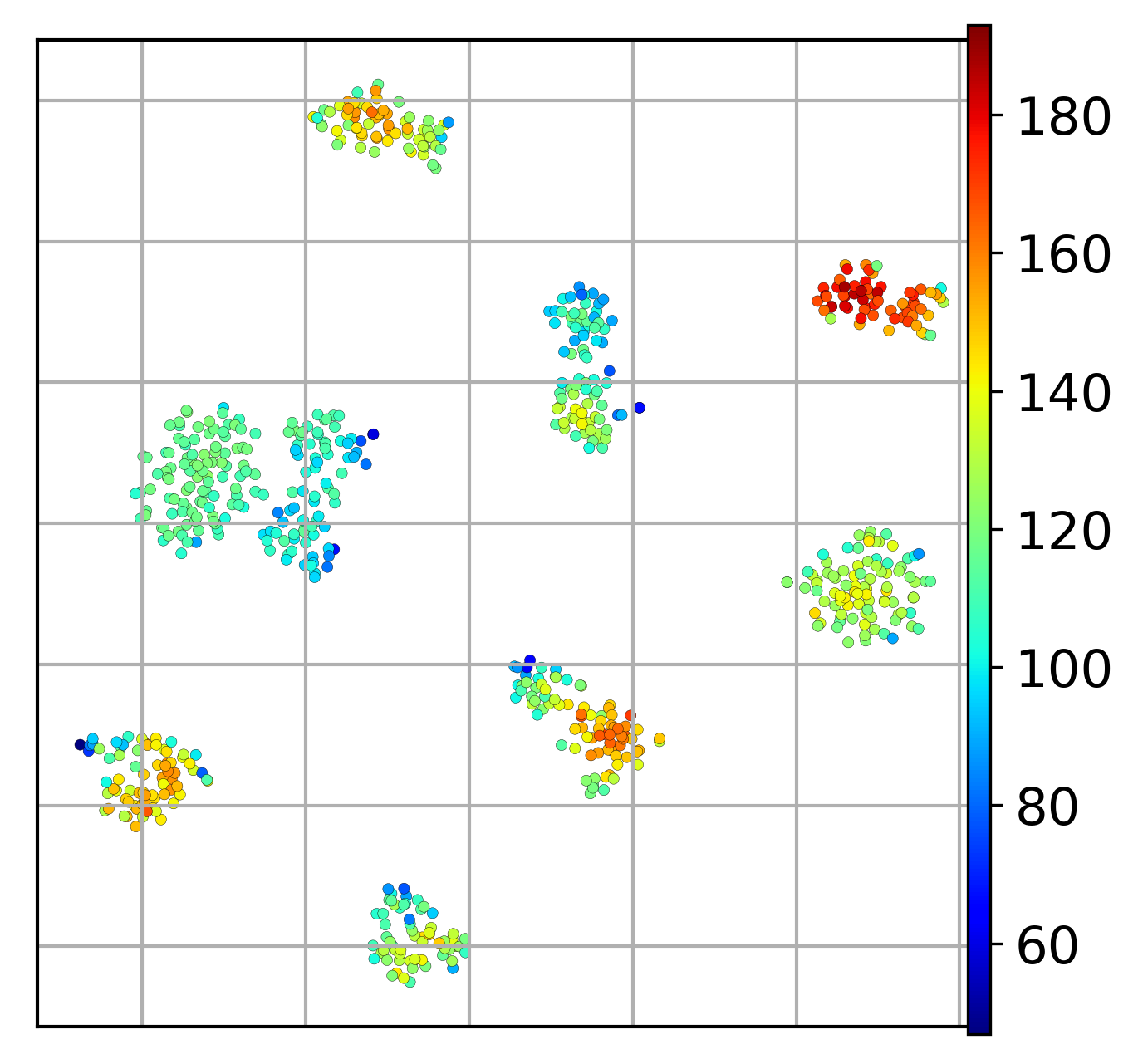}
  \vspace{-5pt}\phantomsection
\end{subfigure}

\begin{subfigure}[b]{0.2\textwidth}
  \centering
  \includegraphics[width=\textwidth]{./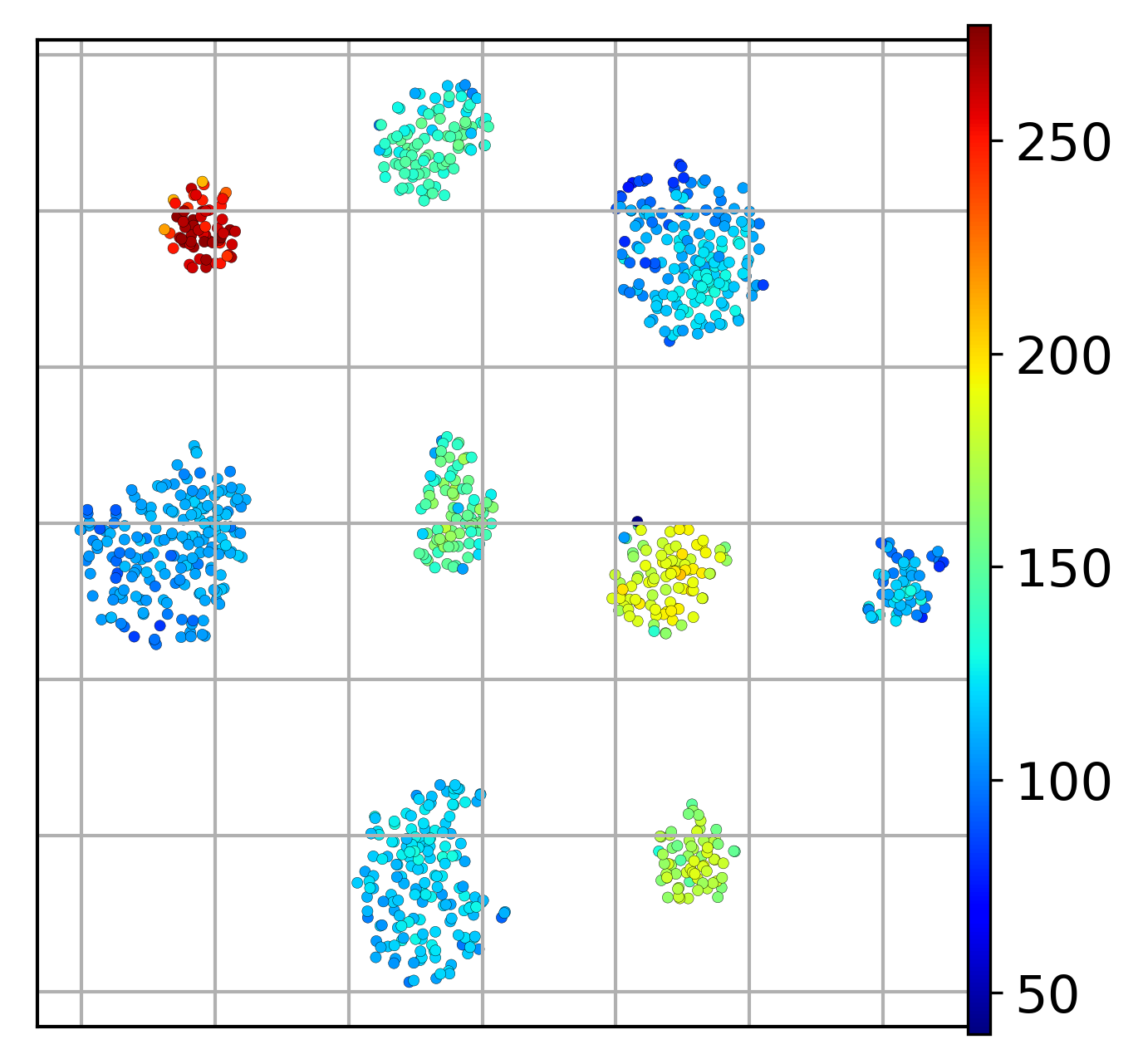}
  \vspace{-5pt}\phantomsection
\end{subfigure}%
\begin{subfigure}[b]{0.2\textwidth}
  \centering
  \includegraphics[width=\textwidth]{./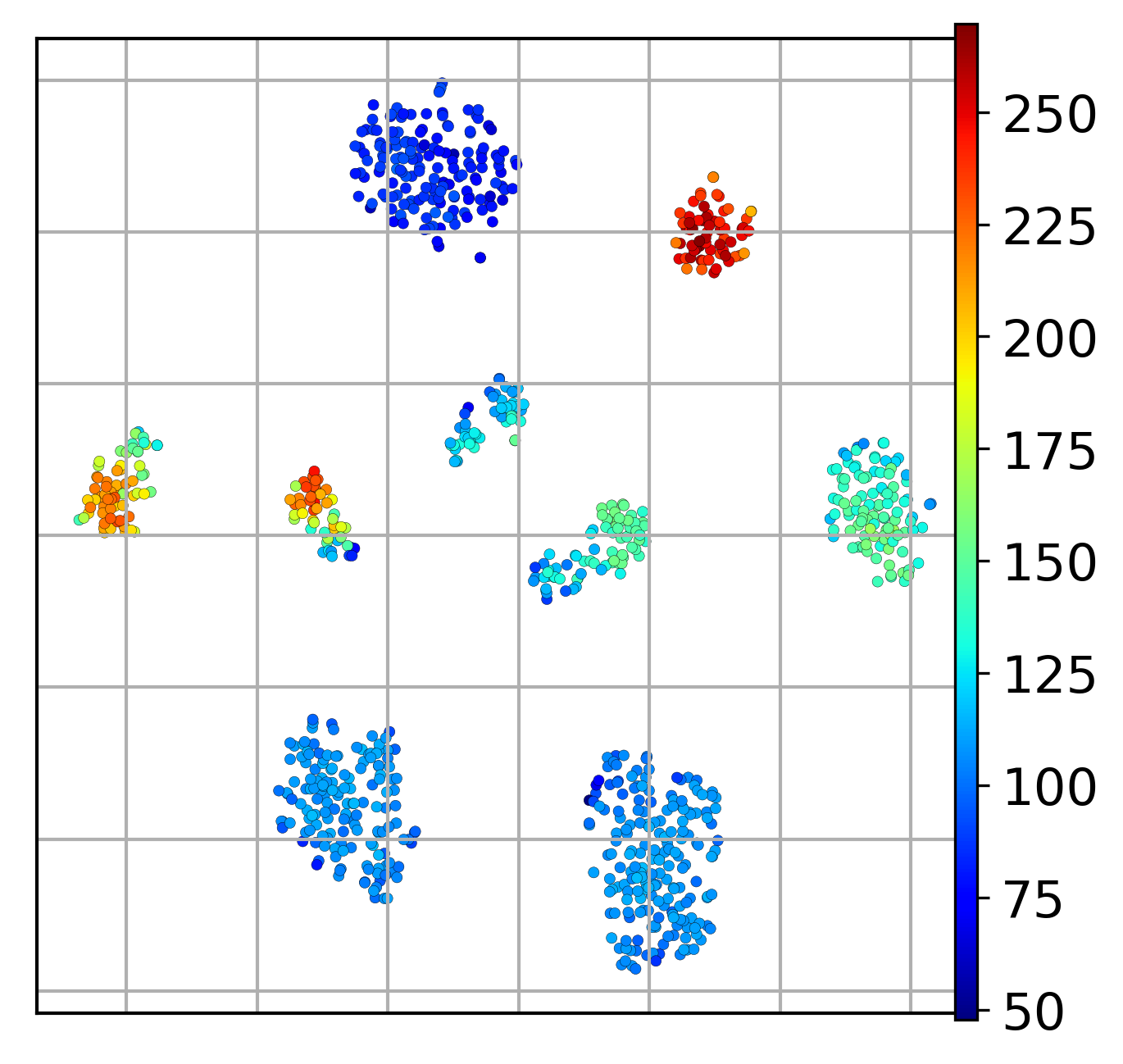}
  \vspace{-5pt}\phantomsection
\end{subfigure}%
\begin{subfigure}[b]{0.2\textwidth}
  \centering
  \includegraphics[width=\textwidth]{./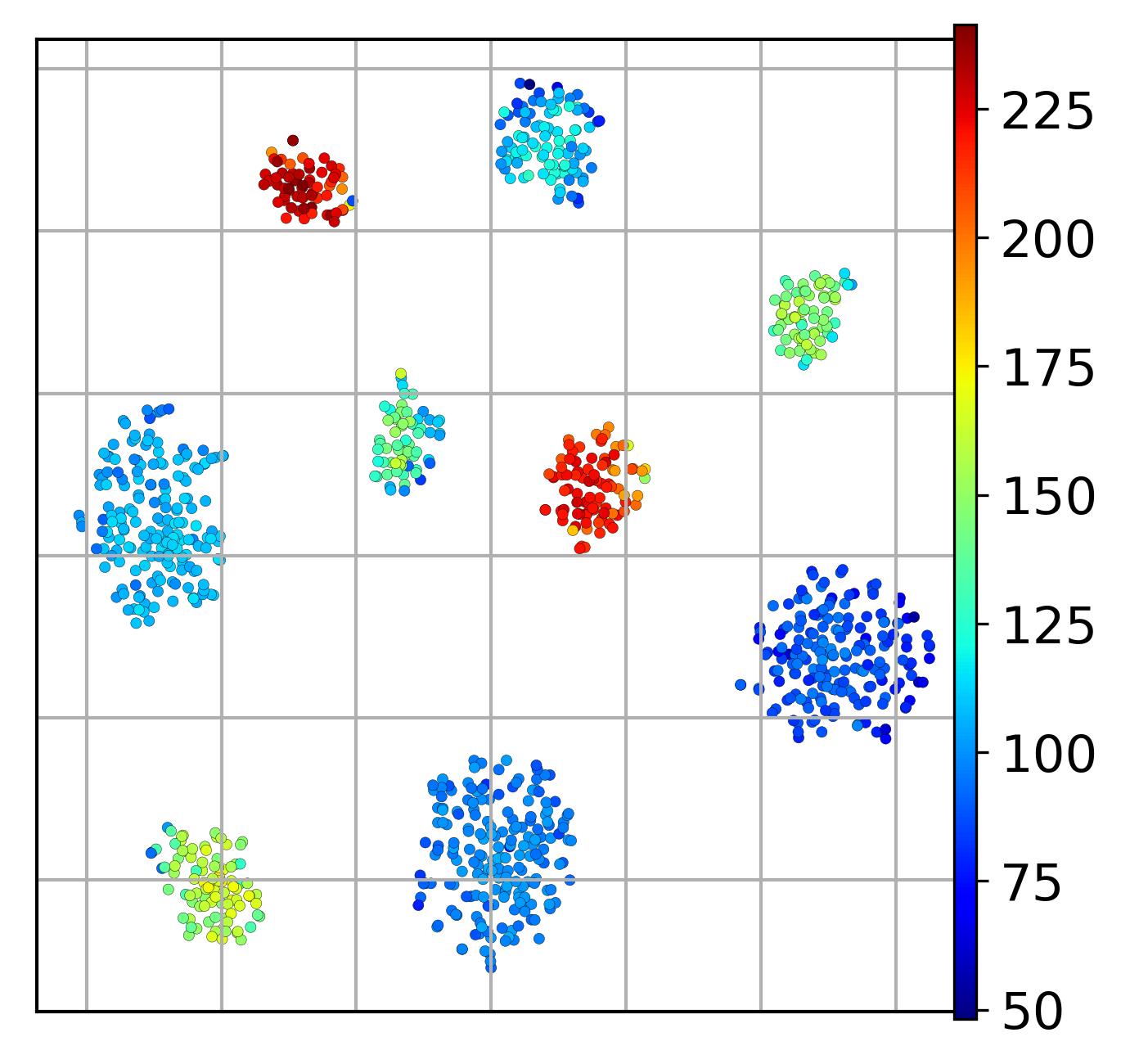}
  \vspace{-5pt}\phantomsection
\end{subfigure}%
\begin{subfigure}[b]{0.2\textwidth}
  \centering
  \includegraphics[width=\textwidth]{./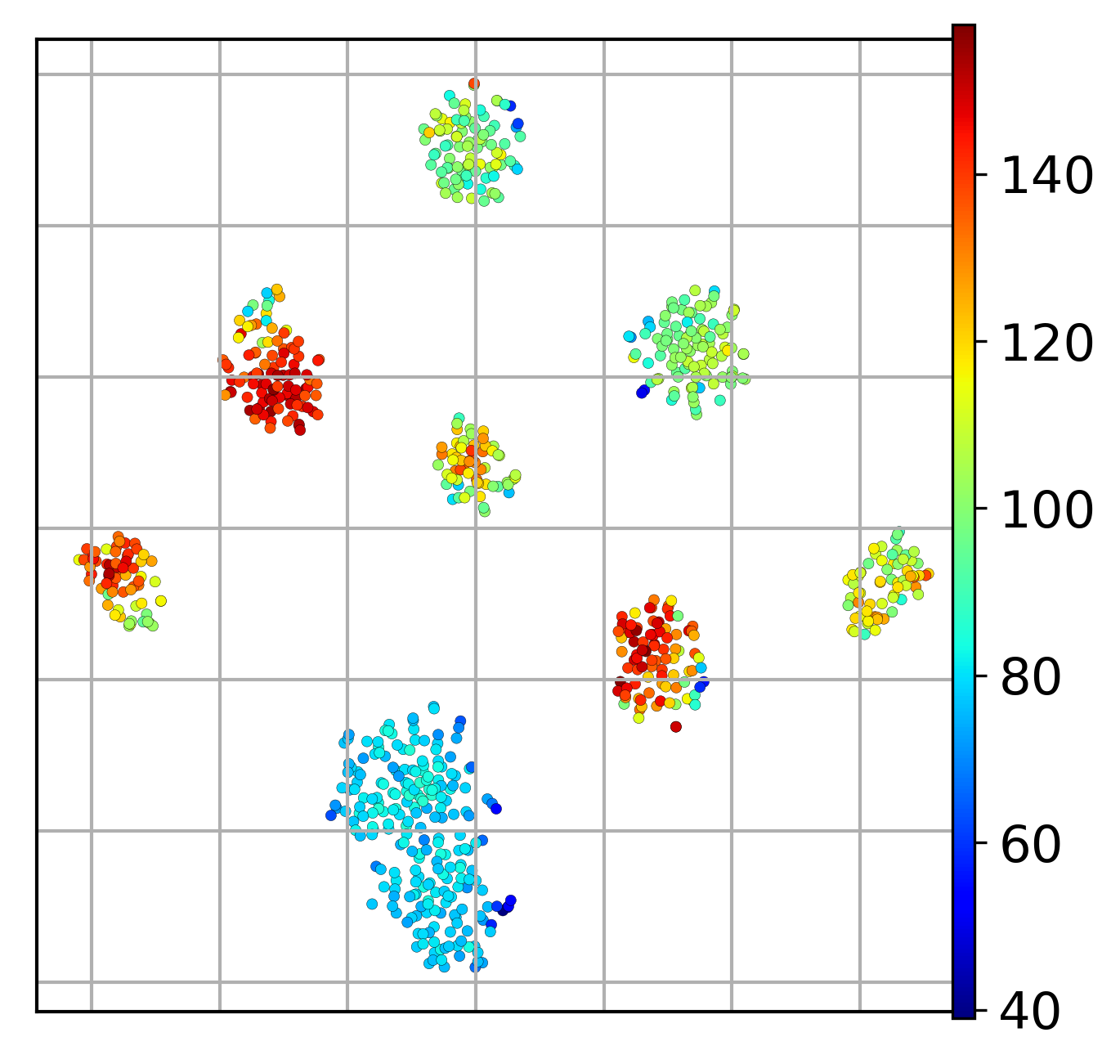}
  \vspace{-5pt}\phantomsection
\end{subfigure}%
\begin{subfigure}[b]{0.2\textwidth}
  \centering
  \includegraphics[width=\textwidth]{./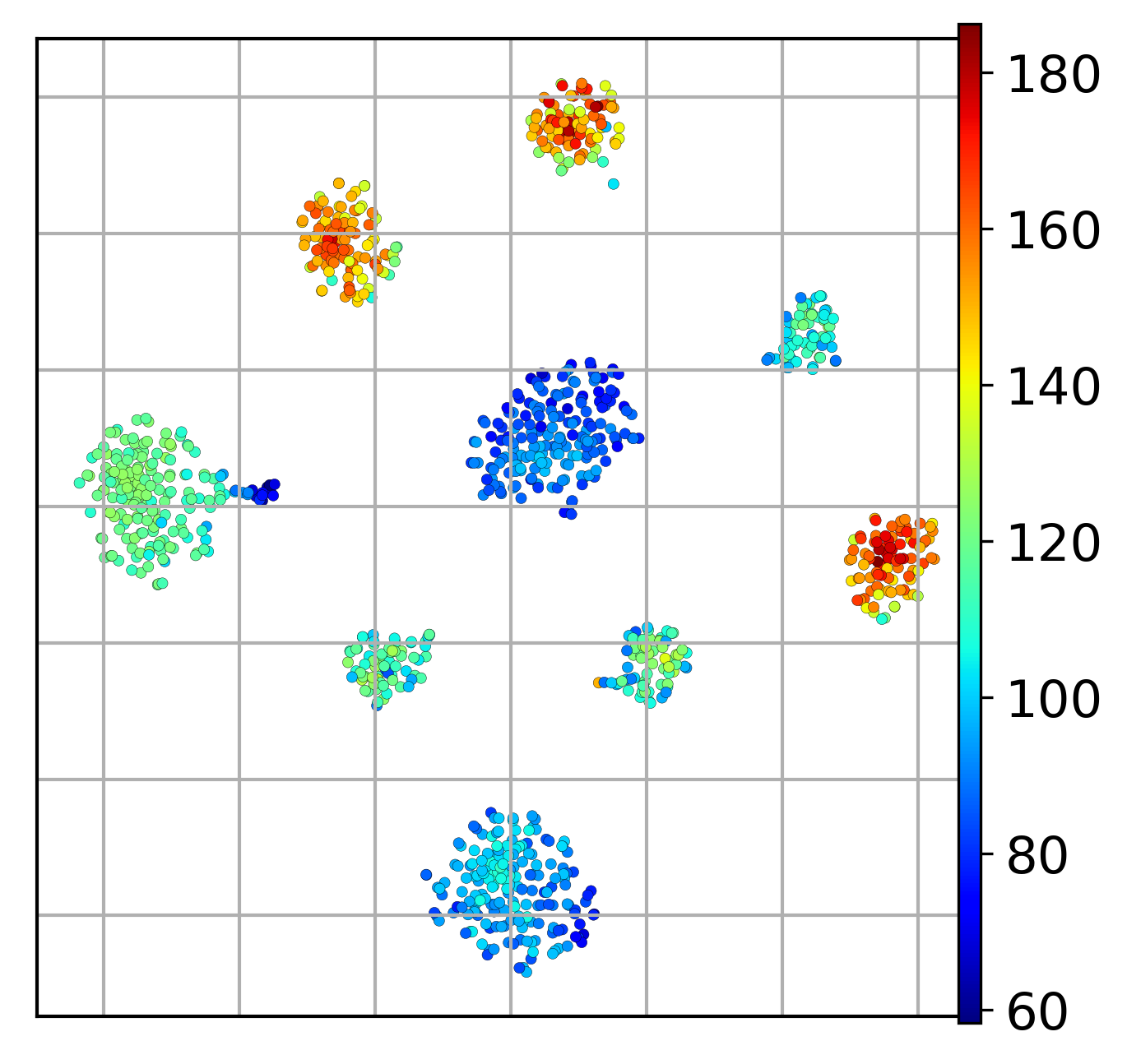}
  \vspace{-5pt}\phantomsection
\end{subfigure}
\caption{t-SNE visualization of eigenfunctions from all trials of 25 participants, color-coded by their corresponding density ratios. Each subplot represents one participant. The density ratios for the three movements (reaching, grasping, and twisting) show consistent values: reaching is the lowest, grasping is in the middle, and twisting is the highest. For cluster labels, refer to Fig.~\ref{clustering_figure}.}\label{density_ratio_file}\end{figure}

\clearpage
\begin{figure}[t]
\centering
{\large\textbf{\texttt{TSNE, SUB1$\sim$SUB25, MOV1$\sim$MOV3}}
}\begin{subfigure}[b]{0.2\textwidth}
  \centering
  \includegraphics[width=\textwidth]{./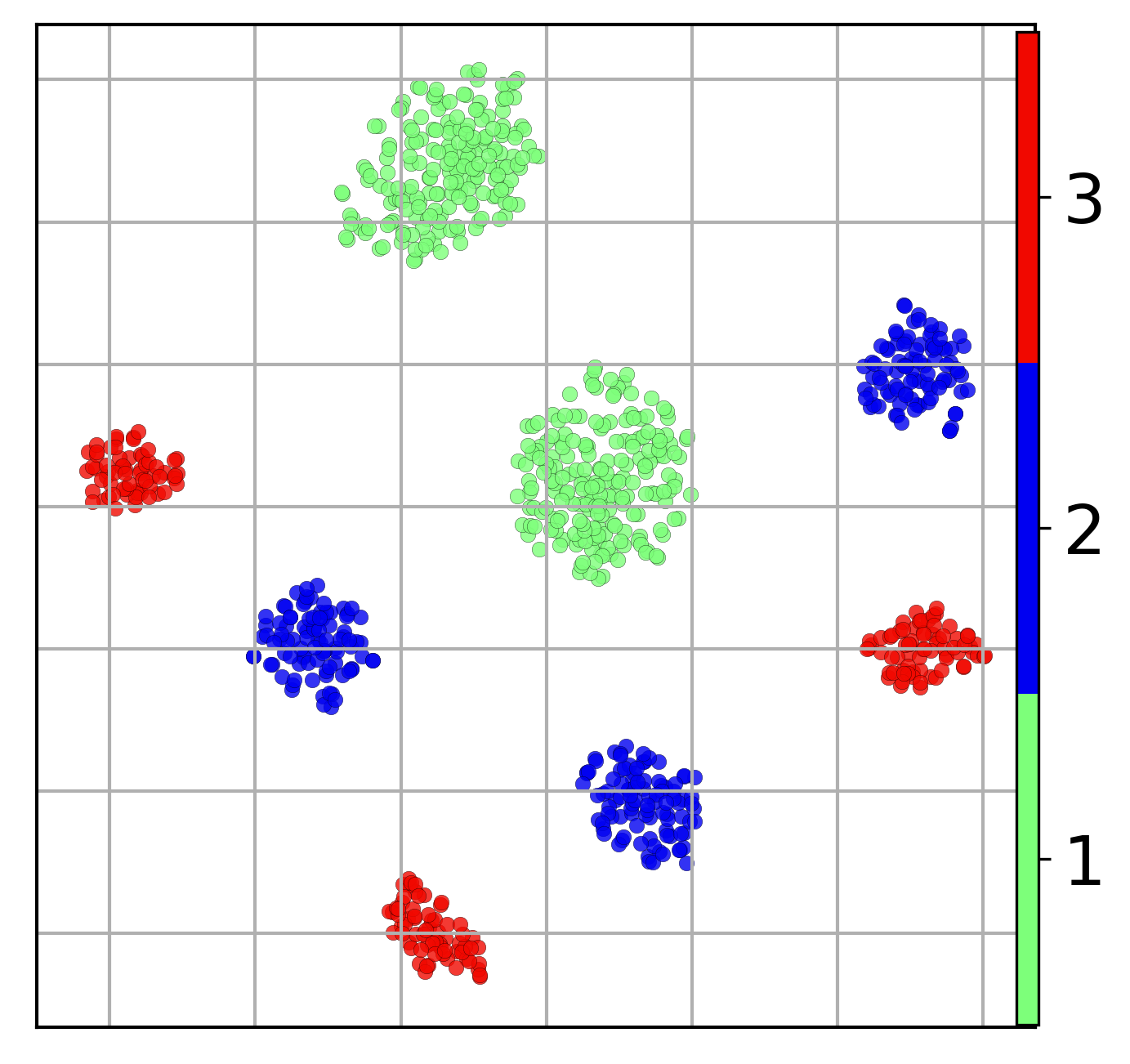}
  \vspace{-5pt}\phantomsection
\end{subfigure}%
\begin{subfigure}[b]{0.2\textwidth}
  \centering
  \includegraphics[width=\textwidth]{./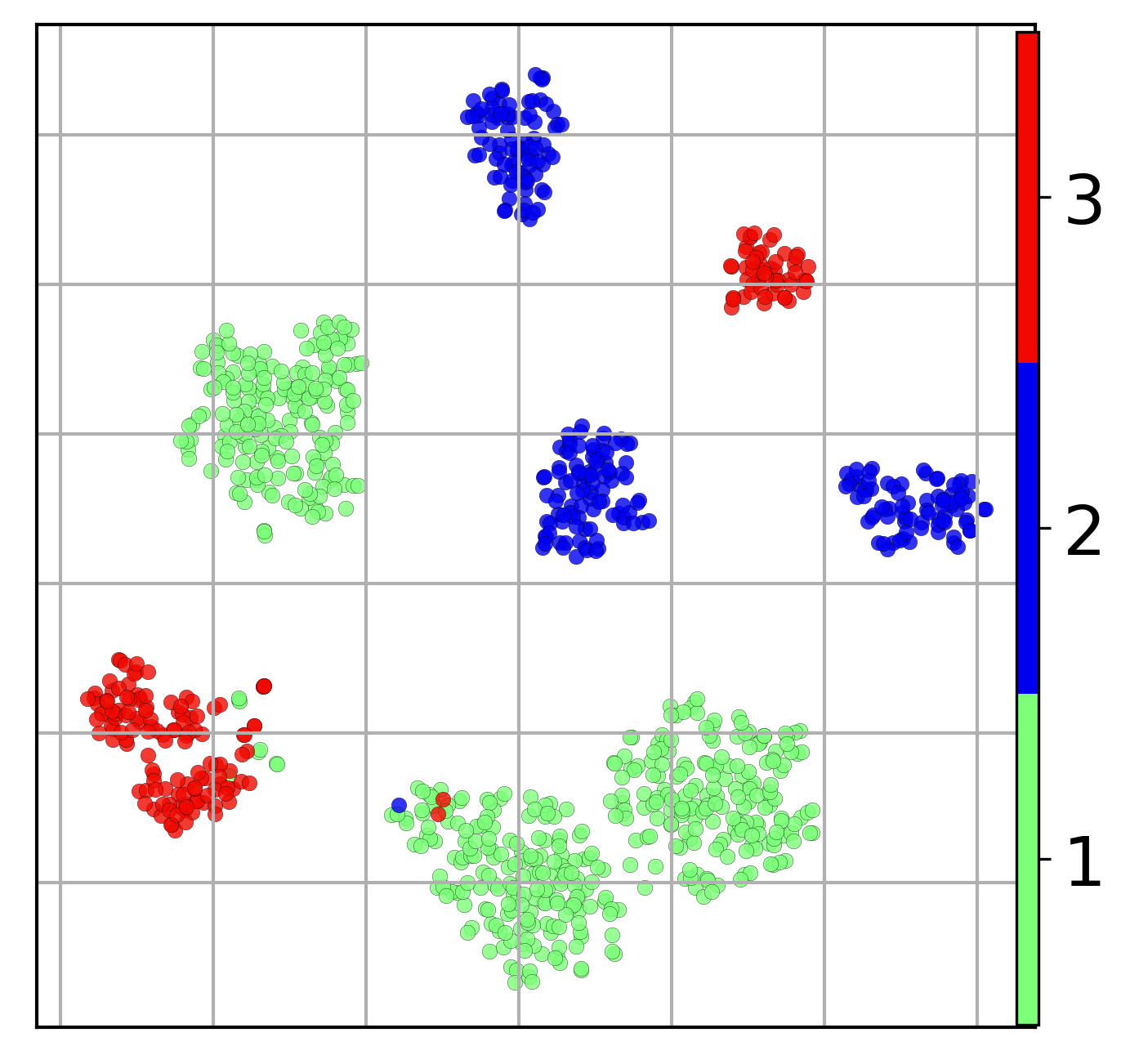}
  \vspace{-5pt}\phantomsection
\end{subfigure}%
\begin{subfigure}[b]{0.2\textwidth}
  \centering
  \includegraphics[width=\textwidth]{./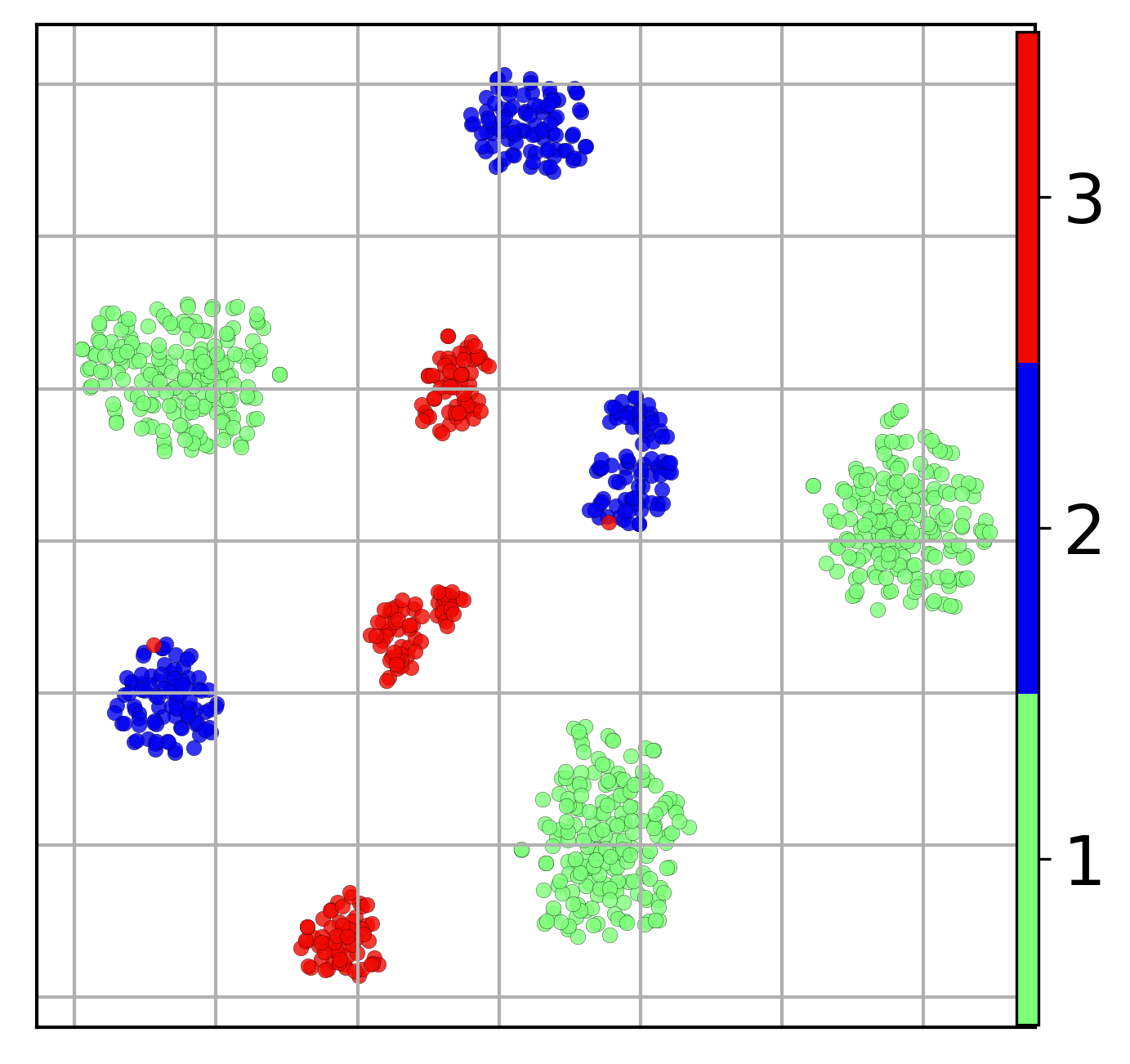}
  \vspace{-5pt}\phantomsection
\end{subfigure}%
\begin{subfigure}[b]{0.2\textwidth}
  \centering
  \includegraphics[width=\textwidth]{./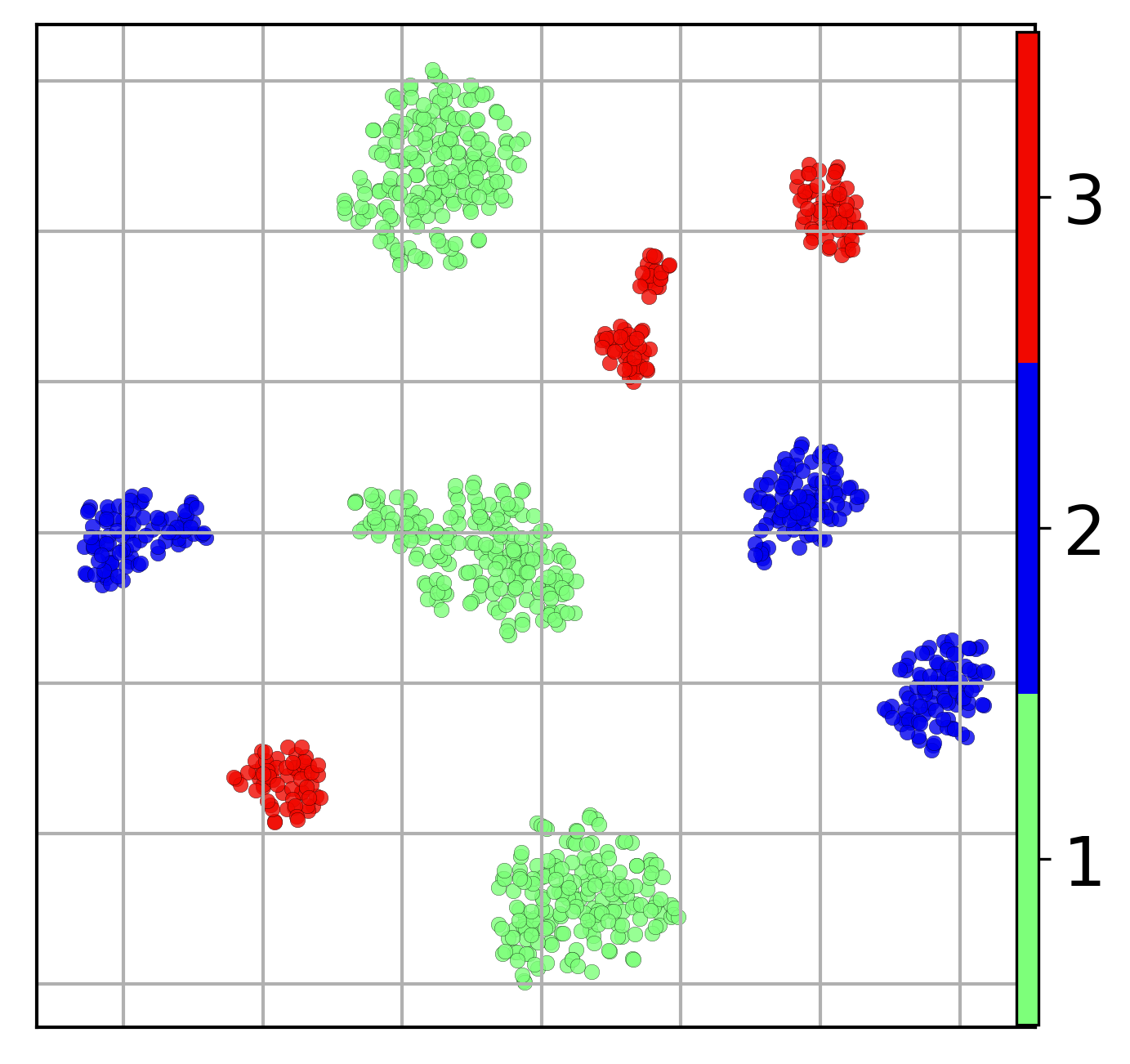}
  \vspace{-5pt}\phantomsection
\end{subfigure}%
\begin{subfigure}[b]{0.2\textwidth}
  \centering
  \includegraphics[width=\textwidth]{./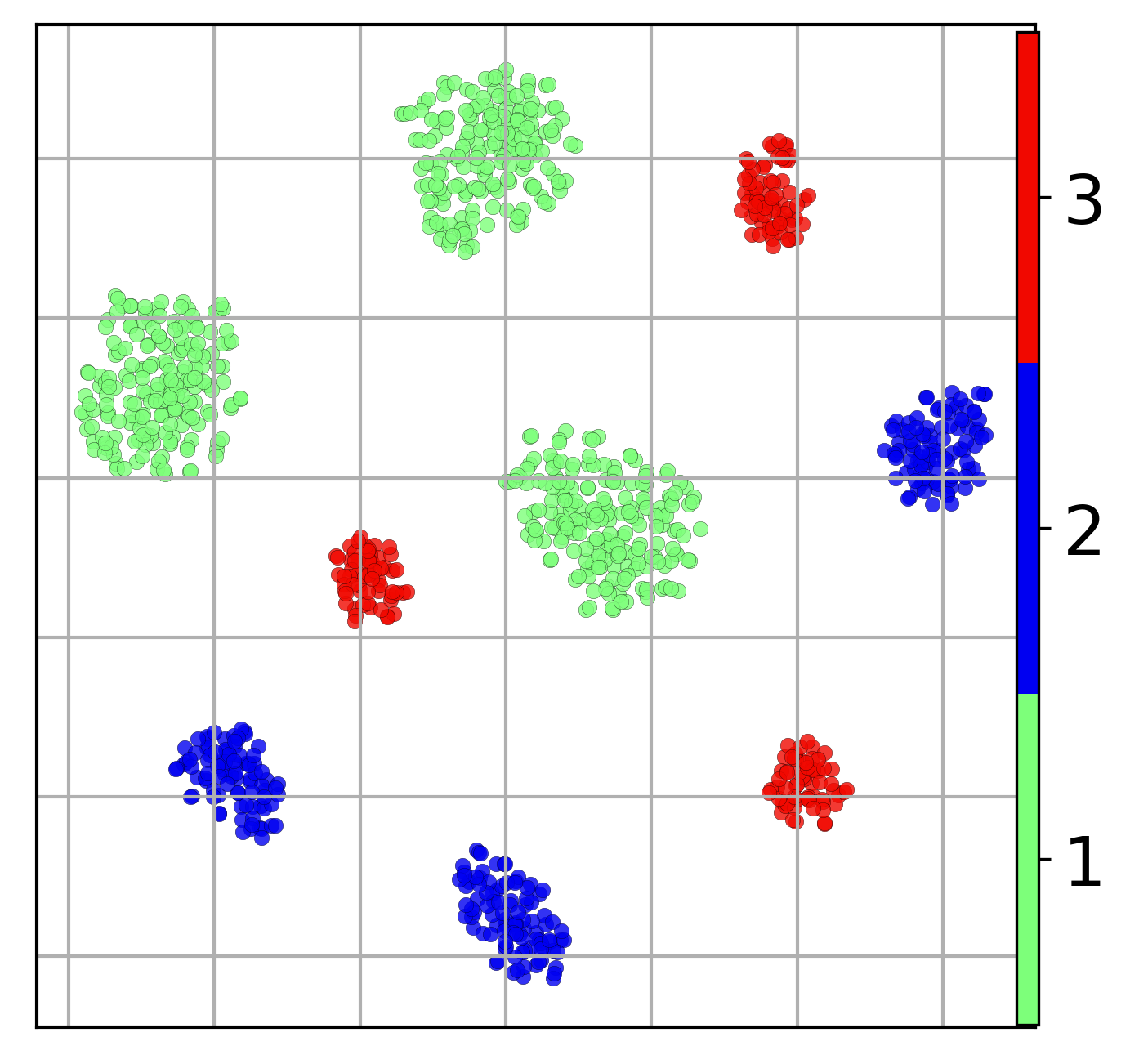}
  \vspace{-5pt}\phantomsection
\end{subfigure}

\begin{subfigure}[b]{0.2\textwidth}
  \centering
  \includegraphics[width=\textwidth]{./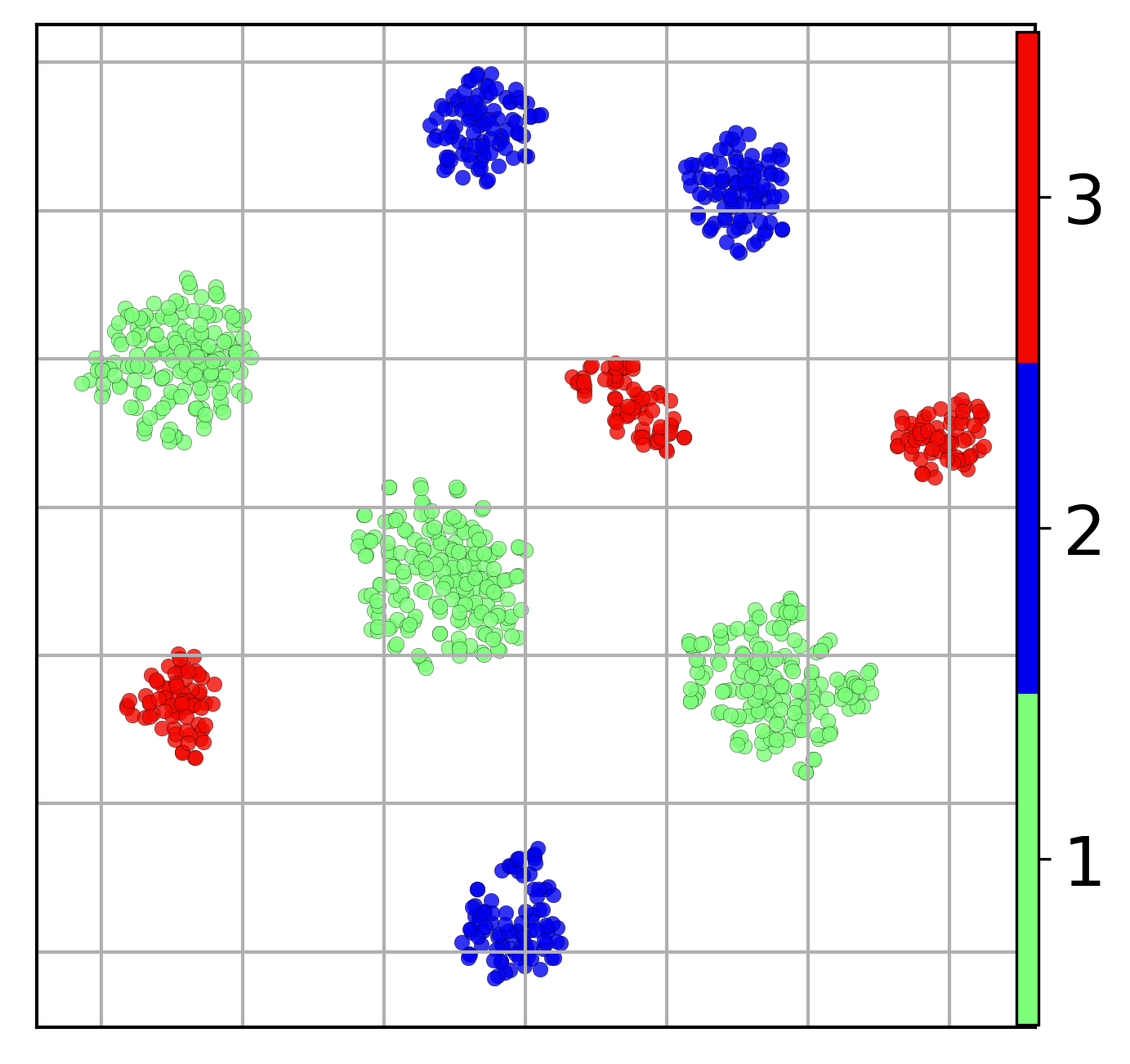}
  \vspace{-5pt}\phantomsection
\end{subfigure}%
\begin{subfigure}[b]{0.2\textwidth}
  \centering
  \includegraphics[width=\textwidth]{./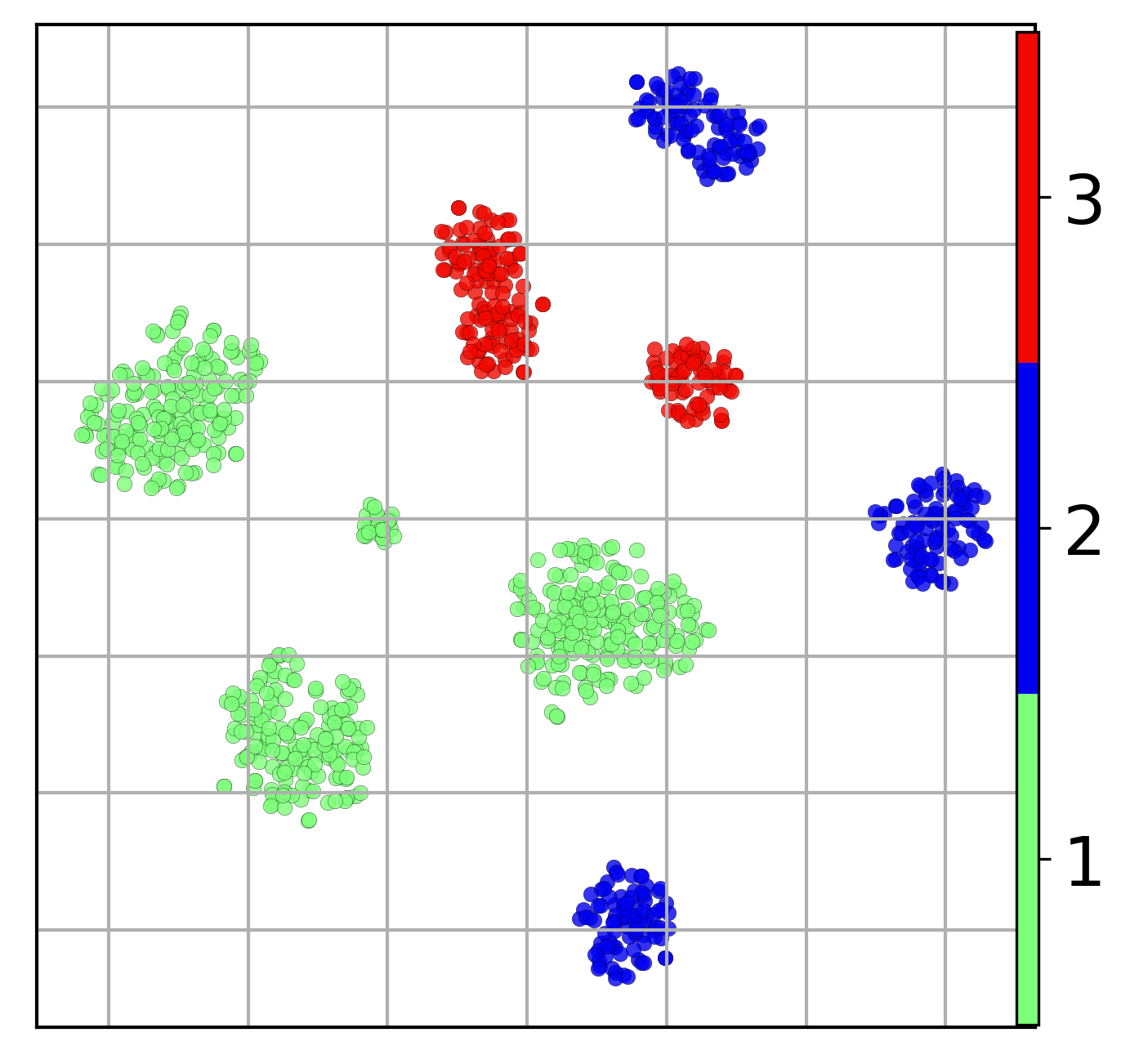}
  \vspace{-5pt}\phantomsection
\end{subfigure}%
\begin{subfigure}[b]{0.2\textwidth}
  \centering
  \includegraphics[width=\textwidth]{./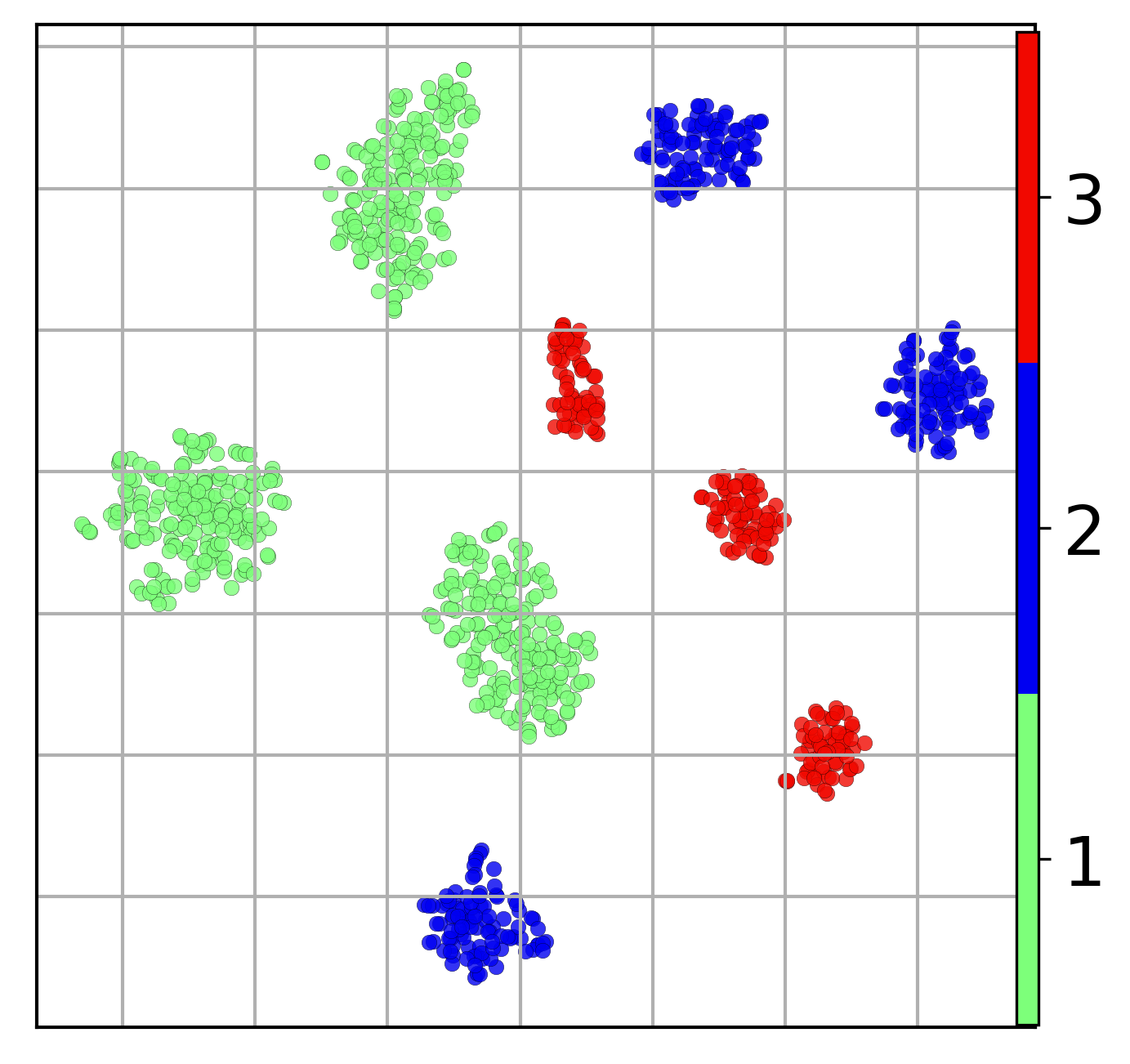}
  \vspace{-5pt}\phantomsection
\end{subfigure}%
\begin{subfigure}[b]{0.2\textwidth}
  \centering
  \includegraphics[width=\textwidth]{./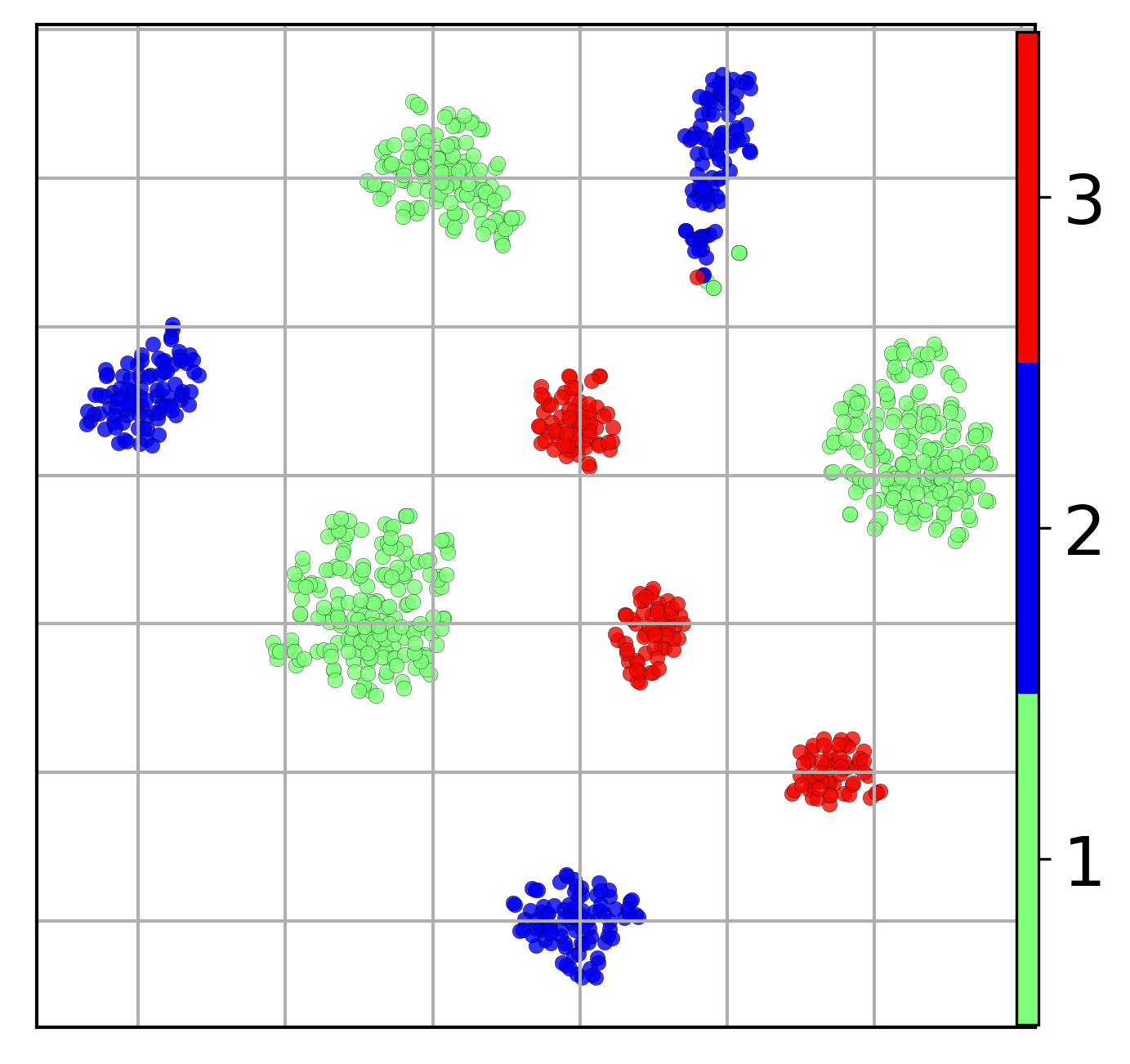}
  \vspace{-5pt}\phantomsection
\end{subfigure}%
\begin{subfigure}[b]{0.2\textwidth}
  \centering
  \includegraphics[width=\textwidth]{./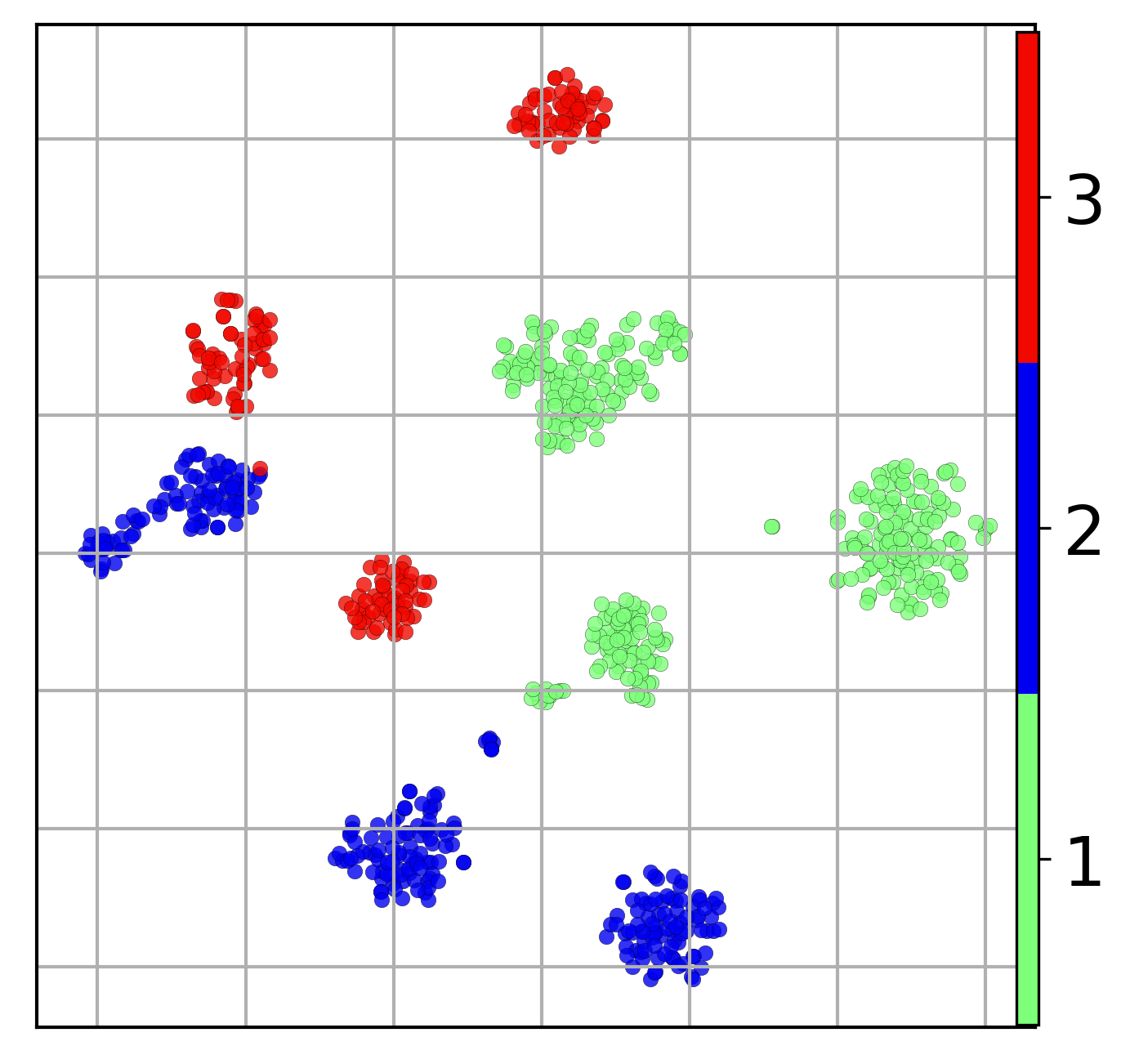}
  \vspace{-5pt}\phantomsection
\end{subfigure}

\begin{subfigure}[b]{0.2\textwidth}
  \centering
  \includegraphics[width=\textwidth]{./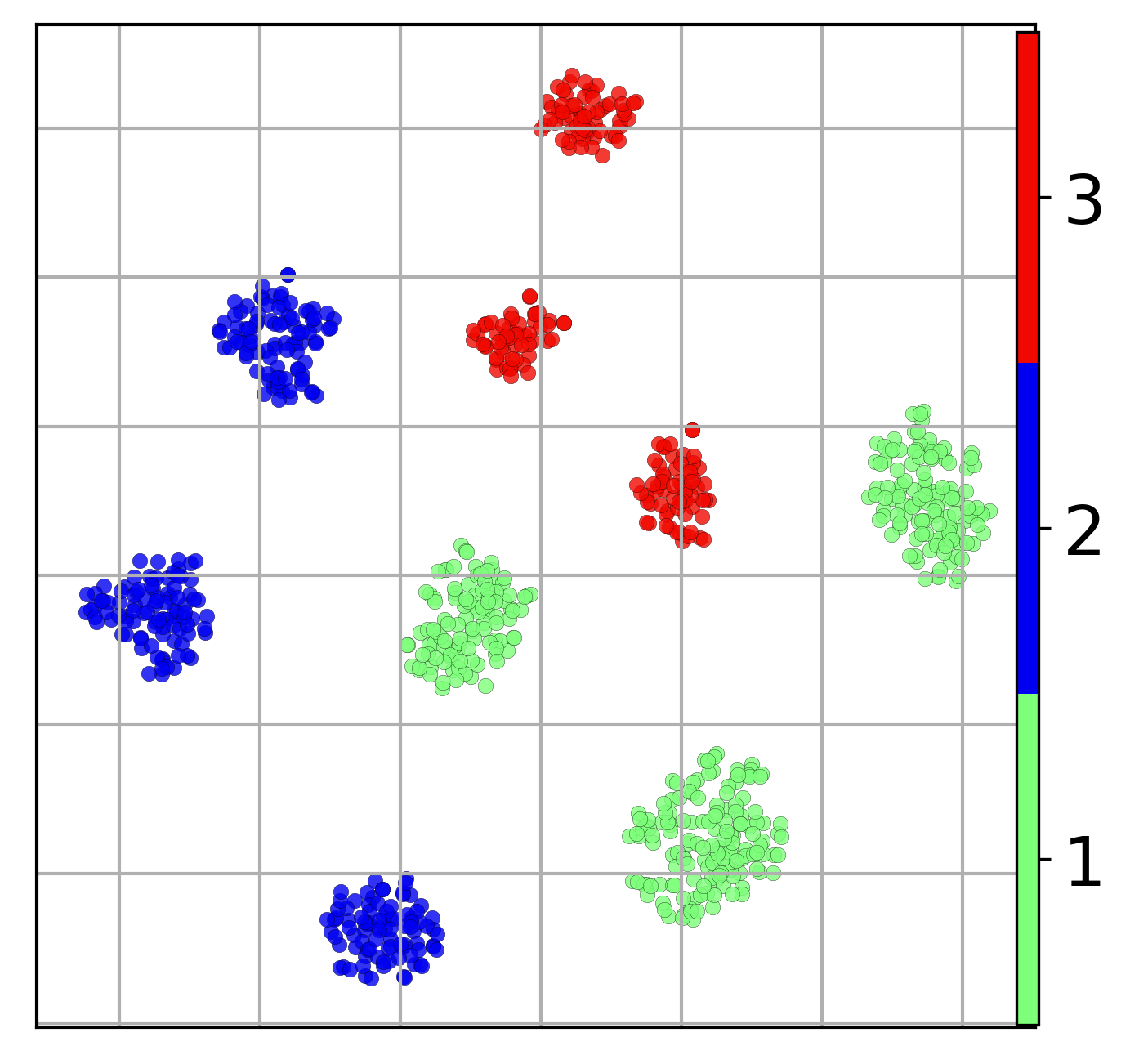}
  \vspace{-5pt}\phantomsection
\end{subfigure}%
\begin{subfigure}[b]{0.2\textwidth}
  \centering
  \includegraphics[width=\textwidth]{./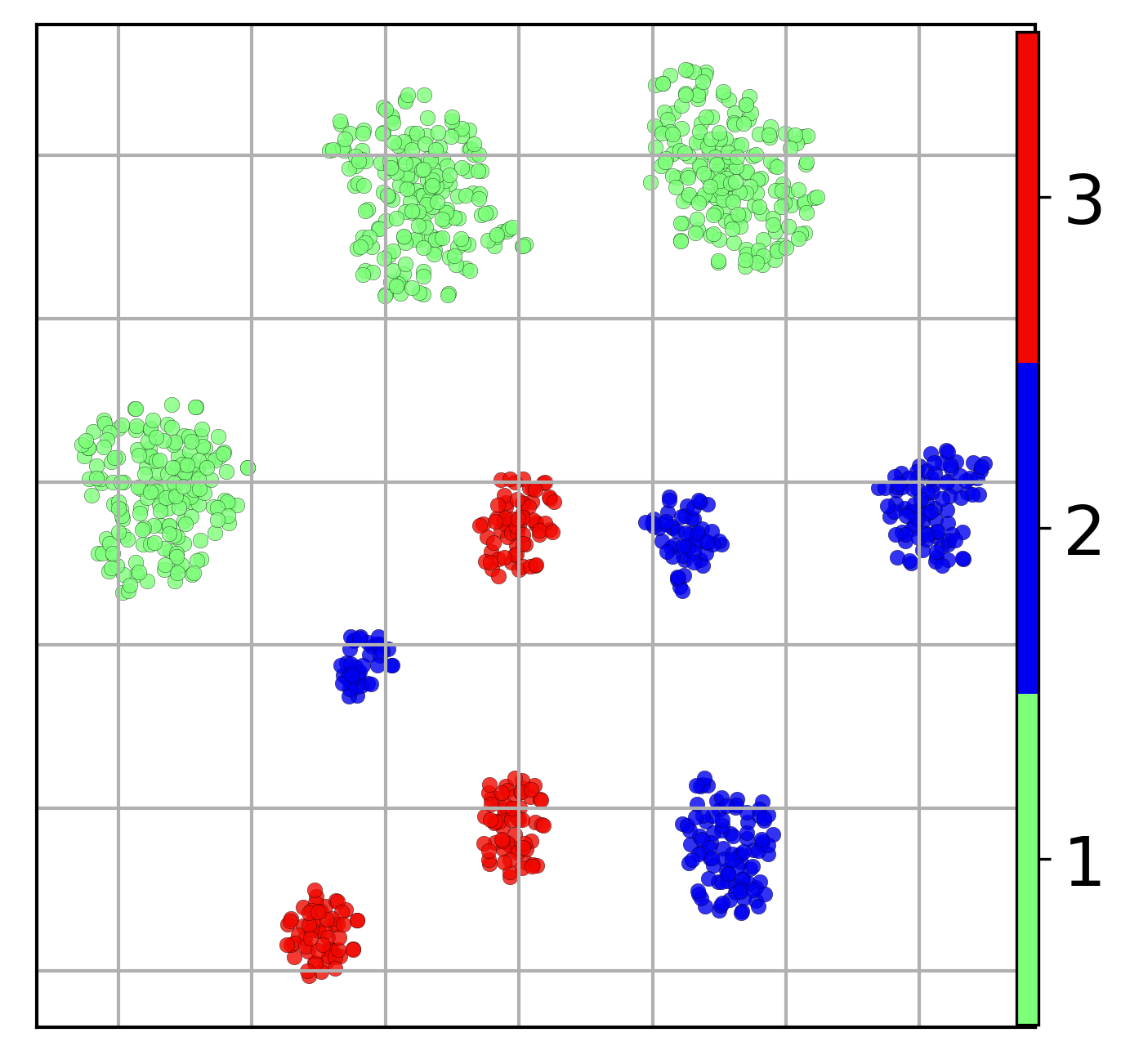}
  \vspace{-5pt}\phantomsection
\end{subfigure}%
\begin{subfigure}[b]{0.2\textwidth}
  \centering
  \includegraphics[width=\textwidth]{./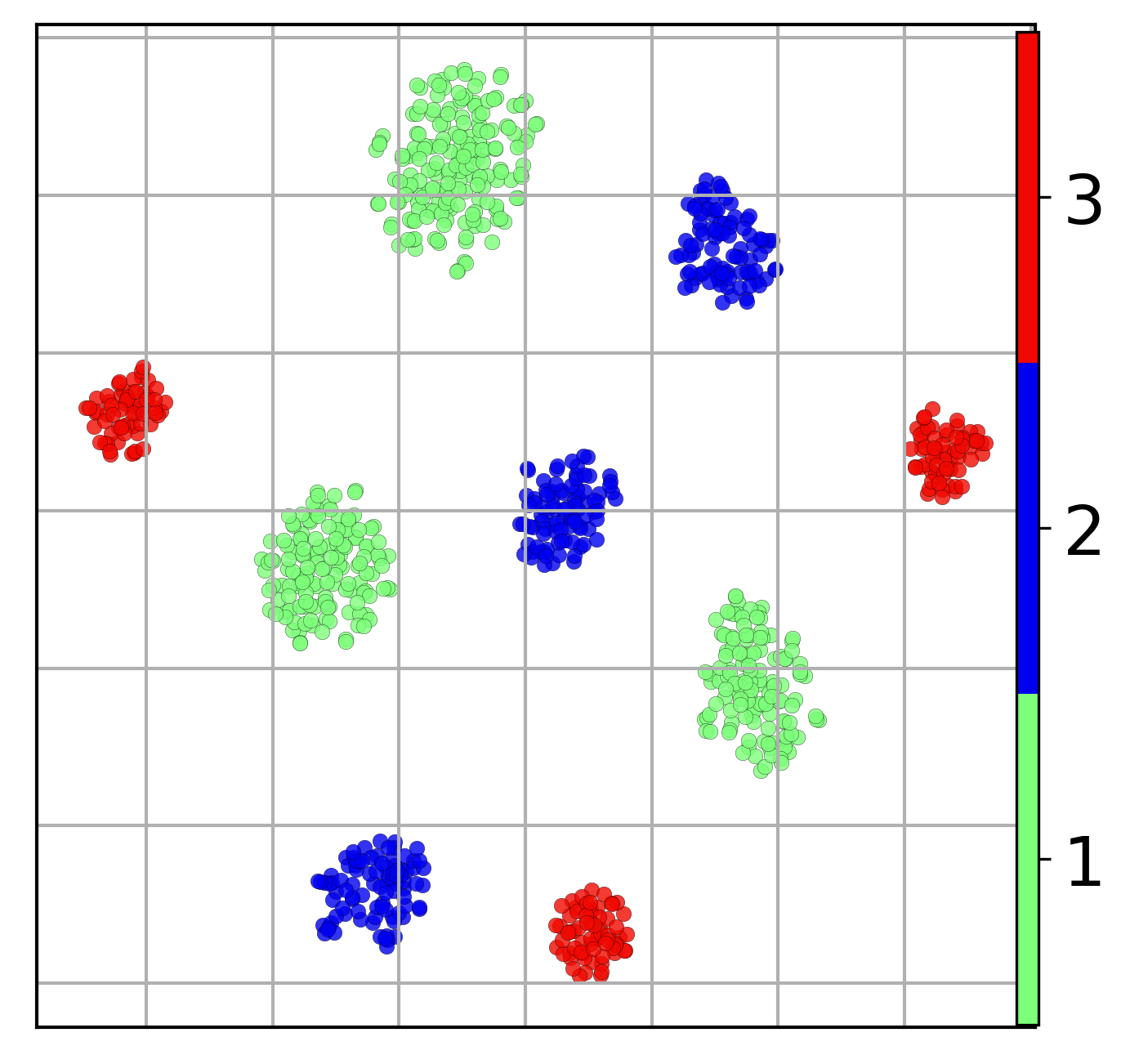}
  \vspace{-5pt}\phantomsection
\end{subfigure}%
\begin{subfigure}[b]{0.2\textwidth}
  \centering
  \includegraphics[width=\textwidth]{./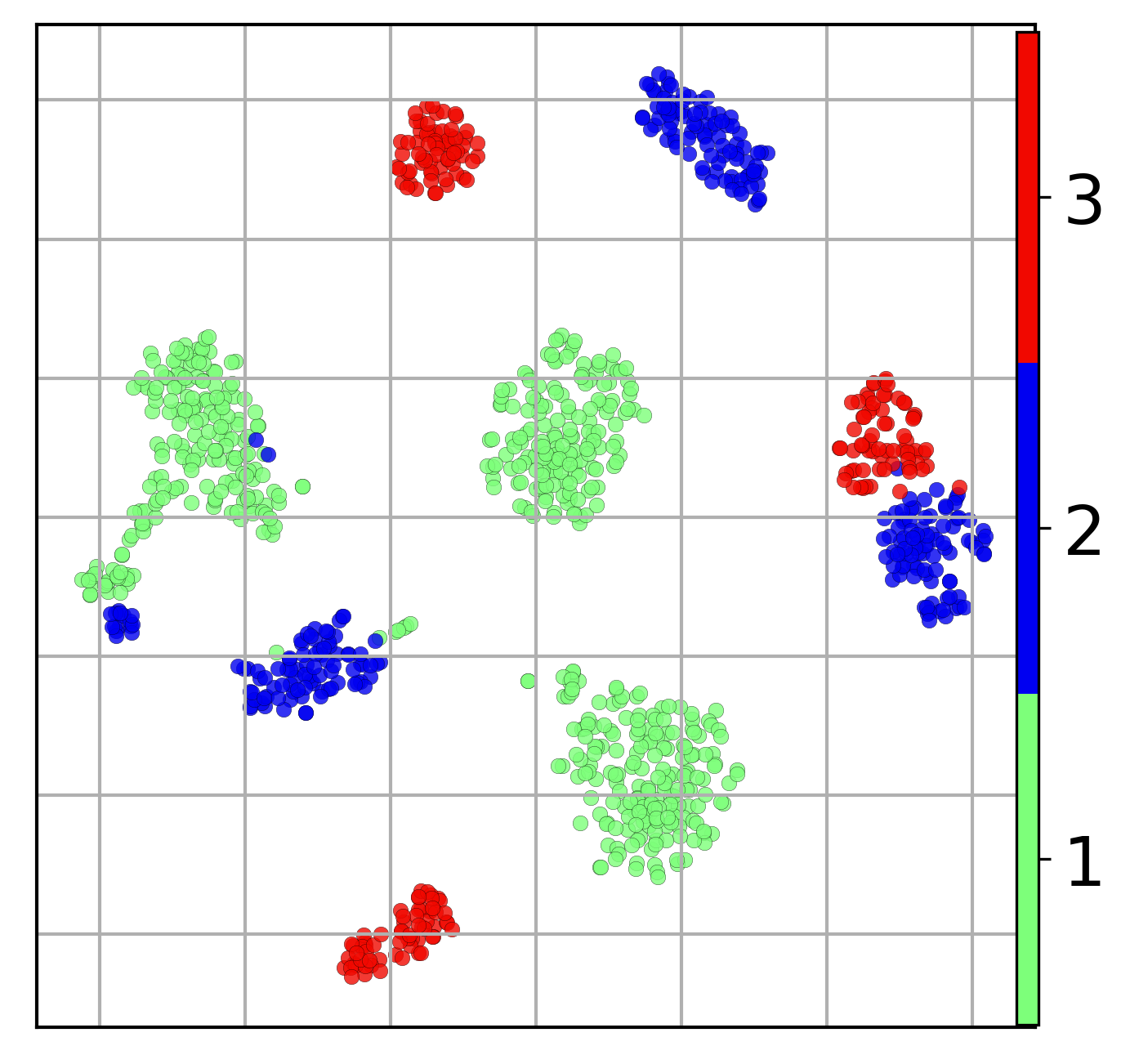}
  \vspace{-5pt}\phantomsection
\end{subfigure}%
\begin{subfigure}[b]{0.2\textwidth}
  \centering
  \includegraphics[width=\textwidth]{./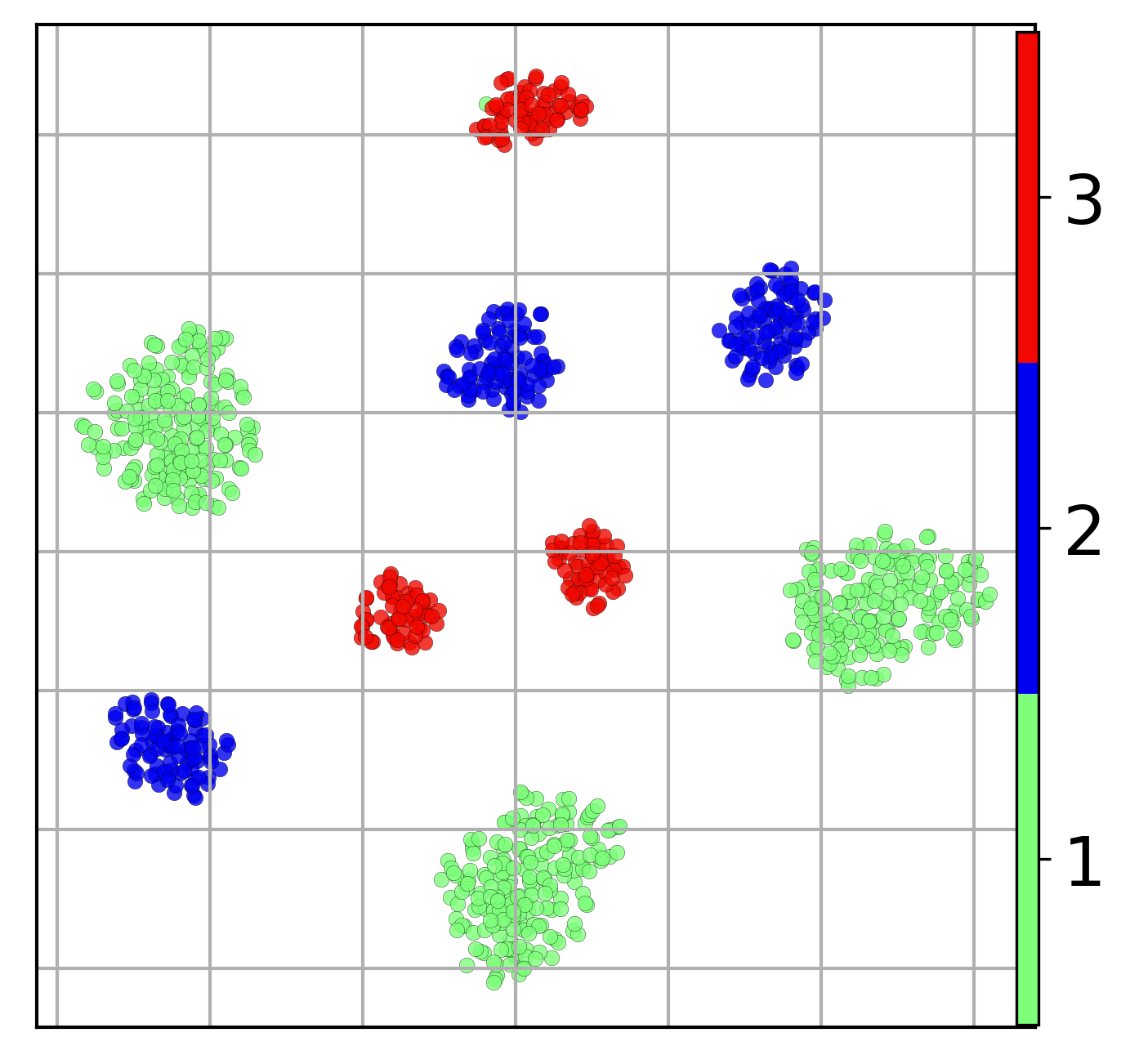}
  \vspace{-5pt}\phantomsection
\end{subfigure}

\begin{subfigure}[b]{0.2\textwidth}
  \centering
  \includegraphics[width=\textwidth]{./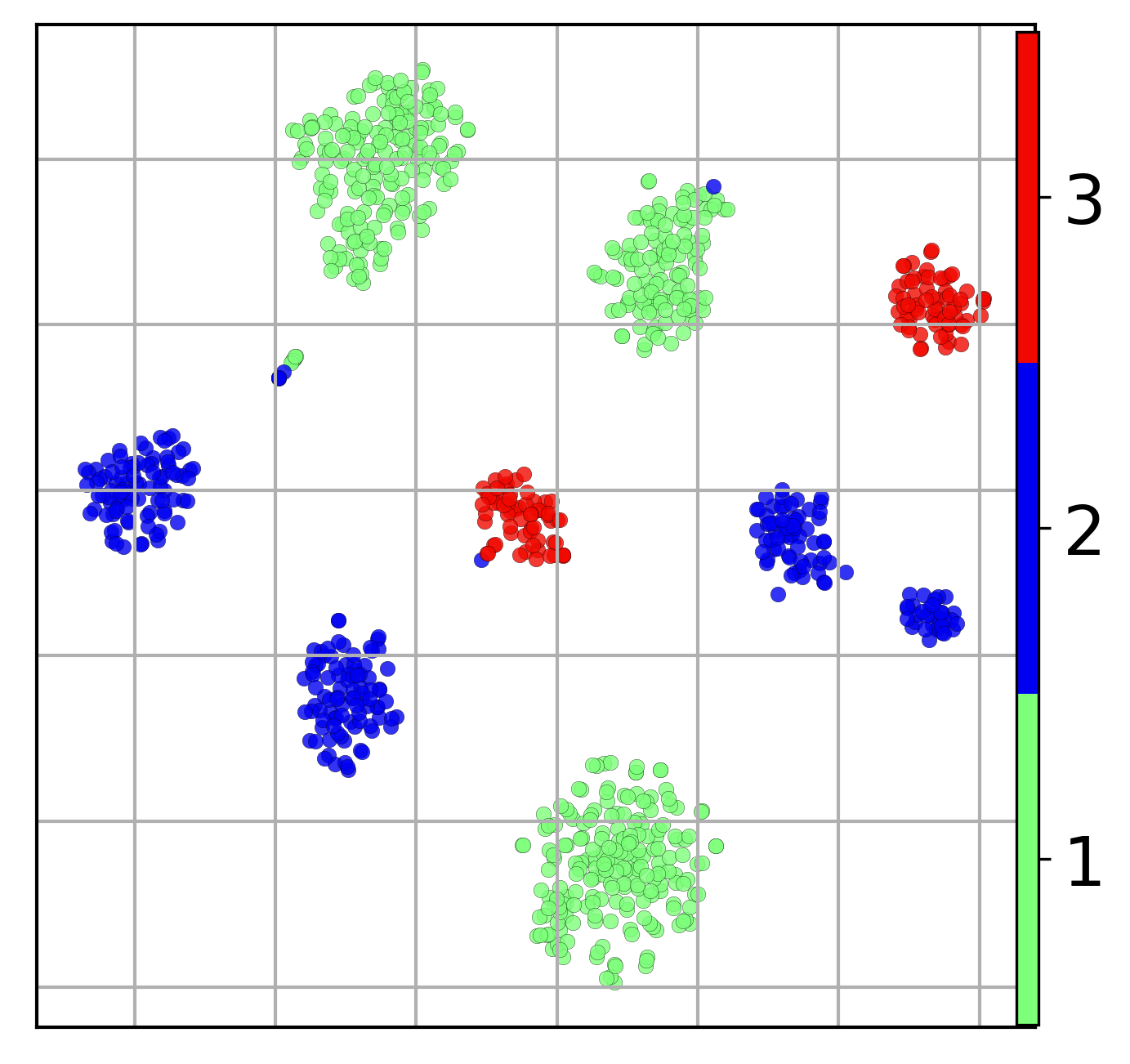}
  \vspace{-5pt}\phantomsection
\end{subfigure}%
\begin{subfigure}[b]{0.2\textwidth}
  \centering
  \includegraphics[width=\textwidth]{./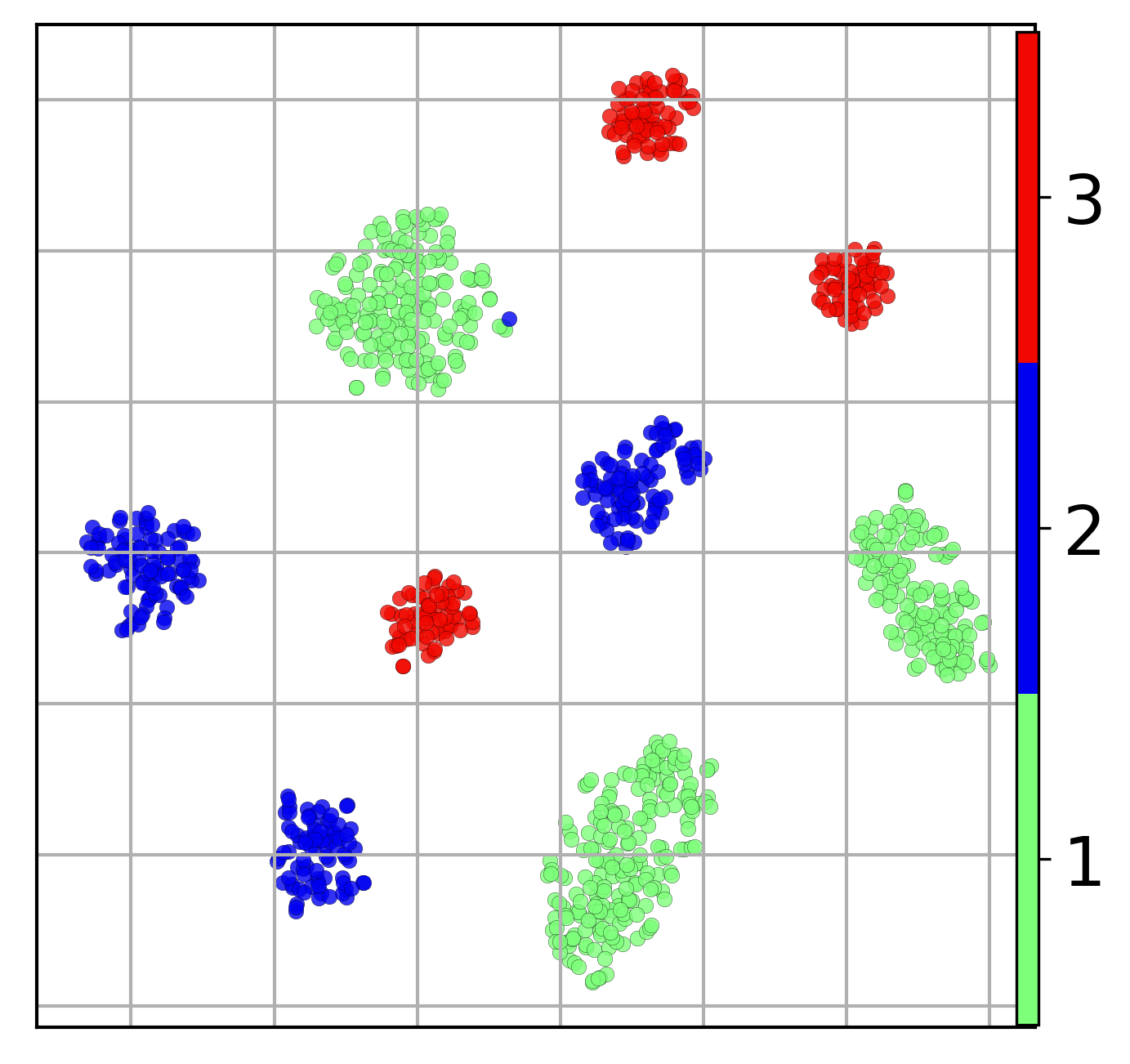}
  \vspace{-5pt}\phantomsection
\end{subfigure}%
\begin{subfigure}[b]{0.2\textwidth}
  \centering
  \includegraphics[width=\textwidth]{./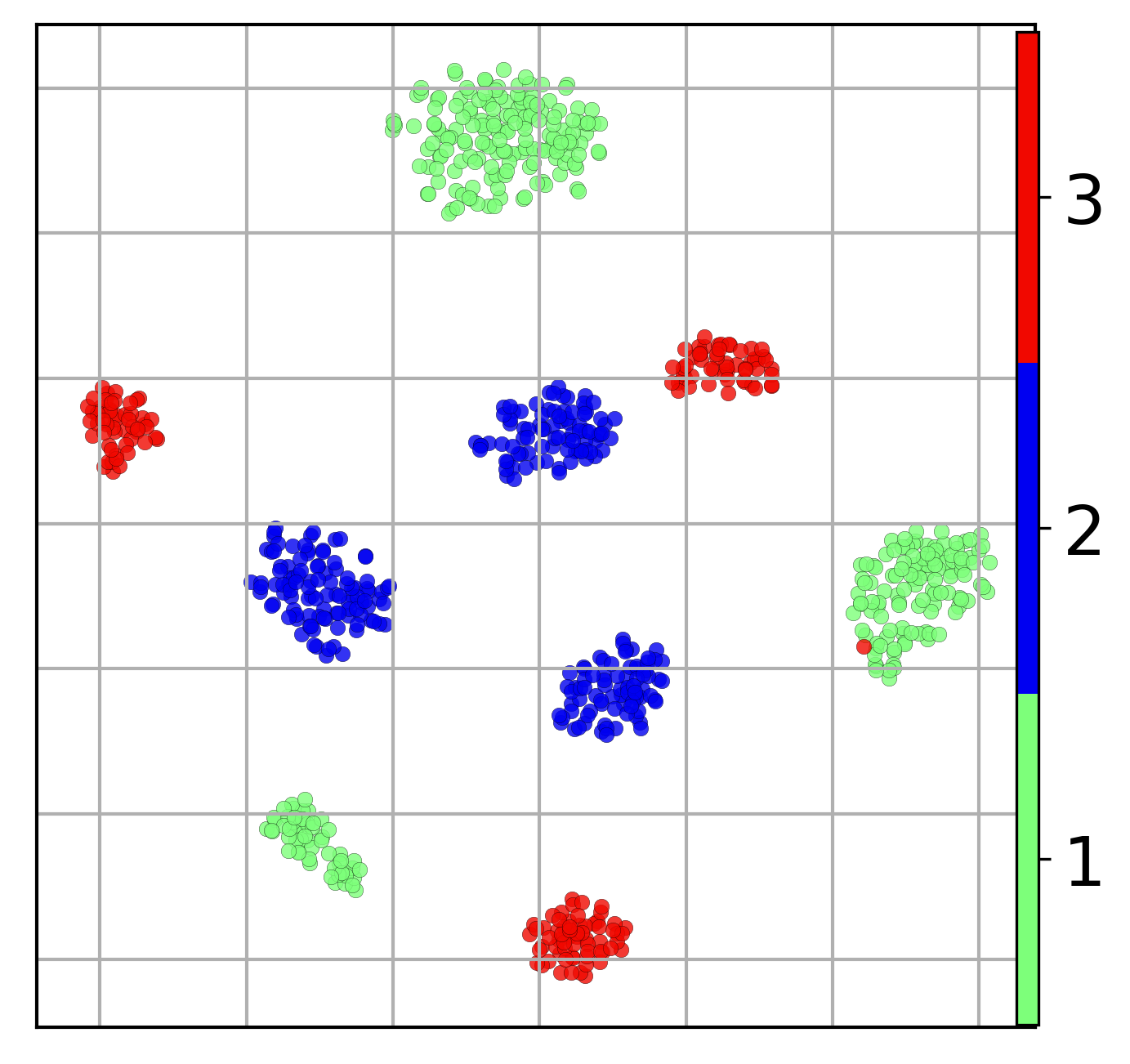}
  \vspace{-5pt}\phantomsection
\end{subfigure}%
\begin{subfigure}[b]{0.2\textwidth}
  \centering
  \includegraphics[width=\textwidth]{./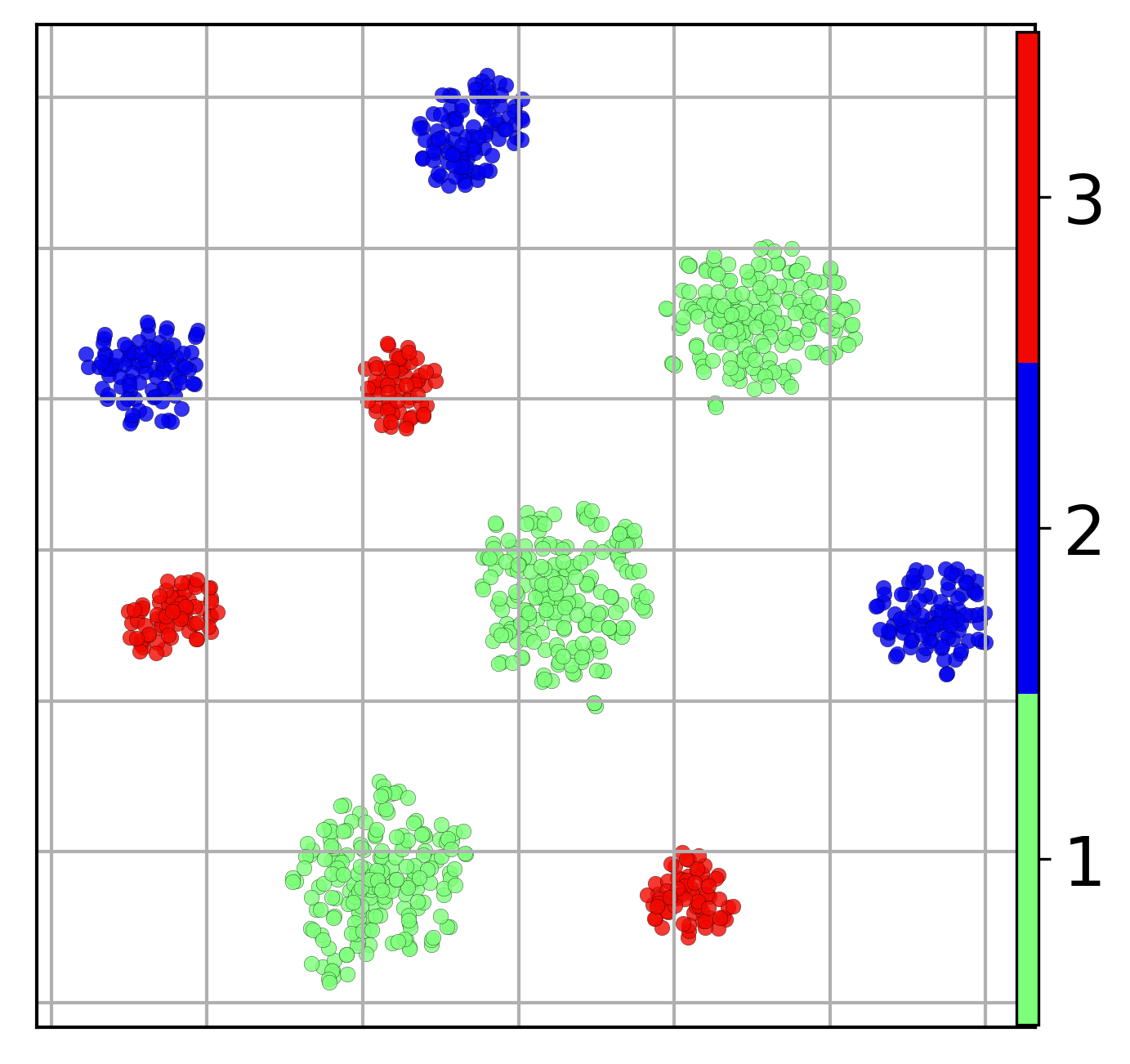}
  \vspace{-5pt}\phantomsection
\end{subfigure}%
\begin{subfigure}[b]{0.2\textwidth}
  \centering
  \includegraphics[width=\textwidth]{./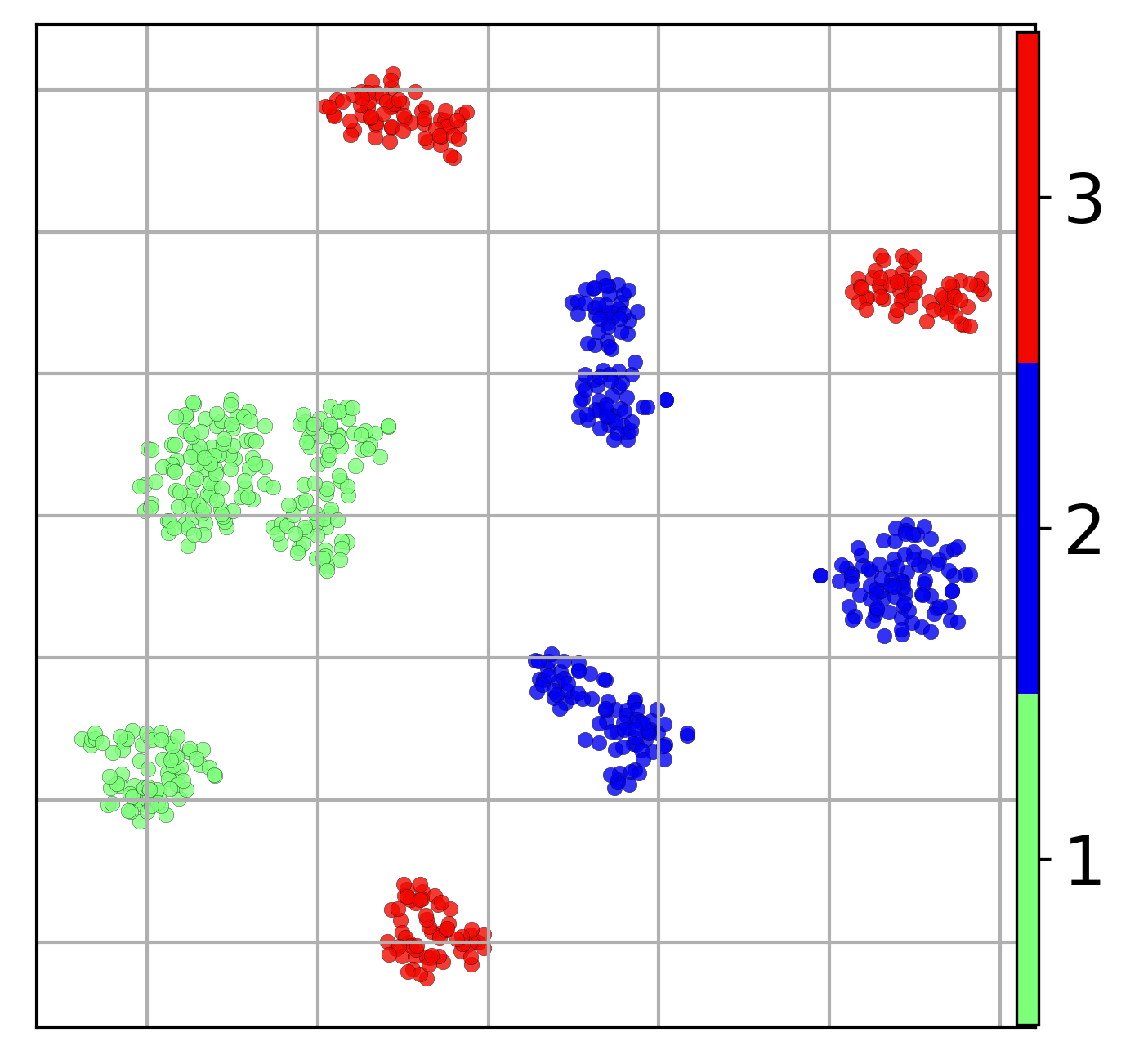}
  \vspace{-5pt}\phantomsection
\end{subfigure}

\begin{subfigure}[b]{0.2\textwidth}
  \centering
  \includegraphics[width=\textwidth]{./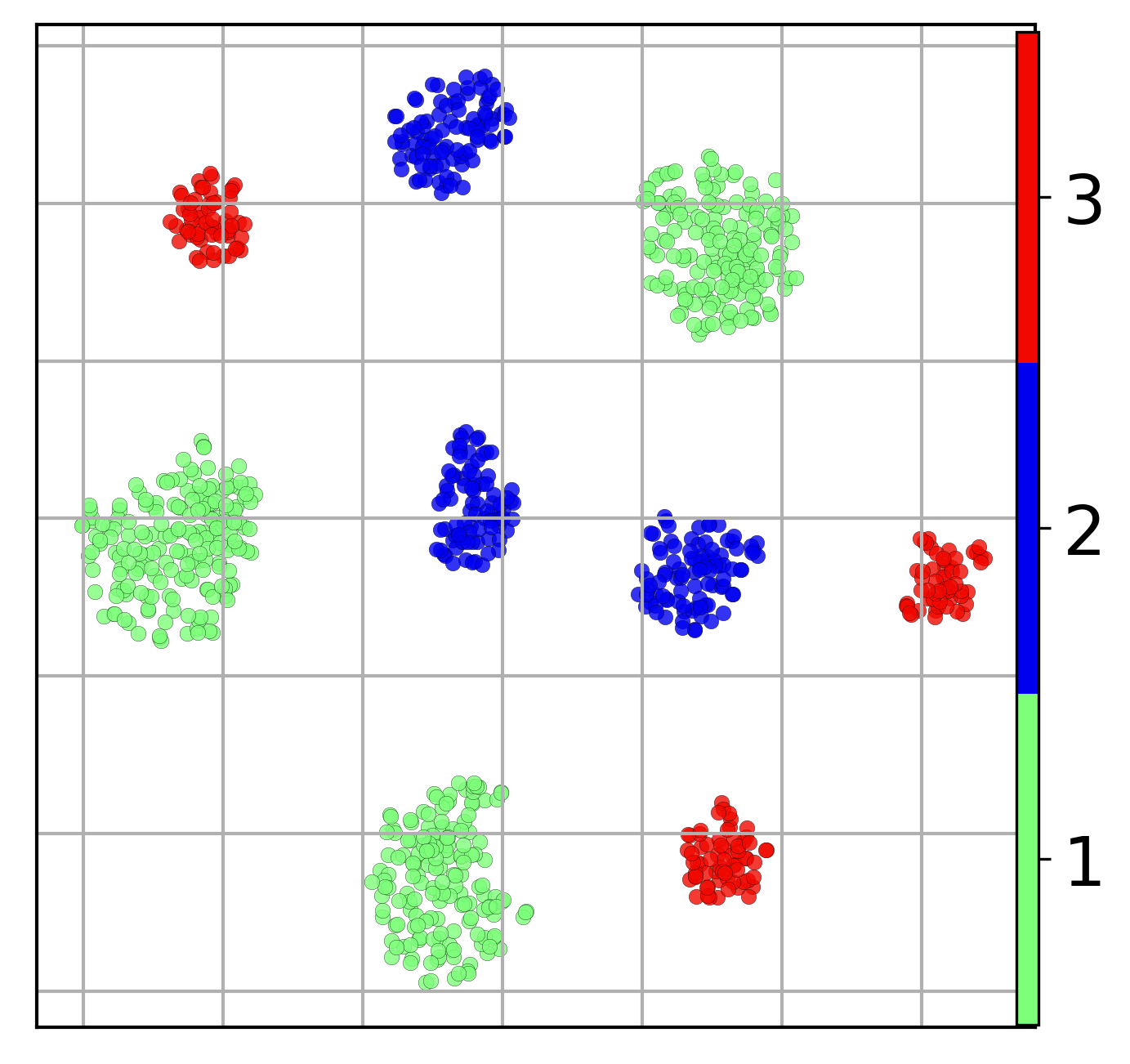}
  \vspace{-5pt}\phantomsection
\end{subfigure}%
\begin{subfigure}[b]{0.2\textwidth}
  \centering
  \includegraphics[width=\textwidth]{./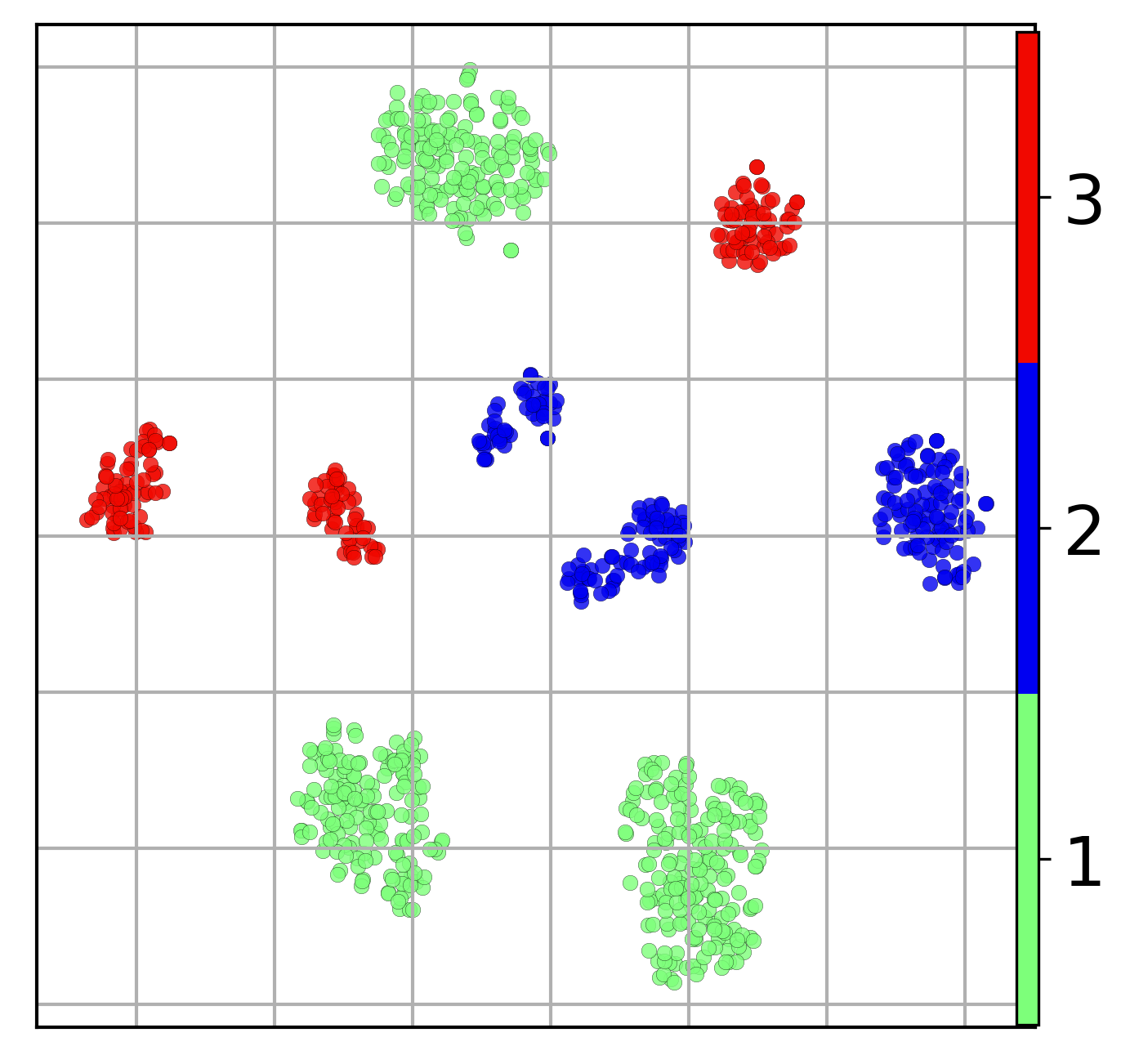}
  \vspace{-5pt}\phantomsection
\end{subfigure}%
\begin{subfigure}[b]{0.2\textwidth}
  \centering
  \includegraphics[width=\textwidth]{./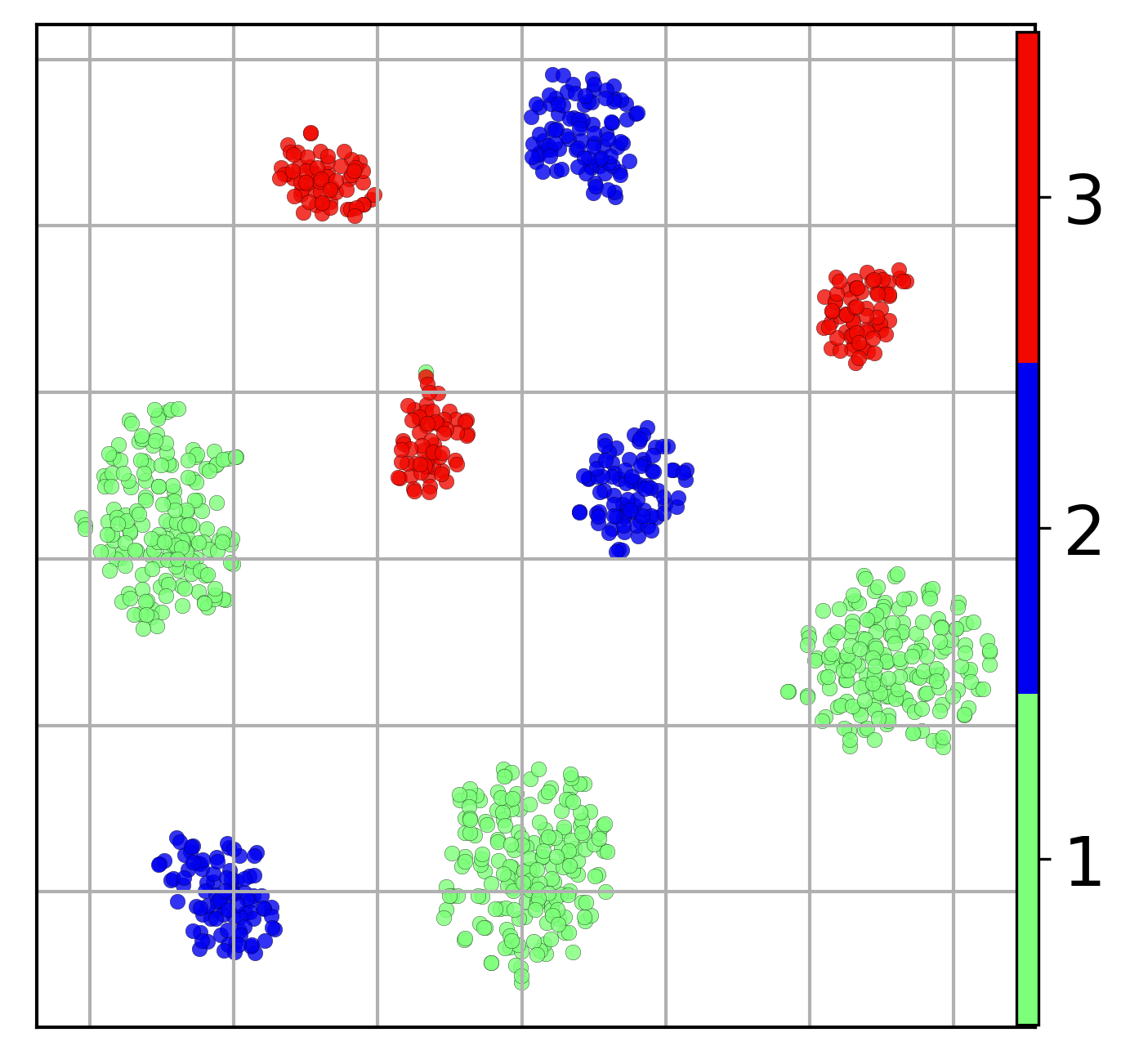}
  \vspace{-5pt}\phantomsection
\end{subfigure}%
\begin{subfigure}[b]{0.2\textwidth}
  \centering
  \includegraphics[width=\textwidth]{./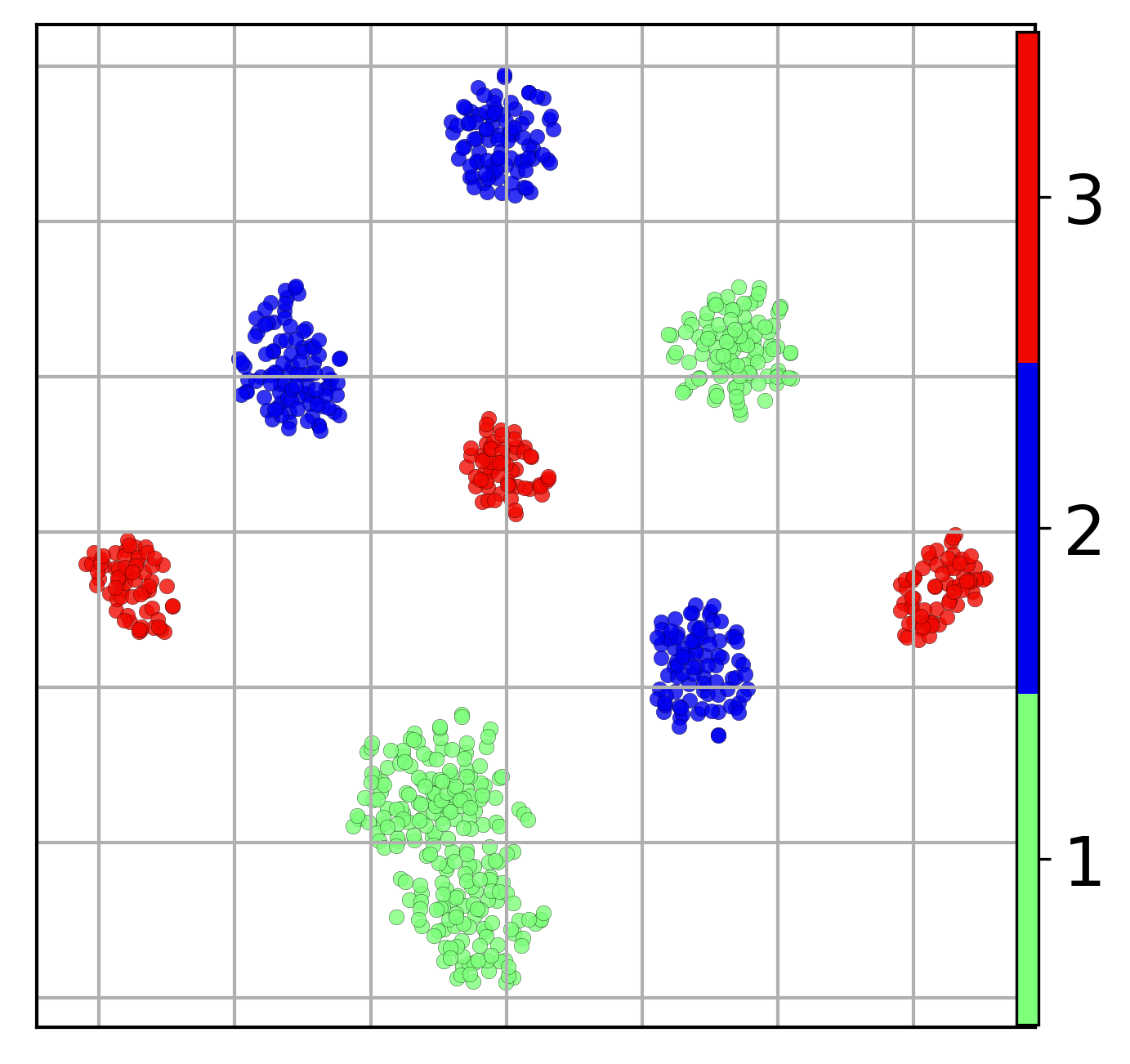}
  \vspace{-5pt}\phantomsection
\end{subfigure}%
\begin{subfigure}[b]{0.2\textwidth}
  \centering
  \includegraphics[width=\textwidth]{./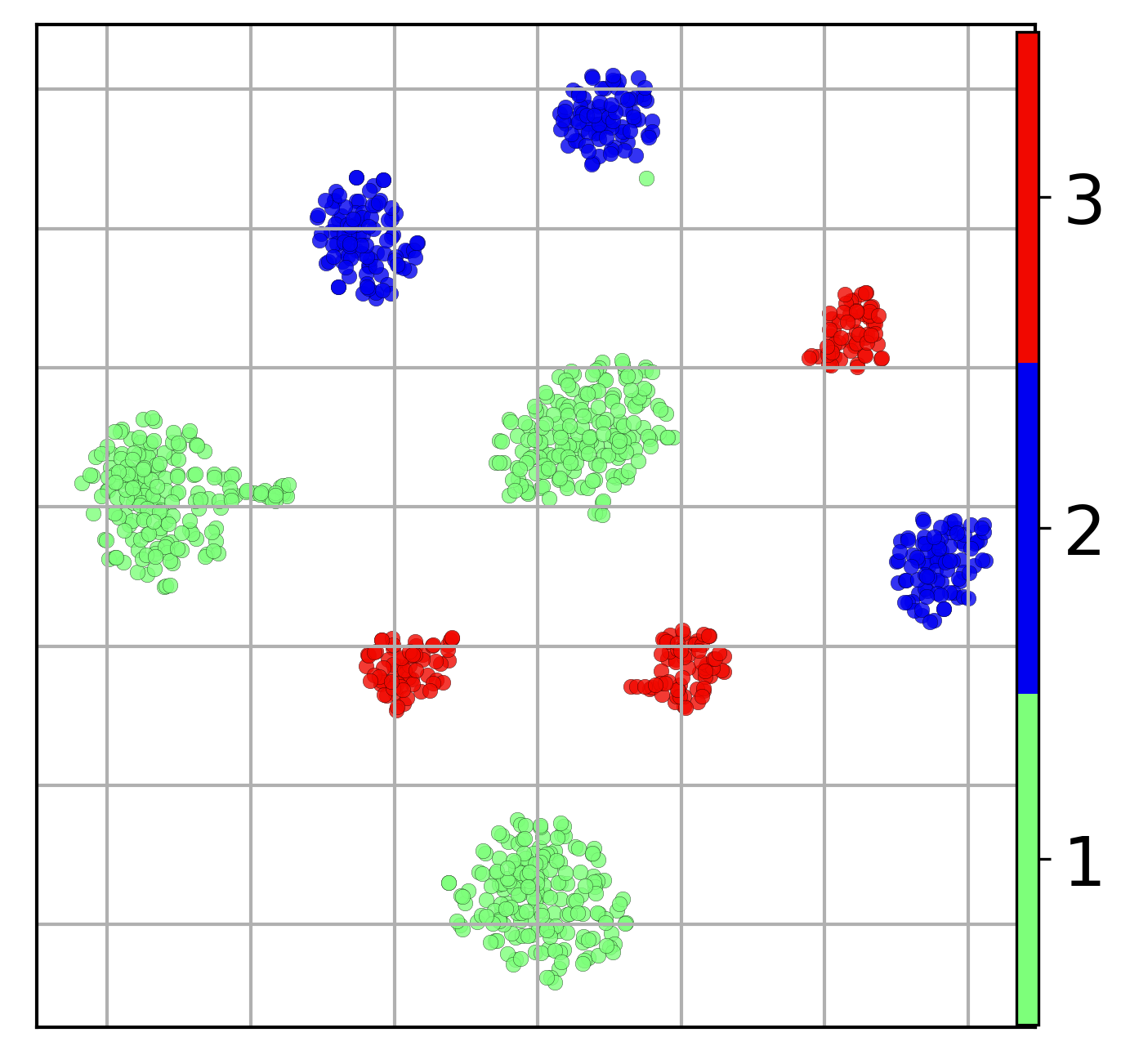}
  \vspace{-5pt}\phantomsection
\end{subfigure}
  
\caption{t-SNE visualization of all trials from 25 participants, color-coded by the type of movement: reaching (green), grasping (blue), and twisting (red). Each participant has distinct clusters, with each cluster corresponding to one of the three types of movements observed during the sessions.}\label{clustering_figure}
\end{figure} \vspace{-15pt}

From the figure, we observe that the reaching movement consistently has the lowest dependence, twisting the highest, and grasping falling in between. This pattern matches the differences and complexity of different movement tasks \cite{guerrero2022coherence,ye2022investigation}. This variability reflects the differences in connectivity between the brain and target muscles, which is the basis of movement recognition and measurement by using CMC \cite{sun2023enhancement}. \vspace{5pt}


\textbf{Mean and variance of density ratios in clusters.} Same as in Fig.~\ref{CLUSTERFIGURE} in the main paper, we extract the nine clusters shown in the t-SNE projections ({Fig.~\ref{clustering_figure}}) and compute the mean and standard deviation of the density ratios for each cluster (\texttt{C1}$\sim$\texttt{C9}). This process is repeated for all 25 participants (\texttt{SUB1}$\sim$\texttt{SUB25}), with results shown in {Fig.~\ref{ratio_figure_variation}}. The results further demonstrate the effectiveness of using density ratio as a dependence measure, as the density ratios are similar within each cluster but vary across different clusters (movements and sessions). \vspace{5pt}

\textbf{Participant identification.} Similar to Fig.~\ref{CLUSTERFIGURE} in the main paper, where we visualize the projected EEG eigenfunctions from 10 subjects during the reaching movement, we extend this analysis to include all three movements: reaching, grasping, and twisting. The projections shown in Fig.~\ref{ratio_figure_variation} illustrate that participant information is consistently contained in the projection space of the eigenfunctions across all movements, not just limited to reaching.

\begin{figure}[H]
\centering
\begin{minipage}{\textwidth}
\centering
{\large\textbf{\texttt{FMCA Density Ratio Variability, SUB1$\sim$SUB25}}}
\end{minipage}
\begin{subfigure}[b]{0.2\textwidth}
  \centering
  \includegraphics[width=\textwidth]{./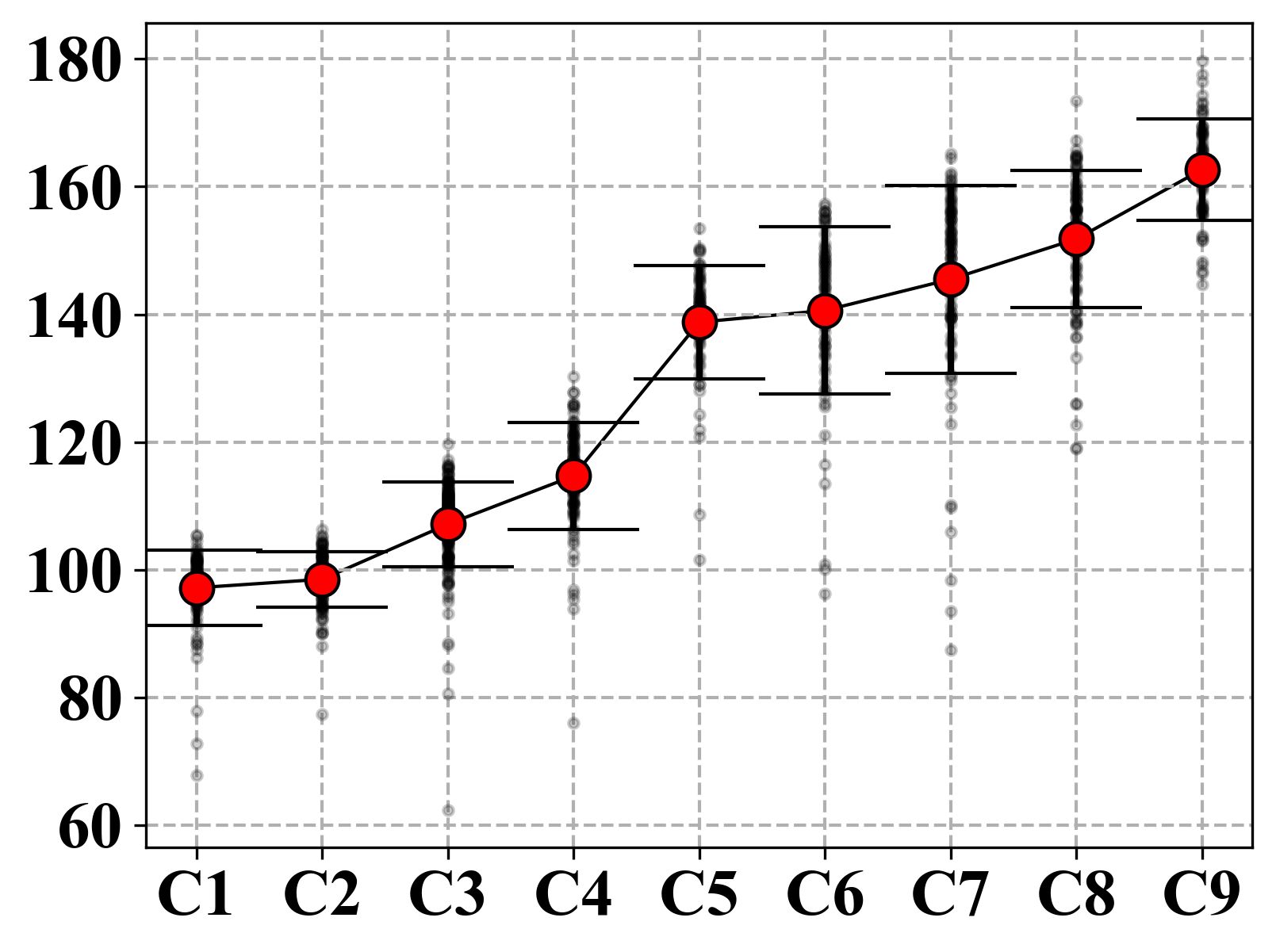}
  \vspace{-5pt}\phantomsection
\end{subfigure}%
\begin{subfigure}[b]{0.2\textwidth}
  \centering
  \includegraphics[width=\textwidth]{./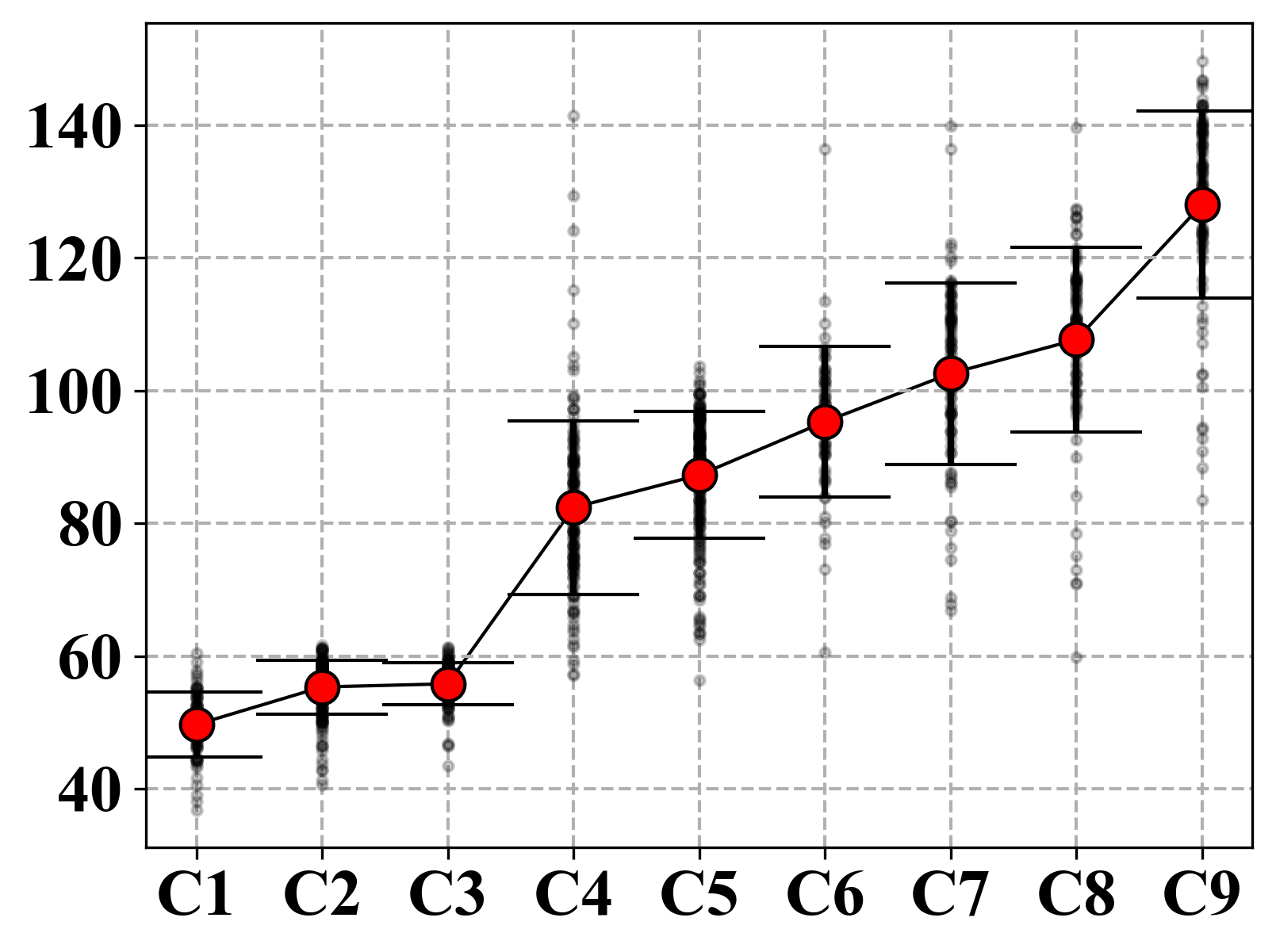}
  \vspace{-5pt}\phantomsection
\end{subfigure}%
\begin{subfigure}[b]{0.2\textwidth}
  \centering
  \includegraphics[width=\textwidth]{./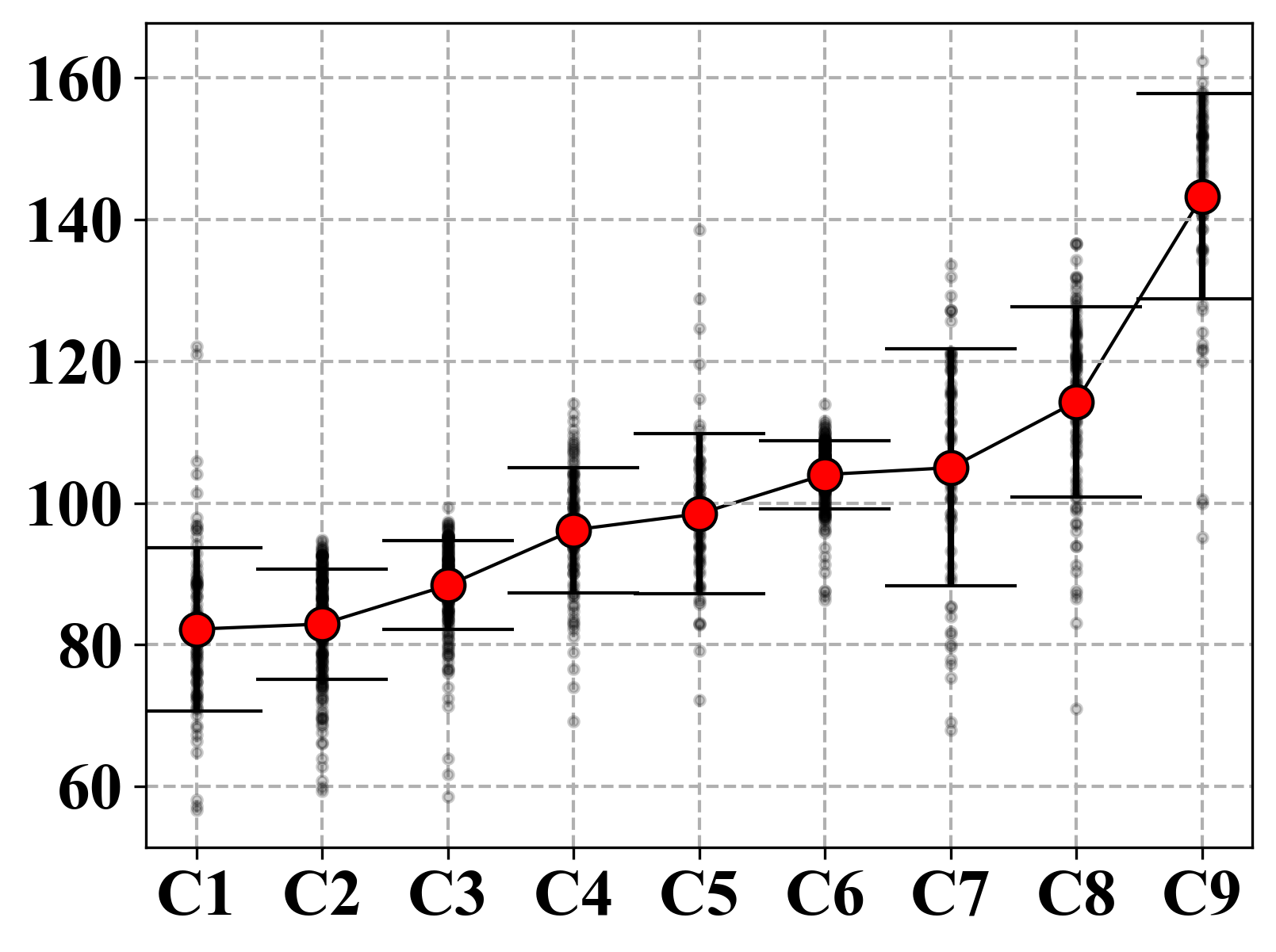}
  \vspace{-5pt}\phantomsection
\end{subfigure}%
\begin{subfigure}[b]{0.2\textwidth}
  \centering
  \includegraphics[width=\textwidth]{./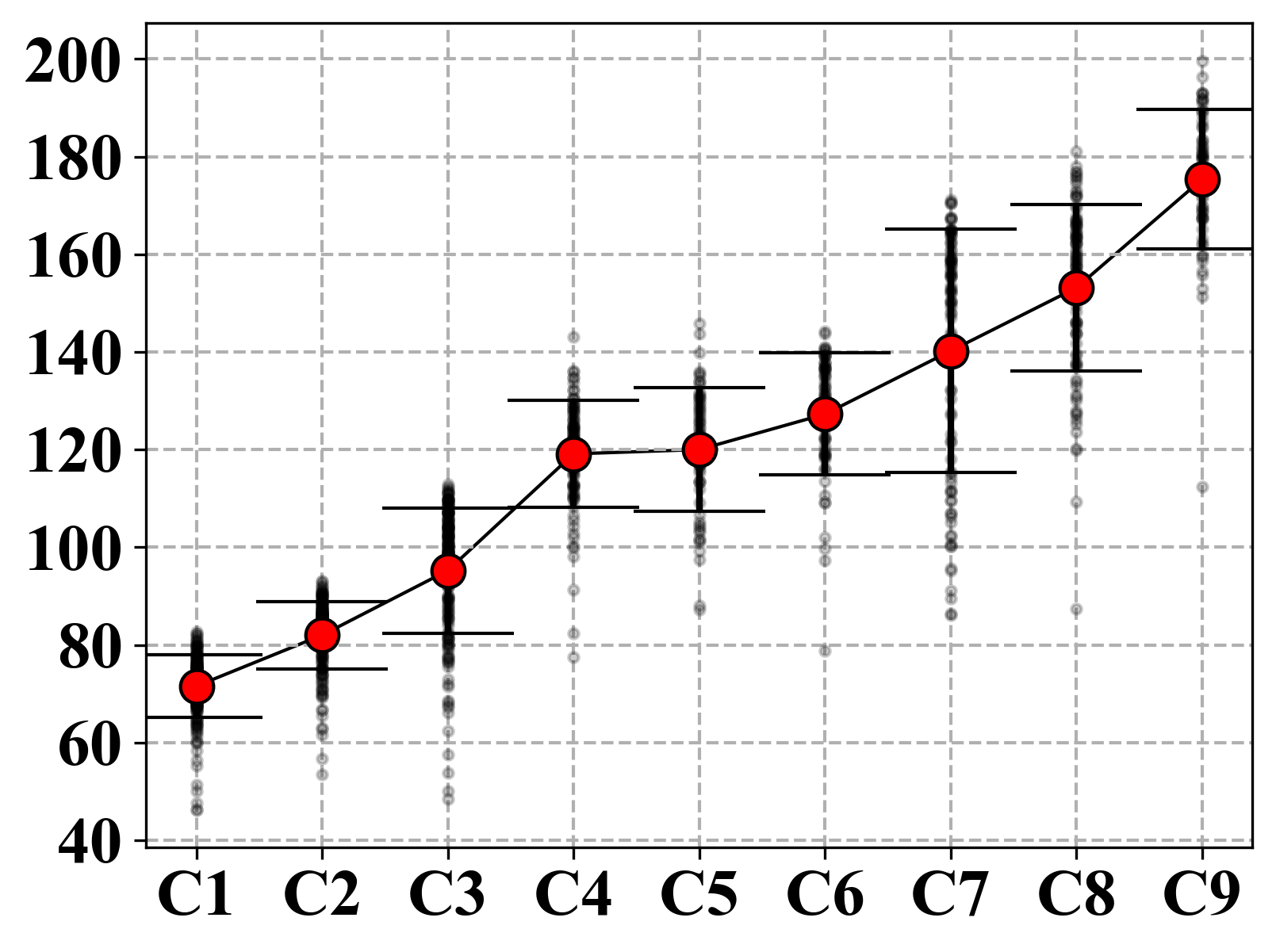}
  \vspace{-5pt}\phantomsection
\end{subfigure}%
\begin{subfigure}[b]{0.2\textwidth}
  \centering
  \includegraphics[width=\textwidth]{./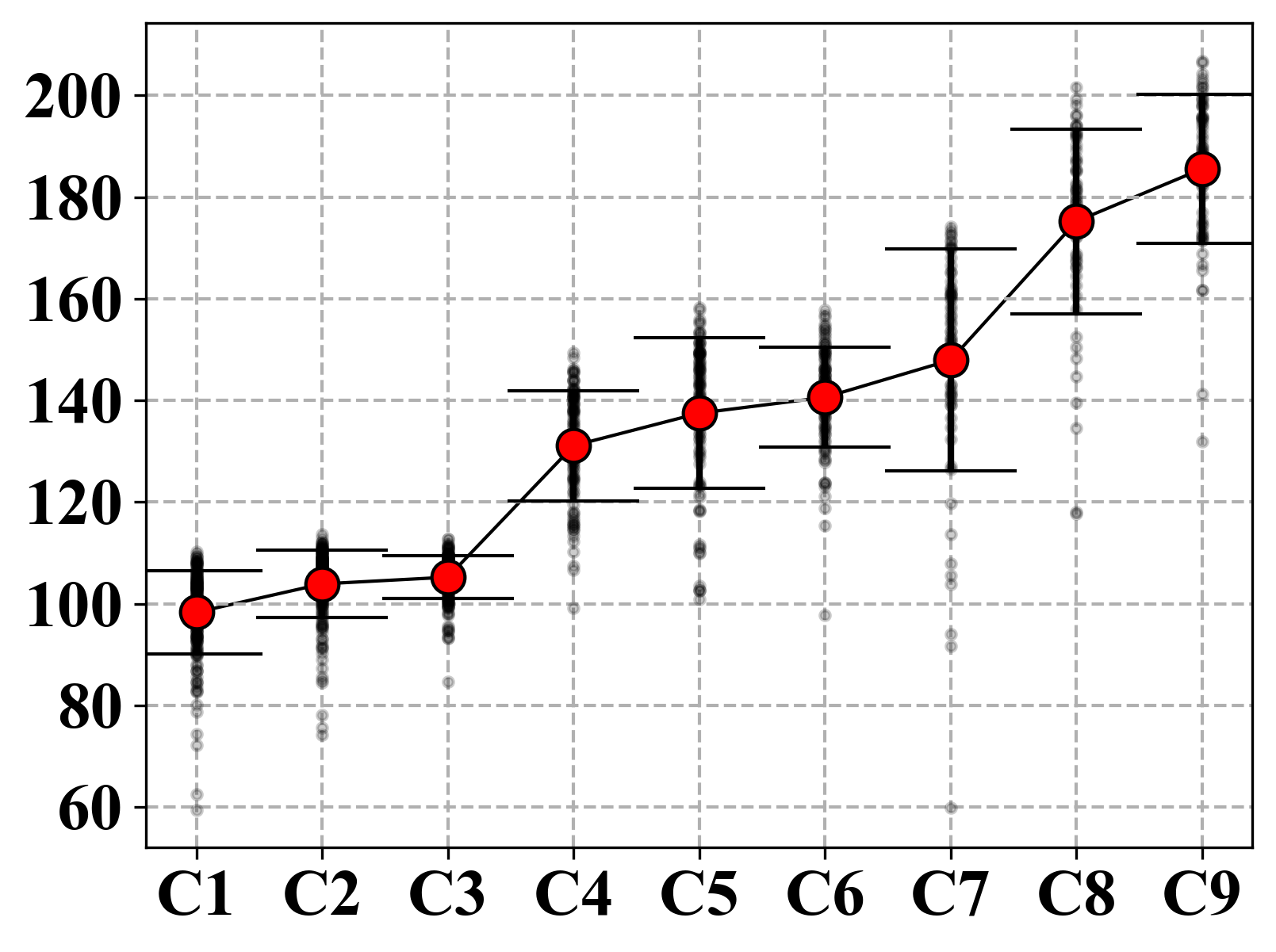}
  \vspace{-5pt}\phantomsection
\end{subfigure}

\begin{subfigure}[b]{0.2\textwidth}
  \centering
  \includegraphics[width=\textwidth]{./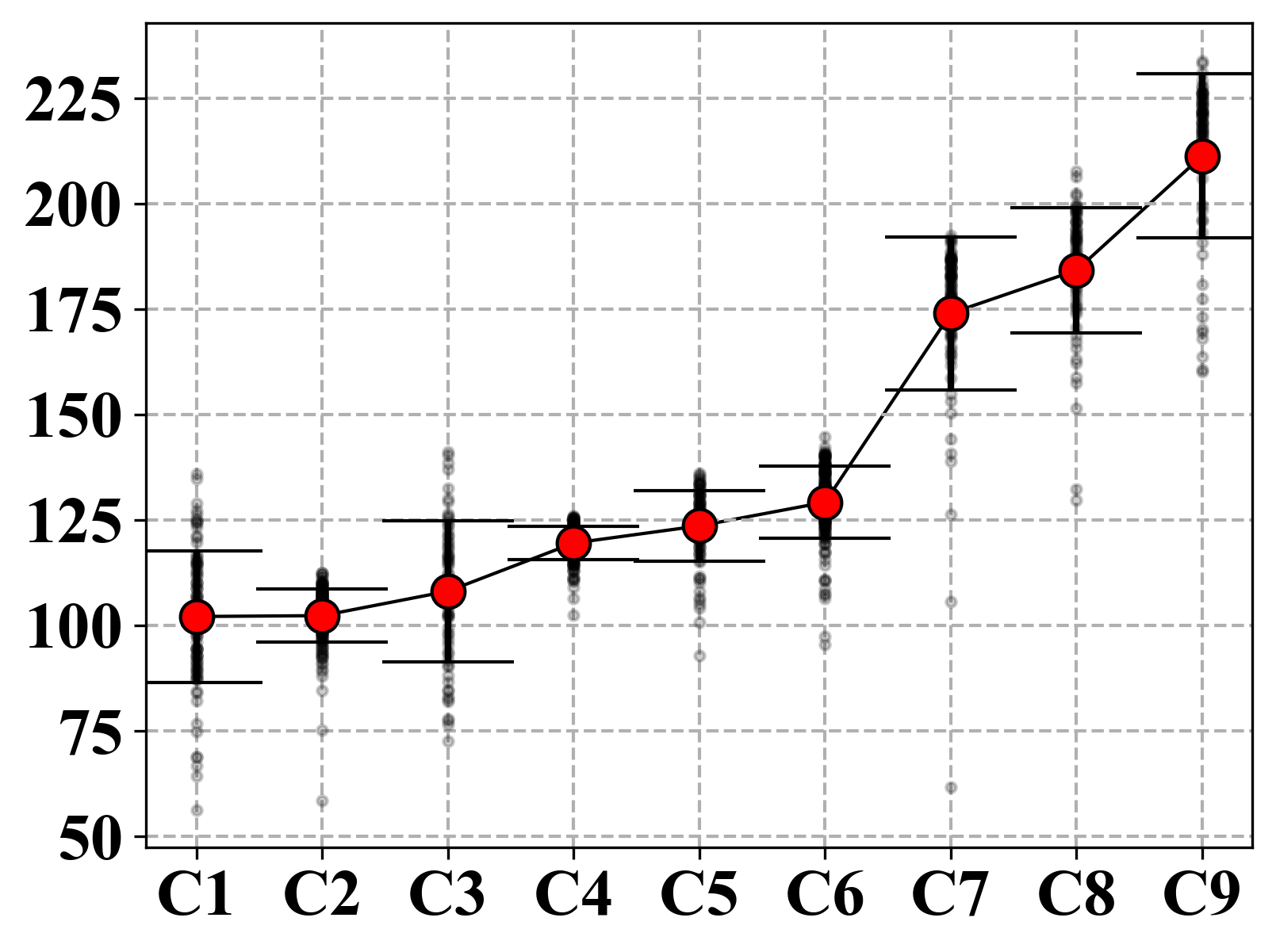}
  \vspace{-5pt}\phantomsection
\end{subfigure}%
\begin{subfigure}[b]{0.2\textwidth}
  \centering
  \includegraphics[width=\textwidth]{./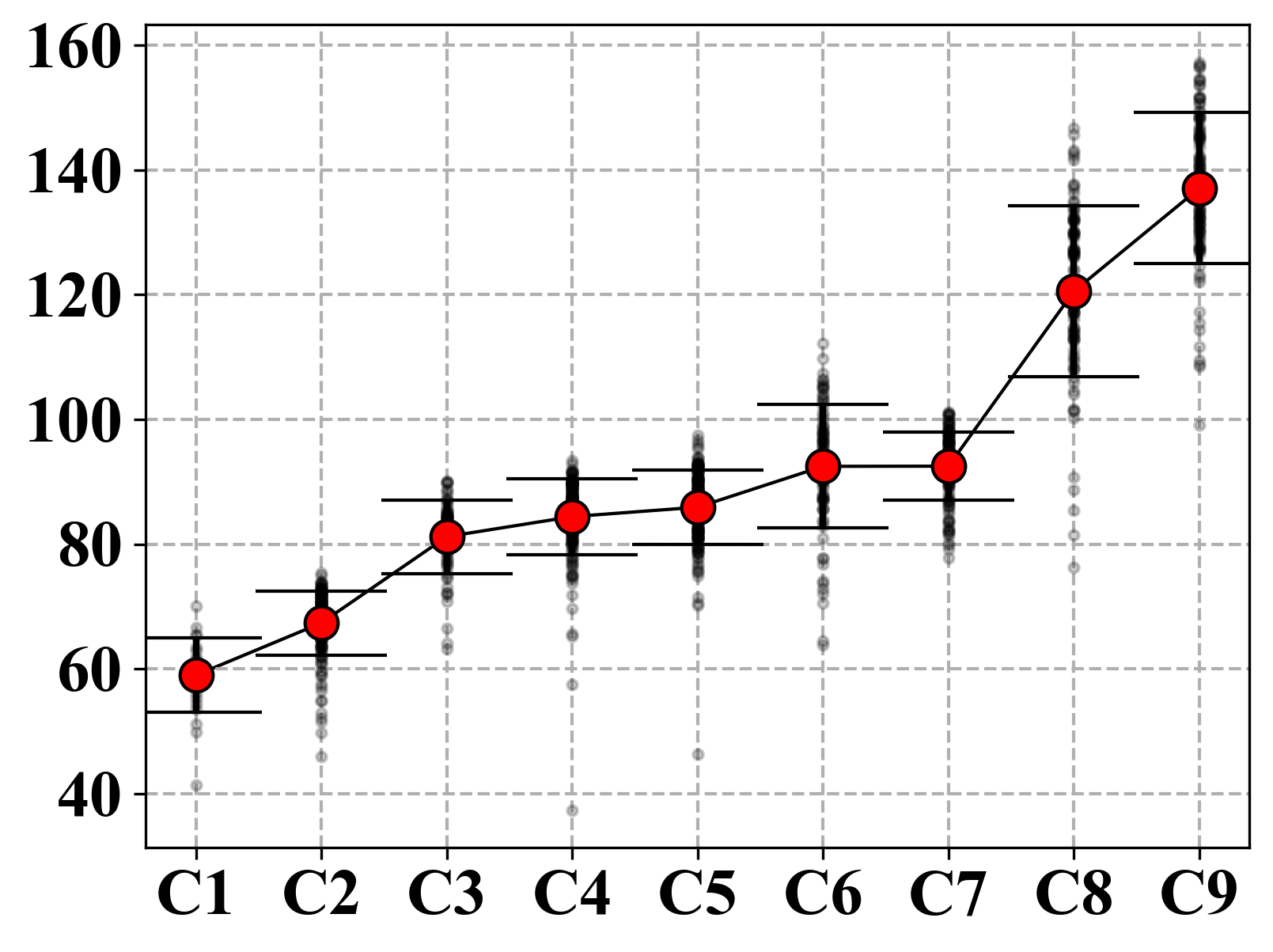}
  \vspace{-5pt}\phantomsection
\end{subfigure}%
\begin{subfigure}[b]{0.2\textwidth}
  \centering
  \includegraphics[width=\textwidth]{./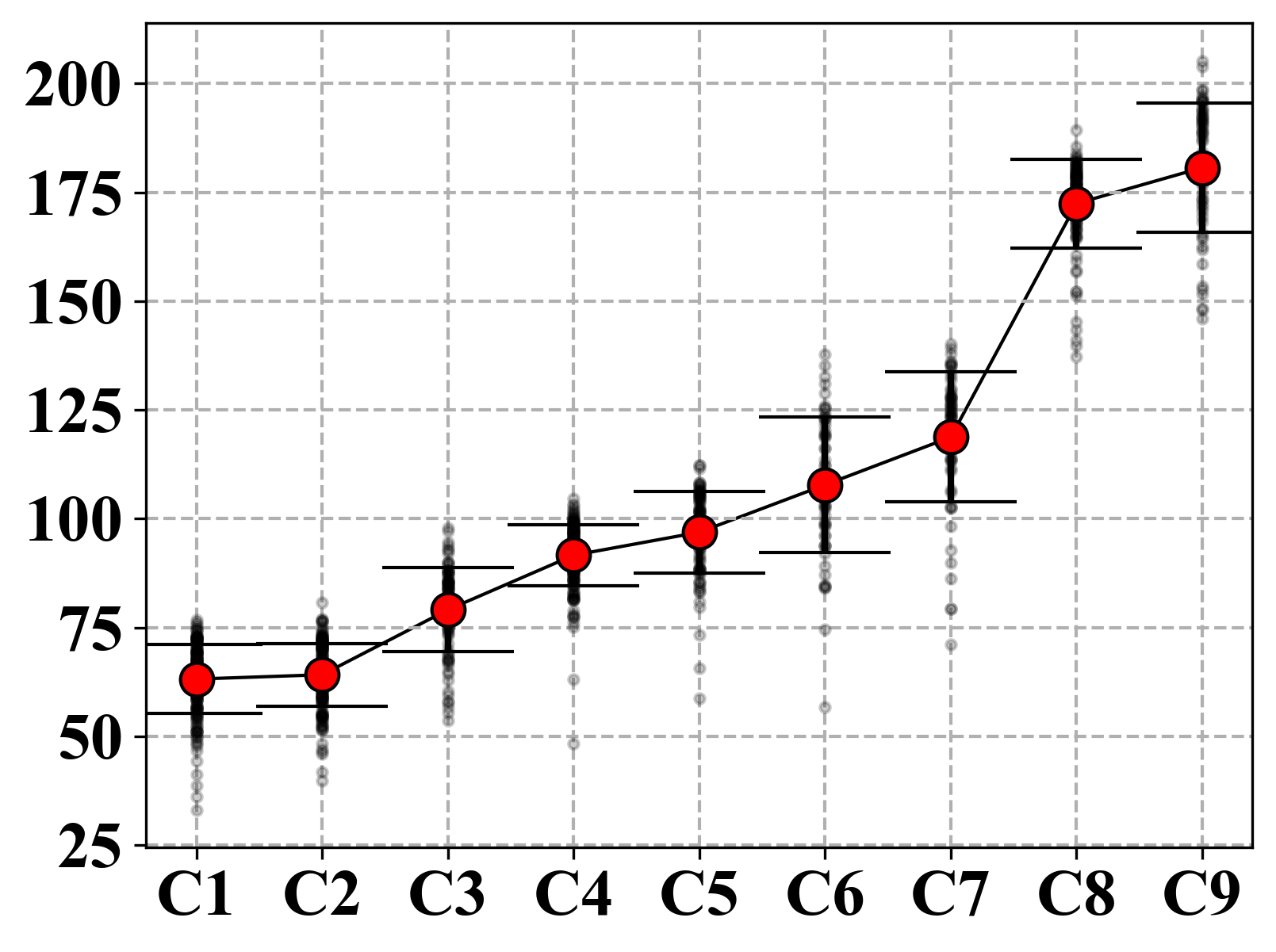}
  \vspace{-5pt}\phantomsection
\end{subfigure}%
\begin{subfigure}[b]{0.2\textwidth}
  \centering
  \includegraphics[width=\textwidth]{./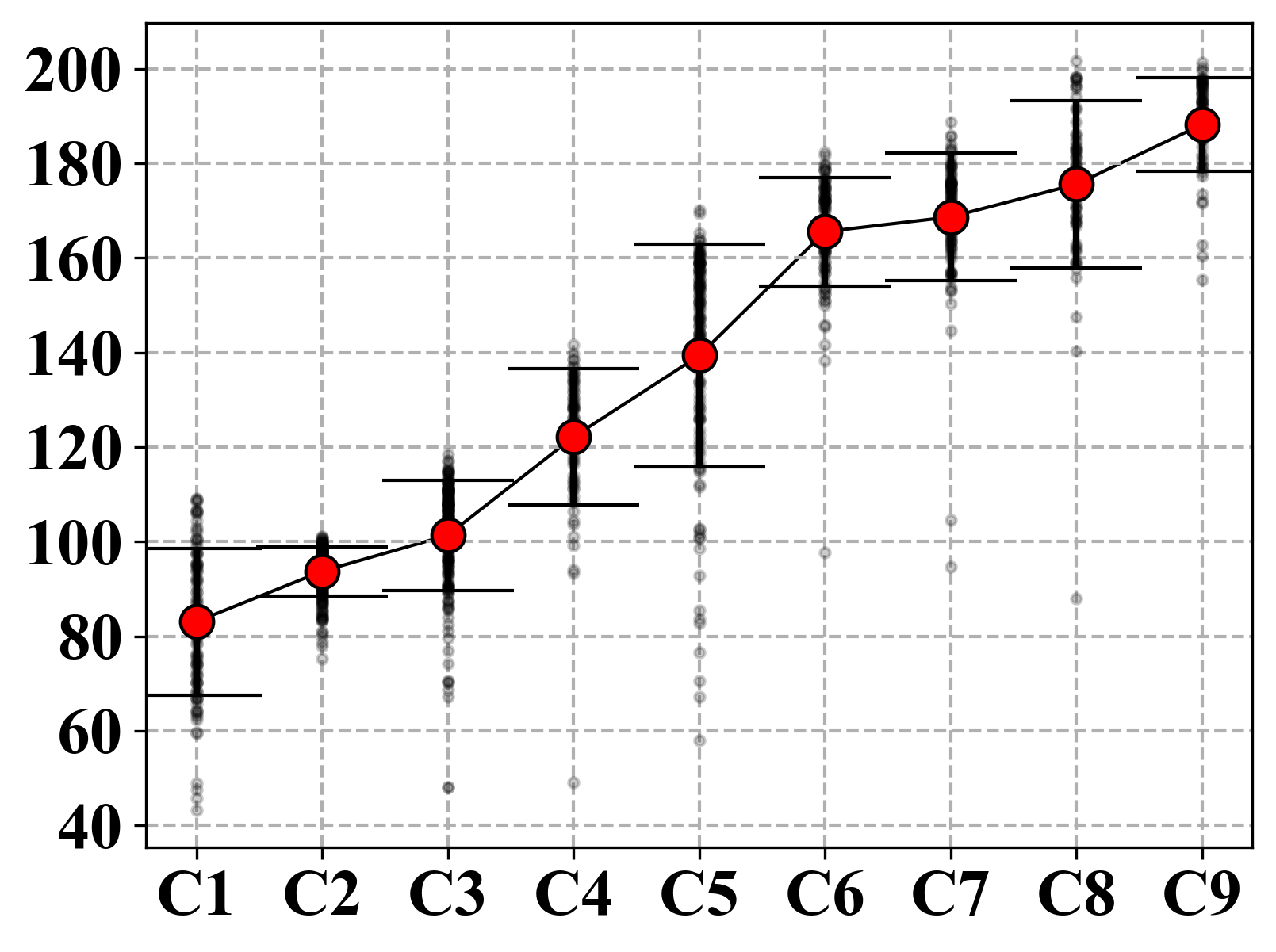}
  \vspace{-5pt}\phantomsection
\end{subfigure}%
\begin{subfigure}[b]{0.2\textwidth}
  \centering
  \includegraphics[width=\textwidth]{./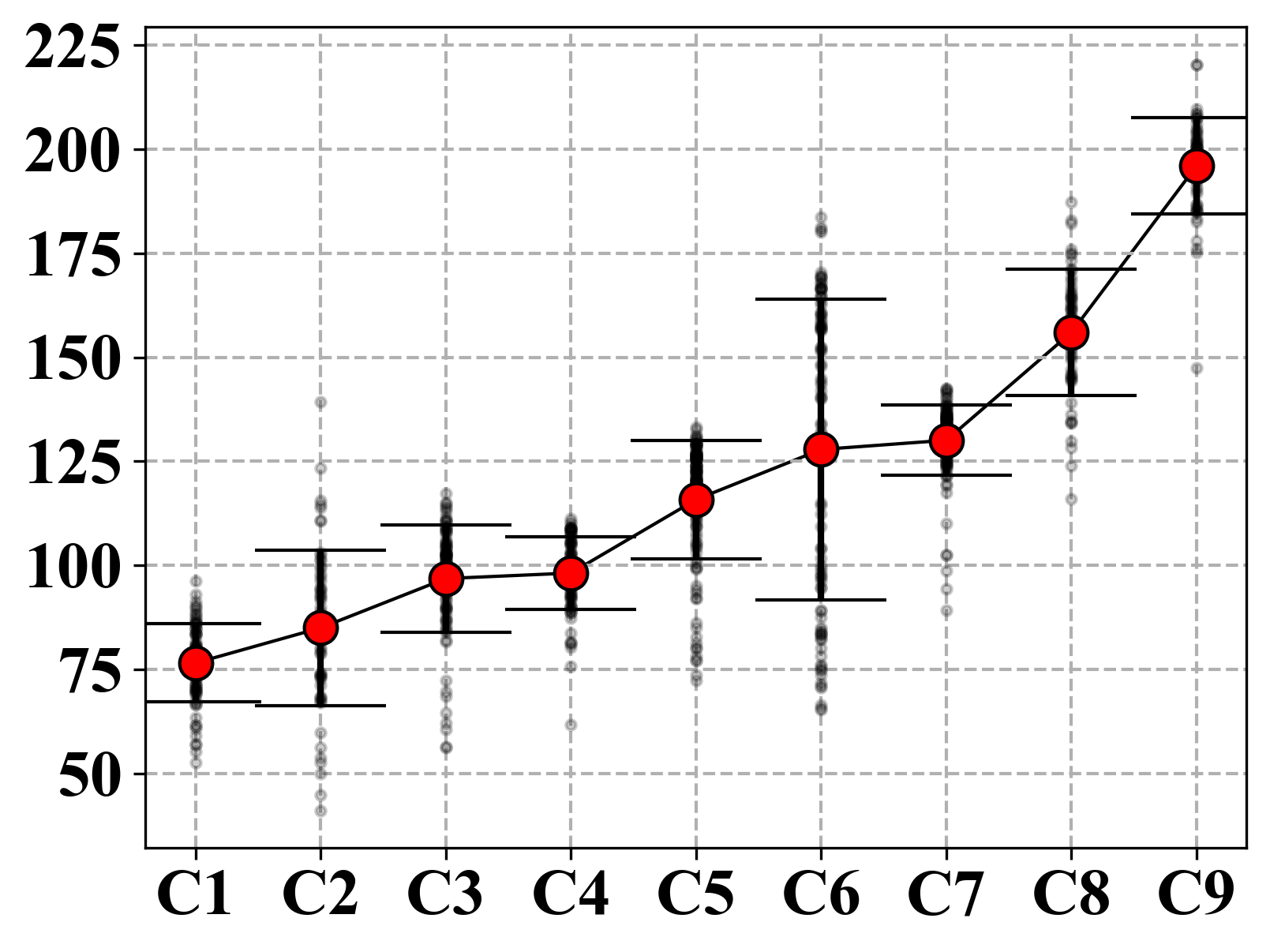}
  \vspace{-5pt}\phantomsection
\end{subfigure}

\begin{subfigure}[b]{0.2\textwidth}
  \centering
  \includegraphics[width=\textwidth]{./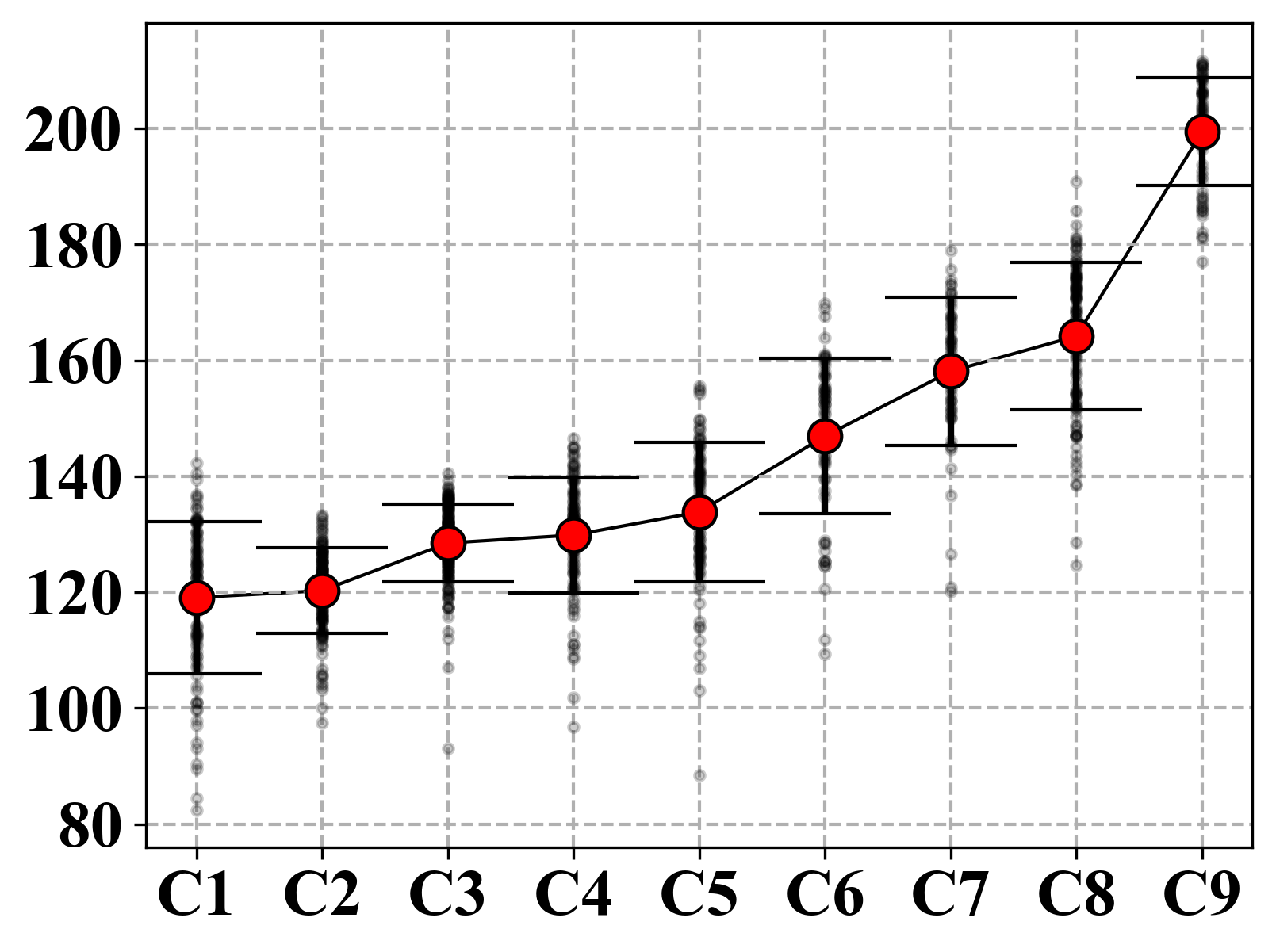}
  \vspace{-5pt}\phantomsection
\end{subfigure}%
\begin{subfigure}[b]{0.2\textwidth}
  \centering
  \includegraphics[width=\textwidth]{./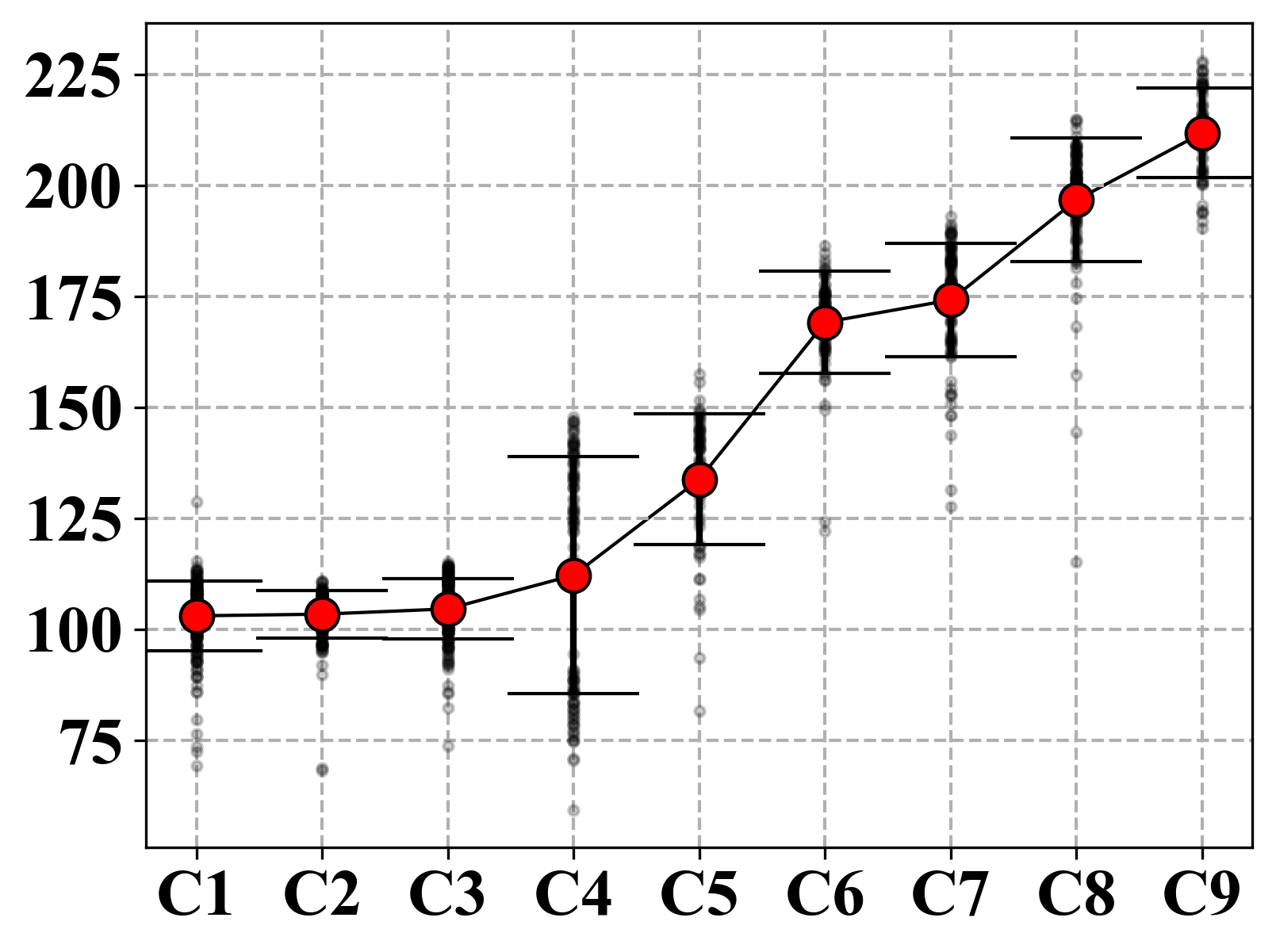}
  \vspace{-5pt}\phantomsection
\end{subfigure}%
\begin{subfigure}[b]{0.2\textwidth}
  \centering
  \includegraphics[width=\textwidth]{./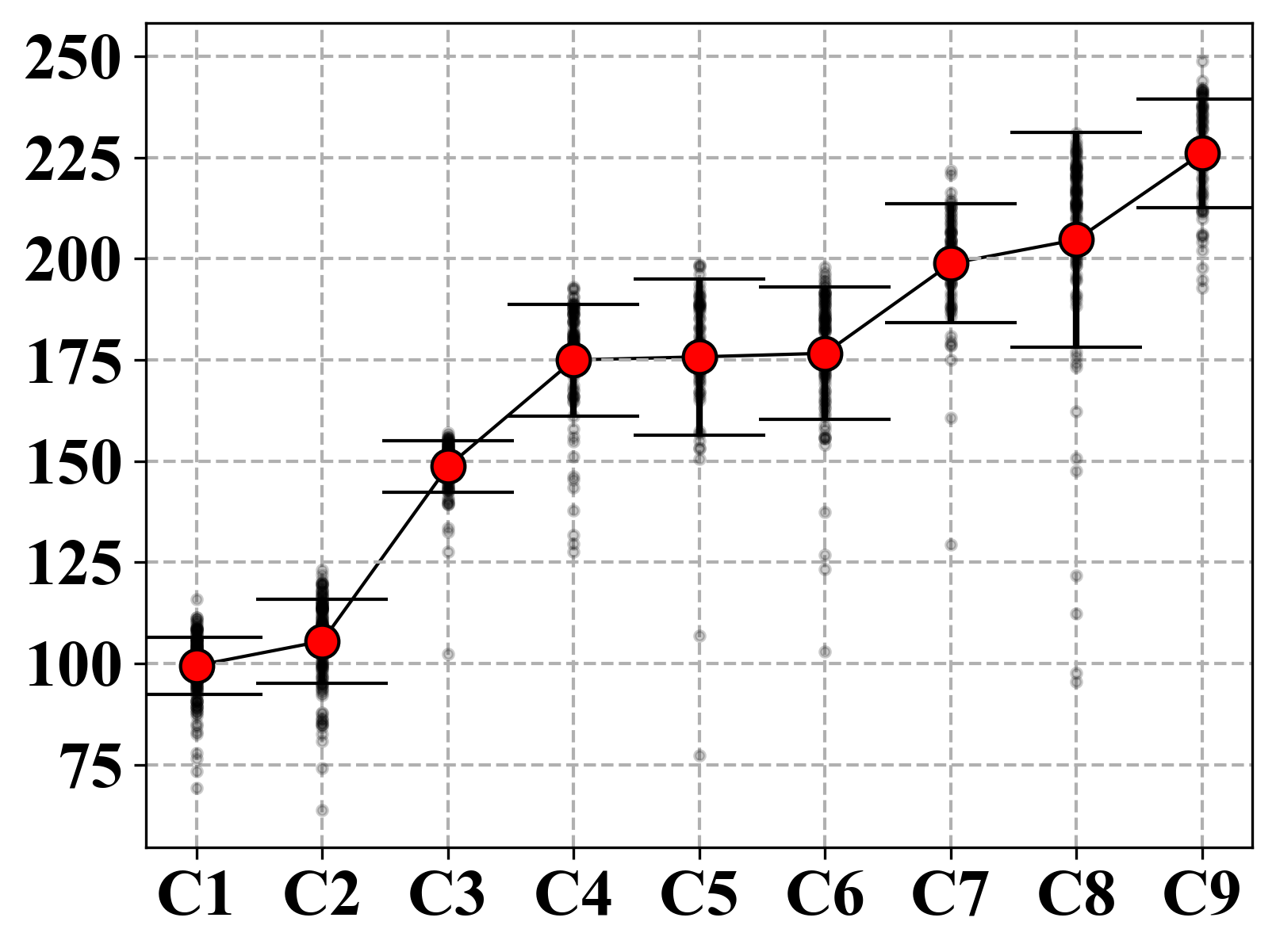}
  \vspace{-5pt}\phantomsection
\end{subfigure}%
\begin{subfigure}[b]{0.2\textwidth}
  \centering
  \includegraphics[width=\textwidth]{./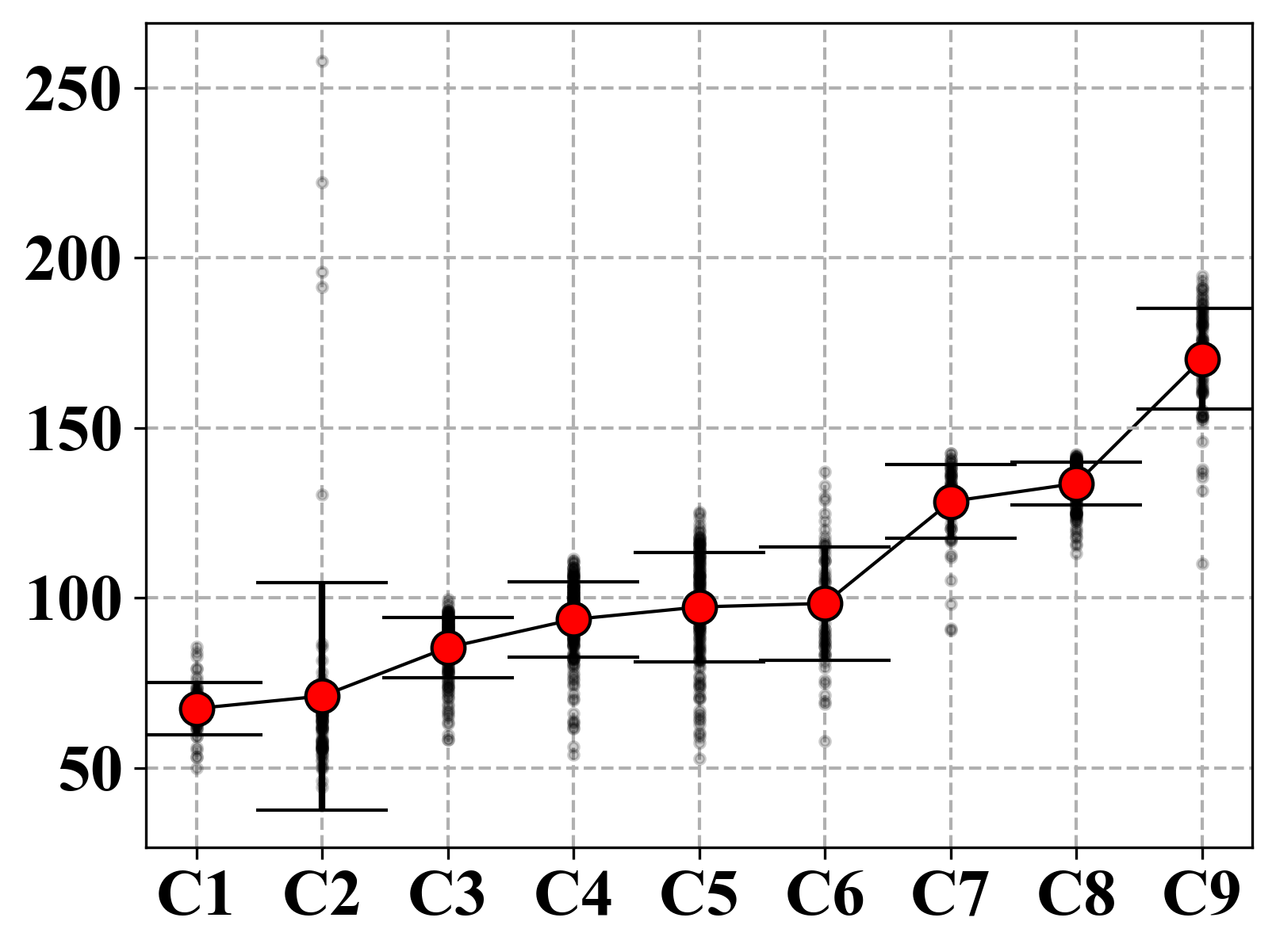}
  \vspace{-5pt}\phantomsection
\end{subfigure}%
\begin{subfigure}[b]{0.2\textwidth}
  \centering
  \includegraphics[width=\textwidth]{./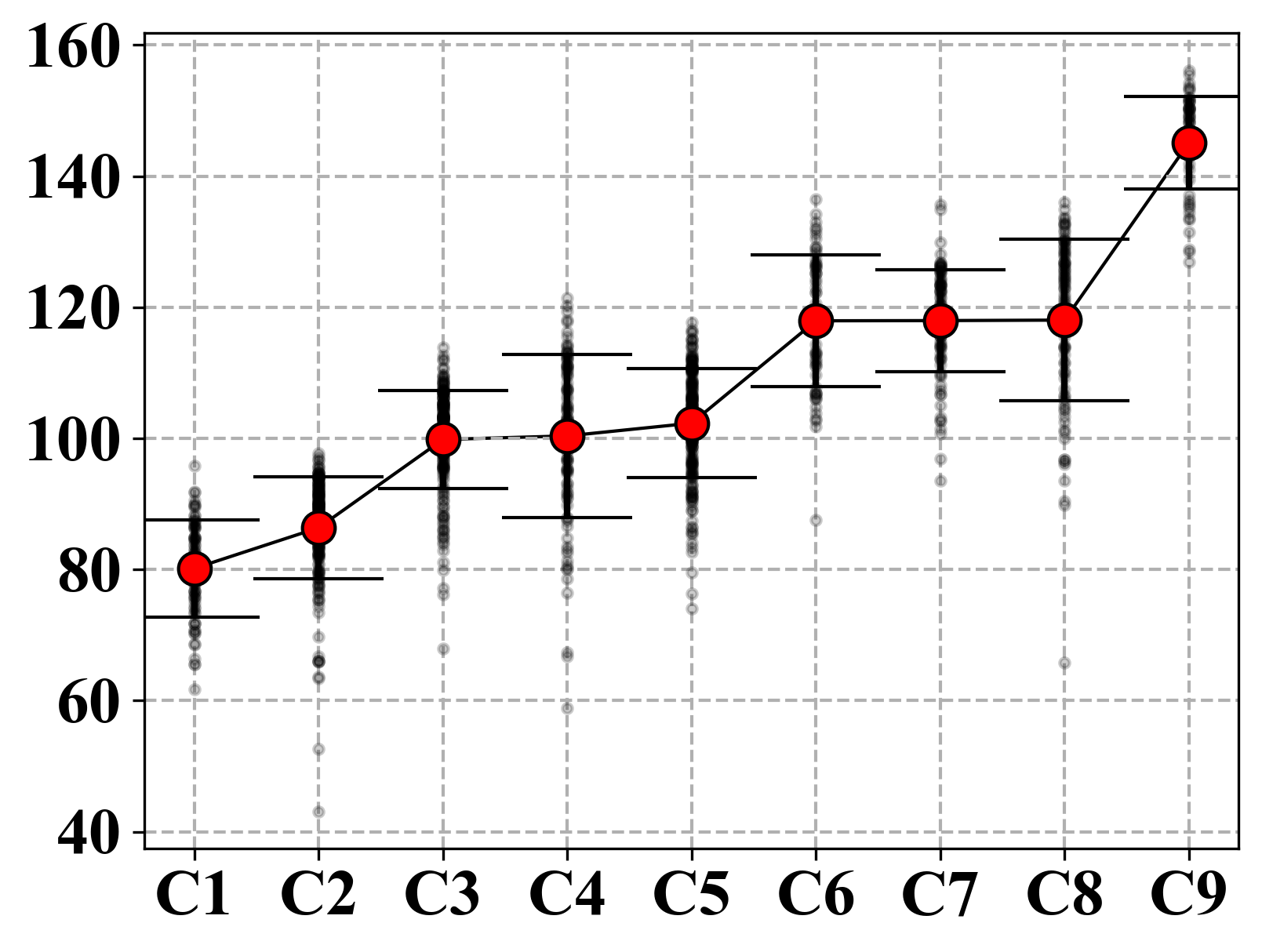}
  \vspace{-5pt}\phantomsection
\end{subfigure}

\begin{subfigure}[b]{0.2\textwidth}
  \centering
  \includegraphics[width=\textwidth]{./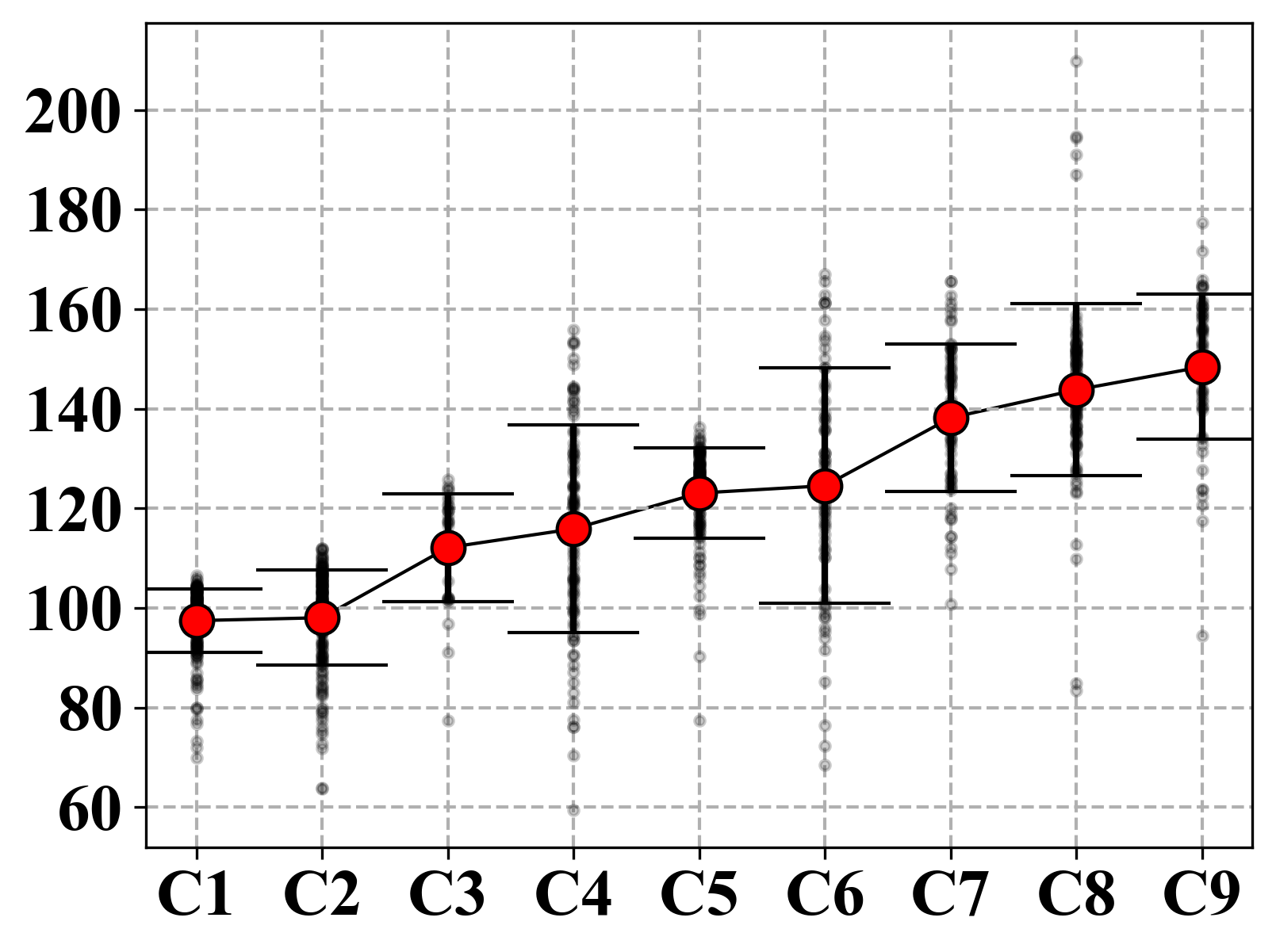}
  \vspace{-5pt}\phantomsection
\end{subfigure}%
\begin{subfigure}[b]{0.2\textwidth}
  \centering
  \includegraphics[width=\textwidth]{./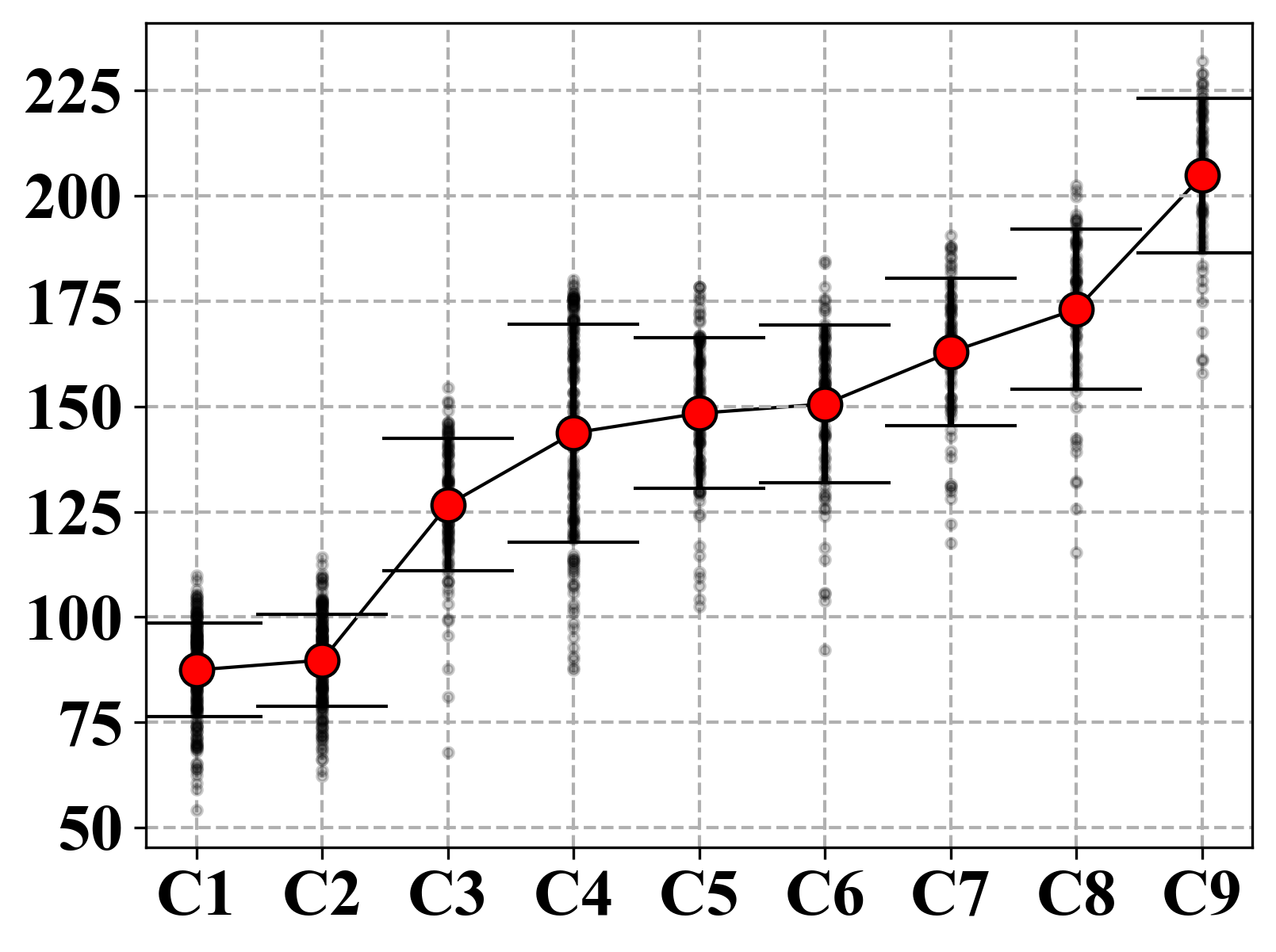}
  \vspace{-5pt}\phantomsection
\end{subfigure}%
\begin{subfigure}[b]{0.2\textwidth}
  \centering
  \includegraphics[width=\textwidth]{./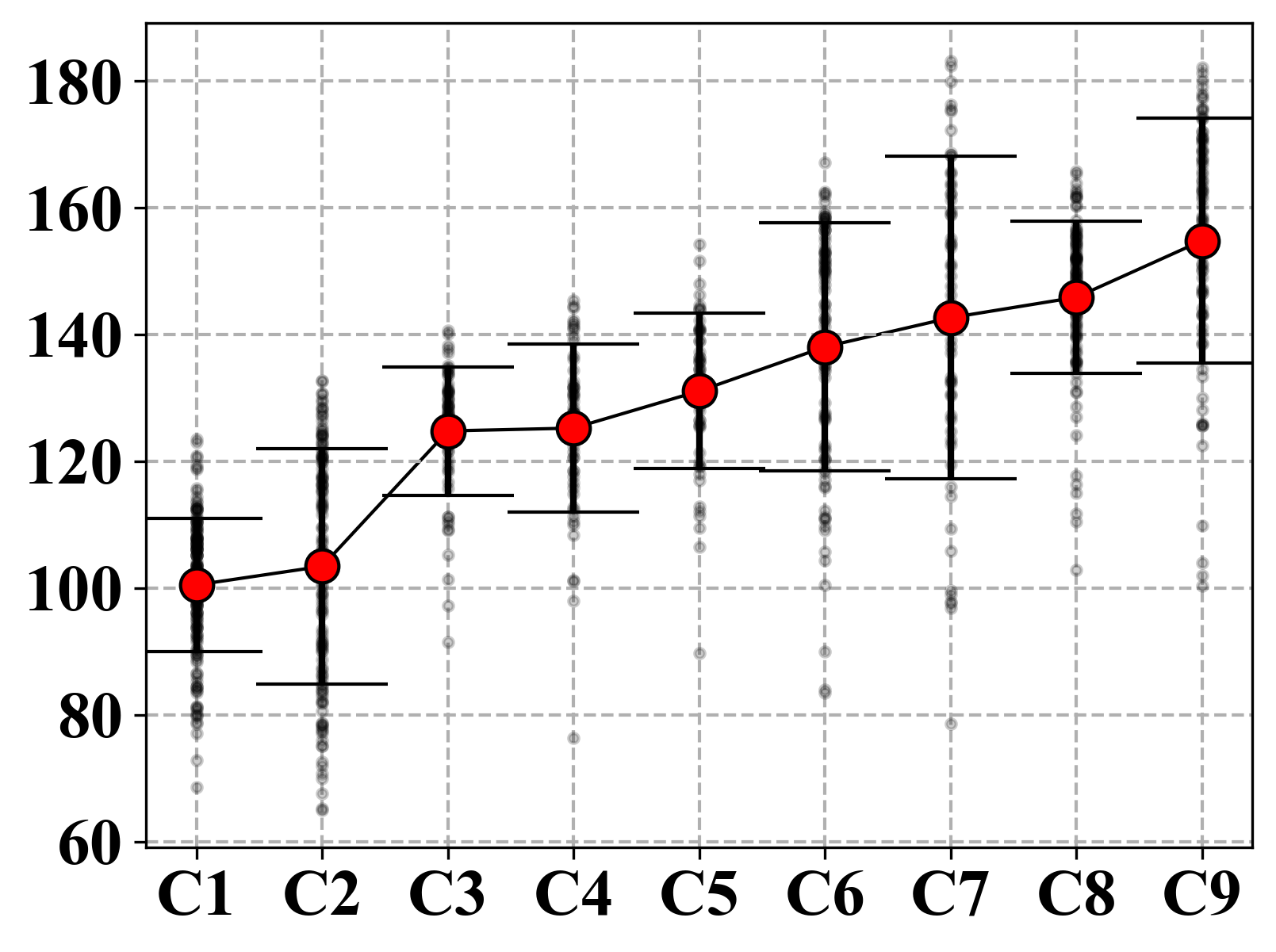}
  \vspace{-5pt}\phantomsection
\end{subfigure}%
\begin{subfigure}[b]{0.2\textwidth}
  \centering
  \includegraphics[width=\textwidth]{./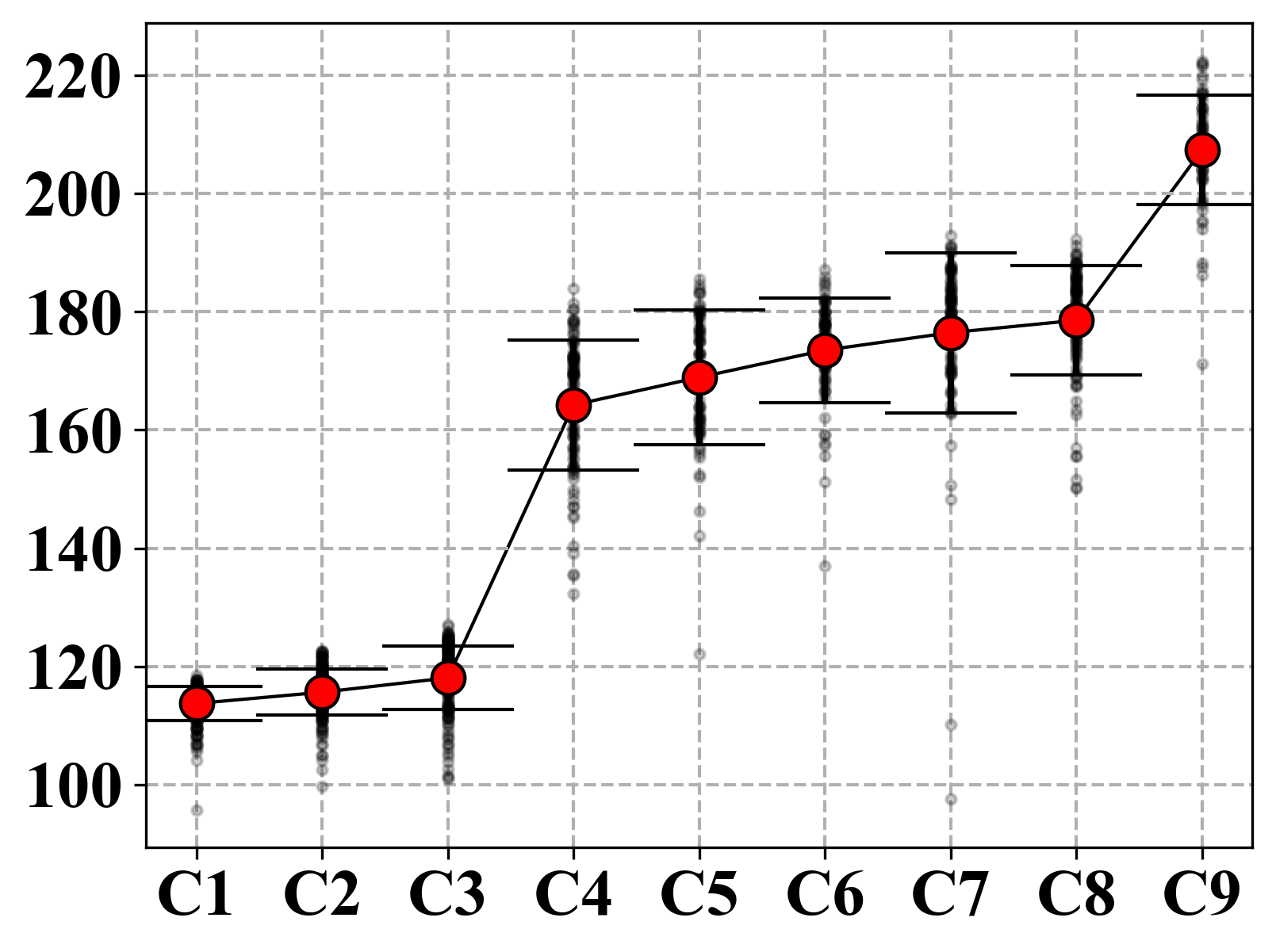}
  \vspace{-5pt}\phantomsection
\end{subfigure}%
\begin{subfigure}[b]{0.2\textwidth}
  \centering
  \includegraphics[width=\textwidth]{./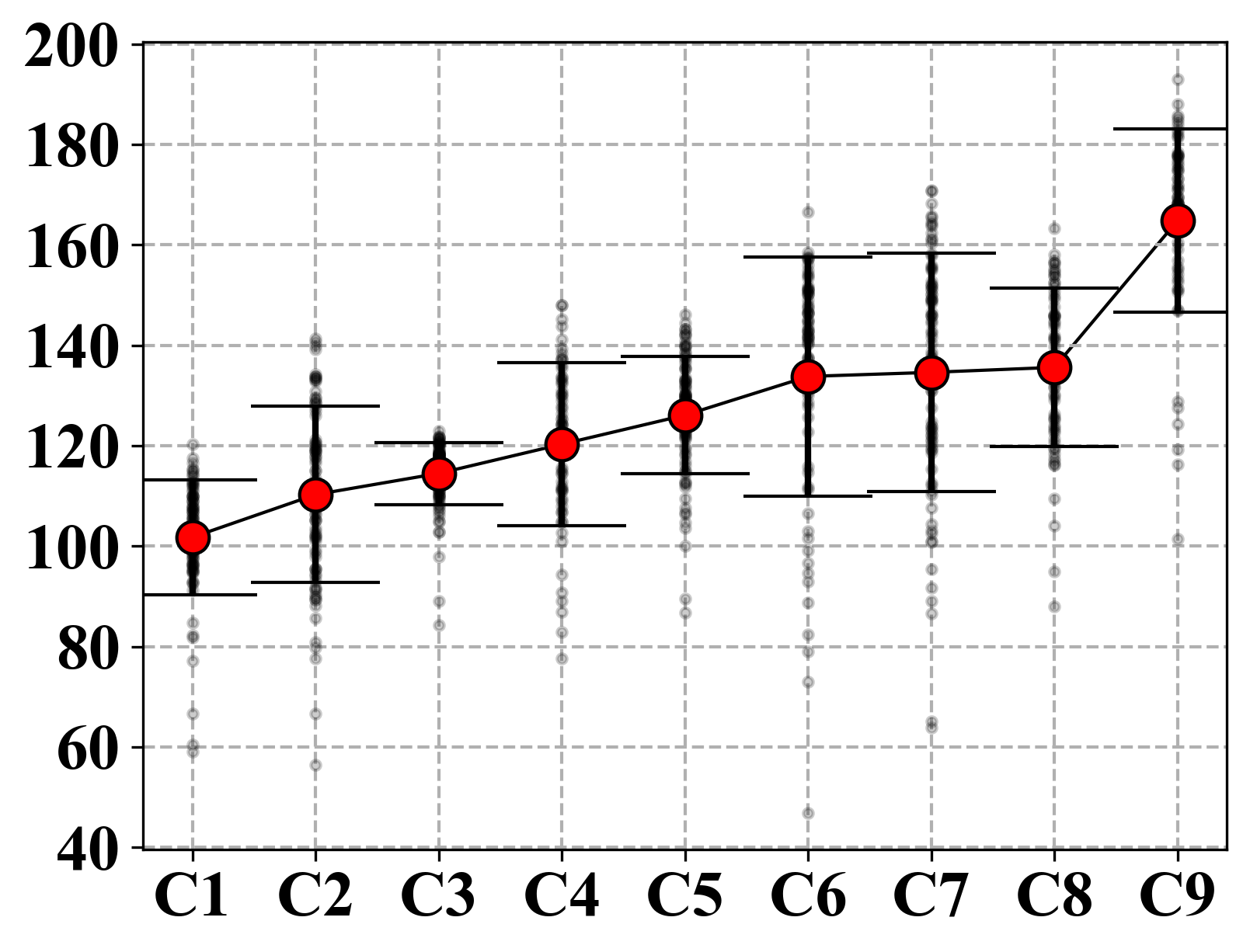}
  \vspace{-5pt}\phantomsection
\end{subfigure}

\begin{subfigure}[b]{0.2\textwidth}
  \centering
  \includegraphics[width=\textwidth]{./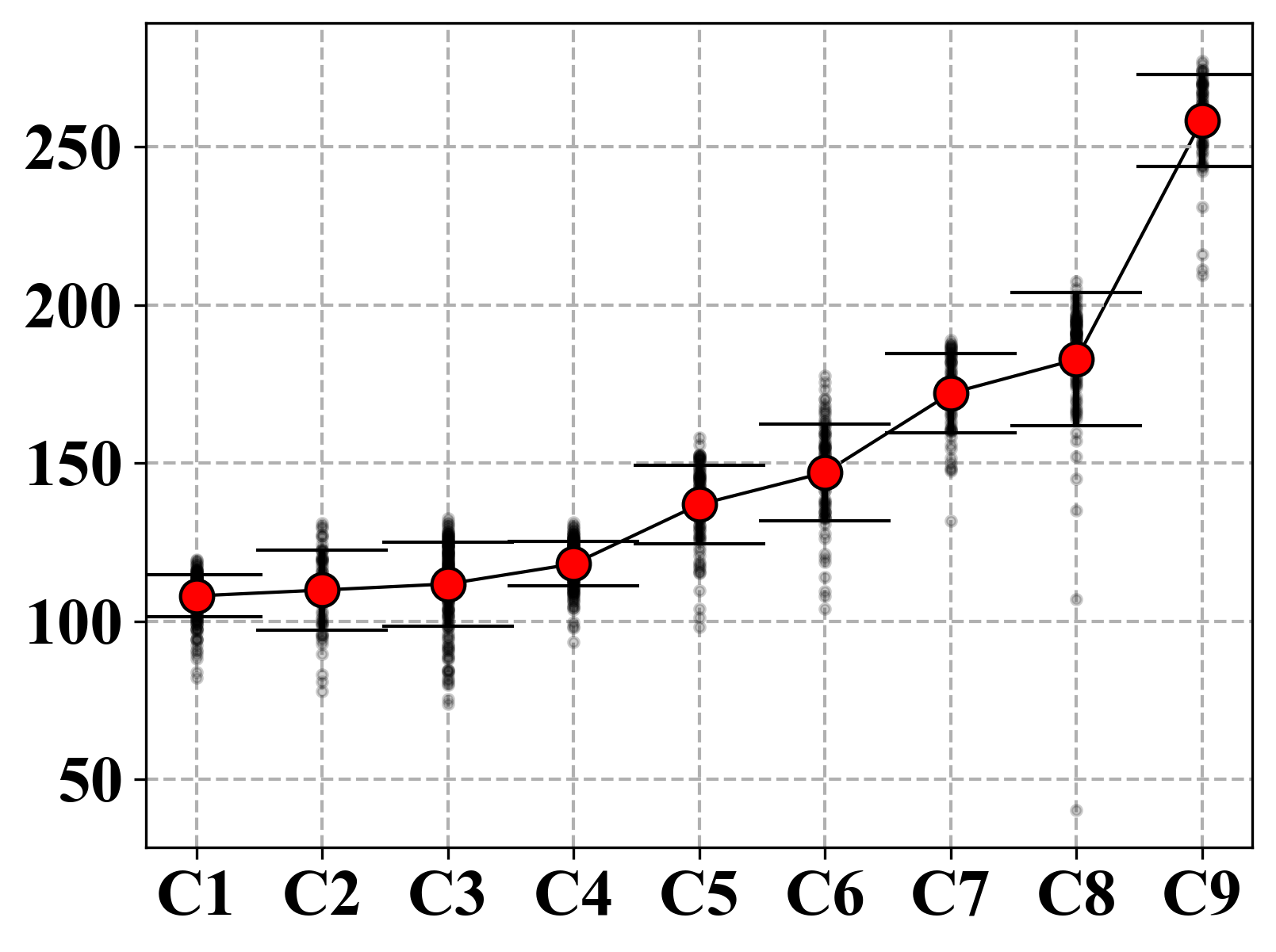}
  \vspace{-5pt}\phantomsection
\end{subfigure}%
\begin{subfigure}[b]{0.2\textwidth}
  \centering
  \includegraphics[width=\textwidth]{./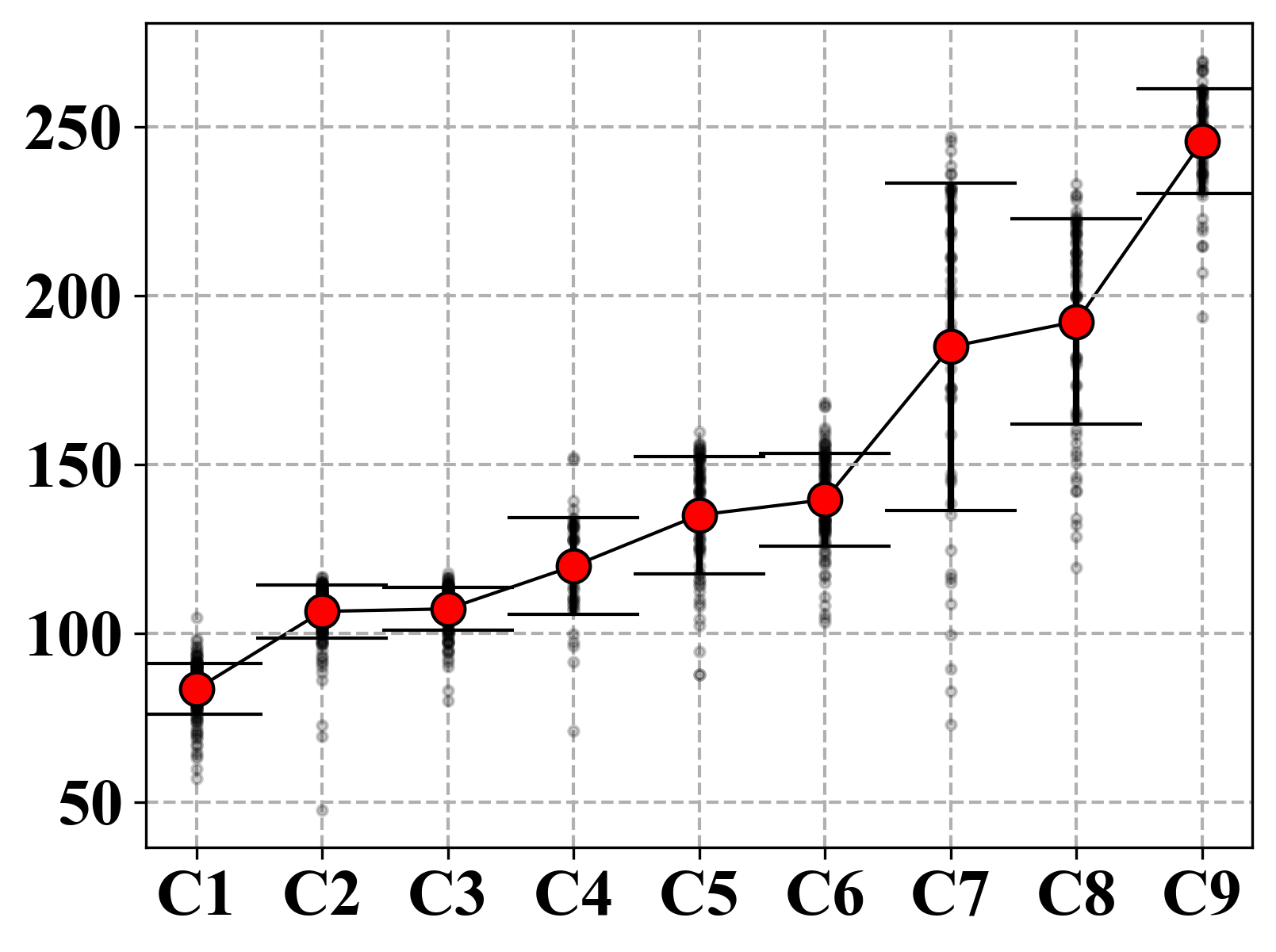}
  \vspace{-5pt}\phantomsection
\end{subfigure}%
\begin{subfigure}[b]{0.2\textwidth}
  \centering
  \includegraphics[width=\textwidth]{./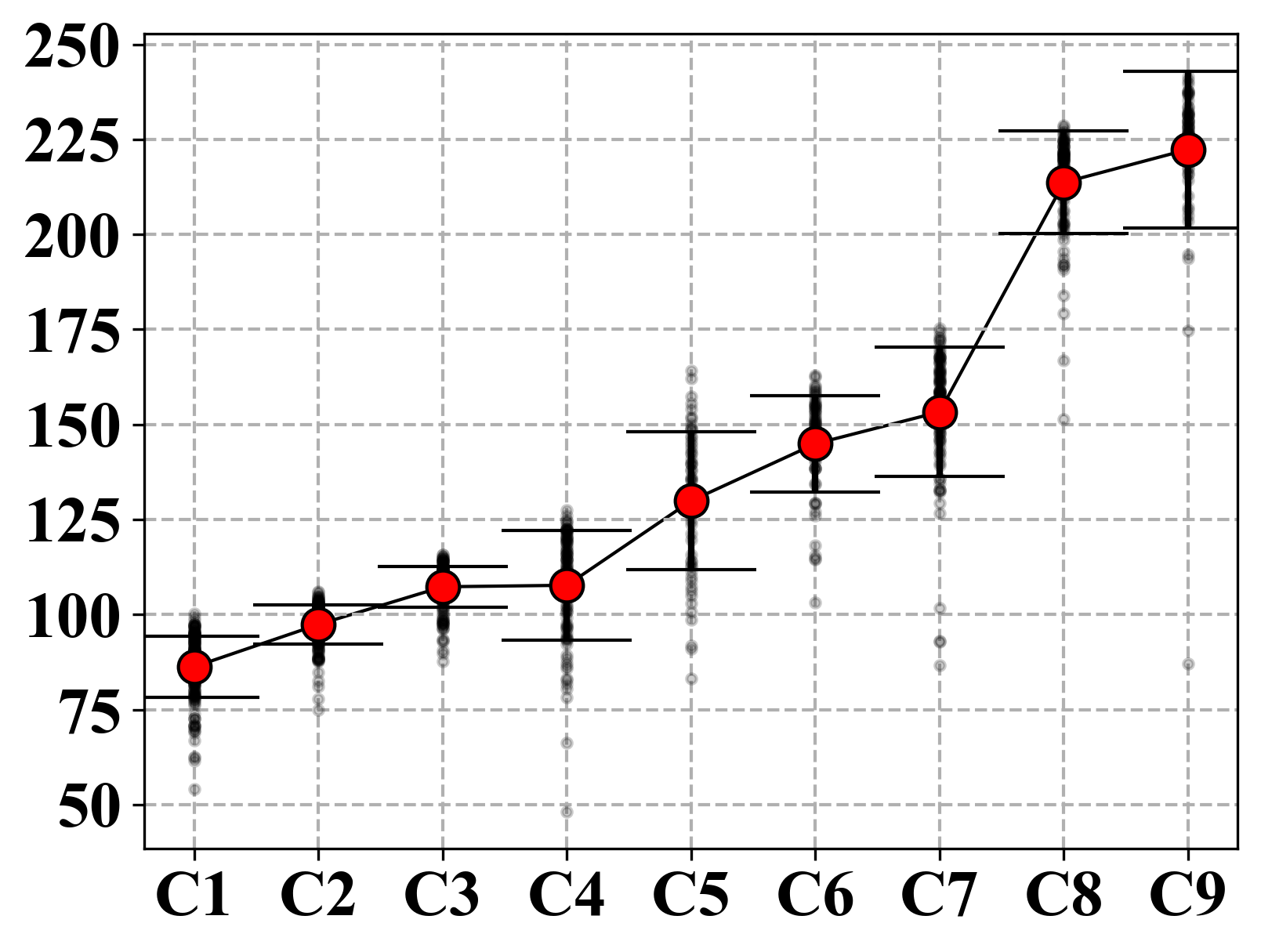}
  \vspace{-5pt}\phantomsection
\end{subfigure}%
\begin{subfigure}[b]{0.2\textwidth}
  \centering
  \includegraphics[width=\textwidth]{./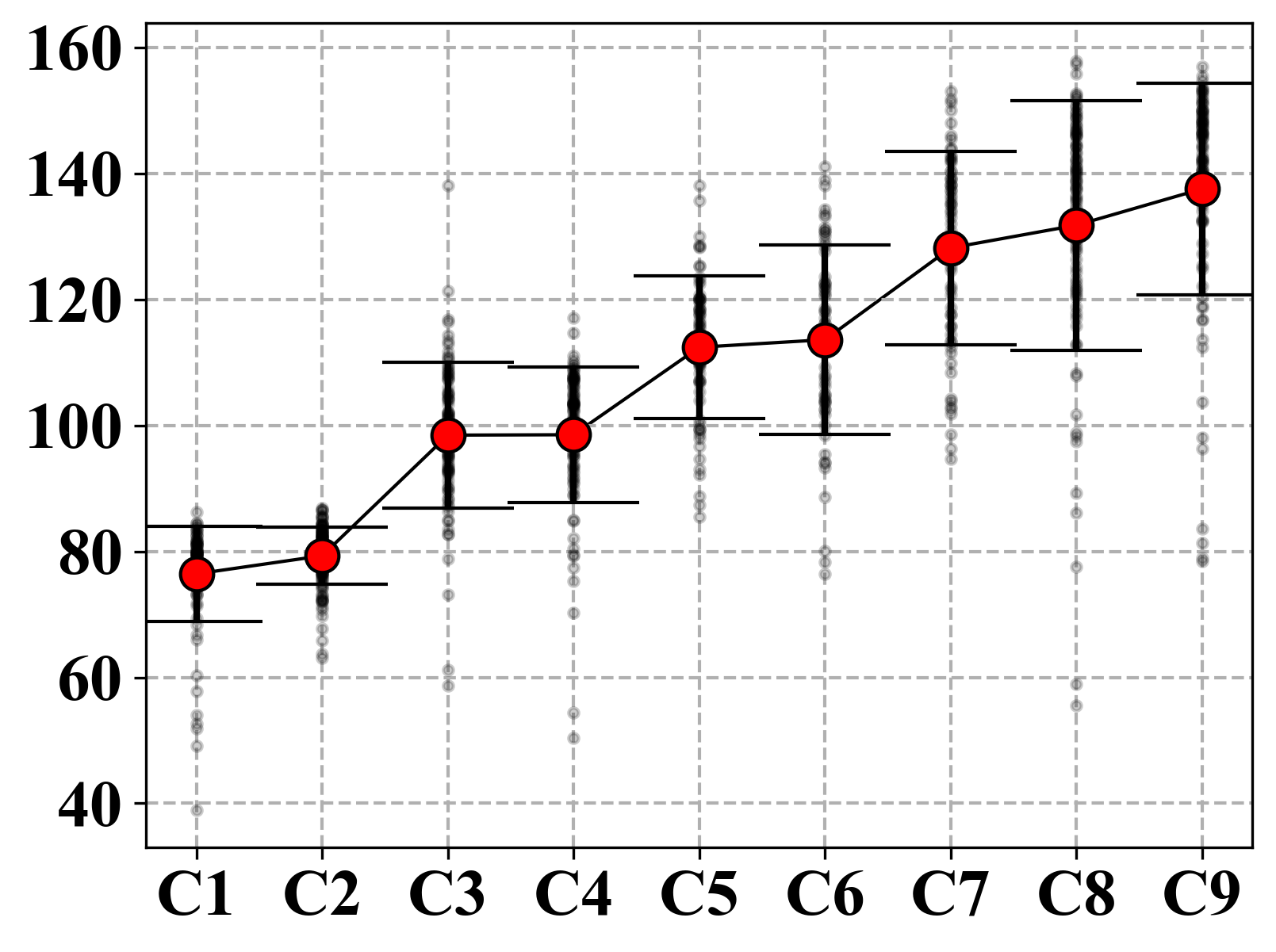}
  \vspace{-5pt}\phantomsection
\end{subfigure}%
\begin{subfigure}[b]{0.2\textwidth}
  \centering
  \includegraphics[width=\textwidth]{./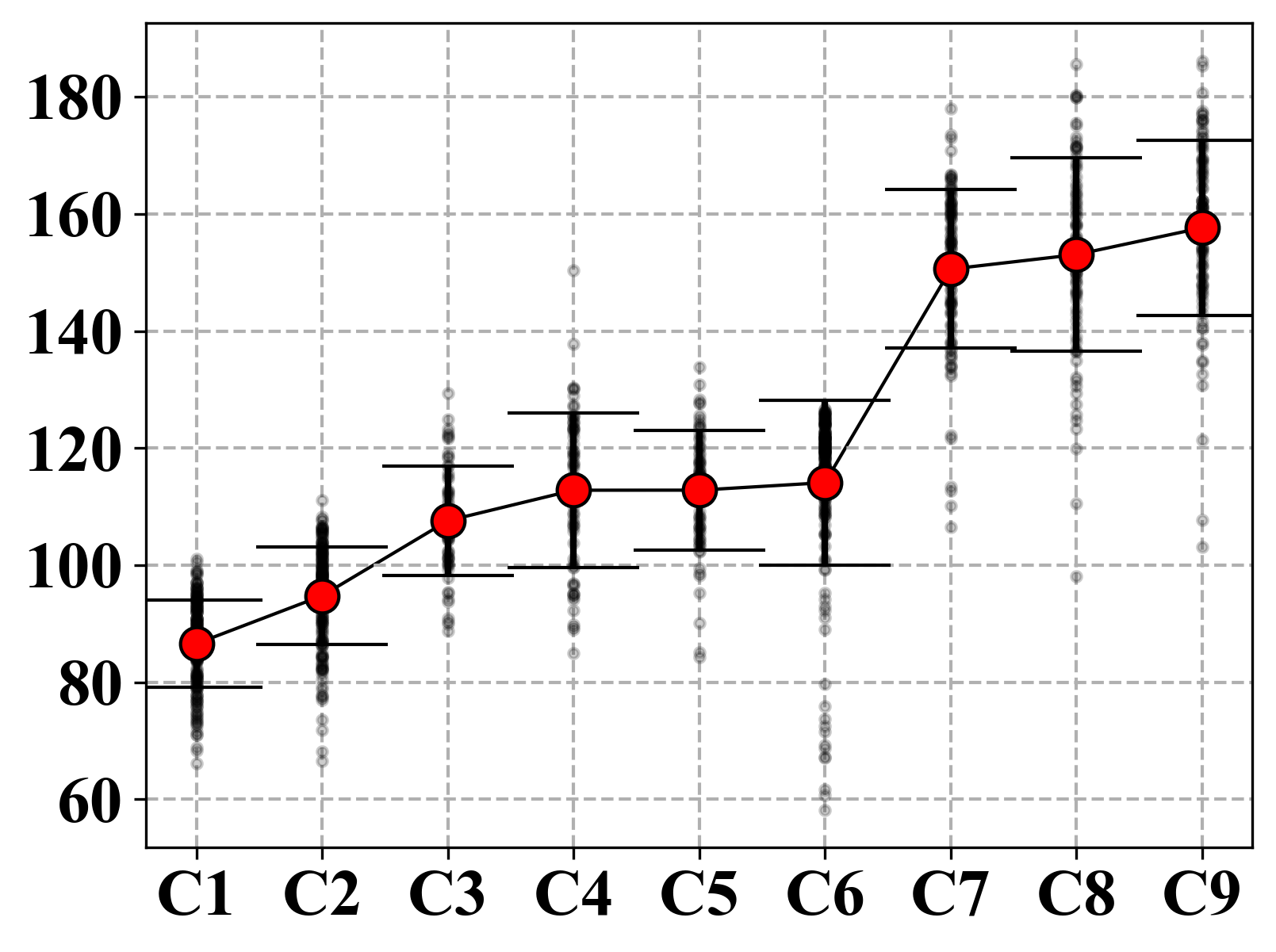}
  \vspace{-5pt}\phantomsection
\end{subfigure}
\caption{\small Similar to Fig.~\ref{CLUSTERFIGURE}, we extract the density ratios of each cluster for the subject, corresponding to one movement and one session, rank their means, and plot them along with their standard deviations. It can be seen that for each subject, the density ratio value for each cluster is different, indicating that our measures provide a unique measurement for each movement of the subject. This figure is extended from Fig.~\ref{CLUSTERFIGURE} and shows that this observation applies to all participants.}
\label{ratio_figure_variation}
\end{figure}%
\vspace{-15pt}\begin{figure}[H]
\centering
\begin{minipage}{\textwidth}
\centering
{\large\textbf{\texttt{TSNE, REACH, GRASP, TWIST, SUB1$\sim$SUB10}}}
\end{minipage}
\begin{subfigure}[b]{0.33\textwidth}
  \centering
  \includegraphics[width=\textwidth]{./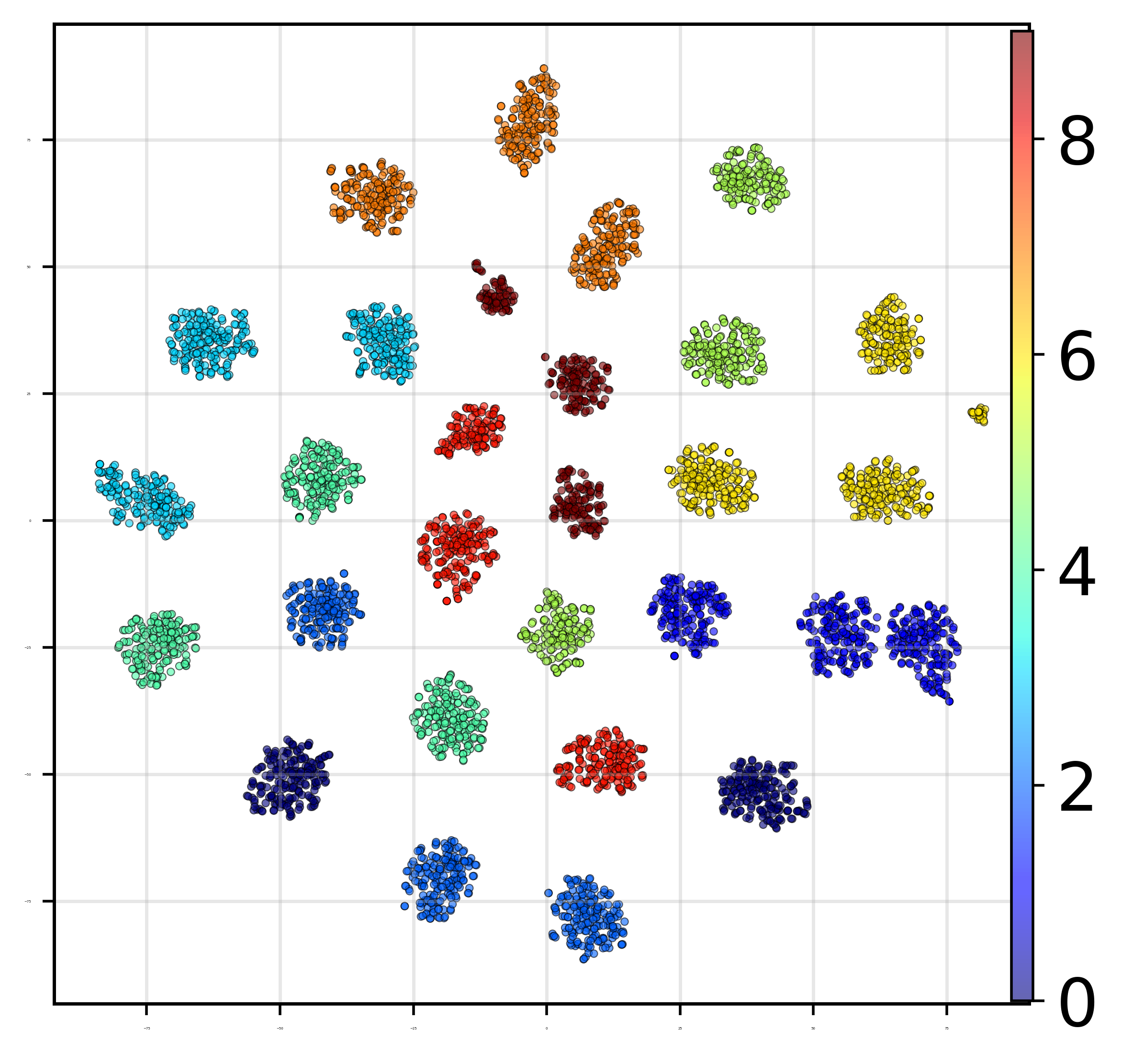}
  \vspace{-5pt}\phantomsection
\end{subfigure}%
\begin{subfigure}[b]{0.33\textwidth}
  \centering
  \includegraphics[width=\textwidth]{./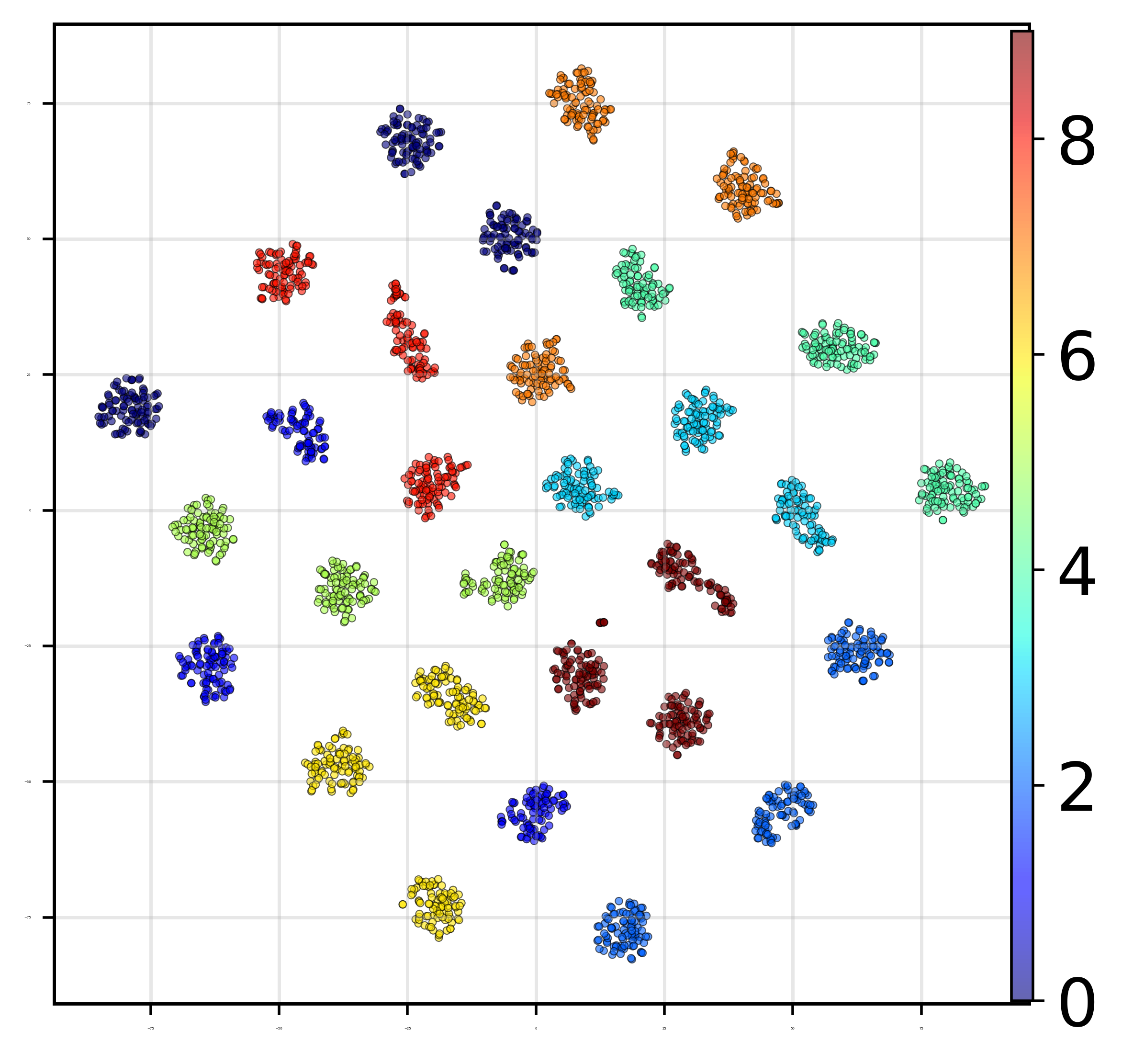}
  \vspace{-5pt}\phantomsection
\end{subfigure}%
\begin{subfigure}[b]{0.33\textwidth}
  \centering
  \includegraphics[width=\textwidth]{./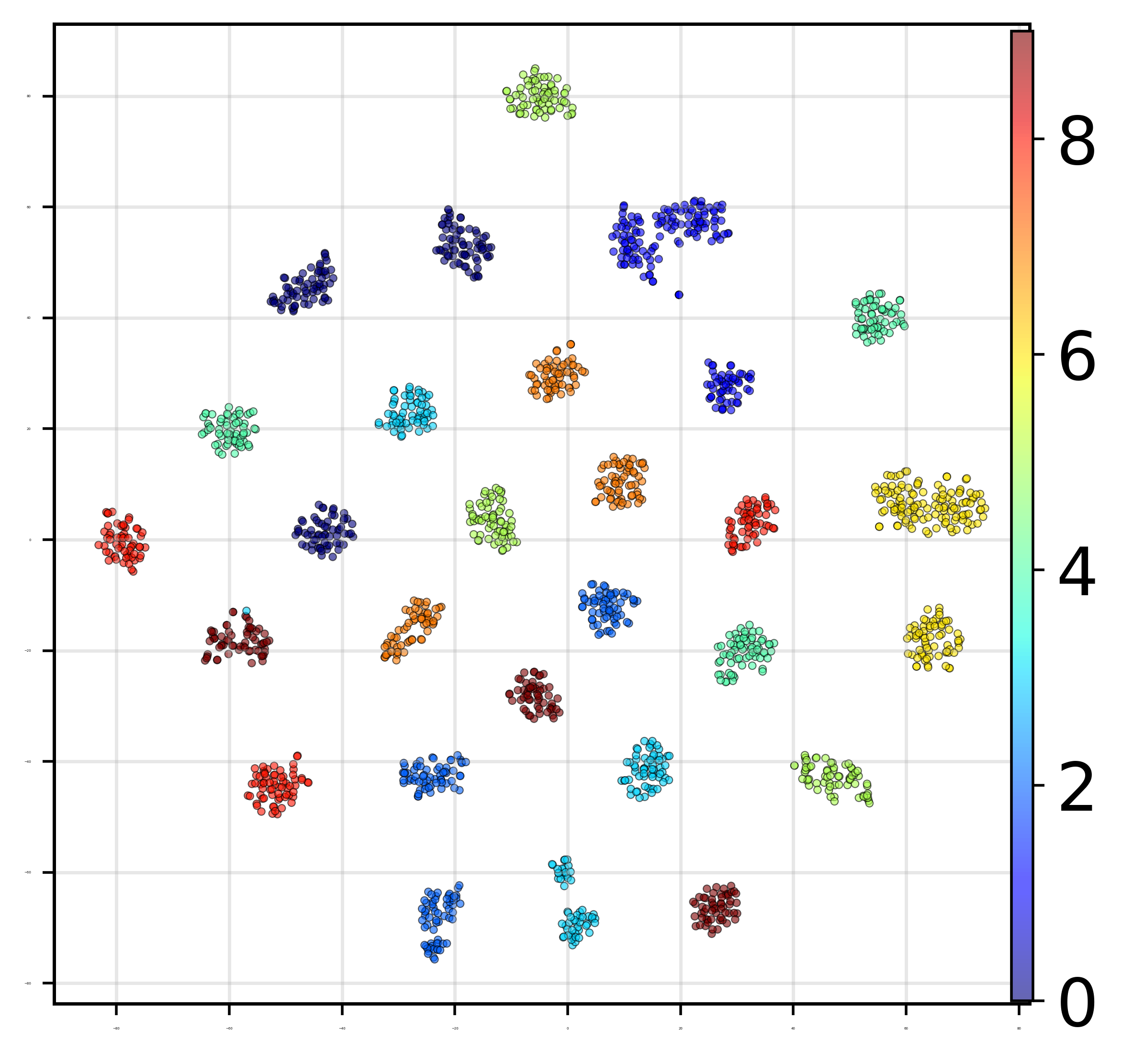}
  \vspace{-5pt}\phantomsection
\end{subfigure}%
\vspace{-7pt}\caption{\small t-SNE projection of trials from ten participants. Subplots from left to right show the projected EEG eigenfunctions from reaching, grasping, and twisting movements, respectively. In all movements, participant-specific information is consistently captured in the clusters of the EEG's eigenfunctions.}
\label{ratio_figure_subject}
\end{figure}

\textbf{Temporal-level dependence.} Extending the analysis in Fig.~\ref{EEG_EIG_PROJ} of the main paper, where the temporal-level dependence was shown for a single subject, we randomly select another seven trials from subject \textbf{\texttt{SUB3}}'s reaching movement (\textbf{\texttt{C1}}) and visualize the temporal-level dependence in {Fig.~\ref{temporal}}. We also plot the average temporal dependence across all trials in Fig.~\ref{temporal:h}.
We find consistent activations of fronto-central (FC) channel during the 4-second movement in each trial. When we compare each trial's result with the averaged one, still we observe that the activation patterns are highly similar. This indicates that our dependence measure captures information that is consistent and generalizable across subjects.
\vspace{-20pt}
\begin{figure}[H]
\centering
{\large\textbf{\texttt{}}
}\\
\begin{subfigure}[b]{0.45\textwidth}
  \centering
  \includegraphics[width=\textwidth]{./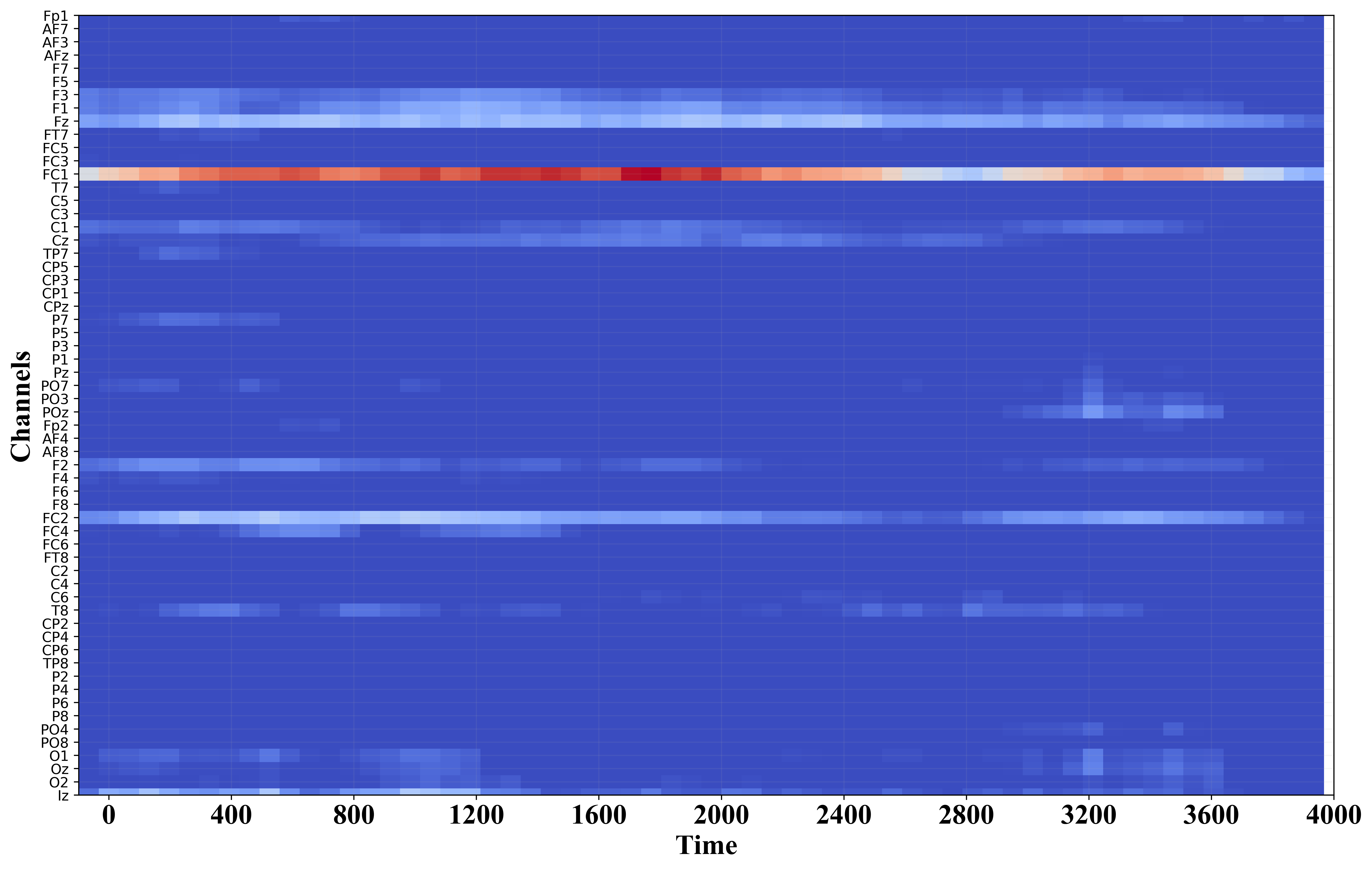}
  \caption{\texttt{\texttt{SUB3, C1, T1}}}
\end{subfigure}%
\begin{subfigure}[b]{0.45\textwidth}
  \centering
  \includegraphics[width=\textwidth]{./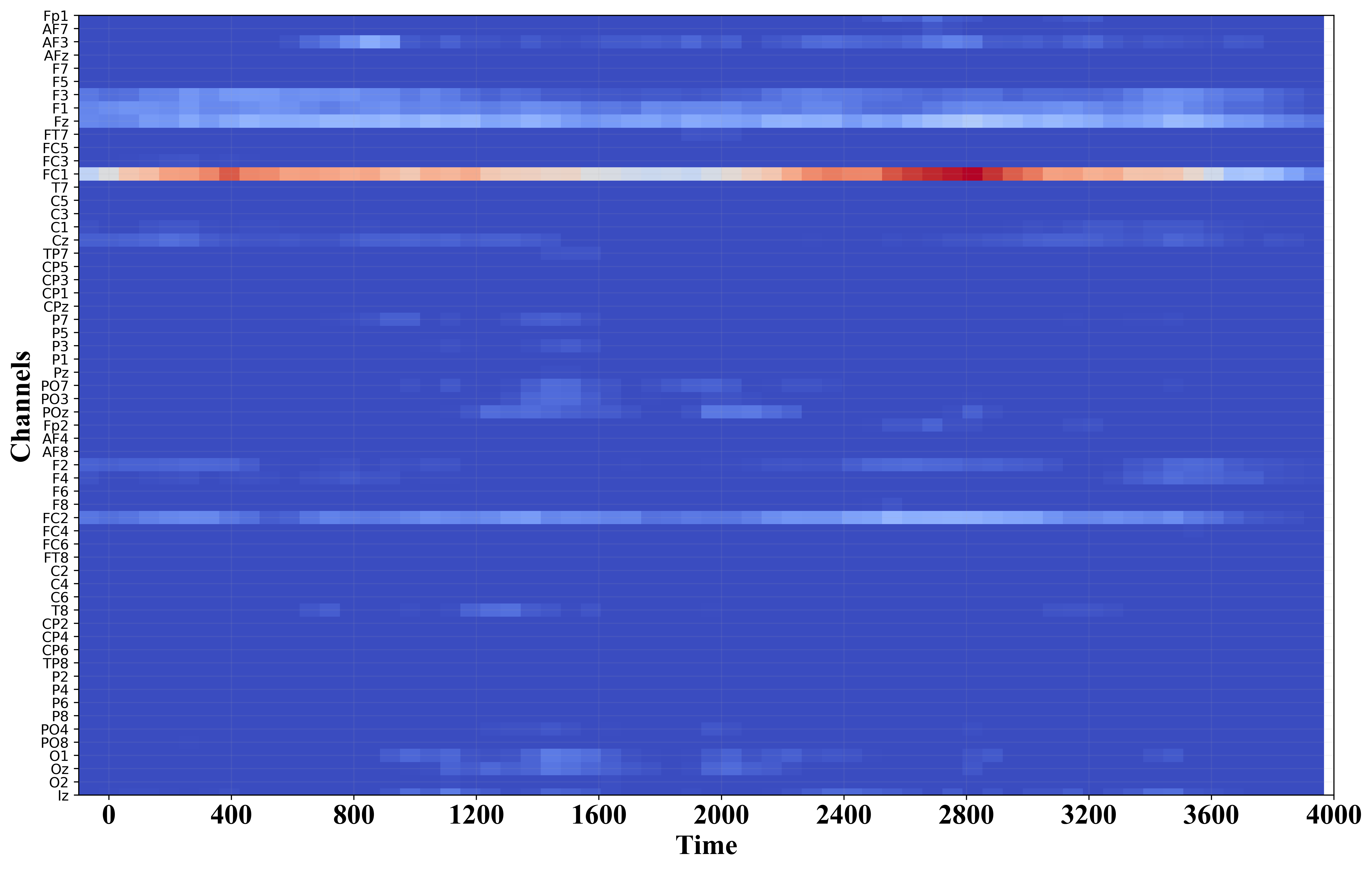}
 \caption{\texttt{SUB3, C1, T2}}
\end{subfigure}\\

\begin{subfigure}[b]{0.45\textwidth}
  \centering
  \includegraphics[width=\textwidth]{./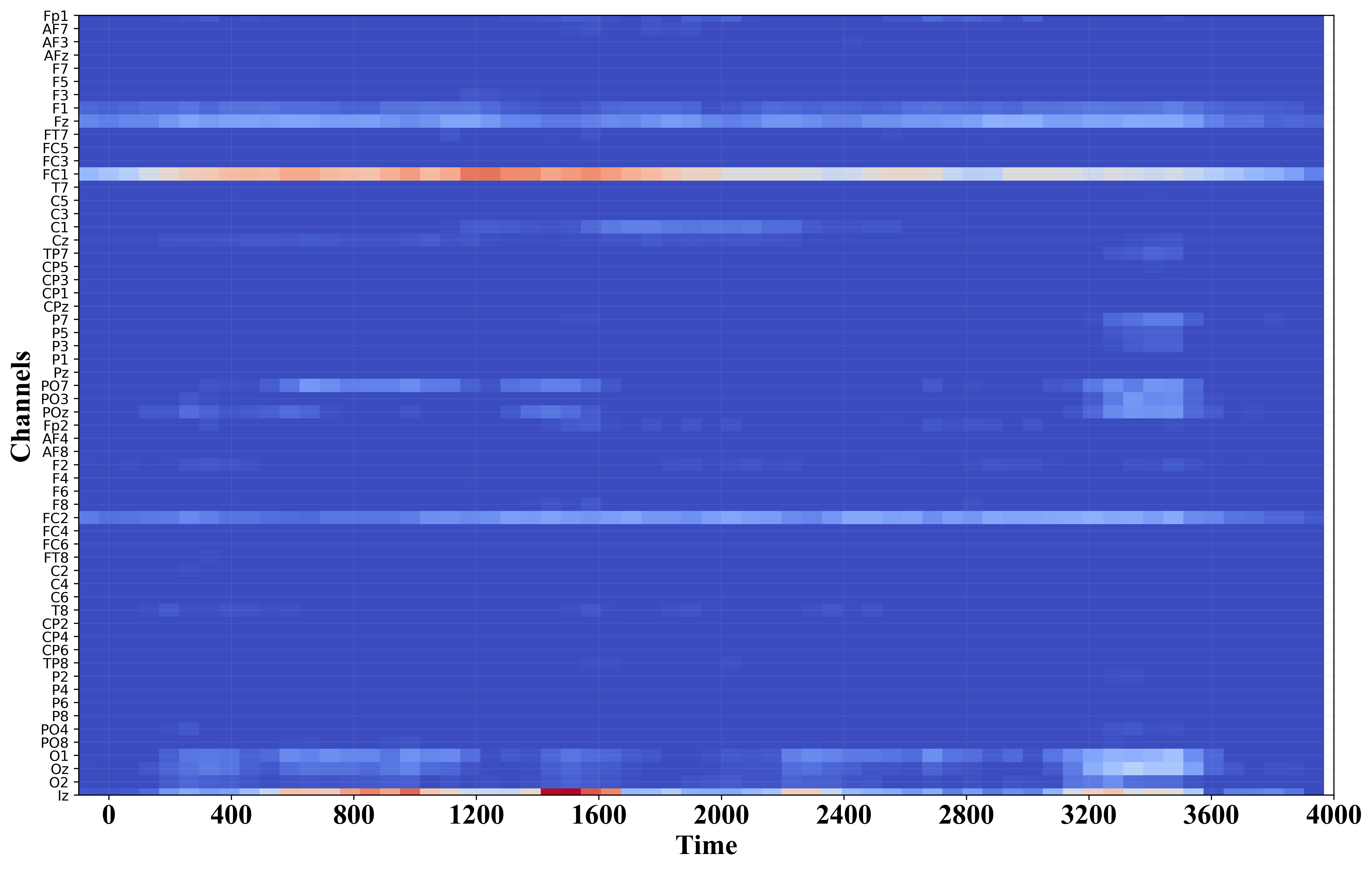}
 \caption{\texttt{SUB3, C1, T3}}
\end{subfigure}%
\begin{subfigure}[b]{0.45\textwidth}
  \centering
  \includegraphics[width=\textwidth]{./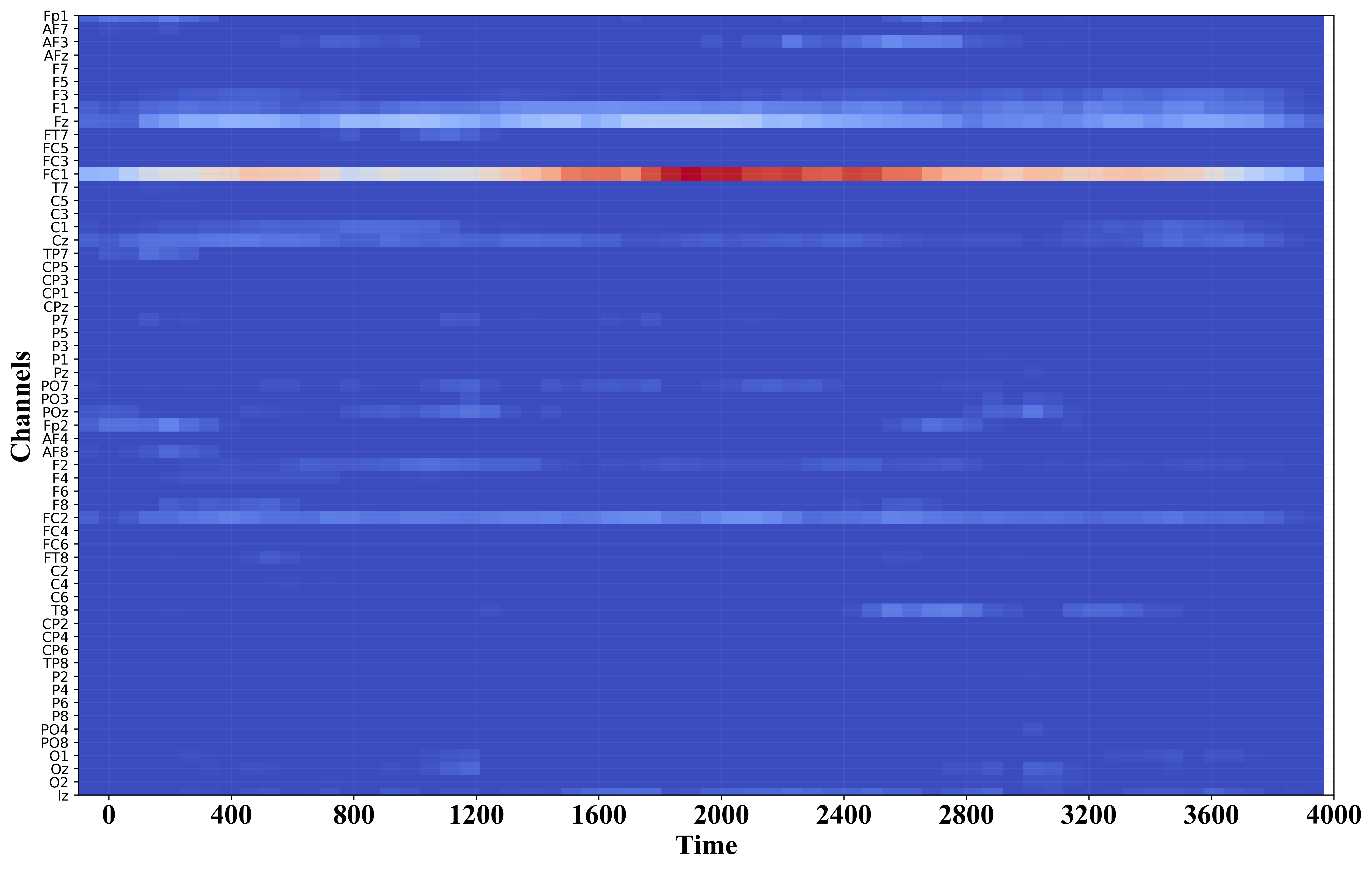}
 \caption{\texttt{SUB3, C1, T4}}
\end{subfigure}\\

\begin{subfigure}[b]{0.45\textwidth}
  \centering
  \includegraphics[width=\textwidth]{./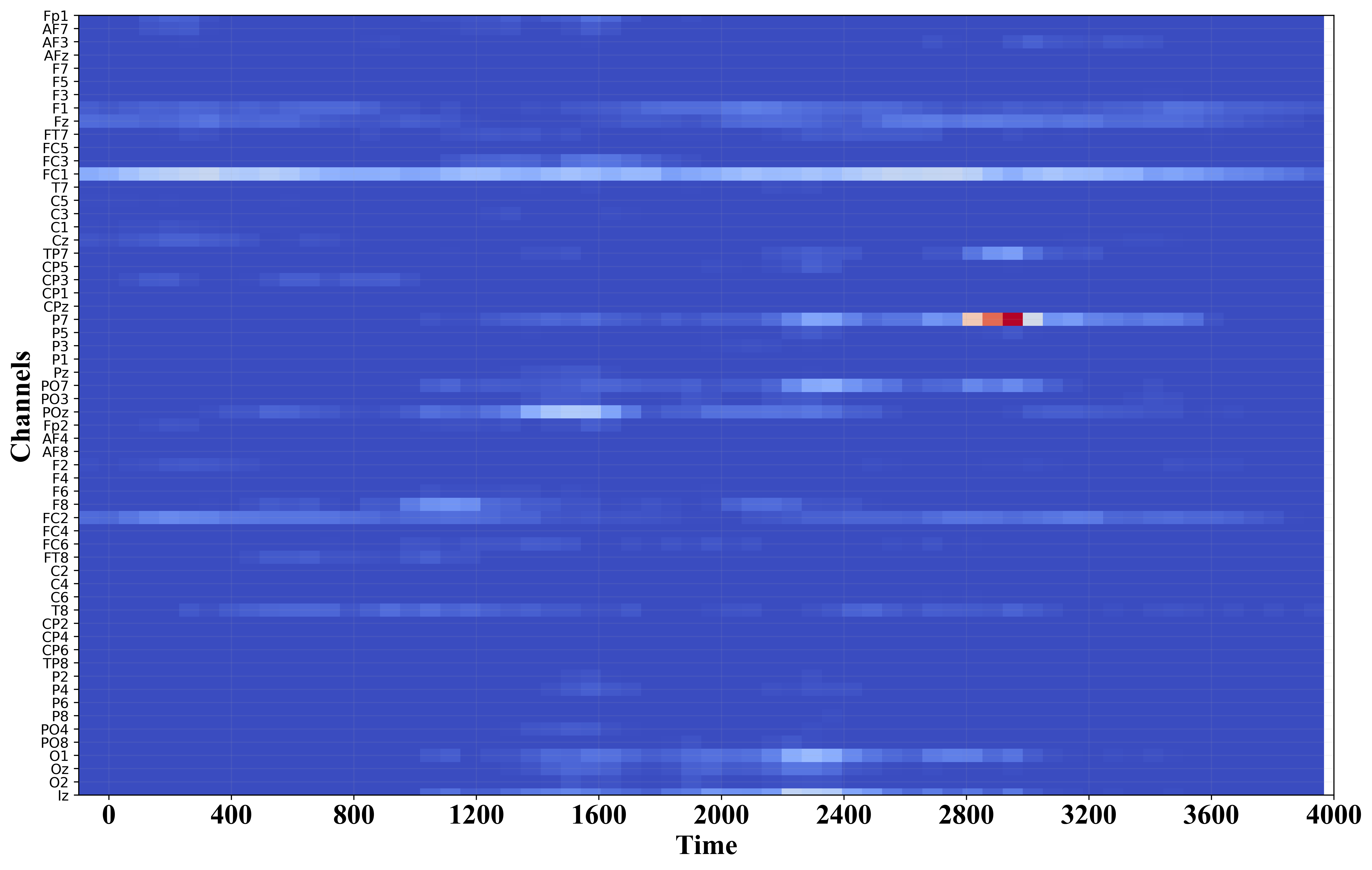}
 \caption{\texttt{SUB3, C1, T5}}
\end{subfigure}%
\begin{subfigure}[b]{0.45\textwidth}
  \centering
  \includegraphics[width=\textwidth]{./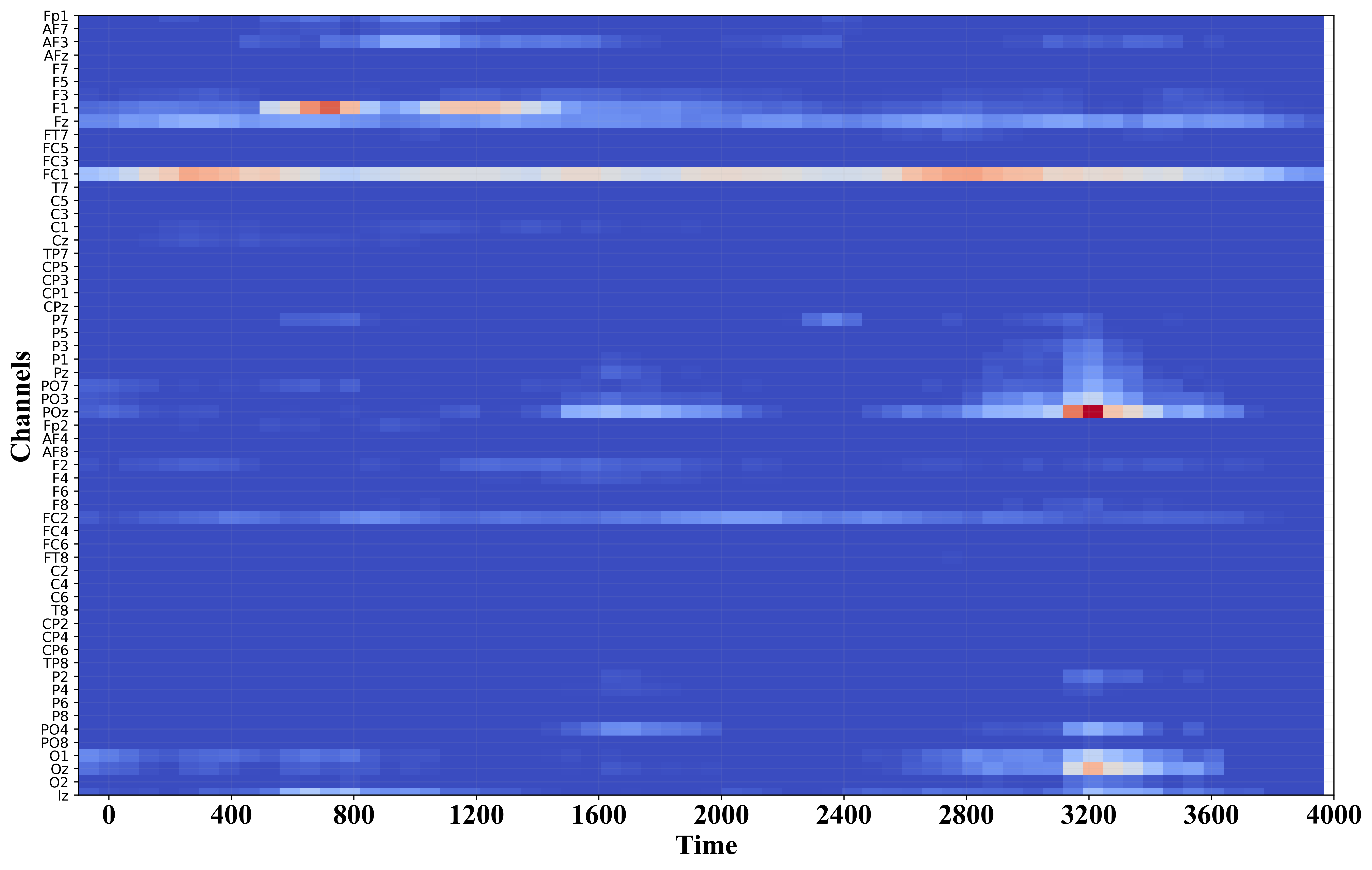}
   \caption{\texttt{SUB3, C1, T6}}
\end{subfigure}\\

\begin{subfigure}[b]{0.45\textwidth}
  \centering
  \includegraphics[width=\textwidth]{./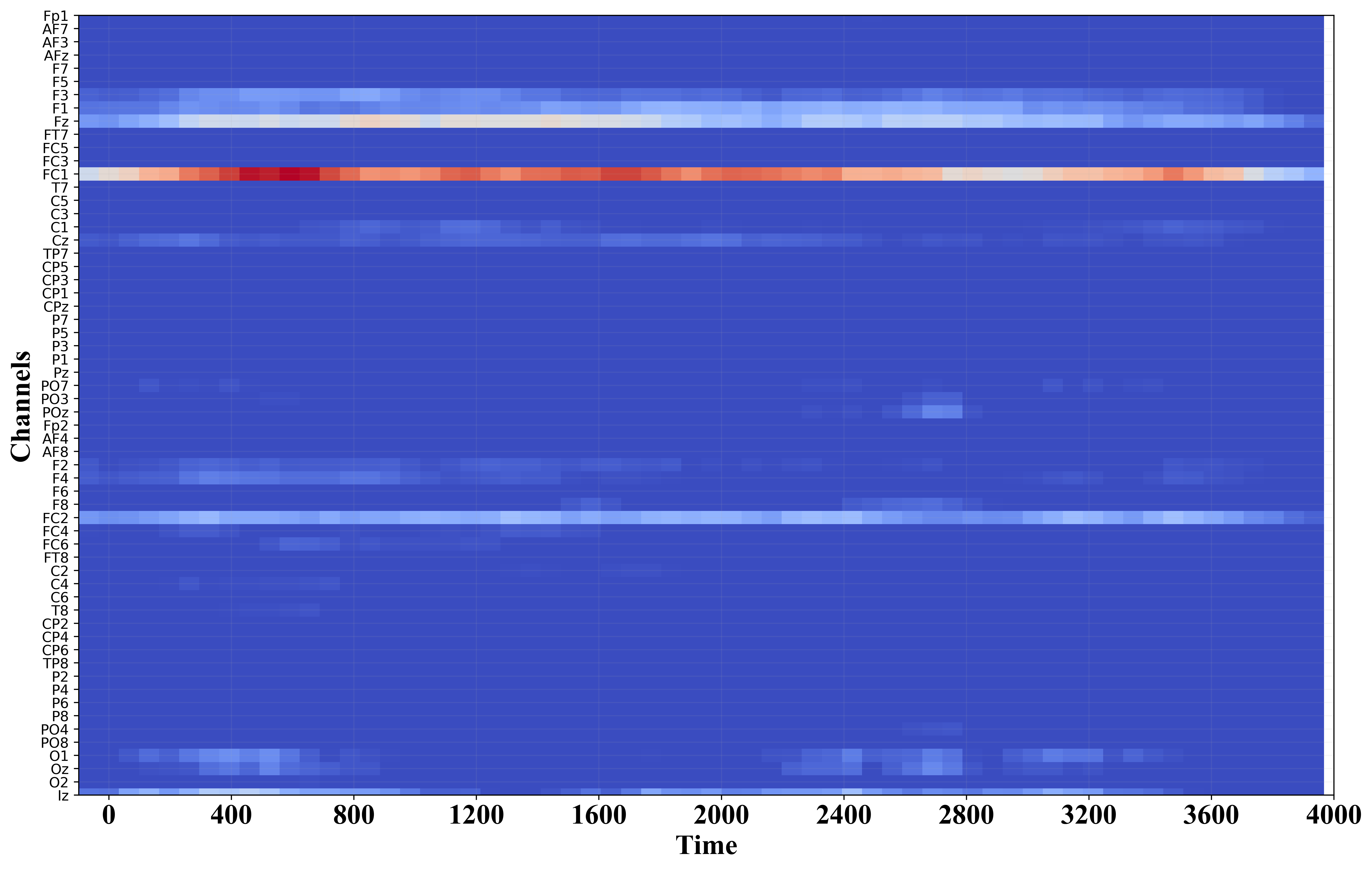}
   \caption{\texttt{SUB3, C1, T7}}
\end{subfigure}%
\begin{subfigure}[b]{0.45\textwidth}
  \centering
  \includegraphics[width=\textwidth]{./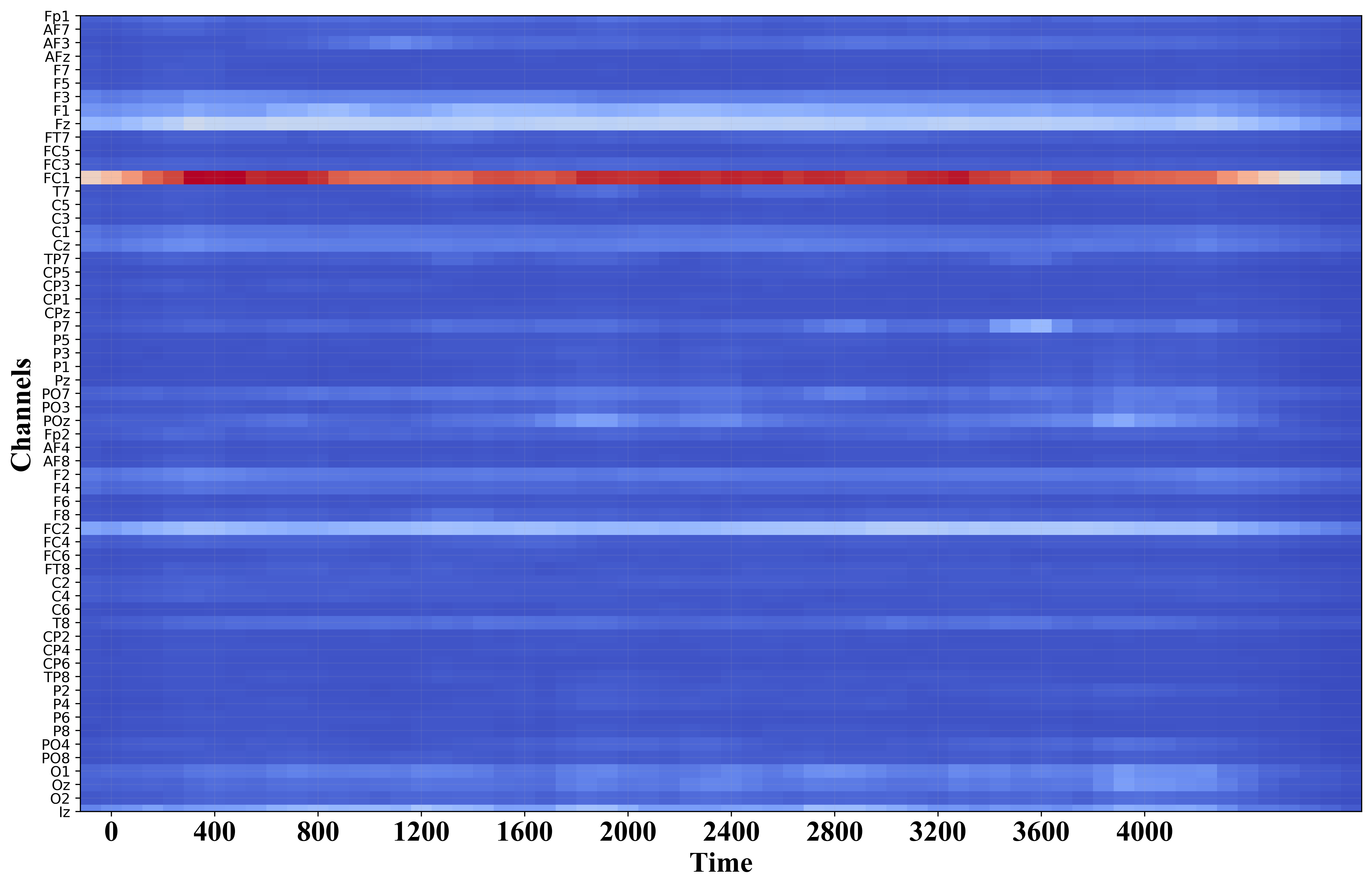}
   \caption{\texttt{SUB3, C1, Averaged}}\label{temporal:h}
\end{subfigure}%
\caption{Temporal-level dependence for nine trials from \textbf{\texttt{SUB3}}'s reaching movement, confirming consistent activation of the frontal brain region and stable temporal dependence over the movement.}
\label{temporal}
\end{figure}

\textbf{Channel-level dependence.} Similar to Fig.~\ref{EEG_EIG_PROJ} of the main paper, we also quantify the channel-level statistical dependence from other subjects, not just \textbf{\texttt{SUB3}}.  
We randomly select clusters from subjects and visualize the results in Fig.~\ref{channel_dependence}. 
We find a consistent pattern across subjects that the FC channels are activated most strongly.
It can also be observed that within each cluster (each subplot), channel activations are highly similar across trials. This indicates that our dependence measure robustly captures the cortical-muscular connectivity of the same movement.

\begin{figure}[H]
\centering
\begin{subfigure}[b]{0.25\textwidth}
  \centering
  \includegraphics[width=\textwidth]{./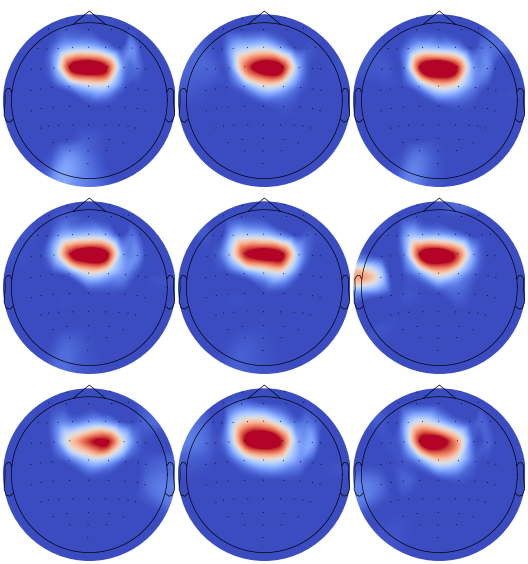}
  \caption{\texttt{\texttt{SUB6, C1}}}
\end{subfigure}\hspace{10pt}
\begin{subfigure}[b]{0.25\textwidth}
  \centering
  \includegraphics[width=\textwidth]{./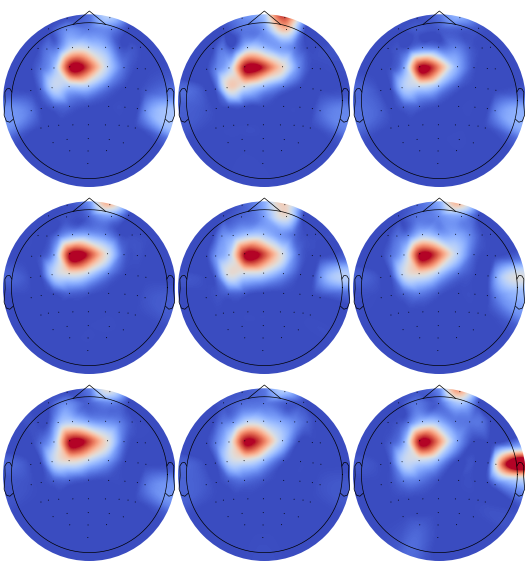}
 \caption{\texttt{SUB13, C9}}
\end{subfigure}\hspace{10pt}
\begin{subfigure}[b]{0.25\textwidth}
  \centering
  \includegraphics[width=\textwidth]{./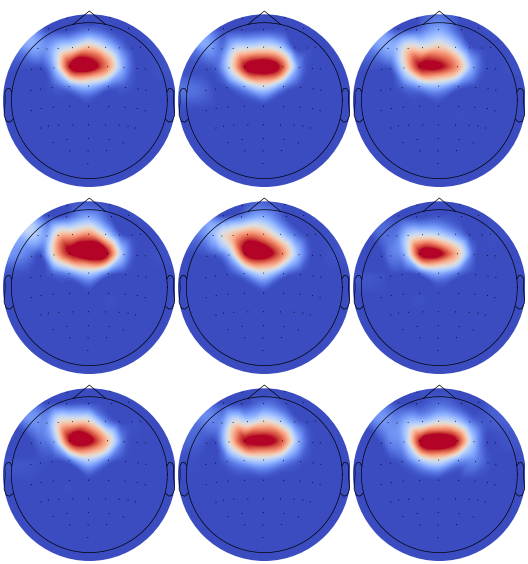}
 \caption{\texttt{SUB17, C8}}
\end{subfigure}
\\
\begin{subfigure}[b]{0.25\textwidth}
  \centering
  \includegraphics[width=\textwidth]{./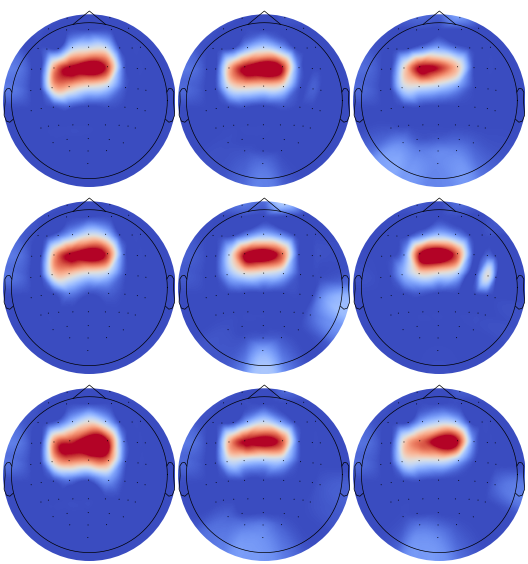}
 \caption{\texttt{SUB17, C4}}
\end{subfigure}\hspace{10pt}
\begin{subfigure}[b]{0.25\textwidth}
  \centering
  \includegraphics[width=\textwidth]{./reference_figure_1/temporal/output_5_0.png}
 \caption{\texttt{SUB6, C1}}
\end{subfigure}\hspace{10pt}
\begin{subfigure}[b]{0.25\textwidth}
  \centering
  \includegraphics[width=\textwidth]{./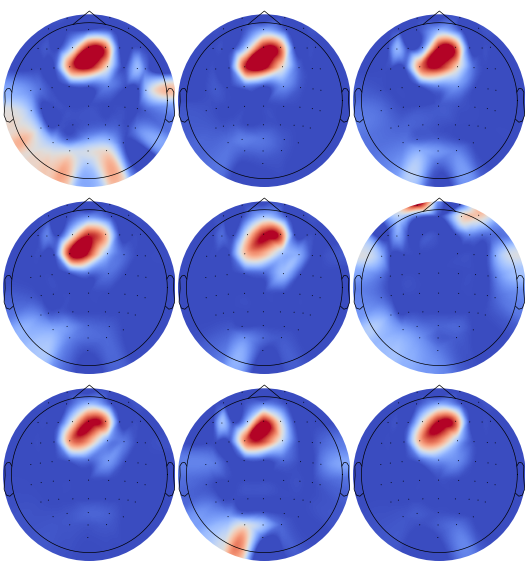}
 \caption{\texttt{SUB24, C2}}
\end{subfigure}\vspace{-5pt}
\caption{\small Channel-level dependence for visualizing  activation patterns. We find that there are strong activations around the FC area across subjects, not just \textbf{\texttt{SUB3}}. We show that multiple clusters from various subjects demonstrate similar activation channels in the FC area. This suggests that these channels are overall the most important to classifying movements and contribute the most to connectivity.}\label{channel_dependence}
\end{figure}




\textbf{Learning curve comparison.} We show the smoothness of the training stage of FMCA-T, comparing its learning curve with the learning curve of MINE when applied both on EEG-EMG-Fusion. As shown in {Fig.~\ref{MINE}}, FMCA-T demonstrates superior stability, whereas MINE suffers from greater instability even when smoothing windows are applied to estimate the gradient of the variational cost in the Donsker-Varadhan representation.

\begin{figure}[H]
\centering
\begin{subfigure}[b]{0.333\textwidth}
  \centering
  \includegraphics[width=\textwidth]{./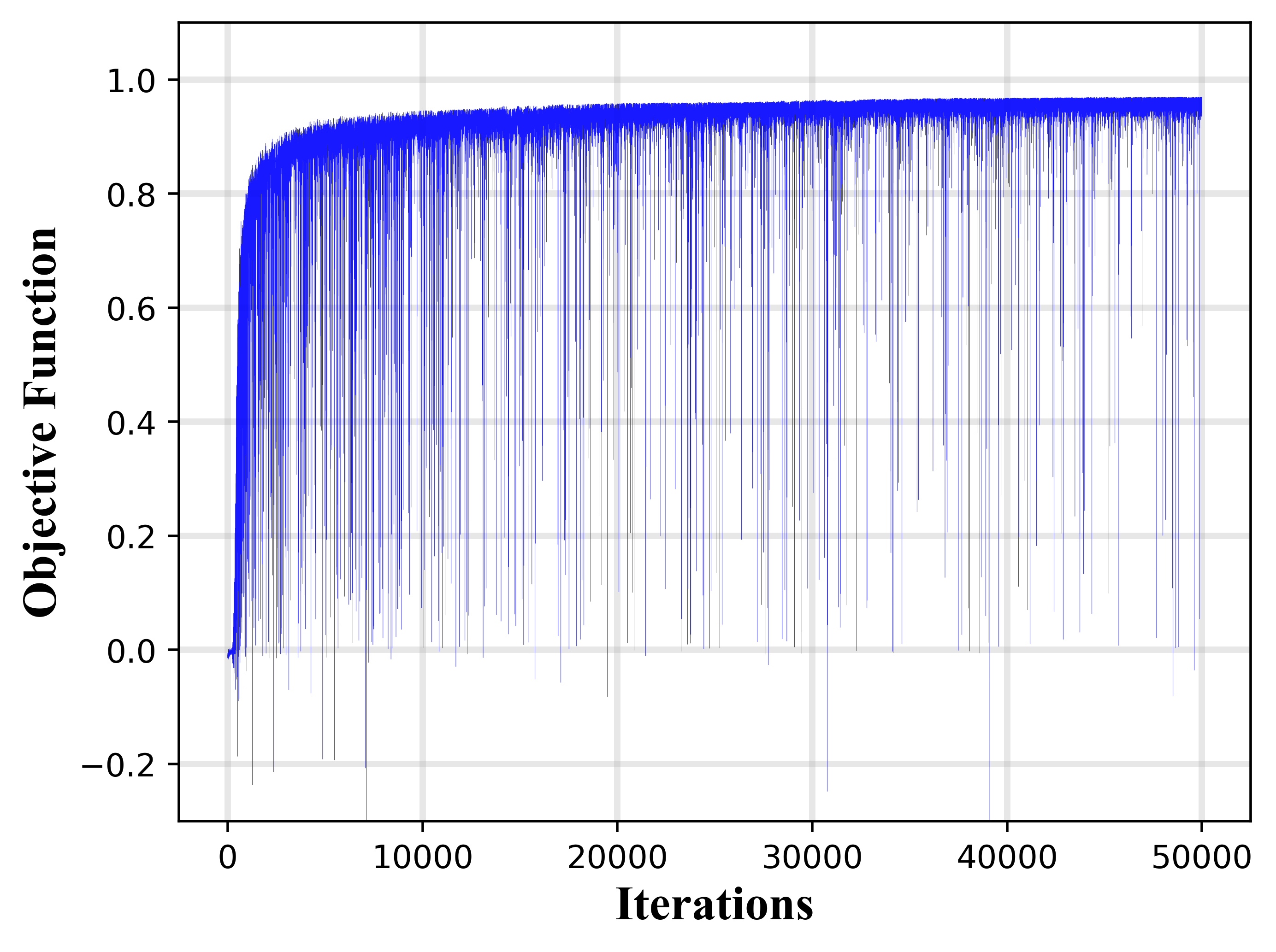}
   \caption{\small \texttt{MINE}}
\end{subfigure}\hfill%
\hfill\begin{subfigure}[b]{0.333\textwidth}
  \centering
  \includegraphics[width=\textwidth]{./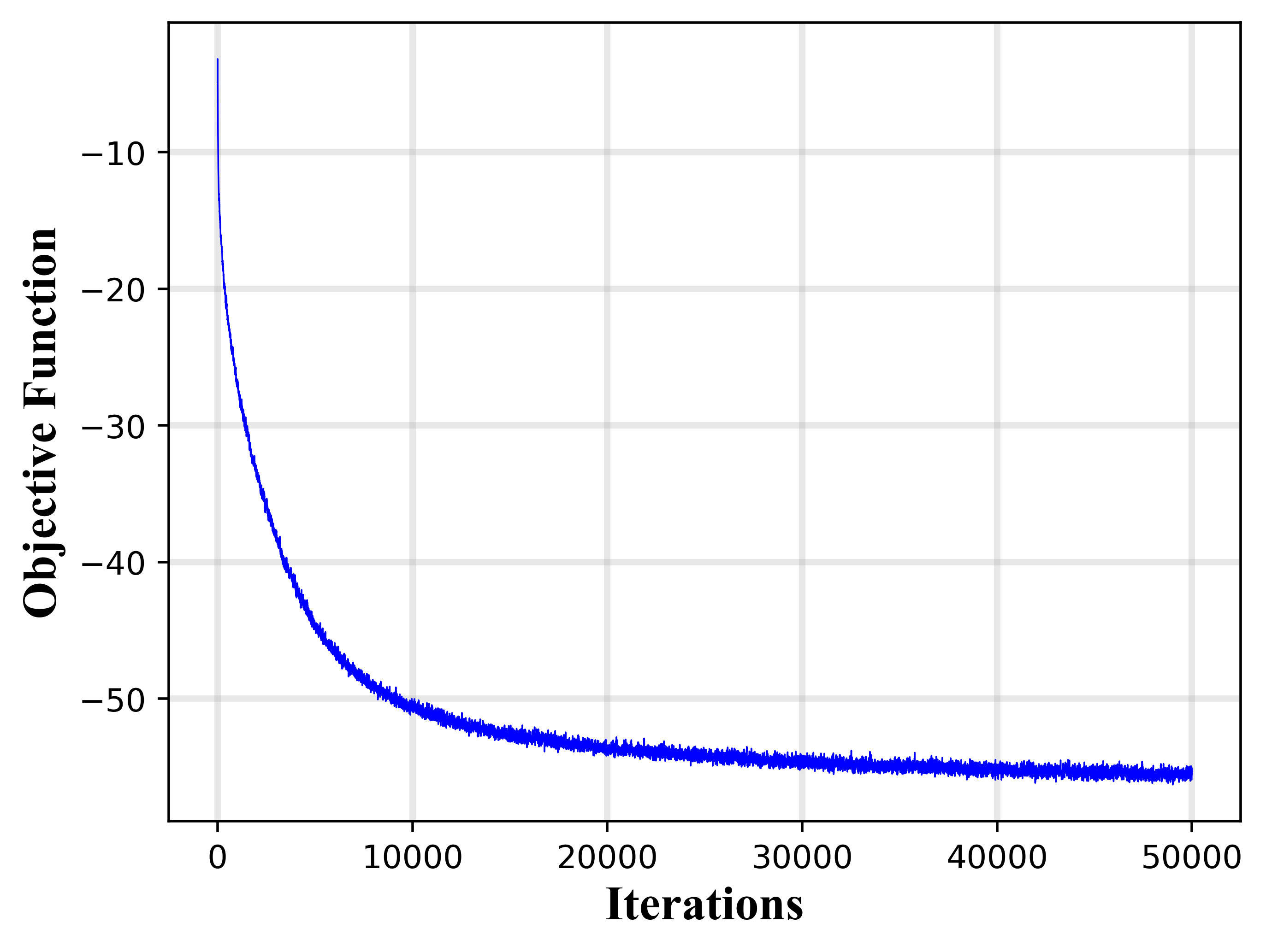}
   \caption{\small \texttt{FMCA-T}}
\end{subfigure}\hfill%
\hfill\begin{subfigure}[b]{0.333\textwidth}
  \centering
  \includegraphics[width=\textwidth]{./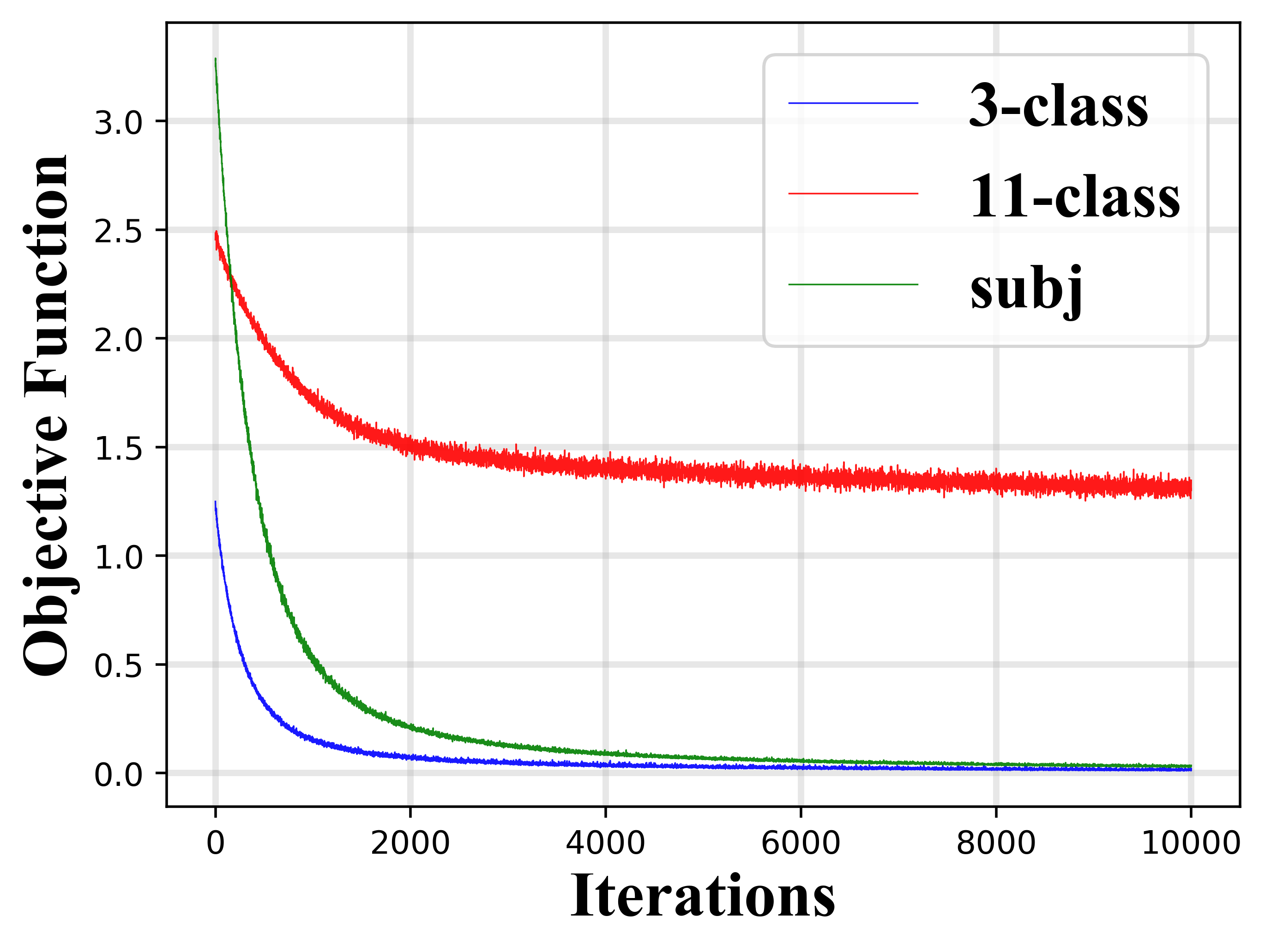}
   \caption{\texttt{Eigenfunction Classifiers}}
\end{subfigure}\hfill%
\caption{\small Comparison of learning curves on EEG-EMG-Fusion dataset. MINE: Variational cost; FMCA-T: Matrix trace cost; Eigenfunction Classifiers: Training errors. MINE suffers from high noise even when smoothing windows are applied to estimate the gradient of the variational cost. MINE is unable to produce stable and comparable results on SinWav. The classifier of eigenfunctions is trained separately from the rest of the networks.}
\label{MINE}
\end{figure}
\textbf{Temporal-level dependence on SinWav.} We analyze the temporal activations of the learned dependence measure on SinWav by visualizing its localized density ratios. This is performed by computing the density ratios between adjacent layers of feature projectors. The layer-wise density ratios are then aggregated for visualization. We find that the localized density ratios exhibit higher activations at the hills and valleys, and correctly capture the period and phase of the sinusoids when there is an increasing delay between the two sinusoids (Fig. \ref{fig:sine_heatmap_delay}). As shown in Fig. \ref{fig:sine_heatmap_gaussian}, when Gaussian noise with a standard deviation equal to 1.0 is added to the clean sinusoidal signal (signal-to-noise ratio less than 0 dB), the density ratio can still correctly identify the hills and valleys. This indicates that our proposed dependence measure is robust to random noise and delay by filtering out trivial factors like noise while focusing on the primary signals.


\begin{figure}[h]
    \centering
    \includegraphics[width=1\columnwidth]{./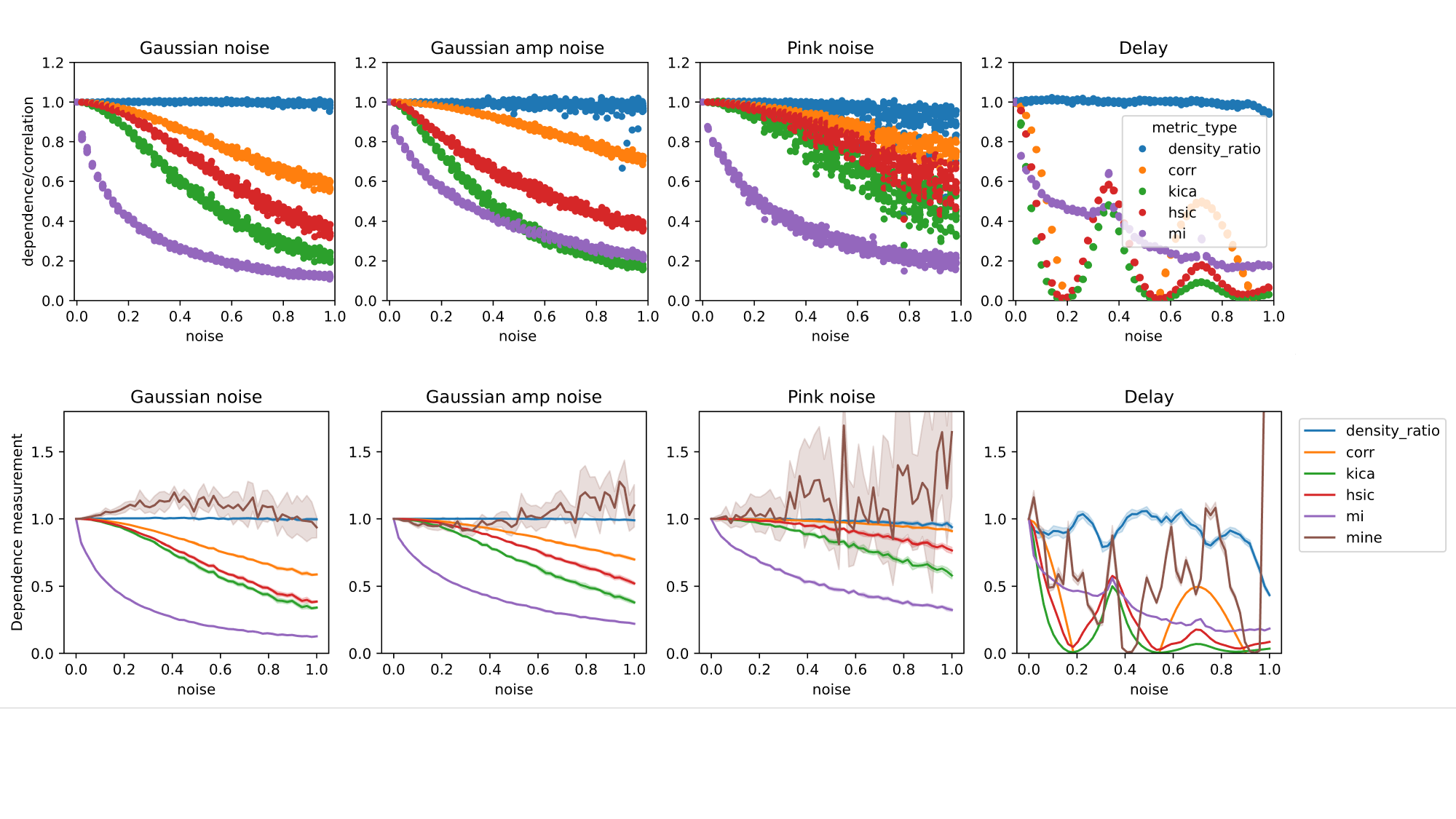}
    \caption{Visualization of localized density ratios for samples from SinWav under three delay levels. The first row shows localized density ratios based on the eigenfunctions of the input (clean) signal and the second row shows the localized density ratios based on the eigenfunctions of the delayed signal. The density ratio successfully captures the period and phase of the two signals.}
    \label{fig:sine_heatmap_delay}
\end{figure} \vspace{-5pt}

\begin{figure}[h]
    \centering
    \includegraphics[width=1\columnwidth]{./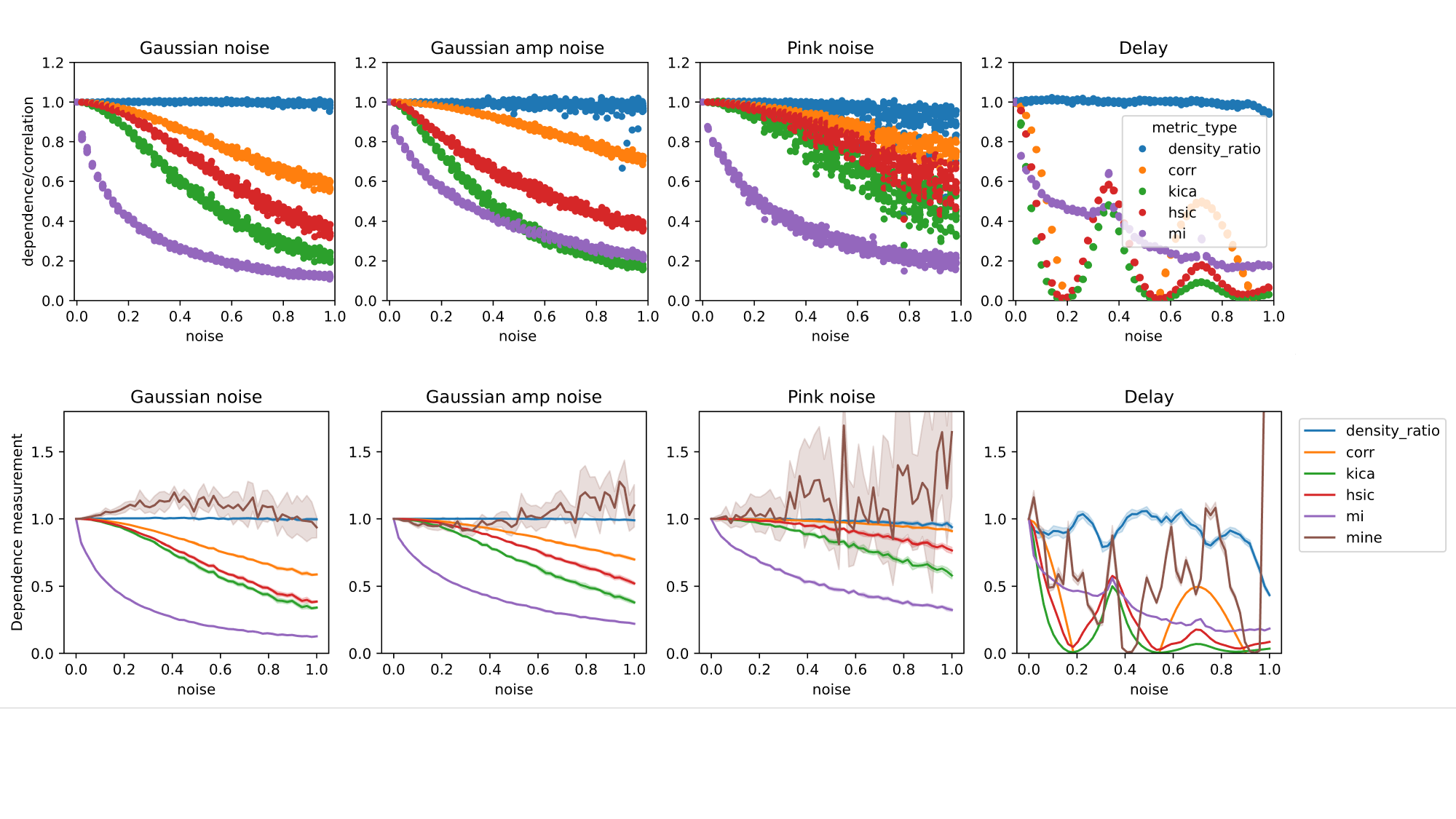}
    \caption{Visualization of localized density ratios for samples from SinWav under three noisy levels. The first row shows localized density ratios based on the eigenfunctions of the input (clean) signal and the second row shows the localized density ratios based on the eigenfunctions of the noisy signal (Gaussian noise). The localized density ratio successfully captures the period and phase of the noisy signals even when the noise level reaches 1.0, where the signal-to-noise ratio is less than $0$ dB.}
    \label{fig:sine_heatmap_gaussian}
\end{figure}

\section{Implementation Details}\label{appendix:implementation}
This section includes details of data preprocessing, implementation of baselines, network structures, and training configurations.
Our code is available at \url{https://github.com/bohu615/corticomuscular-eigen-encoder}.



\subsection{EEG and EMG preprocessing}
We performed standard preprocessing procedures for the 60-channel EEG signals, including 1) Down-sampling from 2500 Hz to 1000 Hz, 2) Band-pass filtering at 1-48 Hz, 3) Removal of signal artefacts with independent component analysis (ICA), and 4) Segmenting EEG at 0-4s of the onset of the movement cue for each trial. 7-channel EMG signals were preprocessed according to the following procedure: 1) Down-sampling from 2500 Hz to 1000 Hz, 2) High-pass filtering at 5 Hz, 3) Baseline correction, and 4) Segmenting EMG at 0-4s of the onset of the movement cue for each trial. The raw dataset (\url{http://gigadb.org/dataset/100788}) is distributed under a CC0 license.

\subsection{Baseline implementations}

\noindent\textbf{CMC baselines.} Cortico-muscular coherence (CMC) measures the linear synchronization between sensorimotor rhythms (present in EEG) and muscular activities (reflected in EMG) to analyze brain-muscle coupling. Given $N$ trials, for each $i$-th EEG channel $X_{1:T}(n, i)$ and $j$-th EMG channel $Y_{1:T}(n, j)$, signals are filtered using a Butterworth filter and segmented into $\widehat{X_{t:t+\tau}}(n, i)$ and $\widehat{Y_{t:t+\tau}}(n, j)$. Coherence is computed as the correlation coefficient ($\mathbf{cc}$) between the spectral densities of the windows, $\mathbf{cc}(FX_{t:t+\tau}, FY_{t:t+\tau})$, averaged over $t$ to $t+\tau$. 

It can be seen that $\mathbf{cc}$ is the correlation coefficients between two variables. We replace this linear measure by nonlinear measures, such as measures from KICA and mutual information estimated by KNN (MIR), producing CMC-KICA and CMC-MIR. \vspace{5pt}


\noindent\textbf{MINE implementation~\cite{belghazi2018mine}.} We adapt MINE with the same topology as ours but operating on the joint space of EEG and EMG $\mathcal{X}\times \mathcal{Y}$, producing a one-dimensional output. We find that using a a sigmoid activation function stabilizes training. MINE computes $h_\theta(\mathbf{X}, \mathbf{Y})$ for samples from the joint distribution and $h_\theta(\mathbf{X}', \mathbf{Y}')$ for samples from respective marginal distributions, and minimizes the variational cost $\min_{\theta} \mathbb{E}[h_\theta(\mathbf{X}, \mathbf{Y})] - \log \mathbb{E}[e^{h_\theta(\mathbf{X}', \mathbf{Y}')}\!+\!10^{-5}]$. The network is optimized by a Adam optimizer with a learning rate of $10^{-4}$, $\beta_1 = 0.5$, and $\beta_2 = 0.9$. MINE's trial-level solution follows the Donsker-Varadhan representation: $\log\rho(X,Y)+\gamma$, where $\gamma$ can be any constant. \vspace{5pt}

\noindent\textbf{KICA and HSIC~\cite{gretton2005measuring, bach2002kernel}.} KICA and HSIC are implemented in the following steps. First, individual Gram matrices $\mathbf{R}_{X}$ and $\mathbf{R}_{Y}$ are estimated for given Gaussian kernel $\mathcal{K}(X_i, X_j) = \mathcal{N}(X_i-X_j;\delta)$ with $\delta$ the standard deviation. Then, we construct the normalization matrix $\mathbf{N}_{i,j}$ and normalize the two Gram matrices as $\widehat{\mathbf{R}_{X}} = \mathbf{N}\mathbf{R}_{X}\mathbf{N}$ and $\widehat{\mathbf{R}_{Y}} = \mathbf{N}\mathbf{R}_{Y}\mathbf{N}$. For KICA-KGV, matrices $\mathbf{A}$ and $\mathbf{B}$ are constructed:

\begin{equation}
\begin{gathered}
\mathbf{A} = \begin{bmatrix}\mathbf{A}_1 & 0\\ 0 & \mathbf{A}_2 \end{bmatrix}, \quad \mathbf{A}_1 = \widehat{\mathbf{R}_{X}}\,\widehat{\mathbf{R}_{Y}}, \quad \mathbf{A}_2 = \widehat{\mathbf{R}_{Y}}\,\widehat{\mathbf{R}_{X}}, \\
\mathbf{B} = \begin{bmatrix}\mathbf{B}_1 & 0\\ 0 & \mathbf{B}_2 \end{bmatrix}, \quad \mathbf{B}_1 = (\widehat{\mathbf{R}_{X}}+\epsilon \mathbf{I}) (\widehat{\mathbf{R}_{X}}+\epsilon \mathbf{I}), \quad \mathbf{B}_2 = (\widehat{\mathbf{R}_{Y}}+\epsilon \mathbf{I}) (\widehat{\mathbf{R}_{Y}}+\epsilon \mathbf{I}),
\end{gathered}
\end{equation}

Then, solve the generalized eigenvalue problem for KICA $\mathbf{A} \mathbf{v}_i = \sigma_i \mathbf{B} \mathbf{v}_i$, where $i=1, \cdots, 2N$. This generalized eigenproblem generates $2N$ eigenvalues that are symmetric over the real line. Only $N$ positive eigenvalues of them are used to compute the measure, obtaining KICA's Kernel Generalized Variance (KGV) measure. 
For HSIC, we construct matrix $\mathbf{C}$:
\begin{equation}
\mathbf{C} = \mathbf{B}_1^{-\frac{1}{2}} \mathbf{A}_1 \mathbf{B_2}^{-\frac{1}{2}},
\end{equation}

and solve the eigenvalue problem $\mathbf{C}\mathbf{v}_i = \sigma_i \mathbf{v}_i$, where $i=1,\cdots, N$. Compute $T_{HSIC} = Trace(\mathbf{C})$. HSIC's measure is named the Normalized Cross-Covariance Operator (NOCCO). Hyperparameters are set as kernel size $\delta=0.1$ and regularization constant $\epsilon=0.1$.

\vspace{5pt}

\noindent\textbf{Self-Supervised baselines:} Self-Supervised Learning (SSL) methods are also implemented to compare the classification performance, including Barlow Twins~\cite{zbontar2021barlow}, SimCLR~\cite{chen2020simple}, and VICReg~\cite{bardes2021vicreg}. We mainly use their cost functions. SSL experiments use a window size of $1,000$ with windows from the same trial as positive pairs, and windows from different trials as negative pairs. The cost, hyper-parameters, and implementations follow the Lightly package~\cite{susmelj2020lightly}.
\vspace{5pt}

\textbf{EEGNet~\cite{lawhern2018eegnet}.} EEGNet is a convolutional neural network commonly used for EEG signal classification. After random search hyperparameter optimization, we find optimal performance with settings close to the original paper's recommendations. Validation set is split from the training set to enable early stopping regularization. As in the original paper, we use the Adam optimizer with a learning rate of $0.0001$ and a decay factor of $0.1$ every $20$ epochs. We train and test EEGNet on the same train-test splits as the proposed algorithm for all inter-subject and cross-subject experiments.
\vspace{5pt}

\textbf{CSP-RLDA~\cite{ang2012filter}.} Common Spatial Pattern (CSP) has been proven to effectively discriminate two classes of EEG by constructing optimal spatial filters. We use CSP to extract features and Regularized Linear Discriminant Analysis (RLDA) as a classifier. If class sample volumes are unbalanced, the CSP-based classifier may be biased towards the larger sample volume category. Thus particularly for CSP, one participant with one wrist-twisting session with bad EEG quality was discarded from the analysis. We use two pairs of CSP filters and extract four feature dimensions. CSP is trained and tested in both inter-subject and cross-subject settings. To achieve three-class classification with CSP, three classifiers are trained for each pair of the three classes, and a voting strategy determines the results.

\subsection{Network structures}


The structure of the temporal network is illustrated in Table~\ref{advanced_1d_layers}, which consists of four convolutional blocks and max pooling. Each of these blocks and max pooling are treated as a layer for computing localized density ratio responses. We apply the temporal network to each channel of the signal, obtaining $\mathbf{Z}_{1,9}, \mathbf{Z}_{2,9},\cdots, \mathbf{Z}_{C,9}$, where the total number of channels is $C=60$ for EEG and $C=7$ for EMG and 9 indicates the ninth layer. The output of the temporal network for each channel is a vector with dimension $K=128$. In the paper, we use the localized density ratios of $\mathbf{Z}_{c, 6}$ to visualize the temporal resolution. 

The channel network is a three-layer MLP that takes $[\mathbf{Z}_{1,9}, \mathbf{Z}_{2,9},\cdots, \mathbf{Z}_{C,9}]^\intercal$ (dimension of $K\times C$) as input and also produces an output of dimension $K=128$. Each layer uses BN and ReLU with $2,000$ units per layer. The classifier used for eigenfunctions is also a three-layer MLP with $500$ units per layer.

\begin{table}[H]
\centering
\begin{tabular}{@{}lcccccc@{}}
\toprule
\textbf{Layer} & \textbf{In Ch.} & \textbf{Out Ch.} & \textbf{Kernel Size} & \textbf{Padding} & \textbf{Output}\\ \midrule
\rowcolor[HTML]{EFEFEF}
Conv, BN, ReLu & 1 & 32 & 11 & 5 & $\mathbf{Z}_{c, 1}$ \\ 
Maxpool & 32 & 32 & 4 & - & $\mathbf{Z}_{c, 2}$ \\
\rowcolor[HTML]{EFEFEF} 
Conv, BN, ReLu & 32 & 64 & 11 & 5 & $\mathbf{Z}_{c, 3}$ \\
Maxpool & 64 & 64 & 4 & - & $\mathbf{Z}_{c, 4}$ \\
\rowcolor[HTML]{EFEFEF}
Conv, BN, ReLu & 64 & 128 & 11 & 5 & $\mathbf{Z}_{c, 5}$\\
Maxpool  & 128 & 128 & 4 & - & $\mathbf{Z}_{c, 6}$ \\
\rowcolor[HTML]{EFEFEF}
Conv, BN, ReLu & 128 & 256 & 11 & 5 & $\mathbf{Z}_{c, 7}$ \\
Maxpool  & 256 & 256 & 4 & - & $\mathbf{Z}_{c, 8}$\\
\rowcolor[HTML]{EFEFEF}
Linear BN, ReLu & 256 $\times$ 15 & 1024 & - & - & \\
\rowcolor[HTML]{EFEFEF}
Linear BN, ReLu & 1024 & 512 & - & - & - \\
\rowcolor[HTML]{EFEFEF}
Linear, Sigmoid & 512 & $K$ & - & - & $\mathbf{Z}_{c, 9}$ \\
\bottomrule
\end{tabular}\vspace{-5pt}
\caption{Architecture of Temporal Network.}
\label{advanced_1d_layers}
\end{table}\vspace{-10pt}






\subsection{Training configurations} 

SinWav experiments were conducted on an NVIDIA GeForce RTX 3090. EEG-EMG-Fusion experiments were conducted on an NVIDIA GeForce A5000. Both SinWav and EEG-EMG-Fusion used an Adam optimizer with $\beta_1 = 0.5$ and $\beta_2 = 0.9$ for network optimization.

\section{Limitation}\label{sec:limitation}
We demonstrate the effectiveness of using FMCA to learn the dependence between EEG and EMG signals and have shown that the learned eigenfunctions embed subject and movement information after optimization. We conducted the experiments on a public EEG and EMG dataset, which only contains 11 discrete upper extremity movements from 25 subjects. What we leave in the future is to use the meaningful eigenfunctions for regression tasks, i.e., continuously predicting the kinematics and contraction forces during the movement. Another limitation of the study is that we did not include patients' data due to a dearth of such large datasets that collect patients' multi-modal bio-signals. We hope that the promise offered by using our dependence measurement to evaluate cortico-muscular connectivity will further stimulate experimental research in this direction. Last, from a technical point of view, we only used convolutional neural networks with a concatenated MLP as the backbone of our networks in this study. Although we suppose that a CNN model is sufficient for the current scope as the temporal information is processed with CNN and spatial information can be leveraged by the final MLP projection layer, more advanced network structures, such as the attention in the transformer, could be potentially useful when aiming for more complex tasks.

\section{Broader Impact}\label{appendix:impact}
This paper proposes to use the statistical dependence between the densities of neural data to evaluate cortico-muscular connectivity. Compared with the traditional method (CMC) that computes the linear correlation between EEG and EMG spectra, our measurement shows less variance within movement and subject while is more distinguishable across movements and subjects. This helps in exploring the neural information pathways from the brain to the muscle, which could further be used in the field of patient rehabilitation. Moreover, the learned eigenfunctions can be used as ``common information'' decoders, for example to decode movement and subject, and are more robust to distribution shift when tested on unseen subjects. This could be very useful for developing brain-machine interfaces under inter-subject conditions.

All datasets in this paper are publicly available and are not associated with any privacy or security concerns. Usages of the datasets strictly follow the corresponding licenses.



\clearpage

\section*{NeurIPS Paper Checklist}

\begin{enumerate}

\item {\bf Claims}
    \item[] Question: Do the main claims made in the abstract and introduction accurately reflect the paper's contributions and scope?
    \item[] Answer: \answerYes{}
    \item[] Justification: In the abstract and introduction, we claim that 1) learning the dependence between EEG and EMG signals by using FMCA produces a robust dependence measure of cortico-muscular connectivity, 2) the learned eigenfunctions embed rich information of movement and subject, and 3) the temporal and spatial dependence can be visualised by computing the localized density ratios. The claims are well justified in the Sec.~\ref{sec:method} with theoretical support and experimentally demonstrated in the Sec.~\ref{sec:main_results}, where we show that our dependence measure is more consistent than other dependence measurements and classifiers fed with EEG eigenfunctions outperform other baseline methods. We also show temporal and channel activations that indicate the temporal and spatial distribution of the density ratio in the Sec.~\ref{sec:main_results}.

\item {\bf Limitations}
    \item[] Question: Does the paper discuss the limitations of the work performed by the authors?
    \item[] Answer: \answerYes{}
    \item[] Justification: See Conclusion and App.~\ref{sec:limitation} for discussions on limitations of this study.

\item {\bf Theory Assumptions and Proofs}
    \item[] Question: For each theoretical result, does the paper provide the full set of assumptions and a complete (and correct) proof?
    \item[] Answer: \answerYes{}
    \item[] Justification: We provide the proof in the Sec.~\ref{density_ratio_decompose} to demonstrate the eigenfunctions decomposed from the density ratio form a linear span. in the Sec.~\ref{algorithm}, we provide proof to derive the matrix trace cost.

    \item {\bf Experimental Result Reproducibility}
    \item[] Question: Does the paper fully disclose all the information needed to reproduce the main experimental results of the paper to the extent that it affects the main claims and/or conclusions of the paper (regardless of whether the code and data are provided or not)?
    \item[] Answer: \answerYes{}
    \item[] Justification: We describe matrix trace cost in detail in the Sec.~\ref{algorithm}, which is the critical component to reproduce our work. Network structures and training configurations are described in the App.~\ref{appendix:implementation}. Pseudocodes for the algorithms are provided in the App.~\ref{appendix:algorithm}. We include an anonymous link (see App.~\ref{appendix:implementation}) that provides the source codes with all implementation details and implementation of baselines.

\item {\bf Open access to data and code}
    \item[] Question: Does the paper provide open access to the data and code, with sufficient instructions to faithfully reproduce the main experimental results, as described in supplemental material?
    \item[] Answer: \answerYes{}
    \item[] Justification: We include an anonymous link (see App.~\ref{appendix:implementation}) that provides the source codes with all implementation details and implementation of baselines. Details of data preparation are provided in the App.~\ref{appendix:implementation}.

\item {\bf Experimental Setting/Details}
    \item[] Question: Does the paper specify all the training and test details (e.g., data splits, hyperparameters, how they were chosen, type of optimizer, etc.) necessary to understand the results?
    \item[] Answer: \answerYes{}
    \item[] Justification: Details of experimental settings are provided in the App.~\ref{appendix:implementation}.

\item {\bf Experiment Statistical Significance}
    \item[] Question: Does the paper report error bars suitably and correctly defined or other appropriate information about the statistical significance of the experiments?
    \item[] Answer: \answerYes{}
    \item[] Justification: We run experiments for five times and report the average value with standard deviation. See Table~\ref{tab:sctable}, Figure~\ref{fig:sine_comp}, Figure~\ref{CLUSTERFIGURE} in the main text and App.~\ref{appendix:exp} for more details.

\item {\bf Experiments Compute Resources}
    \item[] Question: For each experiment, does the paper provide sufficient information on the computer resources (type of compute workers, memory, time of execution) needed to reproduce the experiments?
    \item[] Answer: \answerYes{}
    \item[] Justification: See App.~\ref{appendix:implementation} for information on computer resources.

\item {\bf Code Of Ethics}
    \item[] Question: Does the research conducted in the paper conform, in every respect, with the NeurIPS Code of Ethics \url{https://neurips.cc/public/EthicsGuidelines}?
    \item[] Answer: \answerYes{}

\item {\bf Broader Impacts}
    \item[] Question: Does the paper discuss both potential positive societal impacts and negative societal impacts of the work performed?
    \item[] Answer: \answerYes{}
    \item[] Justification: See App.~\ref{appendix:impact} for discussion on broader impacts of the work.

\item {\bf Safeguards}
    \item[] Question: Does the paper describe safeguards that have been put in place for responsible release of data or models that have a high risk for misuse (e.g., pretrained language models, image generators, or scraped datasets)?
    \item[] Answer: \answerNA{}

\item {\bf Licenses for existing assets}
    \item[] Question: Are the creators or original owners of assets (e.g., code, data, models), used in the paper, properly credited and are the license and terms of use explicitly mentioned and properly respected?
    \item[] Answer: \answerYes{}
    \item[] Justification: We used one public EEG and EMG dataset in this study. We cited the original paper and provided the URL and the license of the asset in the App.~\ref{appendix:implementation}.

\item {\bf New Assets}
    \item[] Question: Are new assets introduced in the paper well documented and is the documentation provided alongside the assets?
    \item[] Answer: \answerNA{}
    \item[] Justification: We used one toy dataset that generates pairs of sinusoids for a preliminary study. We do not regard this toy dataset as a new asset. However, we described the details of the dataset in the Sec.~\ref{Sec_dataset} and provided the codes via the link in the Appendix.

\item {\bf Crowdsourcing and Research with Human Subjects}
    \item[] Question: For crowdsourcing experiments and research with human subjects, does the paper include the full text of instructions given to participants and screenshots, if applicable, as well as details about compensation (if any)? 
    \item[] Answer: \answerNA{}

\item {\bf Institutional Review Board (IRB) Approvals or Equivalent for Research with Human Subjects}
    \item[] Question: Does the paper describe potential risks incurred by study participants, whether such risks were disclosed to the subjects, and whether Institutional Review Board (IRB) approvals (or an equivalent approval/review based on the requirements of your country or institution) were obtained?
    \item[] Answer: \answerNA{}
    \item[] Justification: We used one public EEG and EMG dataset in this study. We have acknowledged the secondary use of this public human dataset and included the IRB approval of the original public dataset (approved by the Institutional Review Board at Korea University, 1040548-KU-IRB-17-181-A-2) in the manuscript.

\end{enumerate}

\end{document}